\crefname{thm}{theorem}{theorem}
\crefname{conj}{conjecture}{conjecture}
\crefname{lemma}{lemma}{lemma}
\newtheorem{definition}{Definition}
\newtheorem{theorem}{Theorem}
\newtheorem{corollary}{Corollary}
\newtheorem{lemma}{Lemma}
\numberwithin{lemma}{section}
\newcommand{\R}{\mathbb R}
\newcommand{\eps}{\varepsilon}
\DeclareMathOperator*{\E}{E}
\DeclareMathOperator*{\poly}{poly}
\DeclareMathOperator*{\argmin}{arg\,min}
\DeclareMathOperator*{\arginf}{arg\,inf}
\newcommand{\zo}{\{0,1\}}
\newcommand{\abs}[1]{\left| #1 \right|}
\newcommand{\norm}[2]{\left\| #1 \right\|_{#2}}
\newcommand{\ep}[2][]{\E_{#1}\left[ #2 \right]}
\newcommand{\prp}[2][]{\Pr_{#1}\left[ #2 \right]}
\newcommand{\prpcond}[2]{\Pr\left[ #1 \, \middle| \, #2 \right]}
\newcommand{\set}[1]{\left\{ #1 \right\}}
\newcommand{\setbuilder}[2]{\left\{ #1 \, \middle| \, #2 \right\}}
\newcommand{\dd}{\mathrm{d}}
\newcommand{\ceil}[1]{\lceil #1 \rceil}
\DeclarePairedDelimiterX{\infdivx}[2]{(}{)}{#1 \,\delimsize\|\, #2}
\DeclareMathOperator*{\Div}{D}
\newcommand{\D}{\Div\infdivx}
\DeclareMathOperator*{\smallDiv}{d}
\newcommand{\di}{\smallDiv\infdivx}
\newcommand{\dq}{\di{t_q}{w_q}}
\newcommand{\du}{\di{t_u}{w_u}}
\newcommand{\smat}[1]{\left[\begin{smallmatrix}#1\end{smallmatrix}\right]}
\newcommand{\svec}[1]{[\begin{smallmatrix}#1\end{smallmatrix}]}
\begin{document}

\author{
   Thomas D.~Ahle\\ \small BARC, IT University of Copenhagen\\ \small \texttt{thdy@itu.dk}
   \and
   Jakob B.~T.~Knudsen\\ \small BARC, University of Copenhagen\\ \small \texttt{jakn@di.ku.dk}
}
\title{ Subsets and Supermajorities:\\ Optimal Hashing-based Set Similarity Search
   \footnote{A previous version of this manuscript has appeared on arXiv.org under the title ``Subsets and Supermajorities: Unifying Hashing-based Set Similarity Search''.}
}
\date{April 15, 2020}
\hypersetup{pdfauthor={Thomas Dybdahl Ahle, Jakob Tejs Knudsen},pdftitle={Subsets & Supermajorities}}

\maketitle

\begin{abstract}

We formulate and optimally solve a new generalized Set Similarity Search problem,
which assumes the size of the database and query sets are known in advance.
By creating polylog copies of our data-structure, we optimally solve any symmetric Approximate Set Similarity Search problem, including approximate versions of Subset Search, Maximum Inner Product Search (MIPS), Jaccard Similarity Search and Partial Match.

Our algorithm can be seen as a natural generalization of previous work on Set as well as Euclidean Similarity Search, but conceptually it differs by optimally exploiting the information present in the sets as well as their complements, and doing so asymmetrically between queries and stored sets.
Doing so we improve upon the best previous work:
MinHash [J. Discrete Algorithms 1998], SimHash [STOC 2002], Spherical LSF
[SODA 2016, 2017] and Chosen Path [STOC 2017] by as much as a factor $n^{0.14}$ in both time and space; or in the near-constant time regime, in space, by an arbitrarily large polynomial factor.

Turning the geometric concept, based on Boolean supermajority functions, into a practical algorithm requires ideas from branching random walks on $\mathbb Z^2$, for which we give the first non-asymptotic near tight analysis.

Our lower bounds follow from new hypercontractive arguments, which can be seen as characterizing the exact family of similarity search problems for which supermajorities are optimal.
The optimality holds for among all hashing based data structures in the random setting, and by reductions, for 1 cell and 2 cell probe data structures.
As a side effect, we obtain new hypercontractive bounds on the directed noise operator $T^{p_1 \to p_2}_\rho$.
\end{abstract}

\clearpage

\tableofcontents

\section{Introduction}

Set Similarity Search (SSS) is the problem of indexing sets (or sparse boolean data) to allow fast retrieval of sets, similar under a given similarity measure.
The sets may represent one-hot encodings of categorical data, ``bag of words'' representations of documents, or ``visual/neural bag of words'' models, such as the Scale-invariant feature transform (SIFT), that have been discretized.
The applications are ubiquitous across Computer Science, touching everything from recommendation systems to gene sequences comparison.
See~\cite{choi2010survey, jia2018survey} for recent surveys of methods and applications.

Set similarity measures are any function, $s$ that takes two sets and return s value in $[0,1]$.
Unfortunately, most variants of Set Similarity Search, such as Partial Match, are hard to solve assuming popular conjectures around the Orthogonal Vectors Problem~\cite{williams2005new, ahle2015complexity, abboud2017distributed, chen2019equivalence}, which roughly implies that the best possible algorithm is to not build an index, and ``just brute force'' scan through all the data, on every query.
A way to get around this is to study Approximate SSS:
Given a query, $q$, for which the most similar set $y$ has $\text{similarity}(q,y)\ge s_1$, we are allowed to return any set $y'$ with $\text{similarity}(q,y')>s_1$, where $s_2<s_1$.
In practice, even the best \emph{exact} algorithms for similarity search use such an $(s_1,s_2)$-approximate\footnote{By classical reductions~\cite{har2012approximate} we can assume $s_1$ is known in advance.}
solution as a subroutine~\cite{christiani2018confirmation}.

Euclidean Similarity Search, where the data is vectors $x\in\R^d$ and the measure of similarity is ``Cosine'', has recently been solved optimally --- at least in the model of hashing based data structures~\cite{andoni2015optimal, andoni2017optimal}.
Meanwhile, the problem on sets has proven much less tractable.
This is despite that the first solutions date back to the seminal MinHash algorithm (a.k.a. min-wise hashing), introduced by Broder et al.~\cite{broder1997syntactic, broder1997resemblance} in 1997 and by now boasting thousands of citations.
In 2014 MinHash was shown to be near-optimal for set intersection \emph{estimation}~\cite{pagh2014min}, but in a surprising recent development, it was shown not to be optimal for similarity \emph{search}~\cite{tobias2016}.
The question thus remained: What \emph{is} the optimal algorithm for Set Similarity Search?

The question is made harder by the fact that previous algorithms study the problem under different similarity measures, such as Jaccard, Cosine or Braun-Blanquet similarity.
The only thing those measures have in common is that they can be defined as a function $f$ of the sets sizes, the universe size and the intersection size.
In other words, $\text{similarity}(q,y) = f(|q|, |y|, |q \cap y|, |U|)$ where $|U|$ is the size of the universe from which the sets are taken.
In fact, any symmetric measure of similarity for sets must be defined by those four quantities.

Hence, to fully solve Set Similarity Search, we avoid specifying a particular similarity measure, and instead define the problem solely from those four parameters.
This generalized problem is what we solve optimally in this paper, for all values of the four parameters:
\begin{definition}[The $(w_q,w_u,w_1,w_2)$-GapSS problem]
   Given some universe $U$ and a collection $Y\subseteq\binom{U}{w_u|U|}$ of $|Y|=n$ sets of size $w_u|U|$,
   build a data structure that for any query set $q\in\binom{U}{w_q|U|}$:
   either returns $y'\in Y$ with $|y'\cap q| > w_2 |U|$;
   or determines that there is no $y\in Y$ with $|y\cap q| \ge w_1 |U|$.
\end{definition}

For the problem to make sense, we assume that $w_q|U|$ and $w_u|U|$ are integers, that $w_q,w_u\in[0,1]$, and that $0 < w_2 < w_1 \le \min\{w_q,w_u\}$.
Note that $|U|$ may be very large, and as a consequence the values $w_q,w_u,w_1,w_2$ may all be very small.

At first sight, the problem may seem easier than the version where the sizes of sets may vary.
However, the point is that making $\text{polylog}(n)$ data-structures for sets and queries of progressively bigger sizes,\footnote{%
For details, see~\cite{tobias2016} Section 5.
A similar reduction, called ``norm ranging'', was recently shown at NeurIPS to give state of the art results for Maximum Inner Product Search in $\R^d$~\cite{yan2018norm}, suggesting it is very practical.
} immediately yields data structures for the original problem.
Similarly, any algorithm assuming a specific set similarity measure also yields an algorithm for $(w_q,w_u,w_1,w_2)$-GapSS, so our lower bounds too hold for all previously studied SSS problems.

\vspace{-.5em}
\paragraph{Example 1}
As an example, assume we want to solve the Subset Search Problem, in which we, given a query $q$, want to find a set $y$ in the database, such that $y\subseteq q$.
If we allow a two-approximate solution,
GapSS includes this problem by setting $w_1=w_u$ and $w_2=w_1/2$: The overlap between the sets must equal the size of the stored sets;
and we are guaranteed to return a $y'$ such that at least $|q\cap y'|\ge |y|/2$.

\vspace{-.5em}
\paragraph{Example 2}
In the $(j_1,j_2)$-Jaccard Similarity Search Problem, given a query, $q$, we must find $y$ such that the Jaccard Similarity $|q\cap y|/|q\cup y|>j_2$ given that a $y'$ exists with similarity at least $j_1$.
After partitioning the sets by size, we can solve the problem using GapSS by setting $w_1=\frac{j_1(w_q+w_u)}{1+j_1}$ and $w_2=\frac{j_2(w_q+w_u)}{1+j_2}$.
The same reduction works for any other similarity measure with $\text{polylog}(n)$ overhead.

\vspace{1em}

The version of this problem where $w_2=w_qw_u$ is similar to what is in the literature called \emph{``the random instance''}~\cite{panigrahy2006entropy, laarhoven2015tradeoffs, andoni2016optimal}.
To see why, consider generating
$n-1$ sets independently at random with size $w_u|U|$,
and a ``planted'' pair, $(q, y)$, with size respectively $w_q|U|$ and $w_u|U|$ and with intersection $|q\cap y|=w_1|U|$.
Insert the size $w_u |U|$ sets into the database and query with $q$.
Since $q$ is independent from the $n-1$ original sets, its intersection with those is strongly concentrated around the expectation $w_qw_u|U|$.
Thus, if we parametrize GapSS with $w_2=w_qw_u+o(1)$, the query for $q$ is guaranteed to return the planted set $y$.  %

There is a tradition in the Similarity Search literature for studying such this independent case, in part because \emph{it is expected that one can always reduce to the random instance},
for example using the techniques of ``data-dependent hashing''~\cite{andoni2014beyond, andoni2015optimal}.
However, for such a reduction to make sense, we would first need an optimal ``data-independent'' algorithm for the $w_2=w_qw_u$ case, which is what we provide in this paper.
We discuss this further in the Related Work section.

For generality we still define the problem for all $w_2\in(0,w_1)$, our upper bound holds in this general setting and so does the lower bound~\Cref{thm:donnell}.

\vspace{1em}

We give our new results in \Cref{sec:upper_intro} and our new lower bounds in \Cref{sec:lower_intro}, but first we would like to sketch the algorithm and some probabilistic tools used in the theorem statement.

\subsection{Supermajorities}\label{sec:supermajorities}

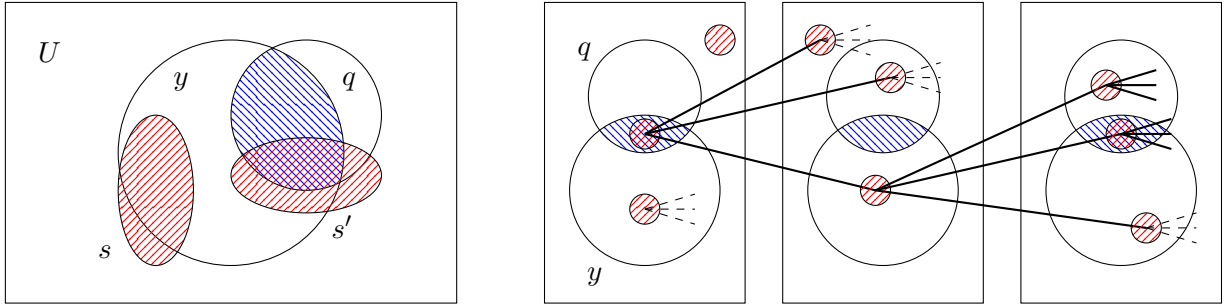
\begin{figure}
   \begin{subfigure}{.38\textwidth} %
      \centering
      \begin{tikzpicture}
         \draw (0,0) rectangle (6,4);
         \node[anchor=north west] at (0.3, 3.6) {$U$};
         \begin{scope}
            \clip (4,2.5) circle (1);
            \fill[pattern=north west lines, pattern color=blue] (3,2) circle (1.5);
         \end{scope}
         \draw (4,2.5) circle (1);
         \node[anchor=north east] at (4.8, 3.2) {$q$};
         \draw (3,2) circle (1.5);
         \node[anchor=north west] at (2.1, 3.2) {$y$};
         \draw[pattern=north east lines, pattern color=red] (4,1.7) ellipse (1cm and .5cm);
         \node[anchor=north west] at (1.1, .9) {$s$};
         \draw[pattern=north east lines, pattern color=red] (2,1.5) ellipse (.5cm and 1cm);
         \node[anchor=north west] at (4.2, 1.3) {$s'$};
      \end{tikzpicture}
      \caption{Two cohorts, $y$ and $q$ with a large intersection (blue).
         The first representative set, $s$, favours $y$, while the second, $s'$, favours both $y$ and $q$.
      \\}
      \label{fig:cohorts}
   \end{subfigure}
   \hfill
   \begin{subfigure}{.57\textwidth} %
      \centering
      \begin{tikzpicture}
         \foreach \x in {0,...,2} {
            \begin{scope}
               \clip (8/3*\x+1/2*\x + 4/3, 1.5) circle (1);
               \fill[pattern=north west lines, pattern color=blue] (8/3*\x+1/2*\x + 4/3, 2.75) circle (3/4);
            \end{scope}
            \draw (8/3*\x+1/2*\x + 4/3, 1.5) circle (1);
            \draw (8/3*\x+1/2*\x + 4/3, 2.75) circle (3/4);
            \draw (8/3*\x+1/2*\x, 0) rectangle (8/3*\x+1/2*\x+8/3,4);
         }
         \node[anchor=north] at (1/3+.2, 3.6) {$q$};
         \node[anchor=south] at (2/3, .1) {$y$};
         \draw[pattern=north east lines, pattern color=red] (7/3,3.5) circle (.2);
         \draw[pattern=north east lines, pattern color=red] (4/3,2.25) circle (.2);
         \draw[pattern=north east lines, pattern color=red] (4/3,1.25) circle (.2);
         \draw[dashed] (4/3,1.25) -- (6/3, 1.05);
         \draw[dashed] (4/3,1.25) -- (6/3, 1.25);
         \draw[dashed] (4/3,1.25) -- (6/3, 1.45);
         \draw[thick] (4/3,2.25) -- (12/3+.4,1.5);
         \draw[thick] (4/3,2.25) -- (12/3+.6,3);
         \draw[thick] (4/3,2.25) -- (11/3,3.5);
         \draw[pattern=north east lines, pattern color=red] (11/3,3.5) circle (.2);
         \draw[dashed] (11/3,3.5) -- (13/3,3.7);
         \draw[dashed] (11/3,3.5) -- (13/3,3.5);
         \draw[dashed] (11/3,3.5) -- (13/3,3.3);
         \draw[pattern=north east lines, pattern color=red] (12/3+.6,3) circle (.2);
         \draw[dashed] (12/3+.6,3) -- (14/3+.6,2.8);
         \draw[dashed] (12/3+.6,3) -- (14/3+.6,3);
         \draw[dashed] (12/3+.6,3) -- (14/3+.6,3.2);
         \draw[pattern=north east lines, pattern color=red] (12/3+.4,1.5) circle (.2);
         \draw[thick] (12/3+.4,1.5) -- (21/3+1,1);
         \draw[thick] (12/3+.4,1.5) -- (20/3+1,2.25);
         \draw[thick] (12/3+.4,1.5) -- (20/3+.8,2.9);
         \draw[pattern=north east lines, pattern color=red] (20/3+.8,2.9) circle (.2);
         \draw[thick] (20/3+.8,2.9) -- (22/3+.8,2.7);
         \draw[thick] (20/3+.8,2.9) -- (22/3+.8,2.9);
         \draw[thick] (20/3+.8,2.9) -- (22/3+.8,3.1);
         \draw[pattern=north east lines, pattern color=red] (20/3+1,2.25) circle (.2);
         \draw[thick] (20/3+1,2.25) -- (22/3+1,2.05);
         \draw[thick] (20/3+1,2.25) -- (22/3+1,2.25);
         \draw[thick] (20/3+1,2.25) -- (22/3+1,2.45);
         \draw[pattern=north east lines, pattern color=red] (21/3+1,1) circle (.2);
         \draw[dashed] (21/3+1,1) -- (23/3+1,.8);
         \draw[dashed] (21/3+1,1) -- (23/3+1,1);
         \draw[dashed] (21/3+1,1) -- (23/3+1,1.2);
      \end{tikzpicture}
      \caption{
         Branching random walk run
         on two cohorts $q$ and $y$.
         The bold lines illustrate paths considered by sets, while the dashed lines adorn paths only considered by only one of $x$ or $y$.
         Here $q$ has a higher threshold ($t_q=2/3$) than $y$ ($t_u=1/2$), so $q$ only considers paths starting with two favourable representatives.
      }
      \label{fig:tree}
   \end{subfigure}
   \caption{The representative sets, coloured in red, are scattered in the universe to provide an efficient space partition for the data.}
\end{figure}

In Social Choice Theory a supermajority is when a fraction strictly greater than $1/2$ of people agree about something.%
\footnote{
``America was founded on majority rule, not supermajority rule. Somehow, over the years, this has morphed into supermajority rule, and that changes things.'' -- Kent Conrad.
}
In the analysis of Boolean functions a $t$-supermajority function $f:\{0,1\}^n\to\{0,1\}$ can be defined as 1, if a fraction $\ge t$ of its arguments are 1, and 0 otherwise.
We will sometimes use the same word for the requirement that a fraction $\le t$ of the arguments are 1.%
\footnote{It turns out that defining everything in terms of having a fraction $t\pm o(1)$ of 1's is also sufficient.
   This is similar to Dubiner~\cite{dubiner2010bucketing}.
}

The main conceptual point of our algorithm is the realization that an optimal algorithm for Set Similarity Search must take advantage of the information present in the given sets, as well as that present in their complement.
A similar idea was leveraged by Cohen et al.~\cite{cohen2009leveraging} for Set Similarity \emph{Estimation}, and we show in~\Cref{sec:minhashdom} that the classical MinHash algorithm can be seen as an average of functions that pull varying amounts of information from the sets and their complements.
In this paper, we show that there is a better way of combining this information, and that doing so results in an optimal hashing based data structure for the entire parameter space of random instance GapSS.

This way of combining this information is by supermajority functions.
While on the surface they will seem similar to the threshold methods applied for time/space trade-offs in Spherical LSF~\cite{andoni2017optimal}, our use of them is very different.
Where~\cite{andoni2017optimal} corresponds to using small $t=1/2+o(1)$ thresholds (essentially simple majorities) our $t$ may be as large as 1 (corresponding to the AND function) or as small as 0 (the NOT AND function).
This way they are a sense as much a requirement on the complement as it is on the sets themselves.

\vspace{.5em}

\emph{The algorithm (idealized):}
While our data structure is technically a tree with a carefully designed pruning rule, the basic concept is very simple.

We start by sampling a large number of ``representative sets'' $R\subseteq\binom{U}{k}$.
Here roughly $|R|\approx n^{\log n}$ and $k\approx \log n$.
Given family $Y\subseteq\binom{U}{w_u|U|}$ of sets to store, which we call ``cohorts'',
we say that $r\in R$ ``$t$-favours'' the cohort $y$ if $|y\cap r|/|r| \ge t$.
Representing sets as vectors in $\{0,1\}^d$, this is equivalent to saying $f_{t}(r\cap y)=1$, where $f_{t}$ is the $t$-supermajority function.
(If $t$ is less than $w_u$, the expected size of the overlap, we instead require $|y\cap r|/|r|\le t$.)

Given the parameters $t_q,t_u\in[0,1]$,
the data-structure is a map from elements of $R$ to the cohorts they $t_u$-favour.
When given a query $q\in\binom{U}{w_q|U|}$, (a $w_q|U|$ sized cohort), we compare it against all cohorts $y$ favoured by representatives $r\in R$ which $t_q$-favour $q$ (that is $|q\cap r|/|r|\ge t_q$).
This set $R_{t_q}(q)$ is much smaller than $|R|$ (we will have $|R_{t_q}(q)|\approx n^\eps$ and $E[|R_{t_u}(y)\cap R_{t_q}(q)|]\approx n^{\eps-1}$),
so the filtering procedure greatly reduces the number of cohorts we need to compare to the query from $n$ to $n^\eps$ (where $\eps=\rho_q<1$ is defined later.)

The intuition is that while it is quite unlikely for a representative to favour a given cohort, and it is \emph{very} unlikely for it to favour two given cohorts ($q$ and $y$).
So if it does, the two cohorts probably have a substantial overlap.
\Cref{fig:cohorts} has a simple illustration of this principle.

\vspace{.5em}

In order to fully understand supermajorities,
we want to understand the probability that a representative set is simultaneously in favour of two distinct cohorts given their overlap and representative sizes.
This paragraph is a bit technical, and may be skipped at first read.
Chernoff bounds in $\R$ are a common tool in the community, and for iid. $X_i\sim \text{Bernoulli}(p)\in\{0,1\}$ the sharpest form (with a matching lower bound) is
$\Pr[\sum X_i \ge tn] \le \exp(-n \di tp)$,\footnote{A special case of Hoeffding's inequality is obtained by $\di{p+\eps}p\ge 2\eps^2$, Pinsker's inequality.} which uses the binary KL-Divergence $\di{t}{p}=t \log\frac{t}{p}+(1-t)\log\frac{1-t}{1-p}$.
The Chernoff bound for $\R^2$ is less common, but likewise has a tight description in terms of the KL-Divergence between two discrete distributions: $\D{P}{Q} = \sum_{\omega\in\Omega} P(\omega)\log\frac{P(\omega)}{Q(\omega)}$ (summing over the possible events).
In our case, we represent the four events that can happen as we sample an element of $U$ as a vector $X_i\in\{0,1\}^2$.
Here $X_i=[\begin{smallmatrix}1\\1\end{smallmatrix}]$ means the $i$th element hit both cohorts, $X_i=[\begin{smallmatrix}1\\0\end{smallmatrix}]$ means it hit only the first and so on.
We represent the distribution of each $X_i$ as a matrix $P=\left[\begin{smallmatrix}w_1 & w_q-w_1\\w_u-w_1 & 1-w_q-w_u+w_1\end{smallmatrix}\right]$,
and say $X_i \sim \text{Bernoulli}(P)$ iid.
such that $\Pr[X_i = \svec{1-j\\1-k}] = P_{j,k}$.
Then $\Pr[\sum X_i \ge [\begin{smallmatrix}t_q\\t_u\end{smallmatrix}]n] \approx \exp(-n \D{T}{P})$
where
$T=\left[\begin{smallmatrix}t_1 & t_q-t_1\\t_u-t_1 & 1-t_q-t_u+t_1\end{smallmatrix}\right]$
and $t_1\in[0,\min\{t_u,t_q\}]$ minimizes $\D{T}{P}$.
(Here the notation $\svec{x\\y} \ge \svec{t_u\\t_q}$ means $x \ge t_u \wedge y \ge t_q$.)

The optimality of Supermajorities for Set Similarity Search is shown using a certain correspondence we show between the Information Theoretical quantities described above, and the hypercontractive inequalities that have been central in all previous lower bounds for similarity search.

These bounds above would immediately allow a cell probe version of our upper bound \Cref{thm:main},
e.g. a query would require $n^{\frac{\D{T_1}{P_1} - \dq}{\D{T_2}{P_2}-\dq}}$ probes,
where $P_i=\smat{w_i&w_q-w_1\\w_u-w_i&1-w_q-w_u+w_i}$ and $T_i$ defined accordingly.
The algorithmic challenge is that, for optimal performance, $|R|$ must be in the order of $\Omega(n^{\log n})$, and so checking which representatives favour a given cohort takes super polynomial time!

The classical approach to designing an oracle to efficiently yield all such representatives,
, is a product-code or ``tensoring trick''.
The idea, (used by~\cite{christiani2016framework, becker2016new}), is to choose a smaller $k'\approx \sqrt{k}$, make $k/k'$ different $R'_i$ sets of size $n^{\sqrt{\log n}}$ and take $R$ as the product $R'_1 \times \dots \times R'_{k/k'}$.
As each $R'$ can now be decoded in $n^{o(1)}$ time, so can $R$.
This approach, however, in the case of Supermajorities, has a big drawback:
Since $t_q k'$ and $t_u k'$ must be integers, $t_q$ and $t_u$ have to be rounded and thus distorted by a factor $1+1/k'$.
Eventually, this ends up costing us a factor $w_1^{-k/k'}$ which can be much larger than $n$.
For this reason, we need a decoding algorithm that allows us to use supermajorities with as large a $k$ as possible!

We instead augment the above representative sampling procedure as follows:
Instead of independent sampling sets, we (implicitly) sample a large, random height $k$ tree, with nodes being elements from the universe.
The representative sets are taken to be each path from the root to a leaf.
Hence, some sets in $R$ share a common prefix, but mostly they are still independent.
We then add the extra constraint that \emph{each of the prefixes of a representative has to be in favour of a cohort}, rather than only having this requirement on the final set.
This is the key to making the tree useful:
Now given a cohort, we walk down tree, pruning any branches that do not consistently favour a supermajority of the cohort.
\Cref{fig:tree} has a simple illustration of this algorithm and \Cref{alg:treesample} has a pseudo-code implementation.
This pruning procedure can be shown to imply that we only spend time on representative sets that end up being in favour of our cohort, while only weakening the geometric properties of the idealized algorithm negligibly.

While conceptually simple and easy to implement (modulo a few tricks to prevent dependency on the size of the universe, $|U|$), the pruning rule introduces dependencies that are quite tricky to analyze sufficiently tight.
The way to handle this will be to consider the tree as a ``branching random walk'' over $\mathbb Z^2_+$ where the value represents the size of the representative's intersection with the query and a given set respectively.
The paths in the random walk at step $i$ must be in the quadrant $[t_q i, i]\times[t_u i,i]$ while only increasing with a bias of $\svec{w_q\\w_u}$ per step.
The branching factor is carefully tuned to just the right number of paths survive to the end.

\begin{center}
\emph{The ``history'' aspect of the pruning is a very important property of our algorithm, and is where it conceptually differs from all previous work.}
\end{center}
Previous Locality Sensitive Filtering, LSF, algorithms~\cite{tobias2016, andoni2016optimal} can be seen as trees with pruning, but their pruning is on the individual node level, rather than on the entire path.
This makes a big difference in which space partitions can be represented, since pruning on node level ends up representing the intersection of simple partitions, which can never represent Supermajorities in an efficient way.
In~\cite{becker2016new} a similar idea was discussed heuristically for Gaussian filters, but ultimately tensoring was sufficient for their needs, and the idea was never analyzed.

\subsection{Upper Bounds}\label{sec:upper_intro}

As discussed, the performance of our algorithm is described in terms of KL-divergences.
To ease understanding, we give a number of special cases, in which the general bound simplifies.
The bounds in this section assume $w_q,w_u,w_1,w_2$ are constants.
See \Cref{sec:full} for a version without this assumption.

\begin{theorem}[Simple Upper Bound]\label{thm:main}
   For any choice of constants $w_q, w_u\ge w_1\ge w_2\ge 0$ and $1\ge t_q, t_u\ge 0$
   we can solve the $(w_q,w_u,w_1,w_2)$-GapSS problem over universe $U$
   with query time $\tilde O(n^{\rho_q} + w_q|U|) + n^{o(1)}$ and auxiliary space usage $\tilde O(n^{1+\rho_u})$,
   where
   \begin{align}
      \rho_q = \frac{\D{T_1}{P_1}-\dq}{\D{T_2}{P_2}-\dq},
      \quad
      \rho_u = \frac{\D{T_1}{P_1}-\du}{\D{T_2}{P_2}-\dq}.
   \end{align}
   and $T_1$, $T_2$ are distributions with expectation $\svec{t_q\\t_u}$ minimizing respectively $\D{T_1}{P_1}$ and $\D{T_2}{P_2}$, as described in \Cref{sec:supermajorities}.
\end{theorem}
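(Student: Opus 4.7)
My plan is to analyze the idealized tree-sampling scheme from \Cref{sec:supermajorities} by separately bounding (i) the probability of finding the planted close pair, (ii) the expected number of buckets touched by a query, and (iii) the total storage. The data structure samples a random tree of depth $k = \Theta(\log n)$ over $U$ with branching factor $b$; each root-to-leaf path is a representative set $r$, and a cohort $y$ is placed into bucket $r$ precisely when every prefix of the path $t_u$-favours $y$. On query $q$ we walk the tree, pruning any prefix that fails to $t_q$-favour $q$, and verify overlaps at the cohorts surviving to the leaves. The parameters $b$ and $k$ are jointly tuned so that $b^k e^{-k\D{T_1}{P_1}}=\Theta(1)$ and $n\,b^k e^{-k\D{T_2}{P_2}}=O(n^{\rho_q})$; solving gives $k = \Theta(\log n)$ and expected number of $t_q$-surviving leaves $b^k e^{-k\dq} = n^{\rho_q + o(1)}$, matching the target query time.

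For correctness, condition on the planted close pair $(q,y)$ with $|q\cap y|\ge w_1|U|$; each position along a uniformly random path is an i.i.d.\ Bernoulli$(P_1)$ sample, and the two-dimensional Chernoff bound of \Cref{sec:supermajorities} shows that a single path simultaneously $t_q$-favours $q$ and $t_u$-favours $y$ at \emph{every} prefix with probability $e^{-k\D{T_1}{P_1} + o(k)}$, where $T_1$ is the KL-minimizer on the cone $\{x\ge t_q,\,y\ge t_u\}$. Multiplying by $b^k$ gives expected number of good survivors $\Theta(1)$ by the calibration above, and a Paley--Zygmund step on the branching random walk lifts this to constant success probability, boosted to high probability by $O(\log n)$ independent trees.

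Query time and space are the same calculation under two other marginal laws. Under the Bernoulli$(w_q)$ marginal the expected number of $t_q$-surviving leaves is $n^{\rho_q+o(1)}$ by construction; at each such leaf we pay a constant for the bucket lookup plus $O(w_q|U|)$ per stored cohort whose bucket coincides with it, the total count of the latter being $n\,b^k e^{-k\D{T_2}{P_2}} = O(n^{\rho_q})$ over the whole query. The tree walk that enumerates the surviving leaves is output-sensitive up to an $n^{o(1)}$ overhead coming from internal nodes visited but not completed, giving the $\tilde O(n^{\rho_q})+n^{o(1)}$ in-tree contribution, plus a single $O(w_q|U|)$ additive for building the query representation. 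Symmetrically, each stored cohort lies in $b^k e^{-k\du + o(k)} = n^{\rho_u + o(1)}$ buckets in expectation under the Bernoulli$(w_u)$ marginal, so total space is $\tilde O(n^{1+\rho_u})$.

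The hard part is upgrading the $o(k)$ slack in each Chernoff exponent to the $o(1)$ slack needed in the final exponent of $n$, since $k = \Theta(\log n)$ offers no asymptotic cushion, and the prefix pruning rule induces correlations between paths that share ancestors. My plan is to invoke the non-asymptotic branching-walk analysis advertised in the introduction: model each surviving path as a random walk on $\mathbb{Z}_+^2$ with drift $\svec{w_q\\w_u}$ confined at step $i\le k$ to the cone $[t_q i,i]\times[t_u i,i]$, and apply local-limit / ballot-style bounds to replace each $e^{-k\D{\cdot}{\cdot}}$ by a $\poly(\log n)$-tight normalization. The same tube estimate bounds path-to-path correlations (since two leaves share $o(k)$ ancestors w.h.p.), which controls the variance for the second-moment completeness step, the second moment for the space concentration, and the $n^{o(1)}$ tree-walk overhead for decoding; all remaining $\poly(\log n)$ factors are absorbed into $\tilde O(\cdot)$.
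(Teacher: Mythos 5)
There is a genuine gap at the second-moment step, which is exactly the part the paper flags as the hard one. You propose to control path-to-path correlations by noting that two surviving leaves share $o(k)$ ancestors w.h.p., but $\E[\abs{S}^2]$ is an expectation, not a w.h.p.\ statement: the pairs that dominate it are precisely the rare ones whose common prefix of length $\ell$ accumulates an atypically \emph{large} score $z>\svec{t_q\\t_u}\ell$, after which \emph{both} suffixes survive with probability much larger than $e^{-(k-\ell)\D{T_1}{P_1}}$ each. Quantitatively, the pair-survival probability is roughly $\sum_z \Pr[Z_\ell=z]\Pr[\text{suffix}\ge \svec{t_q\\t_u}k-z]^2$, and by convexity of the large-deviation rate the squared gain from an overshooting prefix can outweigh its cost, so the naive second moment exceeds $(\E\abs{S})^2$ by factors exponential in $\ell$; no per-path tube or ballot estimate repairs this. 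The paper's proof fixes it by changing what is counted: it restricts to the subset $S$ of paths that are additionally pruned \emph{from above}, i.e.\ whose prefix scores never exceed $\svec{t_q\\t_u}\ell$ (and end exactly at $\svec{t_q\\t_u}k$), while the algorithm's pruning rule is simultaneously relaxed \emph{below} the threshold by a slack $c_\ell=\Theta(\sqrt{\ell\log k})$ so the resulting band is wide enough for the first moment to survive — that lower bound is then obtained by exponential tilting to the law $T_1$, the rearrangement/quadrant estimate (\Cref{lem:quadrant}, \Cref{lem:rearrangement}), and a Bernstein bound on the slack boundary, losing only $\poly(k)$. With the upper truncation, two paths in $S$ sharing a prefix of length $\ell$ force a $(2k-\ell)$-step deviation event of probability $\exp(-(2k-\ell)\D{T_1}{P_1})$, giving $\E[\abs{S}^2]=O(k\,\Delta^k e^{-k\D{T_1}{P_1}})$, and Paley--Zygmund then yields survival probability $1/\poly(k)$ — not constant, as you claim; the paper boosts it with $\poly(k)$ repetitions (absorbed in $\tilde O$) after a $k/2$-tensoring of two instances, which is also what makes $t_qk/2,t_uk/2$ integral.

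Your stated pruning rule compounds the problem: you prune exactly at the thresholds (``every prefix favours''), but exact-threshold pruning combined with the upper truncation needed for the variance confines the tilted walk to a degenerate band and kills the first moment; the slack $c_\ell$ below the threshold is a design feature of the data structure, not an analysis convenience. Two further points you wave at but do not supply: the claimed $n^{o(1)}$ decoding overhead requires the output-sensitive sampling trick (2-independent linear hashing modulo a prime plus a sorted/predecessor structure over $\{a_kx \bmod q\}$) to avoid paying $\abs{U}$ per internal node, and the per-candidate verification cost must be reduced by sampling $O(\min\{w_q,w_u\}w_2^{-1}\log n)$ coordinates rather than computing full intersections.
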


The two bounds differ only in the $\dq$ and $\du$ terms in the numerator.
The thresholds $t_q$ and $t_u$ can be chosen freely in $[0,1]^2$.
Varying them compared to each other allows a full space/time trade-off with $\rho_q=0$ in one end and $\rho_u=0$ (and $\rho_q<1$) in the other.
Note that for a given GapSS instance, there are many $(t_q,t_u)$ which are not optimal anywhere on the space/time trade-off.
Using Lagrange's condition $\nabla \rho_q = \lambda \nabla \rho_u$ 
one gets a simple equation that all optimal $(t_q, t_u)$ trade-offs must satisfy.
As we will discuss later, it seems difficult to prove that a solution to this equation is unique, but in practice it is easy to solve and provides an efficient way to optimize $\rho_q$ given a space budget $n^{1+\rho_u}$.
\Cref{fig:sp_comparison} and \Cref{fig:mh_comparison} provides some additional intuition for how the $\rho$ values behave for different settings of GapSS.

\vspace{.5em}

Regarding the other terms in the theorem, we note that the $\tilde O$ hides only $\log n$ factors, and the additive $n^{o(1)}$ term grows as $e^{O(\sqrt{\log n\log\log n})}$, which is negligible unless $\rho_q=0$.
We also note that there is no dependence on $|U|$, other than the need to store the original dataset and the additive $w_q|U|$, which is just the time it takes to receive the query.
The main difference between this theorem and the full version, is that the full theorem does not assume the parameters $(w_q,w_u,w_1,w_2)$ are constants, but consider them potentially very small.
In this more realistic scenario it becomes very important to limit the dependency on factors like $w_1^{-1}$, which is what guides a lot of our algorithmic decisions.

\vspace{-.5em}
\paragraph{Example 1: Near balanced $\rho$ values.}\label{par:example1}
As noted, many pairs $(t_q,t_u)$ are not optimal on the trade-off, in that one can reduce one or both of $\rho_q$, $\rho_u$ by changing them.
The pairs that are optimal are not always simple to express, so it is interesting to study those that are.
One such particularly simple choice on the Lagrangian is $t_q=1-w_u$ and $t_u=1-w_q$.\footnote{To make matters complicated, this \emph{is} a simple choice \emph{and} on the Lagrangian, but that doesn't prove another point on the Lagrangian won't reduce both $\rho_q$ and $\rho_u$ and thus be better. That we have a matching lower bound for the algorithm doesn't help, since it only matches the upper bound for $(t_q,t_u)$ minimal in \Cref{thm:main}.
In the case $w_q=w_u$ we can, however, prove that this $t_q,t_u$ pair is optimal.}
This point is special because the values of $t_q$ and $t_u$ depend only on $w_u$ and $w_q$, while in general they will also depend on $w_1$ and $w_2$.
In this setting we have $T_i=\left[\begin{smallmatrix}1-w_q-w_u+w_i & w_u-w_i\\w_u-w_i & w_i\end{smallmatrix}\right]$, which can be plugged into \Cref{thm:main}.

In the case $w_q=w_u=w$ we get the balanced $\rho$ values
$\rho_q = \rho_u = \log(\frac{w_1}{w}\frac{1-w}{1-2w+w_1})/\log(\frac{w_2}{w}\frac{1-w}{1-2w+w_2})$
in which case it is simple to compare with Chosen Path's $\rho$ value of
$\log(\frac{w_1}{w})\big/\log(\frac{w_2}{w})$.
Chosen Path on balanced sets was shown in~\cite{tobias2016} to be optimal for $w,w_1,w_2$ small enough, and we see that Supermajorities do indeed recover this value for that range.

We give a separate lower bound in \Cref{sec:explicit_hyp} showing that this value is in fact optimal when $w_2 = w_q w_u$.

\vspace{-.5em}
\paragraph{Example 2: Subset/superset queries.}\label{cor:subset}

If $w_1=\min\{w_u,w_q\}$ and $w_2=w_uw_q$ we can take
$t_q = \frac{\alpha}{w_q-w_u} + \frac{w_q(1-w_u)}{w_q-w_u}$
and
$t_u = \frac{w_u(1-w_u)w_q(1-w_q)}{w_q-w_u} \alpha^{-1} - \frac{w_u(1-w_q)}{w_q-w_u}$
for any
$\alpha\in [w_1-w_qw_u$, $\max\{w_u,w_q\}-w_qw_u]$.
\Cref{thm:main} then gives data structures with
\begin{align}
   \rho_q &= \frac{t_q\log\frac{1-t_u}{1-w_u}-t_u\log\frac{1-t_q}{1-w_q}}{\du}
          &\rho_u &= \frac{(1-t_u)\log\frac{t_q}{w_q}-(1-t_q)\log\frac{t_u}{w_u}}{\du}
   \quad &\text{if }w_1=w_u,
   \\
   \rho_q &= \frac{-(1-t_u)\log\frac{t_q}{w_q}+(1-t_q)\log\frac{t_u}{w_u}}{\du}
   &\rho_u &= \frac{-t_q\log\frac{1-t_u}{1-w_u}+t_u\log\frac{1-t_q}{1-w_q}}{\du}
   \quad &\text{if }w_1=w_q.
\end{align}
This represents one of the cases where we can solve the Lagrangian equation to get a complete characterization of the $t_q$, $t_u$ values that give the optimal trade-offs.
Note that when $w_1=w_u$ or $w_1=w_q$, the $P$ matrix as used in the theorem has 0's in it.
The only way the KL-divergence $\D{ T }{ P }$ can then be finite is by having the corresponding elements of $T$ be 0 and use the fact that $0\log\frac{0}{q}$ is defined to be $0$ in this context.

\vspace{-.5em}
\paragraph{Example 3: Linear space/constant time.}
Setting $t_1$ in $T_1=\left[\begin{smallmatrix}t_1 & t_q-t_1\\t_u-t_1 & 1-t_q-t_u+t_1\end{smallmatrix}\right]$ such that either $\frac{t_1}{w_1} = \frac{t_q - t_1}{w_q - w_1}$
or $\frac{t_1}{w_1} = \frac{t_u - t_1}{w_u - w_1}$ we get respectively $\D{ T_1 }{ P_1 } = \dq$ or $\D{ T_1 }{ P_1 } = \du$.
\Cref{thm:main} then yields algorithms with either $\rho_q=0$ or $\rho_u=0$
corresponding to either a data structure with $\approx e^{\tilde O(\sqrt{\log n})}$ query time, or with $\tilde O(n)$ auxiliary space.
Like~\cite{andoni2017optimal} we have $\rho_q<1$ for any parameter choice, even when $\rho_u=0$.
For very small $w_q$ and $w_u < \exp(-\sqrt{\log n})$ there are some extra concerns which are discussed after the main theorem.

\subsection{Lower Bounds}\label{sec:lower_intro}

Results on approximate similarity search are usually phrased in terms of two quantities:
(1) The ``query exponent''  $\rho_q \in [0,1]$ which
determines the query time by bounding it by $O(n^{\rho_q})$;
(2) The ``update exponent'' $\rho_u \in [0,1]$ which
determines the time required to update the data structure when a point is inserted or deleted in  $Y$ and is given by $O(n^{\rho_u})$.
The update exponent also bounds the space usage as $O(n^{1 + \rho_u})$.
Given parameters $(w_q, w_u, w_1, w_2)$, the important question is for which pairs of $(\rho_q, \rho_u)$ there exists data structures.
E.g. given a space budget imposed by $\rho_u$, we ask how small can one make $\rho_q$?

Since the first lower bounds on Locality Sensitive Hashing~\cite{motwani2006lower}, lower bounds for approximate near neighbours have split into two kinds:
(1) Cell probe lower bounds~\cite{panigrahy2008geometric, panigrahy2010lower, andoni2017optimal} and (2) Lower bounds in restricted models~\cite{o2014analysis, andoni2015tight, andoni2017optimal, tobias2016}.
The most general such model for data-independent algorithms was formulated by~\cite{andoni2017optimal} and defines a type of data structure called ``list of points'':
\begin{definition}[List-of-points]\label{def:list-of-points}
   Given some universes, $Q$, $U$, a similarity measure $S : Q\times U\to[0,1]$ and two thresholds $1\ge s_1 > s_2 \ge 0$,
   \begin{enumerate}
      \item We fix (possibly random) sets $A_i\subseteq\{-1,1\}^d$, for $1\le i\le m$; and with each possible query point $q\in\{-1,1\}^d$, we associate a (random) set of indices $I(q)\subseteq [m]$;
      \item For a given dataset $P$, we maintain $m$ lists of points $L_1, L_2, \dots, L_m$, where $L_i=P\cap A_i$.
      \item On query $q$, we scan through each list $L_i$ for $i\in I(q)$ and check whether there exists some $p\in L_i$ with $S(q,p) \ge s_2$. If it exists, return $p$.
   \end{enumerate}
   The data structure succeeds, for a given $q\in Q, p \in P$ with $S(q,p)\ge s_1$,
   if there exists $i\in I(q)$ such that $p\in L_i$. The total space is defined by
   $S = m + \sum_{i \in [m]}\abs{L_i}$ and the query time by $T = \abs{I(q)} + \sum_{i \in I(q)} \abs{L_i}$.
\end{definition}
The List-of-points model contains all known Similarity Search data structures, except for the so-called ``data-dependent algorithms''.
It is however conjectured~\cite{andoni2016optimal} that data-dependency does not help on random instances (recall this corresponds to $w_2=w_qw_u$), which is the setting of \Cref{thm:lower_main}.

We show two main lower bounds:
(1) That requires $w_q=w_u$ and $\rho_q=\rho_u$ and (2) That requires $w_2=w_qw_u$.
The second type is tight everywhere, but quite technical.
The first type meanwhile is quite simple to state, informally:

\begin{theorem}\label{thm:donnell}
   If $w_q=w_u=w$ and $\rho_u=\rho_q=\rho$, any data-independent LSF data structure must use space $n^{1+\rho}$ and have query time $n^{\rho}$ where
      $
      \rho \ge \log(\frac{w_1-w^2}{w(1-w)})
      \big/\log(\frac{w_2-w^2}{w(1-w)})
      $
      .
\end{theorem}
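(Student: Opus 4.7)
The plan is to adapt the hypercontractive list-of-points lower bound technique (Andoni--Razenshteyn, O'Donnell et al.) to GapSS. Since $w_q=w_u=w$ and $\rho_q=\rho_u=\rho$, I would first symmetrize the filter family $\set{A_i}$ under the natural $\mathrm{Sym}(U)$-action on $\binom{U}{w\abs{U}}$, so that without loss of generality the $A_i$ are distribution-invariant under coordinate permutations and $I(q)=\set{i:q\in A_i}$. Let $f_i$ be the indicator of $A_i$ on the slice and $\alpha_i=\ep{f_i(y)}$ for $y$ uniform on $\binom{U}{w\abs{U}}$. The hard input is a query $q$ uniform of size $w\abs{U}$, one planted near point $y^*$ with $\abs{q\cap y^*}=w_1\abs{U}$, and $n-1$ noise points $y'_j$ drawn uniformly conditioned on $\abs{q\cap y'_j}=w_2\abs{U}$.

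Define the correlations $\sigma_j=(w_j-w^2)/(w(1-w))\in(0,1)$ for $j\in\set{1,2}$. A direct computation gives $\ep{f_i(q)f_i(y)\mid\abs{q\cap y}=w_j\abs{U}}=\langle f_i,T_{\sigma_j}f_i\rangle=:\mathrm{Stab}_{\sigma_j}(f_i)$, either directly on the slice or after lifting to the biased product measure $w$ on $\zo^{\abs{U}}$ by size-bucketing at a $\poly\log n$ cost. From the three data structure budgets I extract: \emph{(a)} space $S\le n^{1+\rho}$ gives $\sum_i\alpha_i=O(n^\rho)$; \emph{(b)} query time $T\le n^\rho$ expands as $\sum_i\alpha_i+(n-1)\sum_i\mathrm{Stab}_{\sigma_2}(f_i)=O(n^\rho)$, yielding $\sum_i\mathrm{Stab}_{\sigma_2}(f_i)=O(n^{\rho-1})$; \emph{(c)} constant success probability on $(q,y^*)$ requires $\sum_i\mathrm{Stab}_{\sigma_1}(f_i)=\Omega(1)$.

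The analytic core is the log-convexity of noise stability: for $a=\log\sigma_1/\log\sigma_2\in(0,1)$ and any $f\in L^2$, Hölder's inequality applied to the Fourier expansion $\mathrm{Stab}_\sigma(f)=\sum_S\sigma^{\abs{S}}\hat f(S)^2$ (writing $\sigma_1^{\abs{S}}=\sigma_2^{a\abs{S}}$ and splitting $\hat f(S)^2=\hat f(S)^{2a}\cdot\hat f(S)^{2(1-a)}$) gives
\[
\mathrm{Stab}_{\sigma_1}(f)\;\le\;\mathrm{Stab}_{\sigma_2}(f)^{a}\,\|f\|_2^{2(1-a)}.
\]
Applied to each $f_i$ with $\|f_i\|_2^2=\alpha_i$, and then combined with Hölder across $i$ at exponents $1/a$ and $1/(1-a)$,
\[
\Omega(1)\;\le\;\sum_i\mathrm{Stab}_{\sigma_1}(f_i)\;\le\;\Bigl(\sum_i\mathrm{Stab}_{\sigma_2}(f_i)\Bigr)^{\!a}\Bigl(\sum_i\alpha_i\Bigr)^{\!1-a}\;\le\;n^{a(\rho-1)+(1-a)\rho}\;=\;n^{\rho-a}.
\]
Rearranging forces $\rho\ge a=\log\sigma_1/\log\sigma_2$, which is exactly the claim after substituting $\sigma_j=(w_j-w^2)/(w(1-w))$.

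The main obstacle is carrying the Fourier/hypercontractive calculation out cleanly on the slice rather than on the unbiased cube: the tightest log-convexity statement holds on $\zo^{\abs{U}}$ under the biased product measure, so my approach is to lift GapSS to the biased cube through a standard size-bucketing argument (producing only $\poly\log n$ parallel data structures whose exponents are absorbed) and then invoke the Oleszkiewicz--Wolff hypercontractivity. Secondary concerns are making the symmetrization step formally valid inside the List-of-points model of Definition~\ref{def:list-of-points} without losing constants, and handling the boundary case $w_2\le w^2$ where $\sigma_2\le 0$ and the bound is vacuous anyway.
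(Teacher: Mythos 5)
Your analytic core coincides with the paper's: the estimate $\mathrm{Stab}_{\sigma_1}(f)\le \mathrm{Stab}_{\sigma_2}(f)^{a}\,\|f\|_2^{2(1-a)}$ with $a=\log\sigma_1/\log\sigma_2$ is exactly \Cref{lem:compare-correlation} (proved there via the power-mean inequality on the Fourier weights, which is the same computation as your Hölder split), and the correlations $\sigma_j=(w_j-w^2)/(w(1-w))$ on the $w$-biased cube are the paper's $\alpha,\beta$ in \Cref{lemma:donnell}. Two of the difficulties you flag are non-issues: no Oleszkiewicz--Wolff hypercontractivity is needed anywhere for this particular bound, and no slice-to-cube lifting or bucketing is needed either, because the paper's hard distribution is the product-correlated Bernoulli instance from the outset, with exact sizes and intersections recovered by concentration and fixing $o(\log n)$ coordinates; so the identity $\E[f(q)f(y)]=\langle f,T_{\sigma_j}f\rangle$ is exact by construction rather than approximate on a slice.

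The genuine gap is your very first step. You work in the list-of-points model and assert, after permutation-symmetrization, that ``without loss of generality $I(q)=\set{i\,:\,q\in A_i}$''. Averaging over $\mathrm{Sym}(U)$ makes the family exchangeable, but it does not tie the query sets $B_i=\set{q : i\in I(q)}$ to the data sets $A_i$; in \Cref{def:list-of-points} these are unrelated (and independently random) objects. Writing $g_i=\mathbf{1}_{B_i}\neq f_i$, your budgets become $\sum_i\E f_i\le n^{\rho}$, $\sum_i\E g_i\le n^{\rho}$, $\sum_i\langle g_i,T_{\sigma_2}f_i\rangle\lesssim n^{\rho-1}$, and the success condition $\sum_i\langle g_i,T_{\sigma_1}f_i\rangle=\Omega(1)$, and the chain breaks: the time budget only controls the \emph{cross} term, while your log-convexity is a single-function statement; for two functions the cross spectrum $\hat f_i(S)\hat g_i(S)$ is signed, so no analogous comparison between $\langle g,T_{\sigma_1}f\rangle$ and $\langle g,T_{\sigma_2}f\rangle$ holds, and Cauchy--Schwarz only replaces the cross term by $\mathrm{Stab}_{\sigma_2}(f_i)^{1/2}\mathrm{Stab}_{\sigma_2}(g_i)^{1/2}$ in the wrong direction, neither factor being bounded by any resource of the data structure. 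Handling asymmetric $f,g$ is precisely what forces the two-function hypercontractive machinery of \Cref{lem:hypercontractive-to-lower-bound} and \Cref{lem:correspondence-hypercontractive-divergence}, which the paper has tight only at $w_2=w_qw_u$; this is why \Cref{thm:donnell} is proved only in the LSF model of \Cref{defn:lsf}, where one random filter family is given and the expectation over its randomness is passed through the exponent via the surrogate $\overline f$ with $\hat{\overline{f}}(S)=\sqrt{\E\hat f(S)^2}$. If you restate your argument in that model --- one family, one $f$, the $\overline f$ trick --- it becomes the paper's proof; as written, the stronger list-of-points claim is not established.
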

The LSF Model defined in~\cite{becker2016new, tobias2016} generalizes~\cite{motwani2006lower, o2014optimal}, but is slightly stronger than list-of-points.
It is most likely that they are equivalent, so we defer its definition till~\Cref{defn:lsf}.
We will just note that previous bounds of this type~\cite{o2014optimal, tobias2016} were only asymptotic, whereas our lower bound holds over the entire range of $0<w_2<w_1<w<1$.
By comparison with $\rho=\log(\frac{w_1(1-w)}{w(1-2w+w_1)})/\log(\frac{w_2(1-w)}{w(1-2w+w_2)})$ from Example 1 in the Upper Bounds section, we see that the lower bound is sharp when
$w,w_1,w_2\to 0$\footnote{
As $w,w_1,w_2\to 0$ we recover the lower bound $\rho\ge\log(\frac{w_1}{w})\big/\log\left(\frac{w_2}{w}\right)$ obtained for Chosen Path in~\cite{tobias2016}.}
and also for $w_1\to w$, since $w(1-2w+w_1)=w(1-w) - w(w-w_1)$.
However, for $w_2=w^2$ (the random instance), \Cref{thm:donnell} just says $\rho\ge 0$, which means it tells us nothing.

For the random instances, we give an even stronger lower bound, which gets rid of the restrictions $w_q=w_u$ and $\rho_q=\rho_u$.
This lower bound is tight for any $0<w_qw_u<w_1<\min\{w_q,w_u\}$ in the list-of-points model.
\begin{theorem}\label{thm:lower_main}
   Consider any list-of-point data structure for the $(w_q, w_u, w_1, w_q w_u)$-GapSS problem over a universe
   of size $d$ of $n$ points with $w_q w_u d = \omega(\log n)$, which uses expected space $n^{1 + \rho_u}$,
   has expected query time $n^{\rho_q - o_n(1)}$, 
   and succeeds with probability at least $0.99$. Then for every
   $\alpha \in [0, 1]$ we have that
   \begin{align}
        \alpha \rho_q + (1 - \alpha)\rho_u 
        \ge \inf_{\substack{
           t_q, t_u \in [0,1]
           \\ t_u \neq w_u
        }}
        \left(
            \alpha \frac{\D{T}{P}-\dq}{\du}
            + (1 - \alpha) \frac{\D{T}{P}-\du}{\du}
        \right) \;,
   \end{align}
   where
   $P = \smat{w_1 && w_q-w_1 \\ w_u-w_1 && 1-w_q-w_u+w_1}$
   and
   $T = 
        \underset{\substack{
           T \ll P
        , \underset{X\sim T}{E}[X] = \svec{t_q\\t_u}
  }}{\arginf}\D{T}{P}$.
\end{theorem}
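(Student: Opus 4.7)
The strategy is to (i) restrict to a random-instance hard distribution, (ii) bound each filter's contribution to the planted collision probability via a sharp two-function hypercontractive inequality for the directed noise operator $T^{w_q\to w_u}_\rho$, and (iii) combine these bounds with the list-of-points budgets through generalized Hölder, identifying the resulting exponents with KL-divergences by Fenchel duality.

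\textbf{Setup.} Consider the random-instance distribution: the database $Y$ consists of $n$ i.i.d.\ uniform size-$w_u d$ sets and the planted pair $(q^\star,y^\star)$ has coordinates drawn i.i.d.\ from $P=\smat{w_1 & w_q-w_1\\ w_u-w_1 & 1-w_q-w_u+w_1}$; using $w_qw_ud=\omega(\log n)$ we may condition on $|q^\star|,|y^\star|$ being exact. After conditioning on the algorithm's randomness, the filters $\{A_i\}_{i=1}^m$ and the query map $I$ are deterministic. Write $p_i=\Pr_q[i\in I(q)]$, $q_i=\Pr_y[y\in A_i]$, and $\pi_i=\Pr_{\text{planted}}[q^\star\in I^{-1}(i),\,y^\star\in A_i]$. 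Correctness with probability $0.99$ gives $\sum_i\pi_i\ge 0.98$, while the cost budgets translate to $\sum_i p_i\le T$, $\sum_i q_i\le S/n$, $n\sum_i p_iq_i\le T$, and $m\le S+T$.

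\textbf{Key estimate.} For each pair $(t_q,t_u)\in[0,1]^2$ with $t_u\ne w_u$, let $T$ minimize $\D{T}{P}$ subject to marginals $(t_q,t_u)$, and let $(s_q,s_u)$ denote the Fenchel-dual exponents at this point. The main technical ingredient is the paper's new two-function hypercontractive inequality for the directed noise operator $T^{w_q\to w_u}_\rho$ associated with $P$: for all $F\subseteq\binom{U}{w_q d}$ and $G\subseteq\binom{U}{w_u d}$,
\[
\pi(F,G)\;\le\;\Pr[F]^{s_q}\,\Pr[G]^{s_u}\,n^{o(1)}.
\]
The inequality carries the sharp collision exponent $\exp(-d\,\D{T}{P})$ and the Chernoff marginal exponents $\dq$ and $\du$, and parameterizes the entire feasible $(s_q,s_u)$ boundary as $(t_q,t_u)$ varies.

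\textbf{Combining and dualizing.} Apply the key estimate filter by filter and invoke generalized Hölder:
\[
0.98\;\le\;\sum_i\pi_i\;\le\;n^{o(1)}\sum_i p_i^{s_q}q_i^{s_u}\;\le\;n^{o(1)}\,T^{s_q}(S/n)^{s_u}\,m^{1-s_q-s_u}
\]
(with two analogous inequalities that fold in the scanning budget $n\sum p_iq_i\le T$ and the filter count $m\le S+T$). Taking logarithms, dividing by $\log n$, and using the natural scale $\log n\asymp d\,\du$ forced by expected list-size $n^{\rho_u}$, the resulting linear inequality in $(\rho_q,\rho_u)$ has the form $s_q\rho_q+s_u\rho_u\ge s_u-o(1)$. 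Rescaling $(s_q,s_u)$ so the functional $\alpha\rho_q+(1-\alpha)\rho_u$ appears on the left and substituting the Fenchel identification of the key estimate, the bound becomes exactly
\[
\alpha\rho_q+(1-\alpha)\rho_u\;\ge\;\alpha\,\frac{\D{T}{P}-\dq}{\du}+(1-\alpha)\,\frac{\D{T}{P}-\du}{\du}-o(1).
\]
Since this holds for every feasible $(t_q,t_u)$, infimizing over $(t_q,t_u)$ yields the theorem.

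\textbf{Main obstacle.} The crux is the sharp two-function hypercontractive inequality of the key estimate: it must hold for \emph{arbitrary} filters $F,G$, match the sharp collision and marginal exponents coming from $\D{T}{P}$, $\dq$, $\du$, and cover the entire $(s_q,s_u)$ feasible curve. Standard Bonami--Beckner on symmetric $\{\pm 1\}^d$ does not suffice; one must work with biased marginals $w_q\ne w_u$, general correlation parameter $w_1$, and obtain the sharp exponent curve as the Fenchel dual of $(t_q,t_u)\mapsto\D{T}{P}$. Once this is in place with only $n^{o(1)}$ slack, the LP-duality step is a direct extension of the scheme of~\cite{andoni2017optimal}.
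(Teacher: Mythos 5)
Your overall blueprint does coincide with the paper's: reduce to the $P^{\otimes d}$ random instance, feed a two-function hypercontractive inequality for the biased product measure into a list-of-points accounting in the style of Andoni et al., and translate the admissible exponents into KL-divergences by Fenchel duality. But two of your steps have genuine gaps. First, your ``key estimate'' is precisely the crux of the theorem and is only asserted. The paper never proves a sharp explicit inequality for arbitrary filters directly, and it does not need Oleszkiewicz-type sharp constants for this theorem; instead it proves an equivalence (\Cref{lem:correspondence-hypercontractive-divergence}): $\mathcal{P}_{XY}$ is $(r,s)$-hypercontractive if and only if $\D{\mathcal{Q}_{XY}}{\mathcal{P}_{XY}} \ge \D{\mathcal{Q}_X}{\mathcal{P}_X}/r + \D{\mathcal{Q}_Y}{\mathcal{P}_Y}/s$ for all $\mathcal{Q}\ll\mathcal{P}$, via a Fenchel-type inequality, and then tensorizes by the two-function hypercontractivity induction theorem. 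The divergence-ratio form of the bound then falls out by choosing, for each $r$, the minimal admissible $s$, whose value is characterized by the infimum over $\mathcal{Q}$ (equivalently over $(t_q,t_u)$). Without this, your claim that every $(t_q,t_u)$ yields a supporting pair $(s_q,s_u)$ valid for \emph{arbitrary} $F,G$ with only $n^{o(1)}$ slack is unsubstantiated, and your framing (that one must first establish a sharp biased analogue of Bonami--Beckner) mislocates what is actually required.

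Second, the displayed accounting chain would fail as written. In the relevant regime $s_q+s_u=\tfrac1r+\tfrac1s\ge 1$, so the three-term H\"older step producing the factor $m^{1-s_q-s_u}$ is invalid (with all mass on a single filter the claimed inequality is simply false); and even a corrected version of that chain, which omits the far-point scanning term, only gives the vacuous bound $s_q\rho_q+s_u\rho_u\ge -o(1)$. The factor $n$ that produces the ``$-1$'' in $\tfrac1r+\tfrac1s-1$ comes from the term $n\sum_i p_iq_i$ in the query time, and it must enter the per-filter optimization itself; the paper does this in \Cref{app:proof-hypercontractive-to-lower-bound} with a Lagrange-multiplier computation over the allocation $(\gamma_i,\tau_i)$, which is why the space exponent appears with coefficient $1/r'$ (not $1/s$) and why your intermediate inequality $s_q\rho_q+s_u\rho_u\ge s_u-o(1)$ does not match the correct $\tfrac1r\rho_q+\tfrac1{r'}\rho_u\ge\tfrac1r+\tfrac1s-1$. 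Finally, the ``natural scale'' $\log n\asymp d\,\du$ is neither available nor needed: in the lower-bound model $d$ is only constrained by $w_qw_u d=\omega(\log n)$, the data structure need not operate at any particular scale, and the exponents $r,s$ are dimension-free, so the passage to divergences is carried out by the equivalence lemma rather than by identifying $d$ with $\log n/\du$.
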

Note that for $w_2=w_qw_u$, the term $\D{T_2}{P_2}$, in \Cref{thm:main}, splits into $\dq+\du$, and so the upper and lower bounds perfectly match.
This shows that for any linear combination of $\rho_q$ and $\rho_u$ our algorithm obtains the minimal value.
By continuity of the terms, this equivalently states as saying that no list-of-points algorithm can get a better query time than our \Cref{thm:main}, given a space budget imposed by $\rho_u$.  \footnote{
It is easy to see that $\rho_u = 0$ minimizes $\alpha \rho_q + (1 - \alpha)\rho_u$ when $\alpha = 0$,
and similarly $\rho_u = \rho_{max}$ minimizes $\alpha \rho_q + (1 - \alpha)\rho_u$ when $\alpha = 1$,
where $\rho_{max}$ is the minimal space usage when $\rho_q = 0$. Furthermore, we note that when we change $\alpha$ from 
$0$ to $1$, then $\rho_u$ will continuously and monotonically go from $0$ to $\rho_{max}$. This shows that for every
$\rho_u \in [0, \rho_{max}]$ there exists an $\alpha$ such that $\alpha \rho_q + (1 - \alpha)\rho_u$ is minimized,
where $\rho_q$ is best query time given the space budget imposed by $\rho_u$.}

\vspace{-.5em}
\paragraph{Example 1: Choices for $t_q$ and $t_u$.}
As in the upper bounds, it is not easy to prove that a particular choice of $t_q$ and $t_u$ minimizes the lower bound.
One might hope that having corresponding lower and upper bounds would help in this endeavour, but alas both results have a minimization.
E.g. setting $t_q=1-w_u$ and $t_u=1-w_q$ the expression in \Cref{thm:lower_main} we obtain the same value as in \Cref{thm:main}, however it could be (though we strongly conjecture not) that another set of values would reduce both the upper and lower bound.

The good news is that the hypercontractive inequality by Oleszkiewicz~\cite{oleszkiewicz2003nonsymmetric}, can be used to prove certain optimal choices on the space/time trade-off.\footnote{The generalizations by Wolff~\cite{wolff2007hypercontractivity} could in principle expand this range, but they are only tight up to a constant in the exponent.}
In particular we will show that for $w_q=w_u=w$ the choice $t_q=t_u=1-w$ is optimal in the lower bound, and matches exactly the value $\rho=\log\left(\frac{w_1(1-w)}{w(1-2w+w_1)}\right)/\log(\frac{w_2(1-w)}{w(1-2w+w^2)})$ from Example 1 in the Upper Bounds section.

\vspace{-.5em}
\paragraph{Example 2: Cell probe bounds}
Panigrahy et al.~\cite{panigrahy2008geometric, panigrahy2010lower, kapralov2012nns} created a framework for showing cell probe lower bounds for problems like approximate near-neighbour search and partial match based on a notion of ``robust metric expansion''.
Using the hypercontractive inequalities shown in this paper with this framework, as well (as the extension by~\cite{andoni2017optimal}), we can show, unconditionally, that no data structure, which probes only 1 or 2 memory locations\footnote{For 1 probe, the word size can be $n^{o(1)}$, whereas for the 2 probe argument, the word size can only be $o(\log n)$ for the lower bound to hold.}, can improve upon the space usage of $n^{1+\rho_u}$ obtained by \Cref{thm:main} as we let $\rho_q=0$.
In particular, this shows that the near-constant query time regime from Example 3 in the Upper Bounds is optimal up to $n^{o(1)}$ factors in time and space.

\subsection{Technical Overview}

The contributions of the paper are conceptual as well as technical.
To a large part, what enables tight upper and lower parts is defining the right problem to study.
The second part is realizing which geometry is going to work and proving it in a strong enough model.
Lastly, a number of tricky algorithmic problems arise, requiring a novel algorithm and a new analysis of 2-dimensional branching random walks of exponentially tilted variables.

\paragraph{Supermajorities -- why do they work?}

Representing sets $x\subseteq U$ as a vector $x\in\{0,1\}^{|U|}$
and scaling by $1/\sqrt{|x|}$, we get $\|x\|_2=1$, and it is natural to assume the optimal Similarity Search data structure for data on the unit-sphere --- Spherical LSF --- should be a good choice.
Unfortunately this throws away two key properties of the data: that the vectors are sparse, and that they are non-negative.
Algorithms like MinHash, which are specifically designed for this type of data, take advantage of the sparsity by entirely disregarding the remaining universe, $U$.
This is seen by the fact that adding new elements to $U$ never changes the MinHash of a set.
Meanwhile Spherical LSF takes the inner product between x and a Gaussian vector scaled down by $1/\sqrt{|U|}$, so each new element added to $U$, in a sense, lowers the ``sensitivity'' to $x$.

In an alternative situation we might imagine $|x|$ being nearly as big as $|U|$.
In this case we would clearly prefer to work with $U\setminus x$, since information about an element that is left out, is much more valuable than information about an element contained in $x$.
What Supermajorities does can be seen as balancing how much information to include from $x$ with how much to include from $U\setminus x$.
A very good example of this is in \Cref{sec:minhashdom}, which shows how to view MinHash as an average of simple algorithms that sample a specific amount from each of $x$ and $U\setminus x$.
Supermajorities, however, does this in a more clever way, that turns out to be optimal.
A crucial advantage is the knowledge of the size of $x$, as well as the future queries, which allow us to use different thresholds on the storage and query side, each which is perfectly balanced to the problem instance.

As an interesting side effect, the extra flexibility afforded by our approach allows balancing the time required to perform queries with the size of the database.
It is perhaps surprising that this simple balancing act is enough to be optimal across all hashing algorithms as well as 1 cell and 2 cell probe data structures.

\vspace{.5em}

The results turn out to be best described in terms of the KL-divergences $\D{T}{P}-\dq$ and $\D TP-\du$, which are equivalent to $\D{T_{XY}}{P_{Y|X}T_X}$ and $\D{T_{XY}}{P_{X|Y}T_Y}$.
Here $P_{XY}$ is the distribution of a coordinated sample from both a query and a dataset, $P_X$ and $P_Y$ are the marginals, and $T_{XY}$ is roughly the distribution of samples conditioned on having a shared representative set.
Intuitively these describe the amount of information gained when observing a sample from $T_{XY}$ given a belief that $X$ (resp. $Y$) is distributed as $T$ and $Y$ (resp. $X$) is distributed as $P$.
In this framework, Supermajorities can be seen as a continuation of the Entropy LSH approach by~\cite{panigrahy2006entropy}.

\paragraph{Branching Random Walks}

Making Supermajorities a real algorithm (rather than just cell probe), requires, as discussed in the introduction, an efficient decoding algorithm of which representative sets overlap with a given cohort.
Previous LSF methods can be seen as trees, with independent pruning in each leaf,
going back to the LSH forest in 2005~\cite{bawa2005lsh, andoni2017lsh}.
Our method is the first to significantly depart from this idea:
While still a tree, our pruning is highly dependent across the levels of the tree, carrying a state from the root to the leaf which needs be considered by the pruning as well as the analysis.
In ``branching random walk'', the state is represented in the ``random walk'', while the tree is what makes it branching.
While considered heuristically in~\cite{becker2016new}, such a stateful oracle has not before been analysed, partly because it wasn't necessary.
For Supermajorities, meanwhile, it is crucially important.
The reason is that failure of the ``tensoring trick'' employed previously in the literature, when working with thresholds.

The approach from~\cite{andoni2006near, becker2016new, andoni2017optimal}
when applied to our scheme would correspond to making our representatives have size just $\sqrt{k}$ (so there are only $|R'|\approx e^{\tilde O(\sqrt{\log n})}$ of them,) and then make $R'^{\otimes \sqrt{k}}$ our new $R$.
Since $R'$ can be decoded in $n^{o(1)}$ time, and the second step can be made to take only time proportional to the output, this works well for some cases.
This approach has two main issues: (1) There is a certain overhead that comes from not using the optimal filters, but only an approximation. However, this gives only a factor $e^{\tilde O(\sqrt{\log n})}$, which is usually tolerated.
Worse is (2): Since the thresholds $t_qk$ and $t_uk$ have to be integral, using representative sets of size $\sqrt{k}$ means we have to ``repair'' them by a multiplicative distortion of approximately $1\pm 1/\sqrt{k}$, compared to $1\pm1/k$ for the ``real'' filters.
This turns out to cost as much as $w_1^{-\sqrt{k}}$ which can easily be much larger than the polynomial cost in $n$.
In a sense, this shows that supermajority functions must be applied to measure the entire representative part of a cohort at once!
This makes tensoring not well fit for our purposes.

\vspace{.5em}

A pruned branching random walk on the real line can be described in the following
way.
An initial ancestor is created with value 0 and form the zeroth generation.
The people in the $i$th generation give birth $\Delta$ times each and independently of one another to form the $(i + 1)$th generation.
The people in the $(i + 1)$th generation inherit the value, $v$, of their parent plus an independent random variable $X$.
If ever $v+X<0$, the child doesn't survive.
After $k$ generations, we expect by linearity $\Delta^k \Pr[\forall_{i\le k} \sum_{j\in[i]} X_i \ge 0]$ people to be alive, where $X_i$ are iid. random variables as used in the branching.
A pruned 2d-branching random walk is simply one using values $\in\R^2$.

Branching random walks have been analysed before in the Brownian motion literature~\cite{shi2015branching}.
They are commonly analysed using the second-moment method, however, as noted by Bramson~\cite{biggins1977martingale}: ``an immediate frontal assault using moment estimates, but ignoring the branching structure of the process, will fail.''
The issue is that the probability that a given pair of paths in the branching process survives is too large for standard estimates to succeed.
If the lowest common ancestor of two nodes manages to accumulate much more wealth than expected, its children will have a much too high chance of surviving.
For this reason we have to \emph{counterintuitively add extra pruning when proving the lower bound}
that a representative set survives. More precisely, we prune all the paths that accumulate much more than
the expected value. We show that this does not lower the probability that a representative set
is favour by much, while simultaneously decreasing the variance of the branching random walk a lot.
Unfortunately, this adds further complications, since ideally, we would like to prune every
path that gets below the expectation.
Combined with the upper bound this would trap the random walks in a band to narrow to guarantee the survival of a sufficient number of paths.
Hence instead, we allow the paths to deviate by roughly a standard deviation below the expectation.

\paragraph{Exponential Tilting and Non-asymptotic Central Limit Lemmas for Random Walks}

To analyse our algorithm, we need probability bounds for events such as ``survival of $k$ generations'' that are tight up to polynomial factors.
This contrast with many typical analysis approaches in Computer Science, such as Chernoff bounds, which only need to be tight up to a constant in the exponent.
We also can't use Central Limit type estimates, since they either are asymptotic (which correspond to assuming $w_q$ and $w_u$ are constants) or too weak (such as Berry Esseen) or just don't apply to random walks.

The technical tool we employ is ``Exponential Tilting'', which allows coupling the real pruned branching random walk to one that is much more well behaved.
This can be seen as a nicer way of conditioning the random walk on succeeding.
This nicer random walk then needs to be analysed for properties such as ``probability that the path is always above the mean.''
This is shown using a rearrangement lemma, known as the Truck Driver's Lemma:
Assume a truck driver must drive between locations $l_1, l_2, \dots, l_n, l_1$.
At stop $i$ they pick up $g_i$ gas, and between stop $i$ and $i+1$ they expand $e_i$ gas.
The lemma say, that if the sum of $g_i - e_i$ is non-negative, then there is a starting position $j\in\{1,\dots,n\}$ so that the driver's gas level never goes below 0.

This lemma gives an easy proof that a random walk on $\R_+$ of $n$ identically distributed steps, must be always non-negative with probability at least $1/n$ times the probability that it is eventually non-negative.
That's because, if the location is eventually non-negative, and all arrangements of steps happen with the same probability, then we must hit the ``always non-negative'' rotation with probability $\ge 1/n$.

Extending this argument to two dimensions turns out to require a few extra conditions, such as a positive correlation between the coordinates, but as a surprisingly key result, we manage to show  \Cref{lem:rearrangement}, which says that for 
   $k \in \mathbb Z_+$ and $p, p_1, p_2 \in [0, 1]$, such that, 
   $p k, p_1 k$, and $p_2 k$ are integers and $p \ge p_1 p_2$. Let $X^{(i)} \in \set{0, 1}^2$ be independent
   identically distributed variables. We then get that
   \[
      \prpcond{\forall l \le k : \sum_{i \in [k]} X^{(i)} \ge \smat{p_1\\p_2}l}{\sum_{i \in [k]} X^{(i)} = \smat{p_1\\p_2} k \wedge \sum_{i \in [k]} X^{(i)}_1 X^{(i)}_2 = p k}
       \ge k^{-3} \; .
   \]

\paragraph{Output-sensitive set decoding}
In our algorithm we are careful to not have factors of $|U|$ and $|X|$ (the size of the sets) on our query time and space bounds.
When sampling our tree, at each level we must pick a certain number, $\Delta$, of elements from the universe and check which of them are contained in the set being decoded.
This is an issue, since $\Delta$ may be much bigger than $X\cap \Delta$, and so we need an ``output-sensitive'' sampling procedure.
We do this by substituting random sampling with a two-independent hash function $h:U^k\to [q]$, where $q$ is a prime number close to $|U|$.
The sampling criterion is then $h(r\circ x) \le \Delta$, where $\circ$ is string concatenation.
The function $h(r)$ can be taken to be $\sum_{i=1}^ka_ix_i+b\pmod q$ for random values $a_1,\dots,a_k,b\in[q]$,
so we can expand $h(r\circ x)$ as $h(r) + a_k x \pmod q$.

Now
\begin{align}
   \{x\in X\mid (h(r\circ x) \mod q) < \Delta\}
   &=
   \{x\in X\mid (h(r) + a_k x \mod q) < \Delta\}
 \\&=\cup_{i=0}^{\Delta-1} \{x\in X\mid a_k x \equiv \Delta-h(r) \mod q\}
 \\&=\{x\in X\mid (a_k x\mod q) \in [-h(r), \Delta-h(r)]\mod q\},
\end{align}
where the last equation is adjusted in case $(-h(r)\mod q) > (\Delta-h(r)\mod q)$.
By pre-computing $\{a_k x \mod q \mid x\in X\}$ (just has to be done one for each of roughly $\log n$ levels in the tree), and storing the result in a predecessor data-structure (or just sorting it), the sampling can be done it time proportional to the size of its output.

\paragraph{Lower Bounds and Hypercontractivity}
The structure of our lower bounds is by now standard:
We first reduce our lower bound to random instances by showing that with high probability the random instances are in fact an instance of our problem.
For this to work, we need $w_w \abs{U} = \omega(\log n)$ and in particular $\abs{U} = \omega(\log n)$, so we get concentration around the mean.
This requirement is indeed known to be necessary, since the results of~\cite{becker2016new, chan2017orthogonal}
break the known lower bounds in the ``medium dimension regime'' when $|U|=O(\log n)$.

The main difference compared to previous bounds is that we study Boolean functions on so-called $p$-biased spaces, where the previous lower bounds used Boolean functions on unbiased spaces.
This is necessary for us to lower bound every parameter choice for GapSS.\@
In particular we are interested in tight hypercontractive inequalities on
$p$-biased spaces. We say that a distribution $\mathcal{P}_{XY}$ on a space $\Omega_X \times \Omega_Y$ is
$(r, s)$-hypercontractive if
\[
   \ep[(X,Y) \sim \mathcal{P}_{XY}]{f(X)g(Y)}
      \le \ep[X \sim \mathcal{P}_{X}]{f(X)^r}^{1/r} \ep[Y \sim \mathcal{P}_{Y}]{g(Y)^s}^{1/s} \; ,
\]
for all functions $f : \Omega_X \to \R$ and $g : \Omega_Y \to \R$, where $\mathcal{P}_X$ and $\mathcal{P}_Y$ are the
marginal distributions on the spaces $\Omega_X$ and $\Omega_Y$ respectively.
On unbiased spaces, the classic Bonami-Beckner inequality~\cite{bonami1970etude, beckner1975inequalities} gives a complete understanding of the hypercontractivity.
Unfortunately, this is not the case for $p$-biased spaces where the hypercontractivity
is much less understood, with~\cite{oleszkiewicz2003nonsymmetric} and~\cite{wolff2007hypercontractivity} being state of the art.
We sidestep the issue of finding tight hypercontractive inequalities by instead showing an equivalence between
hypercontractivity and KL-divergence, which is captured in the following lemma:%
\footnote{It appears that one might prove a similar result using~\cite{nair2014equivalent} and~\cite{friedgut2015information}.}
\begin{lemma}\label{lem:correspondence-hypercontractive-divergence}
   Let $\mathcal{P}_{XY}$ be a probability distribution on a space $\Omega_X \times \Omega_Y$ and let $\mathcal{P}_X$ and $\mathcal{P}_Y$ be the
   marginal distributions on the spaces $\Omega_X$ and $\Omega_Y$ respectively. Let $s, r \in [1, \infty)$, then the following is equivalent
   \begin{enumerate}
      \item For all functions $f : \Omega_X \to \R$ and $g : \Omega_Y \to \R$,
      \begin{align}
         \ep[(X, Y) \sim \mathcal{P}_{XY}]{f(X) g(Y)} \le \ep[X \sim \mathcal{P}_X]{f(X)^r}^{1/r} \ep[X \sim \mathcal{P}_Y]{g(Y)^s}^{1/s} \; .
      \end{align}
      \item For all probability distributions $\mathcal{Q}_{XY}\ll\mathcal{P}_{XY}$,
      \begin{align}
         \D{\mathcal{Q}_{XY}}{\mathcal{P}_{XY}} \ge \frac{\D{\mathcal{Q}_X}{\mathcal{P}_X}}{r} + \frac{\D{\mathcal{Q}_Y}{\mathcal{P}_Y}}{s} \; ,
      \end{align}
      where $\mathcal{Q}_X$ and $\mathcal{Q}_Y$ be the
      marginal distributions on the spaces $\Omega_X$ and $\Omega_Y$ respectively
   \end{enumerate}
\end{lemma}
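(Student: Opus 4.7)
The statement is a classical-style equivalence between a hypercontractive inequality and a KL-divergence data-processing inequality, and the bridge between the two is the Donsker--Varadhan variational formula
\[
\log \ep[X \sim \mathcal{P}]{e^{\phi(X)}} = \sup_{\mathcal{Q} \ll \mathcal{P}} \left\{ \ep[X \sim \mathcal{Q}]{\phi(X)} - \D{\mathcal{Q}}{\mathcal{P}} \right\}.
\]
The plan is to prove both implications by passing between multiplicative test functions $f, g$ and additive log-density test functions $a, b$ via the substitution $f = e^{a/r}$, $g = e^{b/s}$; this substitution is the whole content of the argument.

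\textbf{Direction $(1) \Rightarrow (2)$.} Given $\mathcal{Q}_{XY} \ll \mathcal{P}_{XY}$, I take arbitrary measurable $a : \Omega_X \to \R$ and $b : \Omega_Y \to \R$ and plug $f(x) = e^{a(x)/r}$ and $g(y) = e^{b(y)/s}$ into (1). After taking logarithms, hypercontractivity becomes
\[
\log \ep[\mathcal{P}_{XY}]{e^{a(X)/r + b(Y)/s}} \le \frac{1}{r}\log\ep[\mathcal{P}_X]{e^{a(X)}} + \frac{1}{s}\log\ep[\mathcal{P}_Y]{e^{b(Y)}}.
\]
Applying Donsker--Varadhan to the left side against $\mathcal{Q}_{XY}$ (with $\phi = a/r + b/s$) gives
\[
\frac{1}{r}\ep[\mathcal{Q}_X]{a(X)} + \frac{1}{s}\ep[\mathcal{Q}_Y]{b(Y)} - \D{\mathcal{Q}_{XY}}{\mathcal{P}_{XY}} \le \frac{1}{r}\log\ep[\mathcal{P}_X]{e^{a}} + \frac{1}{s}\log\ep[\mathcal{P}_Y]{e^{b}}.
\]
Rearranging,
\[
\frac{1}{r}\bigl(\ep[\mathcal{Q}_X]{a} - \log\ep[\mathcal{P}_X]{e^{a}}\bigr) + \frac{1}{s}\bigl(\ep[\mathcal{Q}_Y]{b} - \log\ep[\mathcal{P}_Y]{e^{b}}\bigr) \le \D{\mathcal{Q}_{XY}}{\mathcal{P}_{XY}},
\]
and taking the supremum over $a$ and $b$ separately, another application of Donsker--Varadhan turns the two parenthesised quantities into $\D{\mathcal{Q}_X}{\mathcal{P}_X}$ and $\D{\mathcal{Q}_Y}{\mathcal{P}_Y}$, yielding (2).

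\textbf{Direction $(2) \Rightarrow (1)$.} It suffices to treat $f, g > 0$ (replace by $|f|, |g|$ and handle zeros by approximation). Apply Donsker--Varadhan to $\log \ep[\mathcal{P}_{XY}]{fg}$ with $\phi = \log f + \log g$:
\[
\log \ep[\mathcal{P}_{XY}]{fg} = \sup_{\mathcal{Q}_{XY} \ll \mathcal{P}_{XY}} \left\{ \ep[\mathcal{Q}_X]{\log f} + \ep[\mathcal{Q}_Y]{\log g} - \D{\mathcal{Q}_{XY}}{\mathcal{P}_{XY}} \right\}.
\]
By (2), each term in the supremum is bounded above by $\ep[\mathcal{Q}_X]{\log f} - \frac{1}{r}\D{\mathcal{Q}_X}{\mathcal{P}_X} + \ep[\mathcal{Q}_Y]{\log g} - \frac{1}{s}\D{\mathcal{Q}_Y}{\mathcal{P}_Y}$. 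Crucially, this upper bound depends on $\mathcal{Q}_{XY}$ only through its marginals, so I can take the sup over $\mathcal{Q}_X$ and $\mathcal{Q}_Y$ independently. Rewriting $\ep[\mathcal{Q}_X]{\log f} = \frac{1}{r}\ep[\mathcal{Q}_X]{\log f^r}$ and invoking Donsker--Varadhan once more gives $\sup_{\mathcal{Q}_X}\{\ep[\mathcal{Q}_X]{\log f^r} - \D{\mathcal{Q}_X}{\mathcal{P}_X}\} = \log \ep[\mathcal{P}_X]{f^r}$, and symmetrically for $g$. Combining and exponentiating gives (1).

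\textbf{Obstacles.} The arguments are essentially formal once the substitution $f = e^{a/r}$ is made; the only delicate step is the decoupling in $(2) \Rightarrow (1)$, where one must observe that the constraint ``$\mathcal{Q}_{XY} \ll \mathcal{P}_{XY}$'' can be relaxed after applying (2) because the resulting upper bound is marginal-only, so arbitrary product measures $\mathcal{Q}_X \otimes \mathcal{Q}_Y$ are admissible. Measure-theoretic details (functions taking zero or negative values, approximation of $\log f$ when $f$ vanishes) can be handled by standard truncation and monotone convergence, and do not affect the core argument.
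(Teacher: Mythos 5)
Your proof is correct and follows essentially the same route as the paper: the paper's key ingredient (its Lemma on $\D{\mathcal{Q}}{\mathcal{P}} + \log \ep[X\sim\mathcal{P}]{e^{\phi(X)}} \ge \ep[X\sim\mathcal{Q}]{\phi(X)}$, with equality for the exponentially tilted measure) is exactly the one-sided Donsker--Varadhan/Gibbs variational inequality you invoke, and both arguments pass between (1) and (2) by the same exponential substitution of test functions. The only cosmetic difference is that you take suprema via the variational formula where the paper plugs in the optimizing $\phi$ (resp.\ the tilted $\mathcal{Q}_{XY}$) directly.
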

The main technical argument needed for proving \Cref{lem:correspondence-hypercontractive-divergence} is that,
for all probability distributions
$\mathcal{P}, \mathcal{Q}$, where $\mathcal{Q}$ is absolutely continuous with respect to $\mathcal{P}$, and all functions $\phi$,
\begin{align}
   \D{\mathcal{Q}}{\mathcal{P}} + \log \ep[X \sim \mathcal{P}]{\exp(\phi(X))} \ge \ep[X \sim \mathcal{Q}]{\phi(X)}
   .
\end{align}
This can be seen as a version of Fenchel's inequality, which says that $f(x) + f^*(p) \ge xp$ for all convex functions $f, f^*$,
where $f^*$ is convex conjugate of $f$, and all $x, p \in \R$.

We use \Cref{lem:correspondence-hypercontractive-divergence} together with the
``Two-Function Hypercontractivity Induction Theorem''~\cite{o2014analysis}, which shows that
if $\mathcal{P}_{XY}^{\otimes n}$ is $(r, s)$-hypercontractive if and only if $\mathcal{P}_{XY}$ is $(r, s)$-hypercontractive.
This implies that $\ep[(X, Y) \sim \mathcal{P}_{XY}^{\otimes n}]{f(X) g(Y)} \le \ep[X \sim \mathcal{P}_X^{\otimes n}]{f(X)^r}^{1/r} \ep[X \sim \mathcal{P}_Y^{\otimes n}]{g(Y)^s}^{1/s}$
for all functions $f, g$ if and only $\D{\mathcal{Q}_{XY}}{\mathcal{P}_{XY}} \ge \frac{\D{\mathcal{Q}_X}{\mathcal{P}_X}}{r} + \frac{\D{\mathcal{Q}_Y}{\mathcal{P}_Y}}{s}$
for all probability distributions $\mathcal{Q}_{XY}$. In the proof of \Cref{thm:lower_main} we have
$\mathcal{P}_{XY} = \left[\begin{smallmatrix}w_1 & w_q-w_1\\w_u-w_1 & 1-w_q-w_u+w_1\end{smallmatrix}\right]$
and consider all the probability distributions of the form $\mathcal{Q}_{XY} =\underset{\substack{
   \mathcal{Q}_{XY} \ll \mathcal{P}_{XY}
, \underset{X\sim \mathcal{Q}_{XY}}{E}[X] = \svec{t_q\\t_u}
}}{\arginf}\D{\mathcal{Q}_{XY}}{\mathcal{P}_{XY}}$ for $t_q, t_u \in [0, 1]$.

The obtained inequalities can be used directly with the framework by Panigrahy et al.~\cite{panigrahy2008geometric} to obtain bounds on ``Robust Expansion'', which has been shown to give lower bounds for 1-cell and 2-cell probe data structures, with word size $n^{o(1)}$ and $o(\log n)$ respectively.

\paragraph{The Directed Noise Operator}
We extend the range of our lower bounds further, by studying a recently defined generalization of the $p$-biased noise operator~\cite{ahlberg2014noise, abdullah2015directed, lifshitz2018hypergraph, keevash2019hypercontractivity}.
This ``Directed Noise Operator'', $T_\rho^{p_1 \to p_2} : L_2(\set{0, 1}^d, \pi_{p_1}^{\otimes d}) \to L_2(\set{0, 1}^d, \pi_{p_2}^{\otimes d})$ has the property
$   \widehat{T_{\rho}^{p_1 \to p_2} f}^{(p_2)}(S) = \rho^{\abs{S}}\hat{f}^{(p_1)}(S) $
for any $S \subseteq [d]$, where $\hat{f}^{(p)}(S)$ denotes the $p$-biased Fourier coefficient of $f$.
Just like the Ornstein Uhlenbeck operator, we show that $T_{\sigma}^{p_2 \to p_3} T_{\rho}^{p_1 \to p_2} = T_{\rho \sigma}^{p_1 \to p_3}$ and that $T_{\rho}^{p_2 \to p_1}$ is the adjoint of $T_{\rho}^{p_1 \to p_2}$.
By connecting this operator to our hypercontractive theorem, we can integrate the results by Oleszkiewicz and obtain provably optimal points on the $(t_q,t_u)$ trade-off.

We show that for $p$-biased distributions over $\{0,1\}^n$, we can add the following line to the list of equivalent statements in \Cref{lem:correspondence-hypercontractive-divergence}:
\begin{enumerate}
   \setcounter{enumi}{2}
   \item For all functions $f:\{0,1\}^n\to\R$ it holds
      $\|T^{p_1\to p_2}_\rho f\|_{L_{s'}(p_1)} \le \|f\|_{L_r(p_2)}$.
\end{enumerate}
The operator allows us to prove some optimal choices for $r$ and $s$ in \Cref{lem:correspondence-hypercontractive-divergence} (and by effect for $t_q$ and $t_u$.)
Following~\cite{ahlberg2014noise} we use Pareseval's identity, to write
    $\norm{T^{p_1\to p_2}_{\rho} f}{L_2(p_2)}^2$ as
\begin{align}
       \widehat{T^{p_1\to p_2}_{\rho} f}^{(p_2)}(\emptyset)^2 + \widehat{T^{p_1\to p_2}_{\rho} f}^{(p_2)}(\set{1})^2
       = \widehat{f}^{(p_1)}(\emptyset)^2 + \rho^2 \widehat{f}^{(p_1)}(\set{1})^2
       = \norm{T^{p_1\to p_1} f}{L_2(p_1)}^2
       \le \|f\|_{L_r(p_1)}^2
       \; ,
 \end{align}
where $r$ is perfectly determined by Oleszkiewicz in~\cite{oleszkiewicz2003nonsymmetric}.
It is possible to prove further lower bounds using H\"older's inequality on $T$, however the bounds obtained this way turn out to be optimal only in the case $s=2$ or $r=2$ that also follow from Parseval.
A particular simple case is $r = s = $, $w_q=w_u=w$, and $w_2=w^2$, in which case the arguments above gives the lower bound $\rho\ge \log(\frac{w_1(1 - w)}{w(1 - 2w + w_1)})/\log(\frac{1 - w}{w})$ mentioned in Example 1 in the Upper Bounds section.

Another use of $T$ is in proving lower bounds outside of the random instance $w_2=w_qw_u$ regime.
Using the power means inequality over $p$-biased Fourier coefficients,
we show the relation
\begin{align}
   \left( \langle T^{p\to p}_\alpha f, f \rangle_{L_2(p)}/\norm{f}{L_2(p)}^2 \right)^{1/\log(1/\alpha)} 
      \le \left( \langle T^{p\to p}_\beta f, f \rangle_{L_2(p)}/\norm{f}{L_2(p)}^2 \right)^{1/\log(1/\beta)} \; .
\end{align}
which is allows comparing functions under two different noise levels.
This is stronger than hypercontractivity, even though we can prove it in fewer instances.
The proof can been seen as a variation of~\cite{o2014optimal} and we get a lower bound with a similar range, but without asymptotics and for Set Similarity instead of Hamming space Similarity Search.

\subsection{Related Work}\label{sec:related}

\def\arraystretch{1.5}
\begin{table}[p!]
   \center
   \begin{tabular}{|l|c|c|}
      \hline
      \bfseries Method
      &
      \bfseries Balanced $\rho_q=\rho_u$
      &
      \bfseries Space/time trade-offs
      \\ \hline \hline
      \multirow{2}{*}{
         \shortstack[l]{
            Spherical LSF\\
            \cite{terasawa2007spherical, laarhoven2015tradeoffs, christiani2016framework, andoni2017optimal}
      }}
      &
      \multirow{2}{*}{
         $\displaystyle\frac{1-\alpha}{1+\alpha}\frac{1+\beta}{1-\beta}$
      }
      &
      $\rho_q = \frac{(1-\alpha^{1+\lambda})^2}{1-\alpha^2}\frac{1-\beta^2}{(1-\alpha^\lambda\beta)^2}$$^{(***)}$
      \\&&
      $\rho_u = \frac{(1-\alpha^{1+\lambda})^2}{1-\alpha^2}\frac{1-\beta^2}{(1-\alpha^\lambda\beta)^2}$
      \\\hline
      \multirow{2}{*}{
         \shortstack[l]{
            MinHash
            \cite{broder1997syntactic}
         }
      }
      &
      \multirow{2}{*}{
         $\displaystyle\frac{\log\frac{w_1}{w_q+w_u-w_1}}{\log\frac{w_2}{w_q+w_u-w_2}}$
      }
      &
      \multirow{2}{*}{
         \shortstack[l]{
            Same as above$^{(*)}$ with\\
         $\alpha=\frac{w_1}{w_q+w_u-w_1}, \beta=\frac{w_2}{w_q+w_u-w_2}$
      }}
      \\ && \\\hline
      \multirow{2}{*}{
         \shortstack[l]{
            Chosen Path
            \cite{tobias2016}
         }
      }
      &
      \multirow{2}{*}{
         $\displaystyle\frac{\log\frac{w_1}{\max\{w_q,w_u\}}}{\log\frac{w_2}{\max\{w_q,w_u\}}}$
      }
      & \multirow{2}{*}{N/A}
       \\&& \\\hline
      \multirow{2}{*}{
         \shortstack[l]{
            \bfseries Supermajorities \\[.2em] (This paper)
         }
      }
      & \multirow{2}{*}{\shortstack{\Cref{thm:main},\\[.2em] Example 1}}
      & \multirow{2}{*}{\Cref{thm:main} }
      \\&&\\\hline\hline
      \multirow{2}{*}{
         \shortstack[l]{
            Data-Dependent LSF\\
            \cite{andoni2015optimal, andoni2017optimal}
      }}
      &
      \multirow{2}{*}{
         $\displaystyle\frac{1-\alpha}{1+\alpha-2\beta}$
      }
      &
      \multirow{2}{*}{
         \shortstack[l]{
            $\sqrt{\rho_q}+\alpha'\sqrt{\rho_u}=\sqrt{1-\alpha'^2}$\\
            where $\alpha' = 1-\frac{1-\alpha}{1-\beta}$
      }}
      \\&&\\\hline
      \multirow{2}{*}{
         \shortstack[l]{
            SimHash
            \cite{charikar2002similarity}
         }
      }
      &
      \multirow{2}{*}{
         $\displaystyle\frac{\log(1-\arccos(\alpha)/\pi)}{\log(1-\arccos(\alpha)/\pi)}$
      }
      & \multirow{2}{*}{N/A$^{(**)}$}
    \\&&\\\hline
      \multirow{2}{*}{
         \shortstack[l]{
            Bit Sampling
            \cite{indyk1998approximate}
         }
      }
      &
      \multirow{2}{*}{
         $\displaystyle\frac{\log(1-w_q-w_u+2w_1)}{\log(1-w_q-w_u+2w_2)}$
      }
      & \multirow{2}{*}{N/A$^{(**)}$}
    \\&&\\\hline
   \end{tabular}
   \caption{
      Time and space exponents for the best similarity search data-structures.
      For Spherical LSF and SimHash, $\alpha$ and $\beta$ are the inner products between sets represented as vectors, and can by \Cref{lem:embedding} be taken to be
      $\alpha=\frac{w_1-w_qw_u}{\sqrt{w_q(1-w_q)w_u(1-w_u)}}$ and
      $\beta=\frac{w_2-w_qw_u}{\sqrt{w_q(1-w_q)w_u(1-w_u)}}$.\\
      $(*)$:~Space/time trade-offs for MinHash can be obtained using MinHash as an embedding for Spherical LSF.
      ${(**)}$:~Some space/time trade-offs can be obtained for LSH using Multi-probing~\cite{lv2007multi}.
      $(*\!*\!*)$:~$\lambda\in[-1,1]$ controls the space/time trade-off.
   }
   \label{table:related}
\end{table}

For the reasons laid out in the introduction, we will compare primarily against approximate solutions.
The best of those are all able to solve GapSS, thus making it easy to draw comparisons.
The guarantees of these algorithms are listed in \Cref{table:related} and we provide plots in \Cref{fig:sp_comparison} and \Cref{fig:mh_comparison} for concreteness.

The methods known as Bit Sampling~\cite{indyk1998approximate} and SimHash (Hyperplane rounding)~\cite{charikar2002similarity},
while sometimes better than MinHash\cite{broder1997syntactic} and Chosen Path~\cite{tobias2016}
are always worse (theoretically) that Spherical LSF, so we won't perform a direct comparison to those.

It should be noted that both Chosen Path and Spherical LSF both have proofs of optimality in the restricted models.
However these proofs translated to only a certain region of the $(w_q,w_u,w_1,w_2)$ space, and so they may nearly always be improved.

Arguably the largest break-through in Locality Sensitive Hashing, LSH, based data structures was the introduction of \emph{data-dependent} LSH~\cite{andoni2014beyond, andoni2015optimal, andoni2017lsh}.
It was shown how to reduce the general case of $\alpha, \beta$ similarity search as described above, to the case $(\alpha, \beta) \mapsto (\frac{\alpha-\beta}{1-\beta}, 0)$, in which many LSH schemes work better.
Using those data structures on GapSS with $w_2>w_qw_u$ will often yield better performance than the algorithms described in this paper.
However, in the ``random instance'' case $w_2=w_qw_u$, which is the main focus of this paper, data-dependency has no effect, and so this issue won't show up much in our comparisons.

We note that even without a reduction to the random instance, for many practical uses, it is natural to assume such ``independence'' between the query and most of the dataset.
Arguably this is the main reason why approximate similarity search algorithms have gained popularity in the first place.
In practice, some algorithms for Set Similarity Search take special care to handle ``skew'' data distributions~\cite{rashtchian2020lsf, zhang2017efficient,mccauley2018set}, in which some elements of the Universe are heavily over or under-represented.
By special casing those elements, those algorithms can be seen as reducing the remaining dataset to the random instance.
Curiously, even the early research on Partial Match by Ronald Rivest in his PhD thesis~\cite{rivest1976partial}, studied the problem on random data.

Many of the algorithms, based on the LSH framework, all had space usage roughly $n^{1+\rho}$ and query time $n^\rho$ for the same constant $\rho$.
This is known as the ``balanced regime'' or the ``LSH regime''.
Time/space trade-offs are important, since $n^{1+\rho}$ can sometimes be too much space, even for relatively small $\rho$.
Early work on this was done by Panigrahy~\cite{panigrahy2006entropy} and Kapralov~\cite{kapralov2015smooth} who gave smooth trade-offs ranging from space $n^{1+o(1)}$ to query time $n^{o(1)}$.
A breakthrough was the use of LSF (rather than LSH), which allowed time/space trade-offs with sublinear query time even for near linear space and small approximation~\cite{laarhoven2015tradeoffs, christiani2016framework, andoni2016optimal}.

We finally compare our results to the classical literature on Partial Match and Super-/Subset search, which has some intriguing parallels to the work presented here.

\paragraph{Comparison to Spherical LSF}
\begin{figure}
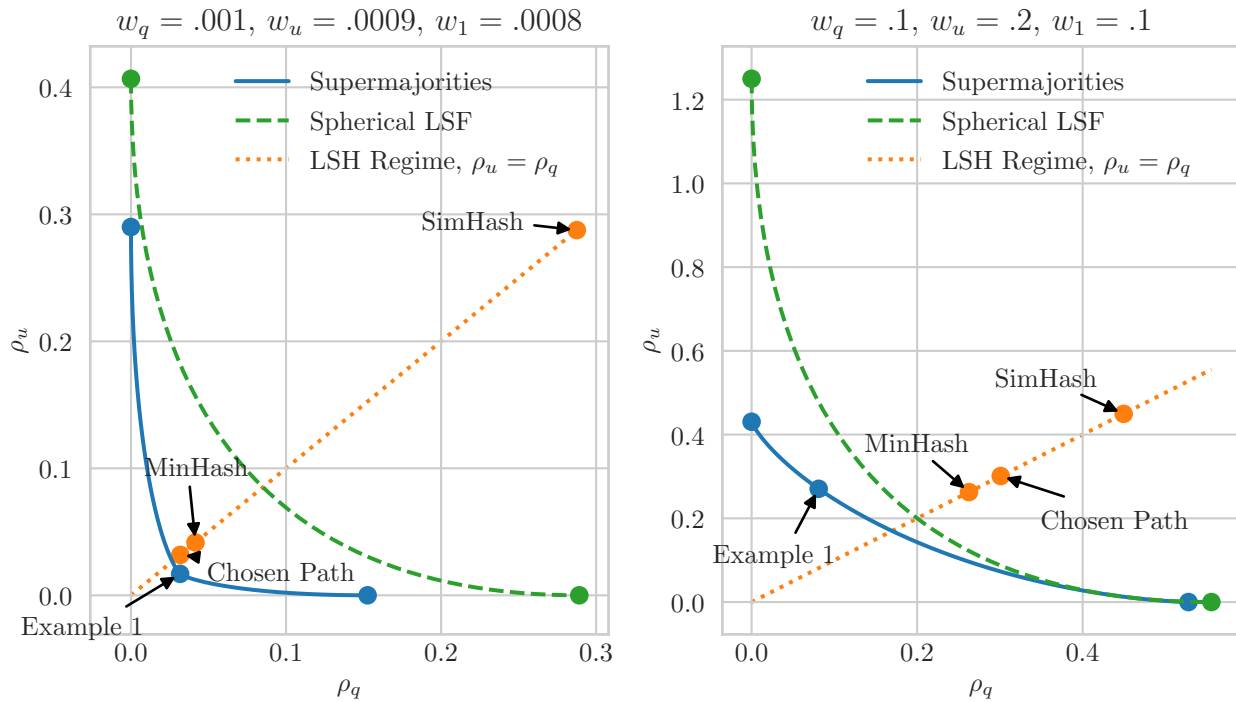

   \vspace*{-1cm}
   \centering
     \begin{subfigure}[t]{.49\textwidth}
      \vspace*{0pt}
      \centering\input{figures/sp1.pgf}
      \caption{Example of search with very small sets.}
      \label{fig:sp_comparison_1}
   \end{subfigure}\hspace{0.019\textwidth}%
   \begin{subfigure}[t]{.49\textwidth}
      \vspace*{0pt}
      \centering\input{figures/sp4.pgf}
      \caption{Example with larger sets of different sizes.}
      \label{fig:sp_comparison_2}
   \end{subfigure}
   \caption{Comparison to Spherical LSF: Plots of the achievable $\rho_q$ (time exponent) and $\rho_u$ (space exponent) achievable with \Cref{thm:main}.
      Note that using our optimal spherical embedding from Lemma~\ref{lem:embedding} is critical to achieve the exponents shown for Spherical LSF\@.
      The plots are drawn in the ``random setting'', $w_2=w_qw_u$ where Spherical LSF and Data-Dependent LSH coincide.
   }
   \label{fig:sp_comparison}
\end{figure}

We use ``Spherical LSF'' as a term for the algorithms~\cite{becker2016new} and~\cite{laarhoven2015tradeoffs}, but in particular section 3 of~\cite{andoni2017optimal}, which has the most recent version.
The algorithm solves the $(r,cr)$-Approximate Near Neighbour problem, in which we, given a dataset $Y\subseteq\R^d$ and a query $q\in\R^d$ must return $y\in Y$ such that $\|q-y\|< cr$ or determinate that there is no $y'\in Y$ with $\|y-q\|\le r$.

The algorithm is a tree over the points, $P$.
At each node they sample $T$ i.i.d.~Gaussian $d$-dimensional vectors $z_1,\dots,z_T$ and  split the dataset up into (not necessarily disjoint) ``caps'' $P_i = \{p\in P \mid \langle z_i,p\rangle \ge t_u\}$.
They continue recursively and independently until the expected number of leaves shared between two points at distance $\ge cr$ is $\approx n^{-1+\eps}$.

The real algorithm also samples includes some caps that are dependent on an analysis of the dataset.
This allows obtaining a query time of $n^{1/(2c^2-1)}$, for all values of $r$, rather than only in the ``random instance'', which, for data on the sphere, corresponds to $r=1/(\sqrt{2}c)$.
(To see this, notice that $rc=1/\sqrt 2$, which is the expected distance between two orthogonal points on a sphere.)

Whether we analyse the data-independent algorithm or not, however, a key property of Spherical LSF is that each node in the tree is independent of the remaining nodes.
This allows a nice inductive analysis.
In comparison, in our algorithm, the nodes are not independent.
Whether a certain node gets pruned, depends on which elements from the universe were sampled at all the previous nodes along the path from the root.
One could imagine doing Spherical LSF with a running total of inner products along each path, which would make the space partition more smooth, and possible better in practice.
Something along these lines was indeed suggested in~\cite{becker2016new}, however it wasn't analysed, as for Spherical LSF \emph{the inner products at each node are continuous, and the thresholds can be set at any precision.}

\vspace{.5em}

It is clear that Spherical LSF can solve GapSS -- one simply needs an embedding of the sets onto the sphere.
An obvious choice is $x\mapsto x/\|x\|_2$.
This was used in~\cite{tobias2016} when comparing Chosen Path to Spherical LSF\@.
However it is also clear that the choice of embedding matters on the performance one gets out of Spherical LSF\@.
Other authors have considered $x\mapsto (2x-1)/\sqrt{d}$ and various asymmetric embeddings~\cite{shrivastava2014asymmetric}.

We would like to find the most efficient embedding to get a fair comparison.
However, we don't know how to do this optimally over all possible embeddings, which include using MinHash and possibly somehow emulating Supermajorities.\footnote{We would also need some sort of limit on how much time the embedding takes to perform.}
We instead find the most efficient \emph{affine} embedding, which turns out to be surprisingly simple, and which encompasses all previously suggested approaches.
In Lemma~\ref{sec:embed} we prove a general result, implying that the embedding is optimal for Spherical LSF as well as other spherical data structures like SimHash.
In~\Cref{fig:sp_comparison} and \Cref{fig:mh_comparison} the $\rho$-values of Spherical LSF are obtained using this optimal embedding.

From the figures, we see the two main cases in which Spherical LSF is suboptimal.
As the sets get very small ($w_q,w_u,w_1\to 0$) the $\rho$ value in the LSH regime goes to 1, whereas Supermajorities (as well as MinHash and Chosen Path) still obtain good performance.
Similarly in the asymmetric case $w_q\neq w_u$, as we make $\rho_q$ very small, the performance gap between Supermajorities and Spherical LSF can grow to arbitrarily large polynomial factors.

\paragraph{Comparison to MinHash}

\begin{figure}
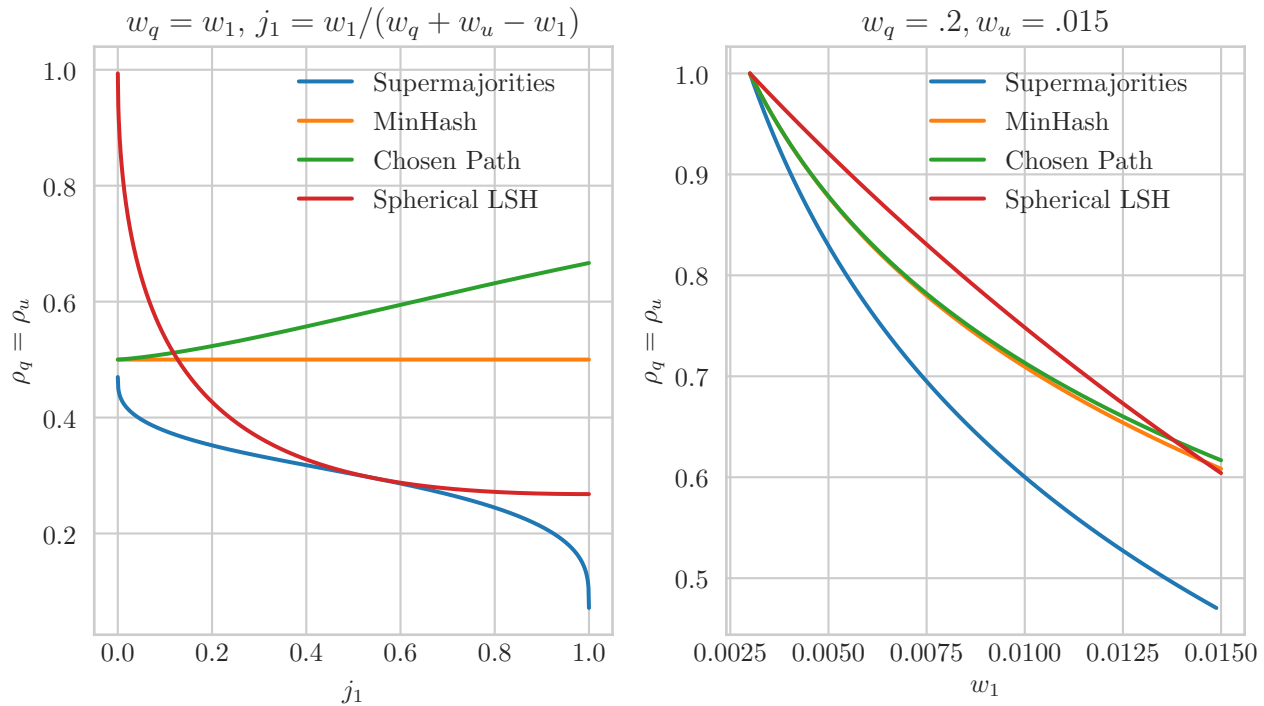

      \vspace*{-1cm}
      \centering
     \begin{subfigure}[t]{.49\textwidth}
      \vspace*{0pt}
      \centering\input{figures/mh.pgf}
     \caption{
        The plot shows the effect on the balanced exponent of Supermajorities as we the Jaccard similarity while fixing the exponent of MinHash at $\rho=.5$.
     }
     \label{fig:mh_comparison_1}
   \end{subfigure}\hspace{0.019\textwidth}%
   \begin{subfigure}[t]{.49\textwidth}
      \vspace*{0pt}
      \centering\input{figures/mh2.pgf}
     \caption{
        Varying the overlap $w_1$ among close sets while fixing the query and database set sizes.
        At $w_1=.002$ there is no gap between close and far sets.
      }
      \label{fig:mh_comparison_2}
   \end{subfigure}
   \caption{Comparison to MinHash: Varying different parameters while searching on a background of random sets $(w_2=w_qw_u)$, Supermajorities regularly get substantially better time and space exponents.
      The plots are drawn in the ``random setting'', $w_2=w_qw_u$ and use the optimal embedding for Spherical LSF.
   }
   \label{fig:mh_comparison}
\end{figure}

Given a random function $h:\mathcal P(\{1,\dots,d\})\to[0,1]$, the MinHash algorithm hashes a set $x\subseteq\{1,\dots,d\}$ to $m_h(x) = \argmin_{i\in x}h(i)$.
One can show that $Pr[m_h(x) = m_h(y)] = J(x, y) = \frac{|x\cap y|}{|x\cup y|}$.
Using the LSH framework by Indyk and Motwani~\cite{indyk1998approximate} this yields a data structure for Approximate Set Similarity Search over Jaccard similarity, $J$, with query time $dn^{\rho}$ and space usage $n^{1+\rho}+dn$, where $\rho=\frac{\log j_1}{\log j_2}$ and $j_1$ and $j_2$ define the gap between ``good'' and ``bad'' search results.
As Jaccard similarity is a set similarity measures, it is clear that MinHash yields a solution to the GapSS problem
with $\rho_q=\rho_u = \log\frac{w_1}{w_q+w_u-w_1}\big/\log\frac{w_2}{w_q+w_u-w_2}$.
Similarly, and that any solution to GapSS can yield a solution to Approximate SSS over Jaccard similarity.

MinHash has been very popular, since it gives a good, all-round algorithm for Set Similarity Search, that is easy to implement.
In \Cref{fig:mh_comparison} we see how MinHash performant for different settings of GapSS.
In particular we see that when solving the Superset Search problem, which is a common use case for MinHash, our new algorithm obtains quite a large polynomial improvement, except when the Jaccard similarity between the query and the sought after superset is nearly 0 (which is hardly an interesting situation.)

\vspace{.5em}

It is possible to use MinHash as an embedding (or densification) of sets into Hamming space or onto the Sphere.
We can then use Spherical LSF to get space/time trade-offs.
We have not plotted those, but we can notice that in the balanced case, $\rho_q=\rho_u$, this would give $\rho=\frac{1-j_1}{1+j_1}\frac{1+j_2}{1-j_2}$, which is worse than $\rho=\log j_1/\log j_2$ obtained by the direct algorithm.

MinHash is quite different from the other algorithms considered in this section.
For some more intuition of why MinHash is not optimal for Approximate Set Similarity Search, we show in \Cref{sec:minhashdom} that MinHash can be seen as an average of a family of Chosen Path like algorithms.
We also show that an average is always worse than simply using the best family member, which implies that MinHash is never optimal.

\paragraph{Comparison to Chosen Path}
\begin{figure}
   \vspace*{-1cm}
   \centering
     \begin{subfigure}[t]{.49\textwidth}
      \vspace*{0pt}
      \centering\input{figures/cp1.pgf}
      \caption{In the plot, Chosen Path never matches Supermajorities, even at $w_q=w_u$ since the sets are relatively large.}
      \label{fig:cp_comparison_1}
   \end{subfigure}\hspace{0.019\textwidth}%
   \begin{subfigure}[t]{.49\textwidth}
      \vspace*{0pt}
      \centering\input{figures/cp3.pgf}
      \caption{In the plot,
      Chosen Path nearly matches Supermajorities when $w_u=w_q$ as the sets are relatively small.}
      \label{fig:cp_comparison_2}
   \end{subfigure}
   \caption{Comparison to Chosen Path: Fixing $w_q$ and the Jaccard similarity so $w_1=\frac{j_1}{1+j_1}(w_q+w_u)$, we vary $w_u$ to see the performance of different algorithms at different levels of asymmetry in the set sizes.
      The plots are drawn in the ``random setting'', $w_2=w_qw_u$ and use the optimal embedding for Spherical LSF.
   }
   \label{fig:cp_comparison}
\end{figure}

The Chosen Path algorithm of~\cite{tobias2016}, is virtually identical to Supermajorities, when parametrized with $t_q=t_u=1$.
Similar to Spherical LSF and our decoding algorithm, they build a tree on the datasets.
For each node they sample iid. Elements $x_1, x_2, \dots \in U$ from the universe, and split the data into (not necessarily disjoint) subsets $P_i = \{p\in P \mid x_i \in p \}$.
They again continue recursively and independently until the expected number of leaves shared between two dissimilar points is sufficiently small.

The case $t_q=t_u=1$ however, turns out to be a very special case of our algorithm, because one can decide which leaves of the tree to prune, without knowledge of what happened previously on the path from the root to the node.
This allows a nice inductive analysis of Chosen Path based on second moments, which is a classic example literature on branching processes.
Meanwhile, for our general algorithm, we need to analyse the resulting branching random walk, a conceptually much different beast.

Doing the analysis, one gets a data structure for Approximate Set Similarity Search over Braun-Blanquet similarity, $B(x,y)=\frac{|x\cap y|}{\max\{|x|,|y|\}}$, with query time $|q|n^{\rho}$ and auxiliary space usage $n^{1+\rho}$, where $\rho=\frac{\log b_1}{\log b_2}$ and $b_1$ and $b_2$ define the gap between ``good'' and ``bad'' search results.
Since $t_q=t_u=1$ is sometimes the optimal choice for Supermajorities, it is clear that we must sometimes coincide in performance with Chosen Path.
In particular, this happens as $w_q=w_u$ and $w_q,w_u,w_1\to 0$.
This is also one of the case where our lower bound~\Cref{thm:donnell} is sharp, which confirms, in addition to the lower bound in~\cite{tobias2016} that both algorithms are sharp for LSF data structures in this setting.
\Cref{fig:sp_comparison_1} shows how Chosen Path does nearly as well as Supermajorities on very small sets.

In the case $w_q=w_u$ the $\rho$ value of Chosen Path can be equivalently written in terms of Jaccard similarities as $\log\frac{2j_1}{1+j_1}\big/\log\frac{2j_2}{1+j_2}$, which is always smaller than the $\log j_1\big/\log j_2$ obtained by MinHash.
(This value, $2j/(1+j)$, is also known as the Sørensen-Dice coefficient of two sets.)
However, in the case $w_q\neq w_u$ Chosen Path can be much worse than MinHash, as seen in \Cref{fig:sp_comparison_2} and \Cref{fig:mh_comparison_1}.
In~\cite{tobias2016} it was left as an open problem whether MinHash could be improved upon in general.
It is a nice result that the balanced $\rho$ value of Supermajorities (when $\rho_q=\rho_u$) can be shown (numerically) to always be less than or equal to $\log\frac{2j_1}{1+j_1}\big/\log\frac{2j_2}{1+j_2}$, even when $w_q\neq w_u$.
It is a curious problem for which similarity measure, $S$, so the balanced $\rho$ value of Supermajorities equal $\log s_1/\log s_2$.

\paragraph{Partial Match (PM) and Super-/Subset queries (SQ)}
Partial Match asks to pre-process a database $D$ of $n$ points in $\{0,1\}^d$
such that, for all query of the form $q \in \{0, 1, \ast\}^d$, either report a point $x \in D$ matching all non-$\ast$ characters in $q$ or report that no such $x$ exists.
A related problem is Super-/Subset queries, 
in which queries are on the form
$q \in \{0, 1\}^d$, and we must either report a point $x \in D$ such that $x\subseteq q$ (resp. $q\subseteq x$) or report that no such $x$ exists.

The problems are equivalent to the subset query problem by the following folklore reductions:
(PM $\to$ SQ) Replace each $x \in D$ by the set $\{(i,p_i) : i \in [d]\}$.
Then replace each query $q$ by $\{(i,q_i) : q_i = \ast\}$.
(SQ $\to$ PM) Keep the sets in the database as vectors and
replace in each query each $0$ by an $\ast$.

The classic approach, studied by Rivest~\cite{rivest1976partial}, is to split up database strings like \texttt{supermajority}
and file them under \texttt{s}, \texttt{u}, \texttt{p} etc.
Then when given query like \texttt{set} we take the intersection of the lists \texttt{s}, \texttt{e}, \texttt{t}.
Sometimes this can be done faster than brute force searching each list.
He also considered the space heavy solution of storing all subsets, and showed
that when $d \le 2\log n$, the trivial space bound of $2^d$ can be somewhat improved.
Rivest finally studied approaches based on tries and in particular the case where most of the database was random strings.
The latter case is in some ways similar to the LSH based methods we will describe below.

\begin{figure}
   \vspace{-1cm}
   \centering
    \includegraphics[scale=0.3]{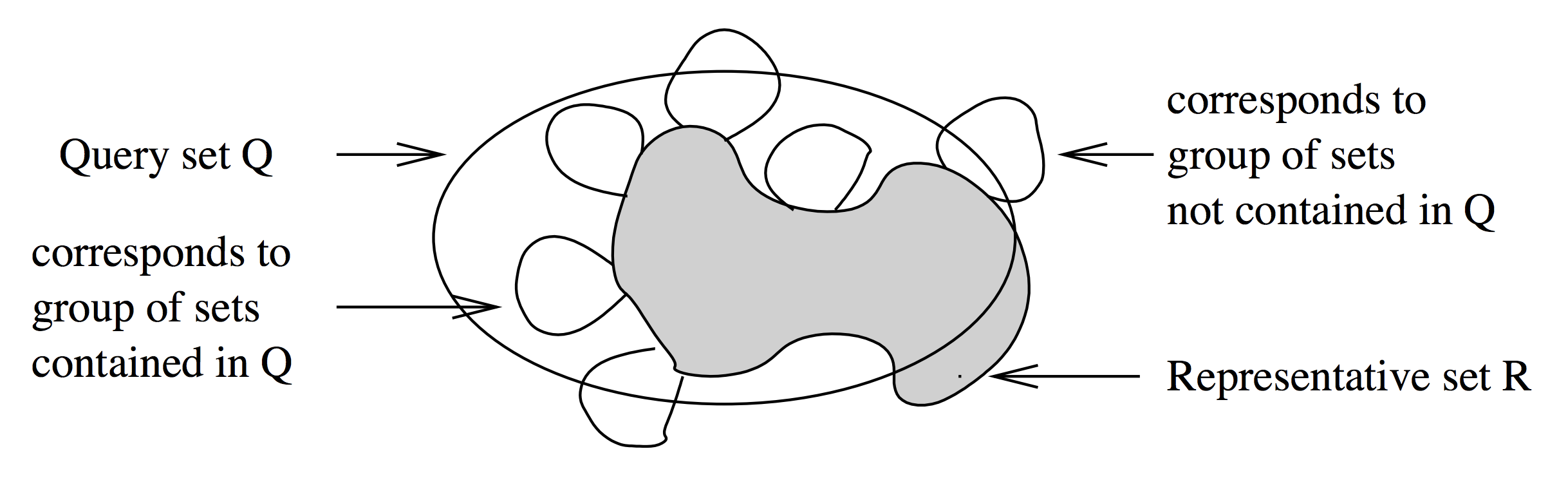}
   \caption{
      This figure from the Partial Match algorithm of~\cite{charikar2002new} shares some of the same geometrical intuition visible in our own figure~\ref{fig:cohorts}.
   }
   \label{fig:partial-match}
\end{figure}

Indyk, Charikar and Panigrahy~\cite{charikar2002new} also studied the exact version of the problem, and gave, for each $c\in[n]$, an algorithm with
$O(n/2^c)$ time and $n 2^{(O(d\log^2 d\sqrt{c/\log n})}$ space,
and another with $O(dn/c)$ query time and $nd^c$ space.
Their approach was a mix between the shingling method of Rivest,
building a look-up table of size $\approx 2^{\Omega(d)}$,
and a brute force search.
These bounds manage to be non-trivial for $d=\omega(\log n)$, however only slightly. (e.g. $n/\poly(\log n)$ time with polynomial space.)

There has also been a large number of practical papers written on Partial Match / Subset queries or the equivalent batch problem of subset joins~\cite{Ramasamy2000SetCJ, melnik2003adaptive, goel2010small, agrawal2010indexing, lazo2019}.
Most of these use similar methods to the above, but save time and space in various places by using bloom filters and sketches such as MinHash~\cite{broder1997syntactic} and HyperLogLog~\cite{flajolet2007hyperloglog}.

\paragraph{Maximum Inner Product}
(MIPS) is the Similarity Search problem with $S(x,y)=\langle x,y\rangle$ --- the Euclidean inner product.
For exact algorithms, most work has been done in the batch version ($n$ data points, $n$ queries).
Here Alman et al.~\cite{alman2016polynomial} gave an $n^{2-1/\tilde O(\sqrt k)}$ algorithm, when $d=k \log n$.

An approximative version can be defined as:
Given $c>1$, pre-process a database $D$ of $n$ points in $\{0,1\}^d$ such that, for all query of the form $q \in \{0, 1\}^d$ return a point $x \in D$ such that $\langle q,x\rangle \ge \frac{1}{c} \max_{x'\in D} \langle q,x'\rangle$.
Here~\cite{ahle2015complexity} gives a data structure with query time $\approx \tilde O(n/c^2)$, and \cite{chen2019equivalence} solves the batch problem in time $n^{2-1/O(\log c)}$
(both when $d$ is $n^{o(1)}$.)

There are a large number of practical papers on this problem as well.
Many are based on the Locality Sensitive Hashing framework (discussed below) and have names such as
SIMPLE-LSH~\cite{neyshabur2015symmetric} and L2-ALSH~\cite{shrivastava2014asymmetric}.
The main problem for these algorithms is usually that no hash family of functions $h : \{0,1\}^d \times \{0,1\}^d \to [m]$ such that $\Pr[h(q)=h(x)]=\langle q,x\rangle/d$ ~\cite{ahle2015complexity} and various embeddings and asymmetries are suggested as solutions.

The state of the art is a paper from NeurIPS 2018~\cite{yan2018norm} which suggests partitioning data by the vector norm, such that the inner product can be more easily estimated by LSH-able similarities such as Jaccard.
This is curiously very similar to what we suggest in this paper.

We will not discuss these approaches further since, for GapSS, they all have higher exponents than the three LSH approaches we study next.

\section{The Algorithm}

\label{sec:upper}

We now describe the full algorithm that gives \Cref{thm:main}.
We state the full version of the theorem, discuss it and prove it.
The section ends with an involved analysis of the survival probabilities of the branching random walk.

Notationally we define $[n]=\{1,\dots,n\}$ and let $(\cdot\circ\cdot):A^{l_1} \times A^{l_2} \to A^{l_1+l_2}$ be the concatenation operator for any set $A$ and integers $l_1,l_2$.
We will use the Iversonian bracket, defined by $[P]=1$ if $P$ and $0$ otherwise.
For $R$ and $U$ sets, we have $R\times U = \{r\circ u \mid r\in R, u\in U\}$
$\mathcal P(U)$ is the power set of $U$.

\vspace{.5em}

The first step is to set up our assumptions.
For $w_q, w_u, w_1, w_2, t_q, t_u \in [0, 1]$ given, we can assume $\min\{w_q,w_u\}\ge w_1 > w_2$ and $t_q\neq w_q, t_u\neq w_u$.
We are also given a universe $U$ and a family $Y\subseteq\binom{U}{w_u|U|}$ of size $|Y|=n$.

It will be nice to assume $|U|=q$ where $q$ is a prime number.
This can always be achieved by adding at most $|U|^{0.525}$ elements to $U$ large enough\footnote{It is an open conjecture by Harald Cramér that $(\log |U|)^2$ suffices as well.~\cite{cramer1936order}}~\cite{baker2001difference}.
Hence we only distort each of $w_q, w_u, w_1, w_2$ by roughly a factor $1+O(|U|^{-1/2})$, which is insignificant for $|U|=\Omega(\log n)^2$, and we can always increase $|U|$ without changing the problem parameters by duplicating the set elements.

Let $k\in \mathbb Z_+$ be defined later.
For all $i\in[k]$ we define $h_i(r) : [q]^i \to [q]$ by $h_i(r) = \sum_{j\in[i]} a_{i,j}r_j + b_i \mod q$ for some sequences of random numbers $a_{i,j} \in [q]\setminus\{0\}, b_i\in[q]$,
such that each $h_i$ is a 2-independent random function.
(That means $\Pr[h_i(r)=h_i(r')]\le 1/q$ for $r\neq r'$.)

Finally two sequences $(\Delta_i\in\mathbb Z_+)_{i\in[k]}$ and $(c_\ell\in\R^2)_{\ell\in[k]}$ to be specified later.
We can now define the sets
$R_i = \setbuilder{r \circ x \in R_{i - 1} \times U}{h_i(r \circ x) < \Delta_i},$
as well as the decoding functions
\begin{align}
   \mathcal R_i(X, t) = \bigg\{ r \in R_i \biggm|
      \forall \ell \le i : \sum_{j \in [\ell]} [r_j \in X] \ge t\ell - c_\ell
   \bigg\}
\end{align}
Intuitively $R_i$ are our representative sets at level $i$ in the tree, such that $R_k$ is a close to iid.\ uniform sample from $U^k$.
The decoding function takes a set $X\subseteq U$ and a value $t\in[0,1]$, and returns all $r \in R_i$ such that all prefixes $r'$ of $r$ ``$(t-\eps)$-favours'' $X$ (as defined by $|r\cap X|/|r|\ge t-\eps$ in the introduction), where $\eps=\frac{c_{|r'|}}{|r'|}$ is some slack that helps ensure survival of at least one representative set.
The slack won't be the same on each coordinate, but scaled by their variance.
The algorithm is shown below as pseudo-code in \Cref{alg:treesample}.

\begin{algorithm}
      \caption{Pseudocode for the decoding function $\mathcal R$.}
      \label{alg:treesample}
      \SetKwInOut{Input}{Input}
      \Input{Universe $U$, Set $X\subseteq U$, Threshold $t\in[0,1]$}
      \KwResult{Set $P_k\subseteq U^{k}$ of paths}
      $R_0 \leftarrow \{((), 0)\}$
      \tcp*[f]{These $R_i$ values contain the paths and scores}\\
      \For{$i=1$ \KwTo $k$}{
         $R_i \leftarrow \{\}$\\
         \For{$(r, s) \in R_{i-1}$}{
            \For(\tcp*[f]{Sample the universe})
               {$x \in U$ st. $h_i(r\circ x) < \Delta_i$}{
               $s' \leftarrow s + [x \in X]$\\
               \If(\tcp*[f]{Trim to promising paths})
                  {$s' \ge it - c_i$}{
                  $R_i \leftarrow R_i\cup \{(r \circ x, s')\}$\\
               }
            }
         }
      }
   \end{algorithm}

Our data structure now builds a hash-table $M$ of lists of pointers and store each set $y\in Y$ in $M[r]$ for every $r\in R_k(y,t_u)$.
One can think of this as storing the elements at the leafs of the tree represented by the sets $R_i$.
On a new query $q\in\binom{U}{w_q|U|}$ we look at every list $M[r]$ for $r\in R_k(q,t_q)$.
For each $y$ in such a list, we compute the intersection with $q$ and return $y$ if $|q\cap y|/|U|\ge w_2$.
This takes time $\min\{w_u,w_q\}|U|$, which would be a large multiplicative factor on our query time, so we may instead choose to sample just 
\begin{align}O(\min\{w_q,w_u\} w_2^{-1}\log n)\label{eq:compare-cost}\end{align}
elements, which suffices as a test with high probability.

This describes the entire algorithm, exception for an optimization for the ``Sample the universe'' step above, which naively implemented would take time $|X|$.
This optimization is the reason $|U|$ was chosen to be a prime number.

\vspace{-.5em}
\paragraph{An optimization}
In the ``Sample the universe'' step of \Cref{alg:treesample} a naive implementation spends time $|X|$ hashing all possible elements and comparing their value to $\Delta_i$.
We now show how to make this step output sensitive, using only time equal to the number of values for which the condition is true.
\footnote{The subroutine is inspired by personal communications with Rasmus Pagh and Tobias Christiani.}

The requirement $s' \ge it-c_i$ we call the ``trimming condition''.
This allows us to trim away most prefix paths which would be very unlikely to ever reach our requirement for the final path.
To speed up finding all $x\in U$ such that $h_i(r\circ x) < \Delta_i$ we
note that there are two cases relevant to the trimming condition, depending on $s$ in the algorithm:
(1) $s'$ has to be $s+1$ or (2) $s'=s$ suffices.
In the first case we are only interested in $x$ values in $X$, while in the second case, all $x\in U$ values are relevant.

We have $h_i(r\circ x) = \eta + a x \mod q$ for some values $\eta$, $a$ and $b$ where $a>0$.
In case (2) the relevant $x$ are simple $\{a^{-1}(v - \eta) \mod q \mid v \in [\Delta_i]\}$, where $a^{-1}$ exists because $q$ is prime.
For the case (1) where $x$ must be in $X$, we pre-process $X$ by storing $ax \mod q$ for $x\in X$ in a sorted list.
Using a single binary search, we can then find the relevant values with a time overhead of just $\lg|X|$.
Using a more advanced predecessor data structure, this overhead can be reduced.
See \Cref{alg:sample} for a pseudocode version of this idea.
\begin{algorithm}
   \caption{Output sensitive sample}
   \label{alg:sample}
   \SetKwInOut{Input}{Input}
   \SetKwInOut{Preprocess}{Pre-process}
   \Input{$r\in[q]$, $\Delta\in[q]$}
   \Preprocess{$s = \text{sorted}\{h(x)\mid x\in X\}\in[q]^{|X|}$
               and $\kappa\in X^{|X|}$ st. $h(\kappa[i]) = s[i]$.}
   \KwResult{$R = \{x \in X \mid (h(x)+r\mod q) < \Delta\}$}
   $i \leftarrow \min\{i\in[|X|] \mid s[i-1] < q-r \le s[i]\}$
   \tcp*[f]{We assume $s[i]=-\infty$ for $i<0$}\\
   $R \leftarrow \{\}$\\
   \While{$(s[i\mod |X|] + r \mod q) < \Delta$}{
      $R \leftarrow R\cup\{\kappa[i\mod |X|]\}$\\
      $i \leftarrow i+1$
   }
\end{algorithm}

\subsection{Full Theorem}\label{sec:full}
We state the full version of \Cref{thm:main} and a discussion of the differences between it and the idealized version in the introduction.
\begingroup
\def\thetheorem{\ref{thm:main}}
\begin{theorem}[Full version]
   Let $w_q, w_u \ge w_1 \ge w_2 \ge 0$ be given with $w_1 \ge w_q w_u$ and $1 \le t_q, t_u \le 0$.
   Set $k$ to be the smallest even integer greater than or equal to
   $\frac{\log n}{\D{ T_2 }{ P_2 } - \dq}$ and assume that $t_q k/2$ and
   $t_u k/2$ are integers. The $(w_q, w_u, w_1, w_2)$-GapSS problem over a universe $U$ can be solved with expected query time
   \begin{align}
      \text{query time}\quad
      &O\big(\varsigma_q\, k^{28} \, n^{\rho_q} + k w_q \abs{U}
      + \big( \tfrac{t_q(1 - w_q)}{(1-t_q)w_q} \big)^{\sqrt{t_q(1-t_q) k\,\cdot\, 6.5 \log(3k)})}\big),
      \\
      \text{space usage}\quad
      &O(\varsigma_u\, k^{28}\, n^{1 + \rho_u} + n w_u \abs{U})
      \\
      \text{and update time}\quad
      &O\big(\varsigma_u\, k^{28}\, n^{\rho_u} + k w_u \abs{U}
      + \big( \tfrac{t_u(1-w_u)}{(1-t_u)w_u} \big)^{\sqrt{t_u(1-t_u) k\,\cdot\, 6.5 \log(3k)})}\big),
   \end{align}
   \vspace{-1.5em}
   \begin{align}
      \text{where}\quad
      \rho_q = \frac{\D{ T_1 }{ P_1 }-\dq}{\D{ T_2 }{ P_2 }-\dq}
      \quad\text{and}\quad
      \rho_u = \frac{\D{ T_1 }{ P_1 }-\du}{\D{ T_2 }{ P_2 }-\dq},
      \\[10pt]
      \text{and}\quad
      \varsigma_q = \tfrac{\min\{w_q,w_u\}}{w_2}e^{2(\D{ T_1 }{ P_1 } - \dq)},
      \quad\varsigma_u = e^{2(\D{ T_1 }{ P_1 } - \du)}.
   \end{align}
\end{theorem}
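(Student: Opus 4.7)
My overall strategy is to reduce the analysis of the tree data structure to three probabilistic estimates about the branching random walk implicit in \Cref{alg:treesample}: (i) the probability $p_1$ that a single random path in the full product tree $U^k$ simultaneously $t_q$-favours the query $q$ and $t_u$-favours a close set $y$ (intersection $\ge w_1\abs{U}$) \emph{and} survives every intermediate pruning cut $c_\ell$; (ii) the analogous collision probability $p_2$ when $q$ and $y$ have intersection only $w_2\abs{U}$; and (iii) the probability $p_u$ that a path survives the $t_u$-pruning with a single stored set, and the analogous $p_q$ on the query side. Once these four quantities are established, the standard LSF accounting gives the stated bounds: we choose the branching factors $\Delta_i$ so that the total number of sampled paths is $\Delta_1\cdots\Delta_k \approx 1/p_1$, so that by linearity the expected number of representatives per side is $n^{\rho_u}$ (storage) and $n^{\rho_q}$ (query), while the expected number of far-collisions per query is bounded by $n \cdot p_2/p_1 = n^{\rho_q - o(1)}$ after setting $k$ as in the statement. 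The $\varsigma$ factors absorb exponential-tilt corrections, and the $k^{28}$ polylog overhead comes from the $k$ levels of the tree multiplied by the rearrangement-lemma slack and the $n^{o(1)}$ comparison cost \eqref{eq:compare-cost}.

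First I would compute $p_1$ by exponential tilting. The walk has i.i.d.\ increments $X_i\in\{0,1\}^2$ distributed as Bernoulli$(P_1)$, and we need $\sum_i X_i \ge \svec{t_q\\t_u}k$ along with the intermediate floor constraints. Tilting to the exponential family that shifts the mean to $\svec{t_q\\t_u}$ turns the unlikely event into a typical event of the tilted walk and pays the Radon--Nikodym price $\exp(-k\,\D{T_1}{P_1})$. The pruning slack $c_\ell$ is chosen proportional to the tilted standard deviation $\sqrt{\ell}$, which ensures the tilted walk stays above the floor with constant probability; here I would invoke the two-dimensional rearrangement lemma (\Cref{lem:rearrangement}) together with the Truck Driver argument to show the $k^{-3}$ lower bound on staying in the positive orthant conditioned on the endpoint. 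This is the main technical obstacle: classical second-moment methods fail because of high-variance common ancestors, so the proof must use the exact conditioning together with the counterintuitive upper floor discussed in the technical overview to keep the variance of the branching process polynomial. Once $p_1 \approx k^{-O(1)}\exp(-k\D{T_1}{P_1})$ is pinned down, $p_2$, $p_u$, and $p_q$ follow from the same argument with $P_1$ replaced by $P_2$, a single-side tilting to $t_u$, and to $t_q$ respectively, giving $p_u\approx \exp(-k\,\du)$ and $p_q\approx\exp(-k\,\dq)$.

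Next I would handle the decoding cost. Thanks to the prime $q = \abs{U}$ and the two-independent hash family $h_i$, \Cref{alg:sample} lets us enumerate $\{x\in X : h_i(r\circ x)<\Delta_i\}$ in time proportional to the output plus a single predecessor query, so the wall-clock time of $\mathcal{R}_i(X,t_u)$ is $O(k)$ times the total number of surviving paths at all levels. A level-by-level application of the survival estimate from step two, using that the $t_u$-pruning threshold is below the mean drift $w_u$ (so the walk is not conditioned on a rare event at intermediate levels), shows that the number of survivors never exceeds $n^{\rho_u+o(1)}$ in expectation; the additive additional $(t_u(1-w_u)/((1-t_u)w_u))^{\sqrt{\cdots}}$ term handles the worst-case, small-$w$ regime where an early level can overshoot by an amount bounded by Azuma. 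The $w_q\abs{U}$ and $w_u\abs{U}$ summands account for reading and storing the raw sets, and the $\min\{w_q,w_u\}w_2^{-1}\log n$ sampling cost for verifying a candidate list entry is the source of the $\varsigma_q$ factor.

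Finally I would assemble the pieces. Setting $k$ as in the statement makes $\Delta_1\cdots\Delta_k$ the right size so that by our survival probability a close pair shares at least one representative with probability $1-o(1)$; repeating $\polylog(n)$ times boosts success to $0.99$ and contributes only to the $k^{28}$ factor. The expected space is $n$ times the per-set representative count, yielding $n^{1+\rho_u+o(1)}$, and the expected query time is the representative count $n^{\rho_q+o(1)}$ plus the candidate-verification cost $n \cdot p_2/p_1 \cdot \varsigma_q$, which telescopes to the quoted bound. The preservation under the $\abs{U}\to q$ prime-shift and under the rounding $t_q k/2, t_u k/2\in\mathbb{Z}$ loses only a $1+O(1/k)$ factor in each parameter, which is absorbed into $n^{o(1)}$. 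The hardest step in this plan is clearly the tight non-asymptotic survival estimate for the 2D pruned branching random walk; everything else is bookkeeping once that estimate is in hand.
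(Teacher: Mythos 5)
Your plan follows the paper's own route almost step for step: the four estimates you list are exactly the content of \Cref{lem:tree-sample-properties}; your survival bound uses the same exponential tilting, the same $\sqrt{\ell\log k}$ slack, and the same rearrangement/quadrant machinery (\Cref{lem:rearrangement}, \Cref{lem:quadrant}); the ``counterintuitive upper floor'' you invoke is precisely the extra pruning the paper adds so that a second-moment (Paley--Zygmund) argument over common prefixes goes through; and the output-sensitive decoding and final accounting are identical.

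Two points need repair before the plan delivers the theorem as stated. First, linearity plus the tilted single-path estimate only shows the \emph{expected} number of shared surviving representatives is $\mathrm{poly}(k)$; the variance-control step you allude to yields success probability $\Theta(k^{-14})$ per instance, not ``$1-o(1)$'' as written in your assembly paragraph, and it is exactly this $1/\mathrm{poly}(k)$ that forces the $k^{28}$ repetitions. Second, and more substantively, nothing in your plan produces the \emph{additive} ratio-of-odds term. Analysing a single tree of height $k$, the intermediate levels (which enjoy the slack $c_\ell$) have expected size up to $\Delta^{\ell}e^{-\ell\dq}\bigl(\tfrac{t_q(1-w_q)}{(1-t_q)w_q}\bigr)^{(c_\ell)_1}$, and since $\Delta e^{-\dq}\ge 1$ the decoding cost then comes out as $n^{\rho_q}$ \emph{times} the odds factor. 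The paper removes this multiplicative loss by tensoring two independent copies of height $k/2$, so each copy costs $n^{\rho_q/2}$ times its own odds factor, which can be split off additively; this tensoring is also the source of the factor $2$ in the exponents of $\varsigma_q,\varsigma_u$ and of the squared success probability behind $k^{28}$. Without it your argument proves a strictly weaker bound, and the difference matters precisely in the near-constant-query-time regime where the odds term is not negligible. (Minor: the intermediate thresholds are not in general on the ``likely'' side of the drift, and the correct count at the relaxed thresholds follows from the identity $\di{t_q-\eps}{w_q}=\dq-\eps\log\tfrac{t_q(1-w_q)}{w_q(1-t_q)}+\di{t_q-\eps}{t_q}$ rather than from an Azuma-type bound.)
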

\addtocounter{theorem}{-1}
\endgroup

We stress that all previous Locality Sensitive algorithms with time/space trade-offs had $n^{o(1)}$ factors on $n^{\rho_q}$ and $n^{\rho_u}$.
These could be as large as $\exp(\sqrt{\log n})$ or even $\exp((\log n)/(\log \log n))$.
In contrast, our algorithm is the first that only loses $k\approx\log(n)$ multiplicative factors!

In the statement of \Cref{thm:main} we have taken great effort to make sure that any dependence on $w_q,w_u,w_1,w_2,t_q,t_u$ is visible and only truly universal constants, like 4, are hidden in the $O(\cdot)$.

The main thing we do lose is the additive $(\tfrac{t_q(1-w_q)}{(1-t_q)w_q})^{\tilde O(\sqrt{t_q(1-t_q)k})}$.
We may note the bound $(\tfrac{t_q}{1-t_q})^{\sqrt{t_q(1-t_q)}}\le 2$, so the main eyesore is the $1/w_q$.
For $w_q > e^{-\tilde O(\sqrt{\log n})}$ this is dominated by the main term, but for very small sets it could potentially be an issue.
However, it turns out that as $w_q$ and $w_u$ get small, the optimal choices of $t_q$ and $t_u$ move towards 0 or 1.
Since this effect is exponentially stronger we get that $(1/w_q)^{\sqrt{t_q(1-t_q)}}$ is usually never more than a small constant.
It also means that we recover the performance of Chosen Path in the case $t_q=1$, $t_u=1$, which has no $\Omega(e^{\sqrt{\log n}})$ terms.
\footnote{
   The authors know of a way to reduce the error term further, so it only appears in the $\rho_q=0$ case, and only as $\exp((\log1/w_q)^{2/3}k^{1/3})$ which is $o(n)$ for any $w_q=\omega(1/n)$.
}

In case $w_q^{-1}$ is large, but $w_2$ is not too small, we can reduce $w_q^{-1}$ to  $\frac{w_u}{w_2}k$ by hashing!
Sketch: Define a hash function $h:U\to [m]$ where $m=O(\frac{w_qw_u}{w_2}|U|k)$ and map each set $y$ to $\{i\in [m] \mid \exists e\in y : h(e)=i\}$, that is the OR of the hashed values.
With high probability this only distorts the size of the sets and their inner products by a factor $(1+1/k)$ which doesn't change $\rho$.

The constants of the size $\varsigma_q$ and $\varsigma_u$ are standard in all other similar algorithms since~\cite{indyk1998approximate}, as they come from the requirement that $k$ is an integer.
The terms $\D{T_1}{P_1}-\dq$ and $\D{T_1}{P_1}-\du$ in $\varsigma_q$ and $\varsigma_u$ may be bounded by $\log\frac{w_u}{w_1}$ and $\log\frac{w_q}{w_1}$ respectively.
The factor of $2$ on those terms come from the tensoring step done on paths of length $k/2$.
This can be removed at the cost of making the ratio-of-odds term multiplicative in the bounds above.
The factor $\min\{w_q,w_u\}/w_2$ in $\varsigma_q$ comes from equation \eqref{eq:compare-cost} and is the time it takes to verify a candidate identified by the filtering.
Note that this factor would exist even in a brute force $O(n)$ algorithm
and exists in any data structures known for similar problems.
In fact, for small $n$, it is necessary due to communication complexity bounds.

\begin{proof}[Proof of \Cref{thm:main}]

Let $\mathcal T_q$ and $\mathcal T_u$ be the time it takes to compute $R_k(x, t_q)$ and $R_k(y, t_u)$ on given sets.
When creating the data structure, decoding each $y\in Y$ takes time $n\mathcal T_u$ and uses $n\ep{\abs{R_k(Y, t_u)}}$ words of memory for space equivalent.
When querying the data structure we first use time $\mathcal T_q$ to decode $q$,
then $\ep{\abs{R_k(X, t_q)}}$ time to look in the buckets,
and finally $\frac{\min\{w_q,w_u\}\log n}{w_2}$ time on each of $\ep{\abs{R_k(X, t_q) \cap R_k(Y, t_u)}} n$ expected collisions with far sets (the worst case is that we never find any $y$ with $y\cap q>w_2|U|$ so we can't return early.)

The key to proving the theorem is thus bounding the above quantities.
We do this using the following lemma, which we prove at the end of the section:

\begin{lemma}\label{lem:tree-sample-properties}
   In \Cref{alg:treesample} let $k \in \mathbb Z_+$ and let $w_q, w_u, w_1, w_2 \in [0, 1]$ be the Gap-SS parameters
   such that $w_1 \ge w_q w_u$. Now let $t_q, t_u \in [0, 1]$ be the thresholds such that $t_q k$ and
   $t_u k$ are integers, and let $\Delta > 0$ be the branching factor. Given a query set $X$, with
   $\abs{X} = w_q \abs{U}$, and data set $Y \subseteq U$, with $\abs{Y} = w_u \abs{U}$, then running
   \Cref{alg:treesample} with $c_\ell = \smat{\sqrt{t_q(1-t_q)}\\\sqrt{t_u(1-t_u)}}\cdot\sqrt{6.5 \ell \log(3 k)}$ for $\ell < k$ and $c_k = \svec{0\\0}$, gives that
   \begin{align}
      \ep{\abs{R_k(X, t_q)}} &\le 2\Delta^{k} \exp(-k \dq) \; .
      \label{eq:q-bound}
      \\
      \ep{\abs{R_k(Y, t_u)}} &\le 2\Delta^{k} \exp(-k \du) \; .
      \label{eq:u-bound}
      \\
      \prp{\abs{R_k(X, t_q) \cap R_k(Y, t_u)} \ge 1} &\ge 7^{-8} k^{-14} \Delta^{k} \exp(-k \D{ T_1 }{ P_1 })
      \quad&\text{if } \abs{X\cap Y} \ge w_1 \abs{U}.
      \label{eq:1-bound}
      \\
      \ep{\abs{R_k(X, t_q) \cap R_k(Y, t_u)}} &\le 2\Delta^{k} \exp(-k \D{ T_2 }{ P_2 })
      \quad&\text{if } \abs{X\cap Y} \le w_2 \abs{U}.
      \label{eq:2-bound}
   \end{align}
   where
   $P_j = \left(\begin{smallmatrix}w_j & w_q - w_j \\ w_u - w_j & 1 - w_q - w_u + w_j\end{smallmatrix}\right)$,
   $T_j = \left(\begin{smallmatrix}t_j & t_q - t_j \\ t_u - t_j & 1 - t_q - t_u + t_j\end{smallmatrix}\right)$,
   $t_j = \arg\inf \D{ T_j }{ P_j }$ for $j \in \set{1, 2}$.

   Finally the expected running times, $\mathcal T_q$ and $\mathcal T_u$, it takes to compute $R_k(X, t_q)$ and $R_k(Y, t_u)$
   respectively are bounded by
   \begin{equation}
      \begin{array}{r@{\,}l}
         \ep{\mathcal T_q} &\le O\big(k \abs{X} + k(k + \log(n)) \Delta^k \exp(-k \dq) \left( \tfrac{t_q(1 - w_q)}{(1-t_q)w_q } \right)^{(c_k)_1}\big)\; . \\
         \ep{\mathcal T_u} &\le O\big(k \abs{Y} + k(k + \log(n)) \Delta^k \exp(-k \du) \left( \tfrac{t_u(1 - w_u)}{(1-t_u)w_u } \right)^{(c_k)_2}\big)\; .
      \end{array}
      \label{eq:t-bound}
   \end{equation}
\end{lemma}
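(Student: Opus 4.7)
The plan is to treat the tree-sample procedure as a pruned branching random walk on $\mathbb Z_+^2$ and bound each quantity by reducing to the survival probability of an ordinary random walk.  First I would exploit the $2$-independence of the $h_i$ and the fact that they are sampled independently across levels: for any fixed path $r = (r_1,\dots,r_k) \in U^k$, $\Pr[r \in R_k] = \prod_i \Pr[h_i(r_{\le i}) < \Delta_i] = (\Delta/q)^k$, so by linearity of expectation
\begin{align*}
   \ep{|R_k(X,t_q)|} = \tfrac{\Delta^k}{q^k}\bigl|\{r \in U^k : \forall \ell \le k,\ S_\ell(r) \ge t_q \ell - (c_\ell)_1\}\bigr|,
\end{align*}
where $S_\ell(r)=\sum_{j\le\ell}[r_j\in X]$.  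Viewed under the uniform measure on $U^k$, $(S_\ell)_\ell$ is a random walk whose steps are i.i.d.\ $\mathrm{Bernoulli}(w_q)$ (up to the negligible distortion from $|U|$ being finite), and the analogous statements for $R_k(Y,t_u)$ and for the intersection are random walks with steps drawn from $P_1$ or $P_2$ on $\{0,1\}^2$.

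For the three upper bounds \eqref{eq:q-bound}, \eqref{eq:u-bound}, \eqref{eq:2-bound} I would drop all prefix constraints except the final one $\ell=k$ (where $c_k$ vanishes), since adding constraints only shrinks the indicator.  The resulting probabilities $\Pr[S_k \ge t_q k]$, $\Pr[S_k \ge t_u k]$ and $\Pr[\sum X^{(i)} \ge \svec{t_q\\t_u} k]$ are exactly of the form bounded by the sharp (one- and two-dimensional) Chernoff estimates discussed in \Cref{sec:supermajorities}, yielding the factors $\exp(-k\,\dq)$, $\exp(-k\,\du)$ and $\exp(-k\,\D{T_2}{P_2})$ respectively.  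The small constant $2$ absorbs the $\mathrm{poly}(k)$ slack from the Chernoff bound in discrete form.

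The lower bound \eqref{eq:1-bound} is the technical heart and is where a direct Chernoff estimate would only give a lower bound on the expectation, not on $\Pr[N\ge 1]$ where $N = |R_k(X,t_q)\cap R_k(Y,t_u)|$.  I would use the Paley--Zygmund inequality $\Pr[N\ge 1]\ge (\ep{N})^2/\ep{N^2}$.  The first moment is handled by showing that \emph{all} prefix constraints are still satisfied with probability at least $k^{-O(1)}$ conditional on hitting $\svec{t_q\\t_u} k$ at the end; this is exactly the application of the two-dimensional rearrangement (Truck Driver) lemma \Cref{lem:rearrangement}, which gives the factor $k^{-3}$, combined with the joint Chernoff lower bound $\exp(-k\D{T_1}{P_1})$.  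For the second moment, I would decompose $\ep{N^2}$ by the depth $j$ of the lowest common ancestor of a pair of paths; the contribution of each $j$ is the probability that a common prefix of length $j$ survives both prunings, times the squared probability that two independent continuations of length $k-j$ survive.  The main obstacle is to avoid the blow-up that would come from rare prefixes which accumulate far more than the expected overlap with $X$ or $Y$: this is precisely what the slack terms $c_\ell = \Theta(\sqrt{\ell \log k})$ are designed to control, by an exponential tilting argument bounding each summand so that the series over $j$ telescopes to $\mathrm{poly}(k)\cdot(\ep{N})^2$, yielding the $7^{-8}k^{-14}$ prefactor.

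Finally, for the runtime bounds \eqref{eq:t-bound} I would sum the expected work at each level: the work at level $i$ is $O(\ep{|R_i(X,t_q)|}\cdot \Delta_i)$, plus the $O(\log|X|)$ binary search per expansion enabled by the output-sensitive sampling optimization of \Cref{alg:sample}.  Applying the same Chernoff bound at each level gives $\ep{|R_i|}\le \Delta^i \exp(-i\,\dq)\cdot (t_q(1-w_q)/((1-t_q)w_q))^{(c_i)_1}$, where the extra factor accounts for the slack $c_i$ shifting the Chernoff exponent; summing over $i\le k$ and noting that $(c_k)_1=0$ while the intermediate levels are dominated by the geometric series gives the claimed bound, with the additive $k|X|$ coming from the preprocessing of $\{a_ix \bmod q : x \in X\}$.
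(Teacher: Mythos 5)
Your outline of the three upper bounds and the runtime accounting matches the paper's proof in substance (linearity of expectation, dropping the prefix constraints, entropy--Chernoff at the leaf level, and the identity $\di{t_q-\eps}{w_q}=\dq-\eps\log\tfrac{t_q(1-w_q)}{w_q(1-t_q)}+\di{t_q-\eps}{t_q}$ producing the odds-ratio factor), so those parts are fine up to minor accounting slop (e.g.\ the factor $2$ comes from $\prod_i\Delta_i\le 2\Delta^k$, not from Chernoff slack, and the per-level cost is $O((k+\log n)\abs{R_\ell(X,t_q)})$ amortized rather than $\abs{R_\ell}\cdot\Delta_\ell$).

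The genuine gap is in \eqref{eq:1-bound}. You apply Paley--Zygmund directly to $N=\abs{R_k(X,t_q)\cap R_k(Y,t_u)}$ and claim the second moment is controlled ``by the slack terms $c_\ell$''. This cannot work: the $c_\ell$ only \emph{relax} the pruning from below (they let paths dip under $t\ell$), so a path whose prefix overshoots the mean is still a member of $R_k(X,t_q)\cap R_k(Y,t_u)$, and nothing in your setup excludes it. But those overshooting common prefixes are exactly what blows up the second moment: if the lowest common ancestor at depth $\ell$ has accumulated far more intersection than $\svec{t_q\\t_u}\ell$, both continuations reach the final threshold with probability much larger than $\exp(-(k-\ell)\D{T_1}{P_1})$ each, so the per-pair bound $\exp(-(2k-\ell)\D{T_1}{P_1})$ you need for the geometric sum over $\ell$ is false for $N$ itself. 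This is the known failure mode of the naive second-moment method for branching random walks that the paper explicitly works around. The paper's fix is to pass to a smaller auxiliary set $S\subseteq R_k(X,t_q)\cap R_k(Y,t_u)$ defined by \emph{additional upper pruning}, requiring every prefix sum to lie in the band $\svec{t_q\\t_u}\ell-c_\ell\le\sum_{i\le\ell}\smat{[r_i\in X]\\ [r_i\in Y]}\le\svec{t_q\\t_u}\ell$; the upper constraint on the common prefix is precisely what lets one add the three partial sums and get $\exp(-(2k-\ell)\D{T_1}{P_1})$ per pair. The price is that the first moment must then be lower bounded for $S$, not $N$, which requires the exponential tilting to the measure $T_1$, the quadrant/rearrangement lemma (\Cref{lem:quadrant}, giving the $k^{-3.5}$ for staying below the mean), and a Bernstein union bound showing the lower slack $c_\ell=\Theta(\sqrt{\ell\log k})$ is violated with probability $o(k^{-6.5})$ --- your sketch gestures at the rearrangement lemma but omits both the band restriction and the union-bound step that makes the two constraints compatible. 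Without introducing $S$ (or an equivalent truncation of overshooting prefixes), the proposed argument does not establish the claimed $7^{-8}k^{-14}\Delta^k\exp(-k\D{T_1}{P_1})$ bound.
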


We define $\Delta = \exp(\D{ T_1 }{ P_1 })$, and let $k$ be the smallest even integer at least $\frac{\log n}{\D{ T_2 }{ P_2 } - \dq}$.
Define the sequence $\Delta_i=2^{l_i}$ for some $l_i\in\mathbb Z_{\ge0}$ such that $\prod_{j=1}^i\Delta_j \le \Delta^i < 2\prod_{j=1}^i \Delta_j$ for all $i\in[k]$.

We make 2 initiations of \Cref{alg:treesample},
$M_1, M_2$, with height $k/2$.
$\Delta$ and $c_\ell$ are adjusted correspondingly.
In we have $c_{k/2} = c_k = \svec{0\\0}$.

For each instance we have
\begin{align}
   \ep{\abs{R_{k/2}(X, t_q)}}
   &\le 2\exp(k/2\, \D{ T_1 }{ P_1 } -k/2\, \dq)
 \\&\le 2\exp\left(\left(\frac{\log n}{\D{ T_2 }{ P_2 } - \dq}+2\right) \frac{(\D{ T_1 }{ P_1 } - \dq)}{2}\right)
 \\&= 2n^{\frac12 \frac{\D{ T_1 }{ P_1 } - \dq}{\D{ T_2 }{ P_2 } - \dq}}(\D{ T_1 }{ P_1 } - \dq).
 \\
 \text{similarly we get}
 \\
   \ep{\abs{R_{k/2}(X, t_q)}}
   &\le 2n^{\frac12 \frac{\D{ T_1 }{ P_1 } - \du}{\D{ T_2 }{ P_2 } - \dq}}(\D{ T_1 }{ P_1 } - \du).
\end{align}

We combine the two data instances $M_1$ and $M_2$ by taking as representative sets returned the product of the sets returned by each of them.
In particular, this means we successfully find a near set, if
$\abs{R_{k/2}(X, t_q) \cap R_{k/2}(Y, t_u)} \ge 1$ for both instances, which happens with probability at least
\begin{align}
   (7^{-8} k^{-14} \Delta^{k} \exp(-k \D{ T_1 }{ P_1 }))^2
   =
   (7^{-8} k^{-14})^2.
\end{align}
hence, repeating the algorithm $C k^{28}$ times, for some $C$, we can boost this probability to $99\%$.

Putting it all together now yields the full version of \Cref{thm:main} contingent on~\Cref{lem:tree-sample-properties}.

\end{proof}

\subsection{Bounds on Branching}
It now remains to prove~\Cref{lem:tree-sample-properties}.
The inequalities \eqref{eq:q-bound}, \eqref{eq:u-bound} and \eqref{eq:2-bound} are all simple calculations based on linearity of expectation.
The time bound~\eqref{eq:t-bound} is also fairly simple, but we have to take the decoding optimization described above into account.
We also need to bound the number of paths alive at some point during the decoding process, which requires being more careful about the trimming conditions.

Finally the proof of the probability lower bound~\eqref{eq:1-bound} is the main star of the section.
We do this using essentially a second-moment method, but a number of tricks are needed in order to squeeze out acceptable bounds, taking into account that any of $w_q,w_u,w_1,w_2$ may be $o(1)$, which among other things forbid the use of many Central Limit Theorem type results.

\begin{proof}[Proof of \eqref{eq:q-bound} and \eqref{eq:u-bound}]
   We only provide the proof for \eqref{eq:q-bound} since the proof \eqref{eq:u-bound}
   is analogous.

   Let $r \in R_k$ be a representative string and define the random variables
   $\mathcal X^{(i)} = [r_i \in X]$ for $i \in [k]$, because the hash functions $h_i$ used at each level of the tree are independent, so are the $(\mathcal X^{(i)})_{i \in [k]}$ independent.

   We use linearity of expectation, and completely throw away
   the fact that some branches may have been cut early.
   Throwing away extra cuts of course only increases the probability of survival.
   Meanwhile, we do not expect to gain more than factors of $k$ this way, compared to a sharp analysis, since the whole point of the algorithm is to efficiently approximate cuts done only at the leaf level.
   \begin{align}
      \ep{\abs{R_k(X, t_q)}}
         &\le \abs{R_k} \prp{\forall \ell \le k : \sum_{i \in [\ell]} \mathcal X^{(i)} \ge t_q \ell - (c_\ell)_1}
      \\&\le \abs{R_k} \prp{\sum_{i \in [k]} \mathcal X^{(i)} \ge t_q k}
      \\& \le \abs{R_k} \exp(-k \dq)
      .
   \end{align}
   The final bound is the entropy Chernoff bound we use everywhere.
   Since $|R_k| = \prod_{i=1}^k \Delta_i \le 2\Delta^k$ we get the bound.
\end{proof}

\begin{proof}[Proof of \eqref{eq:2-bound}]
   This is similar to the proof of \eqref{eq:q-bound} and \eqref{eq:u-bound}, but two dimensional.
   Like in the those proofs we consider a single representative string $r \in R_k$ and define the random variables $\mathcal X^{(i)} = \smat{[r_i \in X] \\ [r_i \in Y] }$ for $i \in [k]$.
   By definition of \Cref{alg:treesample} $(\mathcal X^{(i)})_{i \in [k]}$ are independent. 

   We then bound using linearity of expectation:
   \begin{align}
      \ep{\abs{R_k(X, t_q) \cap R_k(Y, t_u)}}
         &\le \abs{R_k}
         \prp{\forall \ell \le k : \sum_{i \in [\ell]} \mathcal X^{(i)} \ge \svec{t_q \\ t_u} \ell - c_\ell}
         \\&\le 2\Delta^k \prp{\sum_{i \in [k]} \mathcal X^{(i)} \ge \svec{t_q \\ t_u} k}
         \\&\le 2\Delta^k \exp(- \D{ T_2 }{ P_2 })
   \end{align}
\end{proof}

\begin{proof}[Proof of \eqref{eq:t-bound}]
   As a preprocessing stage we make $k$ sorted lists of $(a_i x)_{x \in X}$ where
   $a_i$ is the coefficient in $h_i(p \circ x) = h'_i(p) + a_i x \mod q$, this takes
   $O(k\abs{X})$ time.

   We will argue that at each level of tree that we only use $O(k + \log \abs{X}) = O(k + \log n)$ 
   amortized time per active path. More precisely, at level $l$ we use $O((k + \log n) \abs{R_\ell(X, t_q)})$
   amortized time.
   
   Let $\ell \in [k]$ be fixed and consider an active path $r \in R_\ell(X, t_q)$.
   If $\sum_{i \in [\ell]} [r_i \in X] \ge t_q(l + 1) - (c_{l - 1})_1$ then every one of its children
   will be active. So we need to find $\setbuilder{x \in U}{h_\ell(p \circ x) < \Delta_\ell}
   = h^{-1}_\ell([\Delta_\ell])$.
   Now $h_\ell(p \circ x) = h'_\ell(p) + ax \mod q$ where $a \neq 0 \mod q$
   and $s = h'_\ell(p)$ can be computed in $O(k)$ time. We then get that
   $h^{-1}_\ell([\Delta_\ell]) = \setbuilder{a^{-1}(i - s)}{i \in [\Delta_\ell]}$,
   this we can find in time proportional with the number of active children,
   so charging the cost to them gives the result.
   
   If $\sum_{i \in [\ell]} [r_i \in X] < t_q(l + 1) - (c_{l - 1})_1$ then only the children
   $r \circ x \in R_{l + 1}$ where $x \in X$ will be active. So we need to find
   $\setbuilder{x \in X}{h_\ell(p \circ x) < \Delta_\ell}$. Again using that
   $h_\ell(p \circ x) = h'_\ell(p) + ax \mod q$ where $a \neq 0 \mod q$ and $s = h'_\ell(p)$
   can be computed in $O(k)$ time, we have reduced the problem to finding
   $h^{-1}_\ell([\Delta_\ell]) = \setbuilder{x \in X}{s + ax \mod q < \Delta_\ell}$.
   This we note we can rewrite as $h^{-1}_\ell([\Delta_\ell])
   = \setbuilder{x \in X}{s \le ax \vee ax < \Delta + s - q}$,
   so using our sorted list this can be done in $O(\log n)$ time plus time proportional with
   the number of active children, so charging this cost to them gives the result.

   \vspace{.5em}

   We bound the expected number of active paths on a level $\ell \in [k]$.
   Let $r \in R_\ell$ be a representative string and define the random
   variables $\mathcal X^{(i)} = [r_i \in X]$ for $i \in [k]$, by definition of
   \Cref{alg:treesample} $(\mathcal X^{(i)})_{i \in [k]}$ are independent. 
   We then bound
   \begin{align}
      \prp{\sum_{i \in [l]} \mathcal X^{(i)} \ge t_q \ell - (c_\ell)_1}
         &\le
      \prp{\forall j \le \ell : \sum_{i \in [j]} \mathcal X^{(i)} \ge t_q j - (c_j)_1}
      \\&\le \prp{\sum_{i \in [\ell]} \mathcal X^{(i)} \ge t_q \ell - (c_\ell)_1}
      \\&\le \exp(-l \di{t_q - c_\ell/l}{w_q})
      \\&\le  \exp(-l \dq) \left( \tfrac{t_q (1 - w_q)}{w_q (1 - t_q)} \right)^{(c_\ell)_1}
      \; .
   \end{align}
   The crucial step here was using the identity
   \begin{align}
      \di{t_q-\eps}{w_q} = \di{t_q}{w_q} - \eps\log \tfrac{t_q (1 - w_q)}{w_q (1 - t_q)} + \di{t_q-\eps}{t_q}
   \end{align}
   from which we can ignore the $\di{t_q-\eps}{t_q}$ term, since it is positive.

   Using linearity of expectation we get that
   \[\begin{split}
      \ep{\abs{R_\ell(X, t_q)}}
         &\le \abs{R_\ell}\exp(-\ell \dq) \left( \tfrac{t_q (1 - w_q)}{w_q (1 - t_q)} \right)^{(c_\ell)_1}
       \\&\le 2 \Delta^\ell \exp(-\ell \dq) \left( \tfrac{t_q (1 - w_q)}{w_q (1 - t_q)} \right)^{(c_\ell)_1}
      \; .
   \end{split}\]

   Now the expected cost of the tree becomes
   \[\begin{split}
      \ep{\sum_{\ell \in [k]} O((k + \log(n)) \abs{R_\ell(X, t_q)})}
         &= O((k + \log(n)) \sum_{\ell \in [k]} \ep{\abs{R_\ell(X, t_q)}})
       \\&\le O(k(k + \log(n)) \Delta^k \exp(-k \dq) \left( \tfrac{t_q (1 - w_q)}{w_q (1 - t_q)} \right)^{(c_\ell)_1})
      \; .
   \end{split}\]
\end{proof}

Note that we throw away some leverage here by bounding the size of each level by the final level.
We might have defined $c_\ell$ such that $\ell\Delta-\ell\dq + c_\ell\log \tfrac{t_q (1 - w_q)}{w_q (1 - t_q)} - \ell\di{t_q-c_\ell/\ell}{t_q} = k\Delta-k\dq$ and still used the same bound.
The only later requirement we set the $c_\ell$ is that
$\sum_{\ell\in[k]} \exp(-\ell\di{t_q-c_\ell/\ell}{t_q})$ sum to $1/\poly(k)$.

Making this change could potentially kill the
$\left( \tfrac{t_q (1 - w_q)}{w_q (1 - t_q)} \right)^{(c_\ell)_1}$ factor, which is a bit of an eye sore.
However in the near-constant query time case, which is really when this factor (or term once we using the tensoring trick) is relevant, this trick wouldn't work, since we then have exactly $\Delta = \dq$.

\vspace{.5em}

For the final proof we need the following lemma,
which bounds the probability that an unbiased Bernoulli $2d$ random walk stays entirely in the negative quadrant.
A lemma like this is an exercise to show using the Central Limit Theorem and convergence to Brownian motion.
However, our bound is non-asymptotic, making no assumptions about the relationship between the probability distribution of $X_i$ and the size of $n$.
There are non-asymptotic CLT bounds, like Berry Esseen, but unfortunately multivariate Berry Esseen bounds for random walks are not very developed.

\begin{lemma}[The probability that a random walk stays in a quadrant]\label{lem:quadrant}
   Let $X_1, \dots, X_k\in\{0,1\}^2$ be iid. Bernoulli $2d$-random variables with probability matrix $\smat{p & p_{1}-p\\p_{2}-p&1-p_1-p_2+p}$.
   Assume that the coordinates are correlated, that is $p\ge p_1p_2$,
   and assume $p_q k$ and $p_2 k$ are integers.

   Let $S_\ell = \sum_{i\in[\ell]} X_i$ be the associated random walk.
   Then
   \begin{align}
      \Pr[\forall \ell\in[k] : S_\ell \le 0] \ge \frac{1}{400\, k^{6.5}}
      .
   \end{align}
\end{lemma}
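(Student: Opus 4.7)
The plan is to combine the two-dimensional rearrangement (Truck Driver's) lemma already quoted in the introduction with a local-limit lower bound for the multinomial distribution of the sample profile. The key observation is that while the condition ``the walk stays in a quadrant for all $\ell$'' is naturally a statement about the \emph{ordering} of the $X_i$'s, the probability of \emph{where the walk ends} depends only on the unordered multiset; so the two factors can be bounded separately.

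Setup. Let $N = (N_{11}, N_{10}, N_{01}, N_{00})$ count the four possible values of $X_i \in \{0,1\}^2$ among $i \in [k]$. Then $N$ is multinomial in $k$ trials with cell probabilities $(p, p_1-p, p_2-p, 1-p_1-p_2+p)$, and the conditional distribution of $(X_1,\dots,X_k)$ given $N$ is uniform over all sequences with profile $N$. I would first restrict attention to the single ``mode'' atom $E$ defined by $N_{11} = pk$, $N_{10}=(p_1-p)k$, $N_{01}=(p_2-p)k$, $N_{00}=(1-p_1-p_2+p)k$; this atom is legal because $pk, p_1 k, p_2 k$ are assumed integral. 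Then
\[
  \Pr\!\bigl[\forall \ell\in[k]:\ S_\ell\text{ in the required region}\bigr]
  \;\ge\; \Pr[E]\cdot \Pr\!\bigl[\text{stays in region}\bigm| E\bigr].
\]

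Step 1 (the rearrangement factor). Conditioned on $E$, all orderings of the multiset are equally likely, so one can apply the two-dimensional rearrangement lemma quoted in the overview (the one bounded by $k^{-3}$, which is exactly Step 2 of the cyclic-shift/truck-driver argument translated to $\mathbb Z^2_+$ with positive correlation $p\ge p_1p_2$). This gives $\Pr[\text{stays in region}\mid E]\ge k^{-3}$ directly.

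Step 2 (the mode factor). I would lower bound $\Pr[E]$ by Stirling's formula for the multinomial coefficient. In the ``generic'' regime where all four cell probabilities are $\Omega(1)$, Stirling gives $\Pr[E]=\Theta(k^{-3/2}/\sqrt{\prod p^*_i})$, so the two factors multiply to $\Omega(k^{-9/2})$, well inside the claimed $k^{-6.5}/400$. The main obstacle — and where I would spend most of the effort — is the boundary regime in which one or more of the four cell probabilities is very small or zero. In the exactly-zero sub-case the multinomial collapses to a lower-dimensional one and the whole argument goes through with a strictly better exponent. In the ``small but positive'' sub-case, Stirling has to be used in its explicit non-asymptotic form $n! \le e\sqrt{n}(n/e)^n$ for the small cells (where the local CLT intuition breaks down); the resulting case analysis must be done uniformly over $(p,p_1,p_2)$ and is where the extra slack between $k^{-9/2}$ and $k^{-6.5}$, as well as the constant $1/400$, get absorbed.

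Step 3 (packaging). Combining Steps 1 and 2 gives $\Pr[\forall \ell: S_\ell\text{ in quadrant}]\ge k^{-3}\cdot \Omega(k^{-7/2})\ge \frac{1}{400 k^{6.5}}$, with the $k^{7/2}$ (rather than the optimistic $k^{3/2}$) left as a safety margin precisely for the degenerate Stirling bookkeeping of Step 2 and the $(2\pi)^{3/2}$ factor that appears there. The positive-correlation assumption $p\ge p_1p_2$ is used in Step 1 but not in Step 2, matching exactly how the rearrangement lemma is stated.
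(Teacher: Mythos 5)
Your decomposition is essentially the paper's own proof: the paper likewise restricts to a single ``endpoint'' atom, lower bounds its probability by a non-asymptotic Stirling estimate for the multinomial (its Lemma~4.3, giving $\tfrac{1}{400}k^{-3.5}$), and multiplies by the conditional probability $\ge k^{-3}$ coming from the two-dimensional rearrangement lemma (its Lemma~4.4) — exactly your Steps 1--3 and the same exponent budget $k^{-3}\cdot k^{-3.5}$.

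One concrete slip to repair: the lemma does \emph{not} assume $pk$ is an integer, only $p_1k$ and $p_2k$, so your mode atom $E$ with $N_{11}=pk$ may simply not exist. The paper's fix is to condition on the diagonal count being $\lceil pk\rceil$ (after negating the variables, on the $(1,1)$-count being $\lceil(1-p_1-p_2+p)k\rceil$), and to absorb the rounding inside the Stirling estimate: correction terms of the form $\lceil pk\rceil\log\bigl(\lceil pk\rceil/(pk)\bigr)\le 1+\log(1+k)$, controlled using $p\ge p_1p_2\ge 1/k^2$ and $1-p_1-p_2+p\ge(1-p_1)(1-p_2)\ge 1/k^2$, are precisely what degrades the generic $\Theta(k^{-3/2})$ to the uniform $\tfrac{1}{400}k^{-3.5}$ — there is no separate ``small cell'' case analysis beyond this. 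Note this means the correlation assumption is used in the endpoint estimate as well, not only in the rearrangement step as you state; also, after rounding up the diagonal the empirical correlation condition needed for the rearrangement lemma still holds, so that application remains valid. With this modification your argument matches the paper's.
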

The proof of this is in \Cref{app:technical}.

\begin{proof}[Proof of \eqref{eq:1-bound}]
   We will prove this bound using the second moment method.
   For this to work, it is critical that we restrict our representative strings further and consider
   \[
      S = \setbuilder{r \in R_k}{
         \forall \ell \le k : \smat{ [r_i \in X] \\ [r_i \in Y] } \ell - c_\ell \le \sum_{i \in [\ell]} \smat{ [r_i \in X] \\ [r_i \in Y] } \le \svec{t_q \\ t_u} \ell
      }
      \; ,
   \]
   It is easy to check that $S \subseteq R_k(X, t_q) \cap R_k(Y, t_u)$, thus we have that
   \[
      \prp{\abs{R_k(X, t_q) \cap R_k(Y, t_u)} \ge 1}
         \ge \prp{\abs{S} \ge 1}
         \ge \ep{\abs{S}}^2\big/\ep{\abs{S}^2}
      \; ,
   \]
   where the last bound is Paley-Zygmund's inequality.
   We then need to do two things: 1) Lower bound $\ep{\abs{S}}$, and 2)
   Upper bound $\ep{\abs{S}^2}$.

\paragraph{Lower bounding $\ep{\abs{S}}$.}
   Let $r \in R_k$ be a representative string and define the random variables
   $\mathcal X^{(i)} = \smat{ [r_i \in X] \\ [r_i \in Y] }$ for $i \in [k]$.
   Each one has distribution $P = \smat{w_1&w_q-w_1\\w_u-w_q&1-w_q-w_u+w_1}$.
   We then introduce variables $\tilde{\mathcal X}^{(i)}$ with law $T=\smat{t_1&t_q-t_1\\t_u-t_q&1-t_q-t_u+t_1}$, where $t_1$ minimizes $\D TP$ as defined in the algorithm.

   We then use the following variation on Sanov's theorem:
   \begin{lemma}
      For any set $A\subseteq \R^{2\times n}$ we have
      \[\begin{split}
         \prp{(\mathcal X^{(i)})_{i \in [k]} \in A}
            = \exp(-k \D{ T }{ P }) \prp{(\tilde{\mathcal{X}}^{(i)})_{i \in [k]} \in A}
      \end{split}\]
   \end{lemma}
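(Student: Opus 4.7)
The plan is to prove the identity by a direct likelihood-ratio computation on the (finite) sample space $\Omega = \set{0,1\}^2$, exploiting that any sequence of outcomes $(x^{(i)})_{i\in[k]}$ relevant to the application has empirical type exactly equal to $T$. Concretely, since $T$ is defined (in the algorithm) as the distribution with marginals $\svec{t_q\\t_u}$ and joint entry $t_1$ minimizing $\D{T}{P}$, and since the set $A$ to which the lemma will be applied is a sub-collection of sequences whose empirical distribution is exactly $T$, the ratio of probabilities is a constant that can be pulled out of the sum.

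First I would fix any concrete sequence $(x^{(i)})_{i\in[k]}$ in the type class of $T$, that is, a sequence for which $\#\set{i : x^{(i)}=\omega} = k\, T(\omega)$ for every $\omega\in\Omega$. For such a sequence, independence gives
\begin{align*}
\frac{\prp{(\mathcal X^{(i)})_i = (x^{(i)})_i}}{\prp{(\tilde{\mathcal X}^{(i)})_i = (x^{(i)})_i}}
= \prod_{\omega\in\Omega}\left(\frac{P(\omega)}{T(\omega)}\right)^{k T(\omega)}
= \exp\!\left(-k\sum_{\omega}T(\omega)\log\tfrac{T(\omega)}{P(\omega)}\right)
= \exp(-k\,\D{T}{P}).
\end{align*}
Crucially, this ratio does not depend on the particular sequence, only on its type. (The case $P(\omega)=0$ for some $\omega$ with $T(\omega)>0$ does not arise because $T\ll P$ by construction: $T$ is obtained as the $\arg\inf$ of $\D{T}{P}$ over distributions absolutely continuous with respect to $P$.)

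Next I would sum this identity over all sequences in $A$. Because $A$ only contains sequences of empirical type $T$, the constant factor $\exp(-k\D{T}{P})$ factors out:
\begin{align*}
\prp{(\mathcal X^{(i)})_i \in A}
= \sum_{(x^{(i)})\in A}\prp{(\mathcal X^{(i)})_i = (x^{(i)})}
= \exp(-k\,\D{T}{P})\sum_{(x^{(i)})\in A}\prp{(\tilde{\mathcal X}^{(i)})_i = (x^{(i)})}.
\end{align*}
This yields the claimed identity. The only subtle point -- and the place where a careless reader could go wrong -- is the restriction of $A$ to the type class of $T$; in the surrounding proof of \eqref{eq:1-bound} this is guaranteed because the set $S$ is further intersected with the events that the path terminates at $\svec{t_q\\t_u}k$ (forced by $c_k=\svec{0\\0}$) and has shared-coordinate count exactly $t_1 k$ (which may be added to the definition of $S$ without loss since we are lower bounding $\ep{\abs{S}}$). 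So the main obstacle, if any, is bookkeeping: ensuring that when the lemma is invoked inside the second-moment calculation, the constraint set genuinely lies inside the type class of $T$, so the change-of-measure is exact rather than merely an upper bound.
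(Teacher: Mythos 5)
There is a genuine gap. The paper applies this lemma with $A = U\cap L$, which (because $c_k=\svec{0\\0}$) pins only the \emph{endpoint} $\sum_{i\in[k]}x^{(i)} = \svec{t_q\\t_u}k$, i.e.\ the two marginal counts, and not the joint count of $\svec{1\\1}$-coordinates. Your argument needs every sequence in $A$ to have empirical type \emph{exactly} $T$, which is a strictly stronger condition and is not satisfied by $U\cap L$. The reason the identity nevertheless holds with only the endpoint fixed is the key fact your proof never establishes: since $T$ is the minimizer of $\D{T}{P}$ subject to the mean constraint, it is an exponential tilt of $P$, $\dd T(x) = \exp(\langle z,x\rangle-\Lambda(z))\,\dd P(x)$, so the likelihood ratio of the $k$-fold product depends only on $\sum_i x^{(i)}$ and is the constant $\exp(-k\D{T}{P})$ on all of $U\cap L$. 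That is exactly the paper's change-of-measure argument; without it, the per-sequence ratio $\prod_\omega(P(\omega)/T(\omega))^{c_\omega}$ genuinely varies with the joint count $c_{11}$ unless one separately proves the cross-ratio identity $\tfrac{T_{11}T_{22}}{T_{12}T_{21}}=\tfrac{P_{11}P_{22}}{P_{12}P_{21}}$, which is equivalent to the tilting fact.

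Your proposed patch --- adding to $S$ the event that the shared-coordinate count is exactly $t_1k$ --- does not rescue this as written. First, $t_1k$ need not be an integer (the paper explicitly notes that $t_{11}k$ and $t_{22}k$ need not be integral), so the exact-type event can be empty; rounding to $\lceil t_1k\rceil$ means the type is no longer $T$ and your constant is no longer $\exp(-k\D{T}{P})$ unless you again invoke the cross-ratio identity above. Second, further restricting $S$ is not automatically ``without loss'': it decreases $\ep{\abs{S}}$, so the lower bound on $\ep{\abs{S}}$ must be re-proved for the smaller set (the paper's \Cref{lem:hitting-the-mean} and \Cref{lem:quadrant} do supply the needed bound under the tilted measure, but this bookkeeping has to be carried out, and the union bound against the lower barrier $L$ redone). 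The clean route is the paper's: prove the tilting representation of $T$, conclude the ratio is a function of the sum alone, and use that the sum is pinned on $A$.
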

   \begin{proof}
      Define the logarithmic moment generating function 
      $
         \Lambda(\lambda) = \log \ep{\exp(\langle \lambda, \mathcal X \rangle)}
      $,
      and let
      $z = (\nabla_x \Lambda^{*})(t)$.
      By a standard correspondence, (see e.g.~\cite{polyanskiy2014lecture} Chapter 14 or~\cite{DBLP:journals/siamrev/Dinwoodie94} Chapter 6.2), we have that
      \begin{align}
         \dd T(x) = \exp(\langle z,x\rangle - \Lambda(z))\dd P(x)
      \end{align}
      for Radon–Nikodym derivates $\dd T$ and $\dd P$.
      Now using the exponential change of measure, we get that
      \[\begin{split}
         \prp{(\mathcal X^{(i)})_{i \in [k]} \in A}
            &= \int_{(x^{(i)})_{i \in [k]} \in A} \, \dd P^{\otimes k}
            \\&= \int_{(\tilde{x}^{(i)})_{i \in [k]} \in A} \exp\left(k \Lambda(z) - \left\langle z, \sum_{i \in [k]} \tilde{x}^{(i)} \right\rangle\right) \, \dd T^{\otimes k}
            \\&= \exp(-k \D{ T }{ P }) \int_{(\tilde{x}^{(i)})_{i \in [k]} \in A} \exp\left(- \left\langle z, \sum_{i \in [k]} \left( \tilde{x}^{(i)} - \svec{t_q\\ t_u} \right) \right\rangle \right) \, \dd T^{\otimes k}
            \\&= \exp(-k \D{ T }{ P }) \prp{(\tilde{\mathcal{X}}^{(i)})_{i \in [k]} \in A}
            ,
      \end{split}\]
      where the last inequality follows from the fact that if $(\tilde{x}^{(i)})_{i \in [k]} \in A$ then
      $\sum_{i \in [k]} \tilde{x}^{(i)} = \svec{t_q\\t_u}k$.
   \end{proof}

   For convenience we will sometimes write $T=\smat{t_{11} & t_{12} \\ t_{21} & t_{22}}$.
   Note that by assumption $t_qk = (t_{12}+t_{11})k$ and $t_uk=(t_{21}+t_{11})k$ are integers, but values such as $t_{11}k$ and $t_{22}k$ need not be.
   This will 

   We define the sets
   \begin{align}
      U &= \setbuilder{(x^{(i)})_{i \in [k]} \in \R^{2\times k}}{\forall \ell \le k :
      \sum_{i \in [\ell]} x^{(i)} \le \svec{t_q\\t_u}\ell}
      \\
      \text{and}\quad
         L &= \setbuilder{(x^{(i)})_{i \in [k]} \in \R^{2\times k}}{\forall \ell \le k :
        \sum_{i \in [\ell]} x^{(i)} \ge \svec{t_q\\t_u}\ell - c_\ell}
   \end{align}
   such that $U\cap L$ are all sequences satisfying our path requirement.
   In other words
      $E|S| = \exp(-k\D{T}{P})\prp{(\tilde{\mathcal{X}}^{(i)})_{i \in [k]} \in U\cap L}$.
   Using a union bound we split up:
   \[\begin{split}
      \prp{(\tilde{\mathcal X}^{(i)})_{i \in [k]} \in U \cap L}
   \ge
      \prp{(\tilde{\mathcal X}^{(i)})_{i \in [k]} \in U}
      - \prp{(\tilde{\mathcal X}^{(i)})_{i \in [k]} \in L}
      .
   \end{split}\]

   The term is bounded by \Cref{lem:quadrant} from the Appendix.
   Once we notice that $w_1\ge w_qw_u$ implies that $t_1\ge t_qt_u$.
   One way to see this is that $t_1$ minimizing $\D TP$ gives rise to the equation $\frac{w_1(1 - w_q - w_u + w_1)}{(w_q - w_1)(w_u - w_1)}=\frac{t_1(1 - t_q - t_u + t_1)}{(t_q - t_1)(t_u - t_1)} = \frac{t_1 + t_1(t_1 - t_q - t_u)}{t_q t_u + t_1(t_1 - t_q - t_u)}$.
   If $w_1\ge w_qw_u$ the left hand side is $\ge 1$, and so we must have $t_1\ge t_qt_u$.

   \Cref{lem:quadrant} then gives us
   \begin{align}
      \prp{(\tilde{\mathcal X}^{(i)})_{i \in [k]} \in U}
      \ge \frac{1}{400} k^{-3.5} \; .
   \end{align}

   This is a pretty small value, so for the union bound to work we need an even smaller probability for the lower bound.

   We bound each coordinate individually.
   The cases are symmetric, so we only consider the first coordinate.
   Using another union bound and Bernstein's inequality we get
   \begin{align}
      \prp{\exists \ell \le k : \sum_{i = 1}^\ell \tilde{\mathcal X}^{(i)}_1  \le t_q \ell - (c_\ell)_1}
      &= \sum_{l \le k} \prp{\sum_{i = 1}^\ell \tilde{\mathcal X}^{(i)}_1  \le t_q \ell - (c_\ell)_1}
      \\&\le \sum_{l\le k}\exp\left(\frac{-(c_\ell)_1^2/2}{(1-t_q)t_q\ell+(1-2t_q)(c_\ell)_1/3}\right)
      \\&\le\frac{1}{1200} k^{-6.5}
      .
   \end{align}
   since $(c_\ell)_1 = \Omega(\sqrt{t_q(1-t_q)l\log l} + \abs{1-2t_q}\log l)$.

   Similarly, we upper bound
   $\prp{\exists \ell \le k : \sum_{i = 1}^\ell \tilde{\mathcal X}^{(i)}_2  \le t_u \ell - (c_\ell)_2} \le \frac{1}{1200} k^{-6.5}$.
   Putting it all together we get
   \[
      \prp{(\mathcal X^{(i)})_{i \in [k]} \in A}
         \ge \frac{1}{1200} \exp(-k \D{ T_1 }{ P_1 }) k^{-6.5} \; ,
   \]
   so by linearity of expectation we get that
   \[
      \ep{S}
         \ge \abs{R_k} \frac{1}{1200} k^{-6.5} \exp(-k\D{ T }{ P })
         \ge \frac{1}{1200} k^{-6.5} \Delta^k \exp(-k \D{ T }{ P }) \; .
   \]
   
   \subsubsection*{Upper bounding $\ep{\abs{S}^2}$}
   Consider two representative strings $r, r' \in R_k$ and let $q \in R_\ell$ be their common prefix,
   hence $l$ is the length of their common prefix. Define the random variables
   $\mathcal X^{(i)} = \smat{ [r_i \in X] \\ [r_i \in Y] }$,
   $\mathcal Y^{(j)} = \smat{ [r'_j \in X] \\ [r'_j \in Y] }$,
   and $\mathcal Z^{(h)} = \smat{[q_h \in X] \\ [q_h \in Y] }$ for
   $i, j \in [k] \setminus [\ell]$ and $h \in [l]$. We then get that
   \[\begin{split}
      \prp{p, p' \in S}
         &\le \prp{\sum_{h \in [\ell]} \mathcal Z^{(h)} + \sum_{i \in [k] \setminus [l]} \mathcal X^{(i)} \ge tk
            \wedge \sum_{h \in [\ell]} \mathcal Z^{(h)} + \sum_{j \in [k] \setminus [l]} \mathcal Y^{(j)} \ge tk
            \wedge \sum_{h \in [\ell]} \mathcal Z^{(h)} \le tl}
         \\&\le \prp{\sum_{h \in [\ell]} \mathcal Z^{(h)} + \sum_{i \in [k] \setminus [l]} \mathcal X^{(i)} + \sum_{j \in [k] \setminus [l]} \mathcal Y^{(j)} \ge (2k - \ell)t}
      \; .
   \end{split}\]
   Now $\sum_{h \in [\ell]} \mathcal Z^{(h)} + \sum_{i \in [k] \setminus [l]} \mathcal X^{(i)} + \sum_{j \in [k] \setminus [l]} \mathcal Y^{(j)}$
   is almost a sum of independent random variable. We have that $\mathcal X^{(k - \ell + 1)}$
   and $\mathcal Y^{(k - \ell + 1)}$ are correlated since they are chosen by sampling without replacement,
   but this implies that \[\ep{\exp(\langle \lambda, \mathcal X^{(k - \ell + 1)} + \mathcal Y^{(k - \ell + 1)} \rangle)}
   \le \ep{\exp(\langle \lambda, \mathcal X^{(k - \ell + 1)} \rangle)}\ep{\exp(\langle \lambda, \mathcal Y^{(k - \ell + 1)} \rangle)}\]
   We can then use a 2-dimensional Entropy-Chernoff bound and get that
   \[\begin{split}
      \prp{\sum_{h \in [\ell]} \mathcal Z^{(h)} + \sum_{i \in [k] \setminus [l]} \mathcal X^{(i)} + \sum_{j \in [k] \setminus [l]} \mathcal Y^{(j)} \ge (2k - \ell)t}
         \le \exp(-(2k - \ell)\D{ T }{ P })
      \; ,
   \end{split}\]

   Using this we can upper bound $\ep{\abs{S}^2} = \ep{\sum_{r, r' \in R_k} [r, r' \in S]}$ by
   splitting the sum by the length of their common prefix.
   \[\begin{split}
      \ep{\abs{S}^2}
         &= \ep{\sum_{r, r' \in S_k} [r, r' \in S]}
         \\&\le \sum_{i = 1}^{k} \left(\prod_{j = 1}^{i} \Delta_j \right) \binom{\Delta_{i + 1}}{2} \left(\prod_{j = i + 2}^k \Delta_j \right) \exp(-(2k - i)\D{ T }{ P })
         \\&\le \left( \prod_{j = 1}^{k} \Delta_j \right)^2 \exp(-2k \D{ T }{ P }) \sum_{i = 1}^{k} \exp(i \D{ T }{ P }) \left( \prod_{j = 1}^i \Delta_j \right)^{-1}
         \\&\le \left( \prod_{j = 1}^{k} \Delta_j \right)^2 \cdot \exp(-2k \D{ T }{ P }) \cdot k \cdot \exp(k \D{ T }{ P }) \cdot \Delta^{-k}
         \\&\le 4 k \Delta^k \exp(- k \D{ T }{ P })
   \end{split}\]

   \subsubsection*{Finishing the proof}
   Having lower bounded $\ep{\abs{S}}$ and upper bounded $\ep{\abs{S}^2}$ we can finish the proof.
   \[\begin{split}
      \prp{\abs{R_k(X, t_q) \cap R_k(Y, t_u)} \ge 1}
         &\ge \prp{\abs{S} \ge 1}
         \\&\ge \frac{\ep{\abs{S}}^2}{\ep{\abs{S}^2}}
         \\&\ge \frac{\frac{1}{1200^2} k^{-13} \Delta^{2k} \exp(-2k \D{ T }{ P })}{4 k \Delta^k \exp(- k \D{ T }{ P })}
         \\&= \frac{1}{2400^2} k^{-14} \Delta^\ell \exp(-k \D{ T }{ P })
      \; .
   \end{split}\]
\end{proof}

\subsection{Central Random Walks}\label{app:technical}

The main goal of this section is to prove \Cref{lem:quadrant},
which polynomially in $k$ lower bounds the probability that a biased random walk on $\mathbb Z^2$ always stays below its means.
Asymptotically, this can be done in various ways using the Central Limit Theorem for Brownian Motion, but as far as we know there are no standard ways to prove such a result in a quantitative way.

What we would really want is a Multidimensional Berry Esseen for Random Walks.
Instead we prove something specifically for walks where each iid. step 
   $X_1, \dots, X_k\in\{0,1\}^2$ be is a Bernoulli $2d$-random variables with probability matrix $\smat{p & p_{1}-p\\p_{2}-p&1-p_1-p_2+p}$.
   We need the further restrictions that the coordinates are correlated ($p\ge p_1p_2$), and that $p_1k$ and $p_2k$ are integers.

We will start by proving some partial results, simply bounding the probability that the final position of the random walk hits a specific value.
We then prove the lemma conditioned on hitting those values, and finally put it all together.

\begin{lemma}\label{lem:hitting-the-mean}
   Let $k \in \mathbb Z_+$ and $p_1, p_2 \in [0, 1]$, such that, both $p_1 k$ and $p_2 k$ are integers.
   Choose $p \in [0, 1]$, such that, $p \ge p_1 p_2$. Let $X^{(i)} \in \R^2$ be independent
   identically distributed 2-dimensional Bernoulli variables, where their probability matrix is
   $P = \smat{
     p        & p_1 - p \\
     p_2 - p  & 1 - p_1 - p_2 + p
   }$. We then get that
   \[
      \prp{\sum_{i \in [k]} X^{(i)} = \left(\begin{smallmatrix}p_1 \\ p_2\end{smallmatrix}\right) k \wedge \sum_{i \in [k]} X^{(i)}_1 X^{(i)}_2 = \ceil{p k}}
          \ge \frac{1}{400} k^{-3.5} \; .
   \]
\end{lemma}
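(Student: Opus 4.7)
The plan is to reduce the statement to a single multinomial probability and then estimate it via Stirling. The two events $\sum_i X^{(i)} = \binom{p_1 k}{p_2 k}$ and $\sum_i X^{(i)}_1 X^{(i)}_2 = \lceil p k \rceil$ together pin down the counts of all four outcomes in $\{0,1\}^2$: setting $n_{11} = \lceil p k \rceil$, we must have $n_{10} = p_1 k - n_{11}$, $n_{01} = p_2 k - n_{11}$, $n_{00} = k - n_{11} - n_{10} - n_{01}$. All four are non-negative integers because $p \le \min\{p_1,p_2\}$ (a legitimate entry of a joint distribution with marginals $p_1,p_2$). So the target probability equals the single multinomial term
\begin{equation*}
  P = \binom{k}{n_{11}, n_{10}, n_{01}, n_{00}} \, p^{n_{11}}(p_1-p)^{n_{10}}(p_2-p)^{n_{01}}(1-p_1-p_2+p)^{n_{00}}.
\end{equation*}

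First I would dispense with degeneracies: if any of the four cell probabilities $p_{ij} \in \{p,\, p_1-p,\, p_2-p,\, 1-p_1-p_2+p\}$ is zero, the corresponding count is forced to zero and the multinomial collapses to one on fewer categories, which only improves the target lower bound; likewise small $k$ can be absorbed by the universal constant. So I may assume all $p_{ij} > 0$, and (after a separate handling of cells with $p_{ij} k < 1$, where $n_{ij} \in \{0,1\}$) that all $n_{ij} \ge 1$. Applying the two-sided Stirling bounds $\sqrt{2\pi n}(n/e)^n \le n! \le e\sqrt{n}(n/e)^n$ termwise and writing $q_{ij} = n_{ij}/k$ gives, after cancellation,
\begin{equation*}
  P \,\ge\, \frac{\sqrt{2\pi}}{e^{4}} \cdot \frac{1}{k^{3/2}} \cdot \prod_{ij} \frac{1}{\sqrt{q_{ij}}} \cdot \prod_{ij} \Bigl(\frac{p_{ij}}{q_{ij}}\Bigr)^{n_{ij}}.
\end{equation*}

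The second product is the "closeness-to-mode" correction. Since $n_{11}$ differs from $p k$ by at most $1$, each $n_{ij}$ differs from $p_{ij} k$ by at most $1$, so $q_{ij} = p_{ij}(1 + \delta_{ij})$ with $n_{ij} |\delta_{ij}| \le 1$. A standard estimate using $\log(1+\delta) \ge \delta - \delta^2$ then yields $\prod_{ij}(p_{ij}/q_{ij})^{n_{ij}} \ge e^{-O(1)}$, i.e., an absolute constant. The first product is where the exponent $-7/2$ (which covers $-3.5$) comes from: in the worst case each $q_{ij}$ can be as small as $1/k$ (when $n_{ij}=1$), costing up to $k^{1/2}$ per cell, so at most $k^{2}$ in total. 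Combining these factors gives $P \ge c\, k^{-7/2}$, and tracking the constants (Stirling's $\sqrt{2\pi}$ and $e^4$, the constant from the tilting correction, and a modest slack to absorb boundary cases) gives the explicit $1/400$.

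The main obstacle will be controlling the ratio $(p_{ij}/q_{ij})^{n_{ij}}$ uniformly in how small $p_{ij}$ may be. The clean split is by whether $p_{ij} k \ge 1$, in which case the inequality above applies directly, or $p_{ij} k < 1$, in which case $n_{ij} \in \{0,1\}$ and the factor $p_{ij}^{n_{ij}}$ combines with $1/\sqrt{q_{ij}}$ without loss (since $p_{ij}/\sqrt{1/k} = p_{ij}\sqrt{k} \ge \text{something manageable}$ only when $n_{ij} = 1$, which happens with probability at most a constant times $p_{ij} k$, restoring the bound). Once this case analysis is set up, the proof is a mechanical assembly of Stirling estimates.
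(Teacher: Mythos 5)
Your reduction to a single multinomial term followed by Stirling is exactly the paper's route, but the key quantitative step in your sketch is wrong. The correction factor $\prod_{ij}(p_{ij}/q_{ij})^{n_{ij}}$ is \emph{not} $e^{-O(1)}$: your claim $n_{ij}|\delta_{ij}|\le 1$ fails when $p_{ij}k$ is small. Writing $n_{ij}=p_{ij}k+\epsilon_{ij}$ with $|\epsilon_{ij}|\le 1$, one has $n_{ij}\delta_{ij}=\epsilon_{ij}+\epsilon_{ij}^2/(p_{ij}k)$, which is $O(1)$ only if $p_{ij}k=\Omega(1)$. Concretely, take $p_1=p_2=1/k$ and $p=p_1p_2=1/k^2$ (allowed by the hypothesis): then $n_{11}=\lceil pk\rceil=1$, $q_{11}=1/k$, and $(p_{11}/q_{11})^{n_{11}}=pk=1/k$, not a constant. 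In general the two cells where the rounding goes up --- the $11$ cell (since $\lceil pk\rceil\ge pk$) and the $00$ cell (which absorbs the same excess) --- can each cost a factor of order $1/k$, while the two middle cells cost nothing because there $n_{ij}\le p_{ij}k$. Controlling those two losses is precisely where the hypothesis $p\ge p_1p_2$ must enter: it gives $p\ge p_1p_2\ge 1/k^2$ and $1-p_1-p_2+p\ge(1-p_1)(1-p_2)\ge 1/k^2$ (using that in the non-degenerate case $p_1,p_2\in[1/k,1-1/k]$, since $p_1k,p_2k$ are integers), whence each of the two corrections is at least roughly $e^{-1}(1+k)^{-1}$. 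Your sketch never uses $p\ge p_1p_2$ in the estimate, and that is a real warning sign: without it (e.g.\ $p=k^{-10}$, $p_1=p_2=1/2$, $k$ even) the event forces $n_{11}=1$ and its probability is at most $kp=k^{-9}$, so the stated bound is false; any correct proof must use that hypothesis quantitatively.

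Relatedly, your accounting of where $k^{-3.5}$ comes from is inverted. The product $\prod_{ij}q_{ij}^{-1/2}$ satisfies $q_{ij}\le 1$, so it is at least $1$ and can only \emph{help} a lower bound --- the paper simply discards it and keeps the $k^{-3/2}$ from Stirling --- whereas the genuine $k^{-2}$ loss sits in the two rounding corrections described above. So the exponent you arrive at is the right one, but the mechanisms you assign to ``constant'' and to ``$k^{-2}$'' are swapped. Repaired as above (bound the $11$- and $00$-cell corrections by $O(1/k)$ each via $p\ge p_1p_2$ and $1-p_1-p_2+p\ge(1-p_1)(1-p_2)$, drop $\prod_{ij}q_{ij}^{-1/2}$, and handle $p_1\in\{0,1\}$ or $p_2\in\{0,1\}$ separately with the binomial Stirling bound), your argument becomes the paper's proof.
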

In the proof we will be using the Stirling's approximation
\begin{align}\label{eq:stirling}
   \sqrt{2\pi n}n^n e^{-n} \le n! \le e \sqrt{n} n^n e^{-n} \; .   
\end{align}
This implies the following useful bounds on the binomial and multinomial coefficients.
\begin{align}
   \label{eq:stirling-binomial}
   \binom{n}{an} 
      &\ge \frac{\sqrt{2 \pi}}{e^2} \cdot \frac{\sqrt{n}}{\sqrt{an (1-a)n}} \frac{n^n}{(an)^{an} ((1 - a)n)^{(1 - a)n}}
      \\&\ge \frac{\sqrt{2 \pi}}{e^2} n^{-0.5} a^{-an} (1 - a)^{-(1 - a)n}
       \; . \\
   \label{eq:stirling-multinomial}
   \binom{n}{an, bn, cn}
      &\ge \frac{\sqrt{2 \pi}}{e^4} \cdot \frac{\sqrt{n}}{\sqrt{an bn cn (1 - a - b - c)n}}  \frac{n^n}{(an)^{an} (bn)^{bn} (cn)^{cn} ((1 - a - b -c)n)^{(1 - a - b - c)n}} 
      \\&\ge \frac{\sqrt{2 \pi}}{e^4} n^{-1.5} a^{-an} b^{-bn} c^{-cn} (1 - a - b - c)^{-(1 - a - b - c)n} 
       \; .
\end{align}
\begin{proof}
   If $p_1 = 1$ then $p = p_2$ and we get that
   \[
     \prp{\sum_{i \in [k]} X^{(i)}_2 = p_2 k}
       = \binom{k}{p_2 k} p_2^{p_2 k} (1 - p_2)^{(1 - p_2)k}
       \ge \frac{\sqrt{2 \pi}}{e^2} k^{-\frac{1}{2}} \; ,
   \]
   where we have used \cref{eq:stirling-binomial}.
   We get the same bound when $p_1 = 0$, $p_2 = 1$, or $p_2 = 0$.
 
   Now assume that $p_1, p_2 \not\in \set{0, 1}$, we then have that
   $\frac{1}{k} \le p_1 \le 1 - \frac{1}{k}$ and $\frac{1}{k} \le p_2 \le 1 - \frac{1}{k}$.
   We first note that
   \[\begin{split}
     &\prp{\sum_{i \in [k]} X^{(i)} = (p_1 k, p_2 k) \wedge \sum_{i \in [k]} X^{(i)}_1 X^{(i)}_2 = \ceil{p k}}
       \\&\phantom{==}= \binom{k}{\ceil{p k}, p_1 k - \ceil{p k}, p_2 k - \ceil{p k}}
         p^{\ceil{p k}}(p_1 - p)^{p_1 k - \ceil{p k}}(p_2 - p)^{p_2 k - \ceil{p k}}(1 - p_1 - p_2 + p)^{k - p_1 k - p_2 k + \ceil{p k}}
       \\&\phantom{==}\ge
         \frac{\sqrt{2 \pi}}{e^4} \cdot \frac{1}{k^{3/2}} \exp\Big(-\ceil{p k}\log \frac{\ceil{p k}}{p k} - (p_1 k - \ceil{p k})\log \frac{p_1 k - \ceil{p k}}{p_1 k - p k}
           \\&\phantom{====}- (p_2 k - \ceil{p k})\log \frac{p_2 k - \ceil{p k}}{p_2 k - p k} - (k - p_1 k - p_2 k + \ceil{p k})\log \frac{k - p_1 k - p_2 k + \ceil{p k}}{k - p_1 k - p_2 k + p k} \Big)
   \end{split}\]
   where we have used \cref{eq:stirling-multinomial}. We will bound each of the terms 
   $\ceil{p k}\log \frac{\ceil{p k}}{p k}$, $(p_1 k - \ceil{p k})\log \frac{p_1 k - \ceil{p k}}{p_1 k - p k}$,
   $(p_2 k - \ceil{p k})\log \frac{p_2 k - \ceil{p k}}{p_2 k - p k}$, and
   $(k - p_1 k - p_1 k + \ceil{p k})\log \frac{k - p_1 k - p_2 k + \ceil{p k}}{k - p_1 k - p_2 k + p k}$ individually.
 
   Using that $p \ge p_1 p_2 \ge \frac{1}{k^2}$
   we get that
   \[
     \ceil{p k}\log \frac{\ceil{p k}}{p k}
     = (1 + p k)\log\left(1 + \frac{1}{p k}\right)
     \le 1 + \log(1 + k) \; .
   \]
 
   Now using that $1 - p_1 - p_2 + p \ge (1 - p_1)(1 - p_2) \ge \frac{1}{k^2}$ we get that
   \[\begin{split}
     (k - p_1 k - p_1 k + \ceil{p k})\log \frac{k - p_1 k - p_2 k + \ceil{p k}}{k - p_1 k - p_2 k + p k}
       &\le (k(1 - p_1 - p_2 + p) + 1)\log\left(1 + \frac{1}{k(1 - p_1 - p_2 + p)}\right)
       \\&\le 1 + \log(1 + k) \; .
   \end{split}\]
 
   We easily get that
   \[
     (p_1 k - \ceil{p k})\log \frac{p_1 k - \ceil{p k}}{p_1 k - p k}
       \le (p_1 k - \ceil{p k})\log \frac{p_1 k - p k}{p_1 k - p k}
       = 0 \; .
   \]
   Similarly, we get that $(p_2 k - \ceil{p k})\log \frac{p_2 k - \ceil{p k}}{p_2 k - p k} = 0$.
 
   Combining all this we get that
   \[\begin{split}
     \prp{\sum_{i \in [k]} X^{(i)} = (p_1 k, p_2 k) \wedge \sum_{i \in [k]} X^{(i)}_1 X^{(i)}_2 = \ceil{p k}}
       &\ge \frac{\sqrt{2 \pi}}{e^4} k^{-1.5} \exp(-(1 + \log(1 + k)) - (1 + \log(1 + k)))
       \\&\ge \frac{1}{400} k^{-3.5}\; .
   \end{split}\]
\end{proof}

We now prove a result for the random walk, conditioned on the final position.
In the last result of this section, we will remove those restrictions.
\begin{lemma}\label{lem:rearrangement}
   Let $k \in \mathbb Z_+$ and $p, p_1, p_2 \in [0, 1]$, such that, 
   $p k, p_1 k$, and $p_2 k$ are integers and $p \ge p_1 p_2$. Let $X^{(i)} \in \set{0, 1}^2$ be independent
   identically distributed variables. We then get that
   \[
     \prpcond{\forall l \le k : \sum_{i \in [k]} X^{(i)} \ge \begin{pmatrix}p_1 \\ p_2 \end{pmatrix} l}{\sum_{i \in [k]} X^{(i)} = \begin{pmatrix}p_1 \\ p_2 \end{pmatrix} k \wedge \sum_{i \in [k]} X^{(i)}_1 X^{(i)}_2 = p k}
       \ge k^{-3} \; .
   \]
\end{lemma}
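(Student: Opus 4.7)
\textbf{Proof plan for \Cref{lem:rearrangement}.}
Under the conditioning, the sequence $(X^{(i)})_{i \in [k]}$ is a uniformly random arrangement of a fixed multiset of step types: $(1,1)$ with multiplicity $pk$, $(1,0)$ with multiplicity $(p_1-p)k$, $(0,1)$ with multiplicity $(p_2-p)k$, and $(0,0)$ with multiplicity $(1-p_1-p_2+p)k$. Let $A = \{\forall \ell : \sum_{i \in [\ell]} X^{(i)}_1 \ge p_1 \ell\}$ and $B$ be the analogous event for coordinate~2. The plan is to obtain $\Pr[A \cap B] \ge k^{-3}$ by applying the 1-D truck driver's (cycle) lemma three times, in a nested way that crucially exploits the correlation hypothesis $p \ge p_1 p_2$.

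First I would apply the 1-D cycle lemma to coordinate~$1$ alone: the marginal $(X^{(i)}_1)_{i}$ is a uniformly random binary sequence with exactly $p_1 k$ ones, and any such sequence has at least one cyclic rotation whose every prefix sum is $\ge p_1 \ell$. Hence $\Pr[A] \ge 1/k$. Next, condition on $A$ \emph{and} on the exact index set $I = \{i : X^{(i)}_1 = 1\}$. Given this, the coord-$2$ values are determined by two \emph{independent} uniform fillings: within $I$ (size $p_1 k$) we place $pk$ ones and $(p_1 - p)k$ zeros; within $[k]\setminus I$ (size $(1-p_1)k$) we place $(p_2 - p)k$ ones and $(1-p_1-p_2+p)k$ zeros. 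The conditional rate of $X^{(i)}_2 = 1$ is $p/p_1$ on $I$ and $(p_2 - p)/(1 - p_1)$ on $[k]\setminus I$; the correlation hypothesis $p \ge p_1 p_2$ gives precisely $p/p_1 \ge p_2 \ge (p_2 - p)/(1 - p_1)$. Now apply the 1-D cycle lemma a second and third time: to the coord-$2$ restriction on $I$, and to the restriction on $[k]\setminus I$. Each application yields, with probability $\ge 1/k$, that within the corresponding block the partial sums of $X^{(i)}_2$ stay above the respective conditional-mean lines $(p/p_1)\ell'$ and $((p_2-p)/(1-p_1))\ell'$.

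Finally I would patch the two block-wise guarantees into the global event $B$. For any prefix $\ell$, write $\ell = |I \cap [\ell]| + |([k]\setminus I) \cap [\ell]|$. The two cycle-lemma estimates lower bound the number of coord-$2$ ups in the two blocks by $(p/p_1)|I \cap [\ell]|$ and $((p_2-p)/(1-p_1))|([k]\setminus I) \cap [\ell]|$ respectively. Using $p/p_1 \ge p_2 \ge (p_2-p)/(1-p_1)$ together with the coord-$1$ constraint $|I\cap[\ell]| \ge p_1\ell$ (from $A$), a short convex-combination argument shows that the total is at least $p_2 \ell$. Hence $A$ together with the two block events implies $B$, and by independence of the three events this gives $\Pr[A \cap B] \ge (1/k)^3 = k^{-3}$.

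\textbf{Main obstacle.} The delicate point is the patching step: the two block-wise cycle-lemma guarantees are phrased relative to each block's own length, not to $\ell$, and the coord-$2$ walk interleaves contributions from both blocks at rates that depend on where the coord-$1$ up-steps fall. The correlation hypothesis $p \ge p_1 p_2$ is precisely what makes the ``extra'' coord-$2$ mass on $I$ compensate the deficit on $[k] \setminus I$ at \emph{every} prefix; without it one can construct rotations that satisfy $A$ while violating $B$, and the clean $k^{-3}$ bound breaks down. A secondary technical point is verifying that the integrality assumption ($pk$, $p_1 k$, $p_2 k$ all integers) makes all the cycle-lemma applications well-defined even when the blocks are small.
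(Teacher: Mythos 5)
Your proposal is correct and follows essentially the same route as the paper: one application of the cycle (truck driver's) lemma to coordinate 1, then two conditionally independent applications to the coordinate-2 values restricted to the indices where $X^{(i)}_1=1$ and where $X^{(i)}_1=0$ (with conditional rates $p/p_1$ and $(p_2-p)/(1-p_1)$), and finally the same convex-combination patching using $p\ge p_1p_2$ and $\sum_{i\in[\ell]}X^{(i)}_1\ge p_1\ell$ to conclude $\sum_{i\in[\ell]}X^{(i)}_2\ge p_2\ell$, giving the product bound $k^{-3}$. The "main obstacle" you flag is exactly the computation the paper carries out, so there is no gap.
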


In the proof we will use the folklore result.
\begin{lemma}\label{lem:1-dimensional-rearrangement}
   Let $k \in Z_+$ and $(a_i)_{i \in [k]}$ numbers such that
   $\sum_{i \in [k]} a_i \ge 0$ then there exists a $s \in [k]$ such that
   $\sum_{i \in [l]} a_{(s + i) \mod k} \ge 0$ for every $l \le k$.
\end{lemma}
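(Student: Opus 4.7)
The plan is to use the classical cycle lemma argument (Dvoretzky--Motzkin), which is exactly what the ``truck driver'' framing in the Technical Overview is suggesting. First I would set up partial sums $S_0 = 0$ and $S_j = \sum_{i=1}^j a_i$ for $j \in [k]$; the hypothesis is $S_k \ge 0$. The key observation is that if we rotate the cyclic sequence $(a_1,\dots,a_k)$ to start at a point where the running sum attains its global minimum, the resulting rotated partial sums will all be non-negative.

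More concretely, I would let $s^* \in \{0,1,\dots,k-1\}$ be any index minimizing $S_{s^*}$, and propose $s = s^*$ (identifying $s^*=0$ with $s=k$ via the modular indexing). For any $l \le k$ there are two cases. If $s^* + l \le k$, then
\begin{align}
\sum_{i=1}^l a_{s^* + i} \;=\; S_{s^* + l} - S_{s^*} \;\ge\; 0,
\end{align}
by the minimality of $S_{s^*}$. If $s^* + l > k$, then the sum wraps around and equals
\begin{align}
\sum_{i=1}^{k-s^*} a_{s^*+i} \;+\; \sum_{i=1}^{s^*+l-k} a_i \;=\; (S_k - S_{s^*}) + S_{s^*+l-k} \;\ge\; S_k \;\ge\; 0,
\end{align}
again using $S_{s^*+l-k} \ge S_{s^*}$ by minimality and $S_k \ge 0$ by hypothesis.

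There is no real obstacle here; the only thing to be careful about is the indexing convention (whether $s \in [k]$ includes $0$ or $k$, and how to interpret $(s+i) \bmod k$ when the result is $0$). I would handle this with a short remark identifying the residue $0$ with the index $k$, so that both cases above fit uniformly into the statement $s \in [k]$. The entire argument is a few lines, which matches the ``folklore'' label given to the lemma.
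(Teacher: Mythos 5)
Your proof is correct, and it is exactly the standard cycle-lemma (``Truck Driver's Lemma'') argument the paper has in mind: the paper states this lemma as folklore without proof, alluding to it only via the truck-driver intuition in the technical overview, and starting the rotation just after a global minimum of the partial sums is the intended argument. Your two-case check (wrap vs.\ no wrap), together with the observation that the minimum over $\{0,\dots,k-1\}$ is also a minimum over $\{0,\dots,k\}$ since $S_k \ge 0 = S_0$, and the remark identifying residue $0$ with index $k$, cleanly fills the omitted proof.
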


\begin{proof}[Proof of \Cref{lem:rearrangement}.]
   Using \Cref{lem:1-dimensional-rearrangement} we get that
   $\sum_{i \in [l]} X^{(i)}_1 \ge p_1 l$ for every $l \le k$ with probability
   at least $k^{-1}$ since every variable identically distributed. Fixing $(X^{(i)}_1)_{i \in [k]}$
   and using \Cref{lem:1-dimensional-rearrangement} 2 times we get that
   $\sum_{i \in [l]} X^{(i)}_1 X^{(i)}_2 \ge \frac{p}{p_1} \sum_{i \in [l]} X^{(i)}_1$ and
   $\sum_{i \in [l]} (1 - X^{(i)}_1) X^{(i)} \ge \frac{p_2 - p}{1 - p_1} \sum_{i \in [l]} X^{(i)}_1$
   for every $l \le k$ with probability at least $k^{-2}$. If all these three events happens then for
   every $l \le k$ we get that
   \[\begin{split}
      \sum_{i \in [l]}X^{(i)}_2
         &= \sum_{i \in [l]} X^{(i)}_1 X^{(i)}_2 + \sum_{i \in [l]} (1 - X^{(i)}_1) X^{(i)}_2
         \\&\ge \frac{p}{p_1} \sum_{i \in [l]} X^{(i)}_1 + \frac{p_2 - p}{1 - p_1} \sum_{i \in [l]} X^{(i)}_1
         \\&= \frac{p - p_1 p_2}{p_1(1 - p_1)} \sum_{i \in [l]} X^{(i)}_1 + \frac{p_2 - p}{1 - p_1}l
         \\&\ge \frac{p - p_1 p_2}{p_1(1 - p_1)} p_1 l + \frac{p_2 - p}{1 - p_1}l
         \\&= p_2 l \; .
   \end{split}\]
   So we conclude that with probability at least $k^{-3}$ then
   $\sum_{i \in [l]} X^{(i)} \ge \begin{pmatrix}p_1 \\ p_2 \end{pmatrix} l$ for every $l \le k$
   which finishes the proof.
\end{proof}

All that remains is proving \Cref{lem:quadrant}.
We restate it and then prove it.
\begingroup
\def\thelemma{\ref{lem:quadrant}}
\begin{lemma}
   Let $X_1, \dots, X_k\in\{0,1\}^2$ be iid. Bernoulli $2d$-random variables with probability matrix $\smat{p & p_{1}-p\\p_{2}-p&1-p_1-p_2+p}$.
   Assume that the coordinates are correlated, that is $p\ge p_1p_2$,
   and assume $p_q k$ and $p_2 k$ are integers.

   Let $S_\ell = \sum_{i\in[\ell]} X_i$ be the associated random walk.
   Then
   \begin{align}
      \Pr[\forall \ell\in[k] : S_\ell \le 0] \ge \frac{1}{400\, k^{6.5}}
      .
   \end{align}
\end{lemma}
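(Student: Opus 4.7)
The plan is a two-stage conditioning that chains \Cref{lem:hitting-the-mean} with \Cref{lem:rearrangement}, followed by a time-reversal symmetry to convert ``stays above the mean'' into ``stays below the mean'' (which is what ``$S_\ell\le 0$'' means here, after centering). First I would apply \Cref{lem:hitting-the-mean} to the event
\[
A = \Bigl\{S_k = \svec{p_1 \\ p_2} k \;\wedge\; \sum_{i\in[k]} (X_i)_1 (X_i)_2 = \ceil{pk}\Bigr\},
\]
obtaining $\Pr[A]\ge \tfrac{1}{400}\,k^{-3.5}$. This event is well-defined even when $pk$ is not an integer, because the joint count is rounded up to $\ceil{pk}$.

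Conditional on $A$, the sequence $(X_i)_{i\in[k]}$ is uniformly distributed over all orderings of a fixed multiset, with $\ceil{pk}$ copies of $(1,1)$, $p_1k-\ceil{pk}$ copies of $(1,0)$, $p_2k-\ceil{pk}$ copies of $(0,1)$, and the remainder $(0,0)$. Set $p':=\ceil{pk}/k$; then $p'k$ is an integer and $p'\ge p\ge p_1 p_2$, so the hypotheses of \Cref{lem:rearrangement} hold with $p$ replaced by $p'$. Applying it gives
\[
\Pr\Bigl[\forall \ell\le k :\; S_\ell \ge \svec{p_1 \\ p_2}\ell \,\Bigm|\, A\Bigr] \;\ge\; k^{-3}.
\]

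To flip this into the ``stays below the mean'' statement required by \Cref{lem:quadrant} I would use time reversal: letting $S'_\ell:=\sum_{i=k-\ell+1}^{k} X_i$, the endpoint constraint built into $A$ forces $S'_\ell = \svec{p_1\\p_2}k - S_{k-\ell}$, so the reversed walk weakly exceeds the mean for every $\ell$ iff the original walk weakly lies below the mean for every $\ell$. Since the conditional law on $A$ is exchangeable, reversal preserves it, so the two events carry the same conditional probability $\ge k^{-3}$.

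Multiplying the two bounds yields
\[
\Pr\bigl[\forall \ell\in[k] :\; S_\ell \le \svec{p_1 \\ p_2}\ell \bigr] \;\ge\; \Pr[A]\cdot k^{-3} \;\ge\; \tfrac{1}{400\,k^{6.5}},
\]
which is \Cref{lem:quadrant}. The main subtlety I anticipate is the mismatch between the ceiling $\ceil{pk}$ produced by the hitting-time bound and the integrality hypothesis of \Cref{lem:rearrangement}; this is resolved by the reparametrisation $p'=\ceil{pk}/k$ and the observation that the correlation hypothesis $p'\ge p_1p_2$ is automatic from $p'\ge p$. Everything else is a direct consequence of the two lemmas already proved in the appendix.
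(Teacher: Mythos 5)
Your proof is correct, and it follows the paper's skeleton: chain \Cref{lem:hitting-the-mean} with \Cref{lem:rearrangement} on the exchangeable conditional law given a pinned endpoint, then multiply the two bounds. The one place you diverge is in how the ``stays above the mean'' conclusion of \Cref{lem:rearrangement} is converted into the ``stays below the mean'' event of \Cref{lem:quadrant}. The paper does this by complementing the coordinates, setting $\mathcal Y^{(i)} = 1 - \mathcal X^{(i)}$, so that below-mean for $X$ becomes above-mean for $Y$; it then applies \Cref{lem:hitting-the-mean} and \Cref{lem:rearrangement} with the complementary parameters $(1-p_1,\,1-p_2,\,1-p_1-p_2+p)$, using $p\ge p_1p_2 \Rightarrow 1-p_1-p_2+p \ge (1-p_1)(1-p_2)$. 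You instead keep the original parameters $(p_1,p_2,p)$, apply both lemmas forward (with the rounding repair $p'=\ceil{pk}/k$, which is legitimate since $p'k$ is an integer, $p'\ge p\ge p_1p_2$, and $\ceil{pk}\le\min\{p_1k,p_2k\}$ because $p_1k,p_2k$ are integers), and then flip direction by time reversal: since the conditioning event $A$ is symmetric, the conditional law is exchangeable and reversal-invariant, and the pinned endpoint $S_k=\svec{p_1\\p_2}k$ turns suffix-above-mean into prefix-below-mean. Both symmetries are equally valid and give the identical constant $\tfrac{1}{400}k^{-6.5}$; your variant avoids carrying the complementary parameters through the two lemmas, at the cost of the (correctly justified) observation that conditioning on a permutation-invariant event preserves exchangeability, whereas the paper's complementation keeps everything phrased as forward prefix conditions on a fresh Bernoulli family.
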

\addtocounter{lemma}{-1}
\endgroup
\begin{proof}
   We define the set
   $ U = \setbuilder{(x^{(i)})_{i \in [k]} \in \R^{2\times k}} {\forall \ell \le k : \sum_{i \in [\ell]} x^{(i)} \le \svec{p_1\\p_2}\ell}$
   of all sequences satisfying our path requirement.
   In other words
      $\Pr[\forall k\in[n] : S_k \le 0] = \prp{(\mathcal X^{(i)})_{i \in [k]} \in U}$.
   We then add even more restrictions by defining
   \begin{align}
      A' = \setbuilder{(x^{(i)})_{i \in [k]} \in \R^{2\times k}}{
         \sum_{i \in [k]} x^{(i)} = \svec{p_1\\p_2}k
         \wedge \sum_{i \in [k]} (1 - x^{(i)}_1) (1 - x^{(i)}_2) = \ceil{p_{22} k}
      }.
   \end{align}
   That is, we require the last final value of the path to completely match its expectation, rounded up.
   By monotonicity we have
   $\prp{(\mathcal X^{(i)})_{i \in [k]} \in U} \ge \prp{(\tilde{\mathcal X}^{(i)})_{i \in [k]} \in U \cap A'}$.

   We want to use \Cref{lem:hitting-the-mean} and \Cref{lem:rearrangement} and to ease the notation
   we introduce the negated random variables $\mathcal Y^{(i)} = 1 - \tilde{\mathcal X}^{(i)}$.
   Define $p_{22} = 1-p_1-p_2+p$.
   We then have that $\ep{\mathcal Y^{(i)}} = \svec{1-p_1\\1-p_2}$
   and $\prp{\mathcal Y^{(i)} = \left(\begin{smallmatrix}1 \\ 1 \end{smallmatrix}\right)} = p_{22} = 1 - p_1 - p_2 + p \ge (1 - p_1)(1 - p_2)$ by the assumption of correlation.

   We can then rewrite using $\mathcal Y^{(i)}$:
   \[\begin{split}
      \prp{(\tilde{\mathcal X}^{(i)})_{i \in [k]} \in U \cap A'}
      = \prp{\forall \ell \le k : \sum_{i \in [k]} \mathcal Y^{(i)} \ge \svec{1-p_1\\1-p_2}\ell 
         \wedge \sum_{i \in [k]} \mathcal Y^{(i)} = \svec{1-p_1\\1-p_2}k
         \wedge \sum_{i \in [k]} \mathcal Y^{(i)}_1 \mathcal Y^{(i)}_2 = \ceil{p_{22} k}
      }
   \end{split}\]
   Now using \Cref{lem:hitting-the-mean} we have that
   \[\begin{split}
      \prp{\sum_{i \in [k]} \mathcal{Y}^{(i)} = \svec{1-p_1\\1-p_2} k
            \wedge \sum_{i \in [k]} \mathcal{Y}^{(i)}_1 \mathcal{Y}^{(i)}_2 = \ceil{p_{22} k}}
         \ge \frac{1}{400} k^{-3.5}
      \; .
   \end{split}\]
   Combining this with \Cref{lem:rearrangement} we get that
   \[\begin{split}
      &\prp{\forall \ell \le k : \sum_{i \in [k]} \mathcal Y^{(i)} \ge \svec{1-p_1\\1-p_2}\ell 
         \wedge \sum_{i \in [k]} \mathcal Y^{(i)} = \svec{1-p_1\\1-p_2}k
         \wedge \sum_{i \in [k]} \mathcal Y^{(i)}_1 \mathcal Y^{(i)}_2 = \ceil{p_{22} k}
      }
         \\&\phantom{==}= \prp{\sum_{i \in [k]} \mathcal Y^{(i)} = \svec{1-p_1\\1-p_2}k \wedge \sum_{i \in [k]} \mathcal Y^{(i)}_1 \mathcal Y^{(i)}_2 = \ceil{p_{22} k}}
            \\&\phantom{====}\cdot \prpcond{\forall \ell \le k : \sum_{i \in [k]} \mathcal Y^{(i)} \ge \svec{1-p_1\\1-p_2}\ell}
               {\sum_{i \in [k]} \mathcal Y^{(i)} = \svec{1-p_1\\1-p_2}k \wedge \sum_{i \in [k]} \mathcal Y^{(i)}_1 \mathcal Y^{(i)}_2 = \ceil{p_{22} k}}
         \\&\phantom{==}\ge \frac{1}{400} k^{-6.5} \; .
   \end{split}\]

\end{proof}

\section{Lower Bounds}

Our lower bounds all assume that $w_2 d = \omega(\log n)$, where $d$ is the size of the universe.
As discussed in the introduction is both standard and necessary.

We proceed to define the hard distributions for all further lower bounds.
\begin{enumerate}
   \item A query $x \in \zo^d$ is created by sampling $d$ random independent bits with Bernoulli($w_q$) distribution.
   \item A dataset $P \subseteq \zo^d$ is constructed by sampling $n-1$ vectors with random independent bits from such that $y_i \sim \text{Bernoulli}(w_2/w_q)$ if $x_i=1$ and $y_i \sim \text{Bernoulli}((w_u-w_2)/(1-w_q))$ otherwise, for all $y \in P$.
   \item A `close point', $y' \in \zo^d$, is created by $y'_i \sim \text{Bernoulli}(w_1/w_q)$ if $x_i=1$ and $y'_i \sim \text{Bernoulli}((w_u-w_1)/(1-w_q))$ otherwise.
      This point is also added to $P$.
\end{enumerate}
The values are chosen such that $\ep{\abs{x}} = w_q d$, $\ep{\abs{z}} = w_u d$ for all $z \in P$,
$\ep{\abs{x \cap y'}} = w_1 d$, and $\ep{\abs{x \cap y}} = w_2 d$ for all $y \in P \setminus\set{y'}$.
By a union bound over $P$, the actual values are within factors $1 + o(1)$ of their expectations with high probability.
Changing at most $o(\log n)$ coordinates we ensure the weights of queries/database points is exactly their expected value, while only changing the inner products by factors $1 + o(1)$.
Since the changes do not contain any new information, we can assume for lower bounds that entries are independent.
Thus any $(w_q, w_u, w_1(1-o(1)), w_2(1+o(1)))$-GapSS data structure on $P$ must thus be able to return $y'$ with at least constant probability when given the query $x$.

\paragraph{Model}
Our lower bounds are shown in slightly different models.
The first lower bound follows the framework of O'Donnell et al.~\cite{o2014optimal} and Christiani~\cite{christiani2016framework}
and directly lower bound the quantity $\frac{\log(p_1/\min\{p_u,p_q\})}{\log(p_2/\min\{p_u,p_q\})}$ which lower bounds $\rho_u$ and $\rho_q$ in \Cref{defn:lsf}.
This lower bound holds for all $w_2 \neq w_q w_u$, i.e., it gives a lower bound when we are not considering a random instance, and it only
gives a lower bound in the case where $\rho_q = \rho_u$. 

For the second lower bound we follow the framework of Andoni et al.~\cite{andoni2016optimal} and give a lower bound in the ``list-of-points''-model (see \Cref{def:list-of-points}).
This is a slightly more general model, though it is believed that all bounds for the first model can be shown in the list-of-points model as well.
Our lower bound shows that our upper bound is tight in the full time/space trade-off when $w_2 = w_q w_u$, i.e., when we are given a random instance.

The second bound can be extended to show cell probe lower bounds by the arguments in \cite{panigrahy2008geometric}.

\subsection{$p$-biased Analysis}\label{sec:pbias}
We first give some preliminaries on $b$-biased Boolean analysis, and then introduce the directed noise operator.

\subsubsection{Preliminaries}
We want analyse Boolean functions $f : \set{0, 1}^d \to \set{0, 1}$ but as is common, it turns out to be beneficial to consider a more general class of functions $f : \set{0, 1}^d \to \R$.

The probability distribution $\pi_p$ is defined on $\set{0, 1}$ by $\pi_p(1) = p$ and $\pi_p(0) = 1 - p$,
and we define $\pi_p^{\otimes d}$ to be the product probability distribution on $\set{0, 1}^d$. We
write $L_2(\set{0, 1}^d, \pi_p^{\otimes d})$ for the inner product space of functions $f : \set{0, 1}^d \to \R$
with inner product
\begin{align}
   \langle f, g \rangle_p = \ep[x \sim \pi_p^{\otimes d}]{f(x) g(x)} \; .
\end{align}
We will define the norm $\norm{f}{L_q(p)} = \left(\ep[x \sim \pi_p^{\otimes d}]{f(x)^q}\right)^{1/q}$.

We define the $p$-biased Fourier coefficients for a function $f : L_2(\set{0, 1}^d, \pi_p^{\otimes d})$ by
\begin{align}
   \hat{f}^{(p)}(S) = \ep[x \sim \pi_p^{\otimes d}]{f(x) \phi^{(p)}_S(x)} \; ,
\end{align}
for every $S \subseteq [d]$ and where we define
\begin{align}
   \phi^{(p)}(x) &= \frac{x - p}{\sqrt{p(1 - p)}} && \phi^{(p)}_S(x) = \prod_{i \in S} \phi^{(p)}(x_i) \; .
\end{align}
The Fourier coefficients have the nice property that
\begin{align}\label{eq:fourier-expansion}
   f(x) = \sum_{S \subseteq [d]}\hat{f}^{(p)}(S)\phi^{(p)}_S(x) \; .
\end{align}
The Fourier coefficients satisfy the Parseval-Plancherel identity, which says that for any $f, g \in L_2(\set{0, 1}^d, \pi_p^d)$
we have that
\begin{align}
   \langle f, g \rangle_p = \sum_{S \subseteq [d]} \hat{f}^{(p)}(S)\hat{g}^{(p)}(S) \; .
\end{align}
In particular we have that $\ep[x \sim \pi_p^{\otimes d}]{f(x)^2} = \norm{f}{L_2(p)}^2 = \sum_{S \subseteq [d]} \hat{f}^{(p)}(S)^2$.
For Boolean functions $f : \set{0, 1}^d \to \set{0, 1}$ this is particularly useful since we get that
\begin{align}
   \prp[x \sim \pi_p^{\otimes d}]{f(x) = 1}
      &= \ep[x \sim \pi_p^{\otimes d}]{f(x)}
      = \ep[x \sim \pi_p^{\otimes d}]{\sum_{S\subseteq[n]} \hat{f}^{(p)}(S) \phi^{(p)}_S(x)}
      = \hat{f}^{(p)}(\emptyset) \\
   \prp[x \sim \pi_p^{\otimes d}]{f(x) = 1}
      &= \ep[x \sim \pi_p^{\otimes d}]{f(x)}
      = \ep[x \sim \pi_p^{\otimes d}]{f(x)^2}
      = \sum_{S \subseteq [d]} \hat{f}^{(p)}(S)^2
      .
\end{align} 

If we think of $f$ as a filter in a Locality Sensitive data structure, $\prp[x \sim \pi_p^{\otimes d}]{f(x) = 1}$ is the probability that the filter accepts
a random point with expected weight $p$ ($d\cdot p$ of the coordinates being $1$).

\subsubsection{Noise}

For $\rho \in [-1, 1]$, $p_1, p_2 \in (0, 1)$, and $x \in \set{0, 1}^d$ we write $y \sim N_\rho^{p_1 \to p_2}(x)$ when $y \in \set{0, 1}^d$
is randomly chosen such that for each $i \in [d]$ independently, we have that if $x_i \sim \pi_{p_2}$ then $y_i \sim \pi_{p_1}$
and $(x_i, y_i)$ are $\rho$-correlated. We note that if $x \sim \pi_{p_2}^{\otimes d}$ and $y \sim N_\rho^{p_1 \to p_2}$ then
we also have that $y \sim \pi_{p_1}^{\otimes d}$ and $x \sim N_\rho^{p_2 \to p_1}(y)$.

For $\rho \in [-1, 1]$ and $p_1, p_2 \in (0, 1)$ we define the \emph{directed noise operator} $T_\rho^{p_1 \to p_2} : L_2(\set{0, 1}^d, \pi_{p_1}^{\otimes d}) \to L_2(\set{0, 1}^d, \pi_{p_2}^{\otimes d})$
by
\begin{align}
   T_{\rho}^{p_1 \to p_2}f(x) = \ep[y \sim N_\rho^{p_1 \to p_2}]{f(y)} \; .
\end{align}
When $p_1 = p_2 = p$ then $T_{\rho}^{p \to p}$ is the \emph{usual noise operator} on $p$-biased spaces and we denote
it $T_{\rho}^{(p)}$. $T_{\rho}^{(p)}$ has the nice property that $\widehat{T_{\rho}^{(p)} f}^{(p)}(S) = \rho^{\abs{S}}\hat{f}^{(p)}(S)$
for any $S \subseteq [d]$, and hence $T_{\rho}^{(p)}$ satisfies the semigroup property $T_{\rho}^{(p)}T_{\sigma}^{(p)} = T_{\rho \sigma}^{(p)}$.
The following lemma shows that we have similar properties for $T_\rho^{p_1 \to p_2}$.
\begin{lemma}\label{lem:directed-noise}
   For $\rho \in [-1, 1]$, $p_1, p_2 \in (0, 1)$ and $f \in L_2(\set{0, 1}^d, \pi_{p_1}^{\otimes})$ we have that
   \begin{align}
      \widehat{T_{\rho}^{p_1 \to p_2} f}^{(p_2)}(S) = \rho^{\abs{S}}\hat{f}^{(p_1)}(S) \; ,
   \end{align}
   for any $S \subseteq [d]$. Furthermore, for any $\sigma \in [-1, 1]$ and $p_3 \in [0,1]$ we have that
   $T_{\sigma}^{p_2 \to p_3} T_{\rho}^{p_1 \to p_2} = T_{\rho \sigma}^{p_1 \to p_3}$ and $T_{\rho}^{p_2 \to p_1}$ is
   the adjoint of $T_{\rho}^{p_1 \to p_2}$.
\end{lemma}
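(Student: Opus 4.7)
The plan is to prove the Fourier formula first, and then derive both the semigroup property and the adjoint property as direct consequences via the Parseval--Plancherel identity. The whole argument rests on a single one-coordinate computation, which is the conditional expectation identity
\begin{align}
\ep{\phi^{(p_1)}(Y)\mid X=x}=\rho\,\phi^{(p_2)}(x)
\end{align}
whenever $(X,Y)$ is a $\rho$-correlated pair with $X\sim\pi_{p_2}$ and $Y\sim\pi_{p_1}$. Since the characters $\phi^{(p)}$ are mean zero and variance one, the correlation coefficient between $\phi^{(p_2)}(X)$ and $\phi^{(p_1)}(Y)$ is exactly $\rho$, and the conditional expectation of a variance-one random variable given a $\rho$-correlated variance-one variable is $\rho$ times that variable (this is immediate from writing $\phi^{(p_1)}(Y)=\rho\,\phi^{(p_2)}(X)+Z$ with $Z$ uncorrelated with $\phi^{(p_2)}(X)$ and taking conditional expectations, using that on a two-point space uncorrelation coincides with conditional mean zero).

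From that one-coordinate identity, independence across coordinates in the definition of $N_\rho^{p_1\to p_2}$ gives, for any $S\subseteq[d]$,
\begin{align}
T_\rho^{p_1\to p_2}\phi^{(p_1)}_S(x)
=\ep[y\sim N_\rho^{p_1\to p_2}(x)]{\prod_{i\in S}\phi^{(p_1)}(y_i)}
=\prod_{i\in S}\rho\,\phi^{(p_2)}(x_i)
=\rho^{|S|}\phi^{(p_2)}_S(x).
\end{align}
I would then expand an arbitrary $f\in L_2(\{0,1\}^d,\pi_{p_1}^{\otimes d})$ in the $p_1$-biased basis using \eqref{eq:fourier-expansion}, apply $T_\rho^{p_1\to p_2}$ term by term by linearity (justified because the sum is finite), and read off the $p_2$-biased Fourier coefficients of the image, yielding $\widehat{T_\rho^{p_1\to p_2}f}^{(p_2)}(S)=\rho^{|S|}\hat f^{(p_1)}(S)$.

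For the semigroup property, applying the Fourier formula twice gives
\begin{align}
\widehat{T_\sigma^{p_2\to p_3}T_\rho^{p_1\to p_2}f}^{(p_3)}(S)
=\sigma^{|S|}\widehat{T_\rho^{p_1\to p_2}f}^{(p_2)}(S)
=(\sigma\rho)^{|S|}\hat f^{(p_1)}(S),
\end{align}
which matches $\widehat{T_{\rho\sigma}^{p_1\to p_3}f}^{(p_3)}(S)$, and the identity follows since functions are determined by their $p_3$-biased Fourier coefficients. For the adjoint claim, Parseval in the appropriate basis gives, for $f\in L_2(\pi_{p_1}^{\otimes d})$ and $g\in L_2(\pi_{p_2}^{\otimes d})$,
\begin{align}
\langle T_\rho^{p_1\to p_2}f,g\rangle_{p_2}
=\sum_{S\subseteq[d]}\rho^{|S|}\hat f^{(p_1)}(S)\hat g^{(p_2)}(S)
=\langle f,T_\rho^{p_2\to p_1}g\rangle_{p_1},
\end{align}
proving that $T_\rho^{p_2\to p_1}$ is the adjoint of $T_\rho^{p_1\to p_2}$. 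The only step that requires any thought is the one-coordinate conditional expectation identity; everything else is bookkeeping with the Fourier expansion.
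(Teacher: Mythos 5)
Your proposal is correct and follows essentially the same route as the paper: both reduce everything to the one-coordinate fact that the standardized characters $\phi^{(p_1)}(y_i)$ and $\phi^{(p_2)}(x_i)$ have correlation exactly $\rho$ (you phrase it as a conditional expectation, the paper as $\ep{\phi^{(p_1)}(y_i)\phi^{(p_2)}(x_i)}=\rho$, which is the same computation on a two-point space), and then obtain the semigroup and adjoint claims from the Fourier formula and Parseval exactly as the paper does.
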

\begin{proof}
   We fix $S \subseteq [d]$ and get that
   \begin{align}
      \widehat{T_{\rho}^{p_1 \to p_2} f}^{(p_2)}(S)
         &= \ep[x \sim \pi_{p_2}^{\otimes d}]{T_{\rho}^{p_1 \to p_2} f(x) \phi^{(p_2)}_S(x)}
         \\&= \ep[x \sim \pi_{p_2}^{\otimes d}]{\ep[y \sim N_{\rho}^{p_1 \to p_2}(x)]{f(y)}\phi^{(p_2))}_S(x)}
         \\&= \ep[x \sim \pi_{p_2}^{\otimes d}]{\ep[y \sim N_{\rho}^{p_1 \to p_2}(x)]{\sum_{T \subseteq [d]} \widehat{f}^{(p_1)}(T) \phi^{(p_1)}_T(y)}\phi^{(p_2)}_S(x)}
         \\&= \ep[x \sim \pi_{p_2}^{\otimes d}]{\ep[y \sim N_{\rho}^{p_1 \to p_2}(x)]{\widehat{f}^{(p_1)}(S) \phi^{(p_1)}_S(y)\phi^{(p_2)}_S(x)}}
         \\&= \widehat{f}^{(p_1)}(S) \prod_{i \in S} \ep[x_i \sim \pi_{p_2}]{\ep[y_i \sim N_{\rho}^{p_1 \to p_2}]{\phi^{(p_1)}_i(y_i)\phi^{(p_2)}_i(x_i)}}
         \\&= \rho^{\abs{S}} \widehat{f}^{(p_1)}(S) \; ,
   \end{align}
   where the last line uses that $\phi_i^{(p)}(x) = \frac{x - p}{\sqrt{p(1 - p)}}$, which proves the first claim.
   For the second claim we note that
   \begin{align}
   \widehat{( T_{\sigma}^{p_2 \to p_3} T_{\rho}^{p_1 \to p_2}f)}^{(p_3)}(S)
      = \sigma^{\abs{S}}\widehat{T_\rho^{p_1\to p_2}f}^{(p_2)}(S)
      = (\rho \sigma)^{\abs{S}}\widehat{f}^{(p_1)}(S)
      = \widehat{T_{\rho \sigma}^{p_1 \to p_2}f}^{(p_3)}(S) \; ,
   \end{align}
   for any $f \in f \in L_2(\set{0, 1}^d, \pi_{p_1}^{\otimes})$ and any $S \subseteq [d]$
   which proves the second claim.
   For the last claim we use the Plancherel-Parseval identity and get that
   \begin{align}
      \langle T_\rho^{p_1 \to p_2} f, g \rangle_{L_2(p_2)}
         = \sum_{S \in [d]} \rho^{\abs{S}} \widehat{f}^{(p_1)} \widehat{g}^{(p_2)}
         = \langle f, T_\rho^{p_2 \to p_1} g \rangle_{L_2(p_1)} \; ,
   \end{align}
   for any $f \in L_2(\set{0, 1}^d, \pi_{p_1}^{\otimes d})$ and any $g \in L_2(\set{0, 1}^d, \pi_{p_2}^{\otimes d})$
   which shows that $T_\rho^{p_2 \to p_1}$ is the adjoint of $T_\rho^{p_1 \to p_2}$.
\end{proof}

We say that $(T^{p_1 \to p_2}_\rho)_{\rho > 0}$ is $(s, r)$-hypercontractive if there exists
$\rho_0 > 0$ such that for every $\rho \ge \rho_0$ and every $f \in L_r(\set{0, 1}^d, \pi_{p_1}^d)$
\begin{align}
   \norm{T^{p_1 \to p_2}_\rho f}{L_s(p_2)} \le \norm{f}{L_r(p_1)} \; .
\end{align}
We define $\sigma_{s, r}(p_1, p_2)$ to be the smallest possible $\rho_0$
We are interested in the hypercontractivity of $T^{p_1 \to p_2}$

\subsection{Symmetric Lower bound}

The most general, but sadly least tractable, approach to our lower bounds, is to bound the noise operator $T_\alpha$ in terms of a different level of noise, $T_\beta$.
We do however manage to show one bound on this type, following an spectral approach first used by O'Donnell et al.~\cite{o2014optimal} to prove the first optimal LSH lower bounds of $\rho\ge1/c$ for data-independent hashing.
Besides handling the case of set similarity with filters rather than hash functions, we slightly generalize the approach a big by using the power-means inequality rather than log-concavity.
\footnote{This widens the range in which the bound is applicable -- the O'Donnell bound is only asymptotic for $r\to 0$. However the values we obtain outside this range, when applied to Hamming space LSH, aren't sharp against the upper bounds.}

We will show the following inequality
\begin{align}
   \left(\frac{\Pr_{x,y',f}[f(x)=1,f(y')=1]}{\Pr_{x,f}[f(x)=1] }\right)^{1/\log\alpha}
   \le
   \left(\log\frac{\Pr_{x,y,f}[f(x)=1,f(y)=1]}{\Pr_{x,f}[f(x)=1]}\right)^{1/\log\beta}
\end{align}
where $\alpha = \frac{w_1-w^2}{w(1-w)}$
and $\beta = \frac{w_2-w^2}{w(1-w)}$,
and $y'$ and $y$ are sampled as respectively a close and a far point (see the top of the section).
By rearrangement, this directly implies a lower bound in the LSF model as defined in \Cref{defn:lsf}.

First we prove a general lemma about Boolean functions, which contains the most important arguments.
\begin{lemma}\label{lem:compare-correlation}
   Let $f : \set{0, 1}^n \to \R$ be a function and $p \in (0, 1)$. Then for any $1 > \alpha \ge \beta > 0$ we have that
   \begin{align}
      \left( \frac{\langle T^{(p)}_\alpha f, f \rangle_{L_2(p)}}{\norm{f}{L_2(p)}^2} \right)^{1/\log(1/\alpha)} 
         \le \left( \frac{\langle T^{(p)}_\beta f, f \rangle_{L_2(p)}}{\norm{f}{L_2(p)}^2} \right)^{1/\log(1/\beta)} \; .
   \end{align}
\end{lemma}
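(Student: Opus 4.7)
The plan is to reduce the inequality to a statement about the monotonicity of $L^p$ norms on a probability space, which is a standard consequence of Jensen's inequality.

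First I would expand $f$ in the $p$-biased Fourier basis via \cref{eq:fourier-expansion}, and apply Parseval together with \Cref{lem:directed-noise} (which says $\widehat{T^{(p)}_\rho f}^{(p)}(S) = \rho^{|S|} \hat{f}^{(p)}(S)$) to get
\begin{equation}
   \frac{\langle T^{(p)}_\rho f, f\rangle_{L_2(p)}}{\|f\|_{L_2(p)}^2}
   \;=\; \sum_{k=0}^{n} a_k\, \rho^k,
   \qquad
   a_k \;:=\; \frac{\sum_{|S|=k}\hat{f}^{(p)}(S)^2}{\|f\|_{L_2(p)}^2}.
\end{equation}
Since $a_k \ge 0$ and $\sum_k a_k = 1$, the sequence $(a_k)$ defines a probability distribution on $\{0,1,\dots,n\}$. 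Writing $K$ for a random variable with this distribution, the left and right hand sides of the target inequality become $(\mathbb{E}[\alpha^K])^{1/\log(1/\alpha)}$ and $(\mathbb{E}[\beta^K])^{1/\log(1/\beta)}$ respectively.

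Next, I would change variables by setting $u = \log(1/\alpha)$ and $v = \log(1/\beta)$, so that $0 < u \le v$ (using $\alpha \ge \beta$), $\alpha^K = e^{-uK}$, and the inequality becomes
\begin{equation}
   \bigl(\mathbb{E}[e^{-uK}]\bigr)^{1/u}
   \;\le\;
   \bigl(\mathbb{E}[e^{-vK}]\bigr)^{1/v}.
\end{equation}
This is exactly $\|X\|_{L_u} \le \|X\|_{L_v}$ for the non-negative random variable $X = e^{-K}$ under the probability measure given by $(a_k)$, which is the standard monotonicity of $L^p$-norms on a probability space: by Jensen applied to the convex map $t \mapsto t^{v/u}$,
\begin{equation}
   \mathbb{E}[X^v] \;=\; \mathbb{E}\bigl[(X^u)^{v/u}\bigr] \;\ge\; \bigl(\mathbb{E}[X^u]\bigr)^{v/u},
\end{equation}
and raising to the $1/v$ power gives the claim.

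There is no real obstacle here: every step is forced once one writes things in the Fourier basis and recognizes $(a_k)$ as a probability distribution. The only tiny subtlety is to ensure we allow $\|f\|_{L_2(p)}^2 = 0$, in which case both sides of the target inequality are understood to vanish trivially (or the statement is simply vacuous), so we may assume $\|f\|_{L_2(p)}^2 > 0$ when normalizing.
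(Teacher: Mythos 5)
Your proposal is correct and follows essentially the same route as the paper: expand in the $p$-biased Fourier basis, use Parseval and $\widehat{T^{(p)}_\rho f}^{(p)}(S)=\rho^{|S|}\hat f^{(p)}(S)$ to write each normalized quantity as $\sum_k a_k (e^{-k})^{\log(1/\rho)}$ with $(a_k)$ a probability distribution, and conclude by the power-mean inequality. Your only difference is that you spell out the power-mean step via Jensen's inequality ($L^u$-norm monotonicity), which is just its standard proof.
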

\begin{proof}
   We use the Parseval-Plancherel identity and the power-mean inequality to get that 
   \begin{align}
      \left( \frac{\langle T^{(p)}_\alpha f, f \rangle_{L_2(p)}}{\norm{f}{L_2(p)}^2} \right)^{1/\log(1/\alpha)}
         &= \left( \frac{\sum_{S \subseteq [n]} \alpha^{\abs{S}} \hat{f}^{(p)}(S)^2 }{\sum_{S \subseteq [n]} \hat{f}^{(p)}(S)^2} \right)^{1/\log(1/\alpha)}
         \\&= \left( \sum_{k = 0}^n \frac{\sum_{\substack{S \subseteq [n] \\ \abs{S} = k}} \hat{f}^{(p)}(S)^2}{\sum_{S \subseteq [n]} \hat{f}^{(p)}(S)^2} \left(e^{-k} \right)^{\log(1/\alpha)} \right)^{1/\log(1/\alpha)}
         \\&\le \left( \sum_{k = 0}^n \frac{\sum_{\substack{S \subseteq [n] \\ \abs{S} = k}} \hat{f}^{(p)}(S)^2}{\sum_{S \subseteq [n]} \hat{f}^{(p)}(S)^2} \left(e^{-k} \right)^{\log(1/\beta)} \right)^{1/\log(1/\beta)}
         \\&= \left( \frac{\sum_{S \subseteq [n]} \beta^{\abs{S}} \hat{f}^{(p)}(S)^2 }{\sum_{S \subseteq [n]} \hat{f}^{(p)}(S)^2} \right)^{1/\log(1/\beta)}
         \\&= \left( \frac{\langle T^{(p)}_\beta f, f \rangle_{L_2(p)}}{\norm{f}{L_2(p)}^2} \right)^{1/\log(1/\beta)} \; .
   \end{align}
   The first and the last equality follows from the Parseval-Plancherel identity and the inequality follows from the power-mean
   inequality since $\log(1/\alpha) \le \log(1/\beta)$.
\end{proof}
The proof of \Cref{thm:donnell} is then simply a few a rearrangements such that we can use \Cref{lem:compare-correlation}.
\begin{corollary}
   \label{lemma:donnell}
   Any data-independent LSF data structure for the $(w, w, w_1, w_2)$-GapSS problem with expected query time $n^{\rho_q}$
   and expected space usage $n^{1 + \rho_u}$ where $\rho_q=\rho_u=\rho$ must have
   \begin{align}
      \rho \ge \log\left(\frac{w_1-w^2}{w(1-w)}\right)
      \bigg/ \log\left(\frac{w_2-w^2}{w(1-w)}\right)
      \; .
   \end{align}
\end{corollary}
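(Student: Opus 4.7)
The plan is to reduce to the random hard distribution defined at the top of the Lower Bounds section, then translate the LSF quantities that appear in Definition \ref{defn:lsf} into $p$-biased Fourier/noise-operator quantities, and finally apply Lemma \ref{lem:compare-correlation} with a sign-careful rearrangement.

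First I would invoke the hard distribution: queries $x \sim \pi_w^{\otimes d}$, planted close point $y'$ coordinate-wise $\alpha$-correlated with $x$, and $n-1$ far points $y$ coordinate-wise $\beta$-correlated with $x$, where $\alpha = (w_1-w^2)/(w(1-w))$ and $\beta = (w_2-w^2)/(w(1-w))$, which are exactly the two correlations produced by the Bernoulli construction at the start of the section. Using the standard LSF reduction (Definition \ref{defn:lsf}), any data-independent LSF data structure with $\rho_q=\rho_u=\rho$ yields, by averaging, a single filter function $f : \{0,1\}^d \to \{0,1\}$ whose acceptance probabilities satisfy $\rho \ge \log(p_1/p_q)/\log(p_2/p_q)$ where $p_q = \Pr_x[f(x)=1]$, $p_1 = \Pr_{x,y'}[f(x)=f(y')=1]$, and $p_2 = \Pr_{x,y}[f(x)=f(y)=1]$ (here both numerator and denominator are negative since $p_1,p_2 \le p_q$, making the ratio positive).

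Next I would observe that these three probabilities are exactly the quantities controlled by Lemma \ref{lem:compare-correlation}. Since $f$ is $\{0,1\}$-valued, $p_q = \mathbb{E}[f] = \mathbb{E}[f^2] = \|f\|_{L_2(w)}^2$. Since $x$ and $y'$ (resp.\ $x$ and $y$) have marginals $\pi_w^{\otimes d}$ and are coordinate-wise $\alpha$-correlated (resp.\ $\beta$-correlated), the joint distribution is exactly that used by the noise operator $T^{(w)}_\alpha$ (resp.\ $T^{(w)}_\beta$), so
\begin{align}
   p_1 = \langle T^{(w)}_\alpha f, f\rangle_{L_2(w)},
   \qquad
   p_2 = \langle T^{(w)}_\beta f, f\rangle_{L_2(w)}.
\end{align}
The condition $1 > \alpha \ge \beta > 0$ required by Lemma \ref{lem:compare-correlation} holds precisely when $w^2 < w_2 < w_1 < w$, which is the non-trivial regime of the corollary.

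Finally, I would apply Lemma \ref{lem:compare-correlation}, take logarithms, and rearrange. The lemma gives
\begin{align}
   \frac{\log(p_1/p_q)}{\log(1/\alpha)} \le \frac{\log(p_2/p_q)}{\log(1/\beta)}.
\end{align}
Both numerators are non-positive and both denominators are strictly positive. Dividing both sides by $\log(p_2/p_q) < 0$ flips the inequality, yielding
\begin{align}
   \frac{\log(p_1/p_q)}{\log(p_2/p_q)} \ge \frac{\log(1/\alpha)}{\log(1/\beta)} = \frac{\log \alpha}{\log \beta},
\end{align}
which substituted into the LSF bound gives $\rho \ge \log\alpha/\log\beta$, i.e., the desired inequality.

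The main obstacle is not the Fourier manipulation, which is short once the dictionary above is set up, but rather verifying that the LSF lower bound can legitimately be reduced to a single deterministic filter $f$ — one must either appeal to the standard averaging/convexity argument that justifies optimizing over filters pointwise in the LSF framework of \cite{christiani2016framework, tobias2016}, or extend Lemma \ref{lem:compare-correlation} to a distribution over filters via Jensen's inequality on the power-mean used in its proof. I expect the cleanest route is the former: cite the existing LSF reduction to a single best filter and then apply Lemma \ref{lem:compare-correlation} as above.
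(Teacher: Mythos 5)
Your dictionary between the hard distribution and the noise operators, and the final sign-careful rearrangement, are exactly the paper's argument: the paper also sets $\alpha=\frac{w_1-w^2}{w(1-w)}$, $\beta=\frac{w_2-w^2}{w(1-w)}$, identifies $\Pr[f(x)=1,f(y')=1]$ with $\langle T^{(w)}_\alpha f,f\rangle_{L_2(w)}$ and $\Pr[f(x)=1]$ with $\norm{f}{L_2(w)}^2$, applies \Cref{lem:compare-correlation}, and rearranges. The one step you defer, however, is precisely where the paper does real work, and your preferred way of closing it does not go through as stated. The LSF exponents are defined via the probabilities $p_1,p_2,p_q$ \emph{averaged over the random filter family}, and the quantity $\log(p_1/p_q)/\log(p_2/p_q)$ is not linear in these probabilities, so there is no standard citable ``reduction to a single best Boolean filter by averaging'': an individual filter's ratio need not lower bound the family's ratio without an additional argument (the paper's \Cref{lem:quasi-concave} is the kind of quasi-concavity statement one would need, but it is proved in this paper, not something you can import, and one must also check its hypotheses, e.g.\ $p_1\ge p_2$ per filter and the treatment of asymmetric $Q\neq U$).

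What the paper actually does is your second, hedged option: it extends the argument to the distribution over filters by defining the surrogate real-valued function $\overline{f}(x)=\sum_{S}\sqrt{\ep[f]{\hat f^{(w)}(S)^2}}\,\phi^{(w)}_S(x)$, observing via Parseval that $\ep[f]{\langle T^{(w)}_\rho f,f\rangle_{L_2(w)}}=\langle T^{(w)}_\rho \overline f,\overline f\rangle_{L_2(w)}$ simultaneously for $\rho\in\{\alpha,\beta,1\}$, and then applying \Cref{lem:compare-correlation} to $\overline f$ (which is legitimate because the lemma holds for arbitrary real-valued functions, not just Boolean ones). This packages the family-averaged quadratic forms into a single function and makes the power-mean step apply to the averaged quantities directly. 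So: same route, correct endgame, but the ``main obstacle'' you flag is a genuine gap as written, and the clean closure is the Fourier-averaged surrogate $\overline f$, not a citation to a single-filter reduction.
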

\begin{proof}
   Let $\mathcal F$ be any fixed LSF-family and let $f: \set{0, 1}^n \to \set{0, 1}$ be a random function
   such that $f^{-1}(1) = Q$ for $Q \sim \mathcal F$. Now we define the deterministic function
   $\overline{f}: \set{0, 1}^n \to \R$ by $\overline{f}(x) = \sum_{S \subseteq [d]} \sqrt{\ep[f]{\hat{f}^{(w)}(S)^2}} \phi_S(x)$.
   Using the Parseval-Plancherel identity we get that
   \begin{align}
      \ep[f]{\langle T^{(w)}_\rho f, f \rangle_{L_2(w)}}
         = \sum_{S \subseteq [d]} \rho^{\abs{S}} \ep[f]{\hat{f}^{(w)}(S)^2}
         = \langle T^{(w)}_\rho \overline{f}, \overline{f} \rangle_{L_2(w)} \; .
   \end{align}
   for every $\rho$. We set $\alpha = \frac{w_1-w^2}{w(1-w)}$
   and $\beta = \frac{w_2-w^2}{w(1-w)}$ and note that
   $\Pr_{x,y',f}[f(x)=1,f(y')=1] = \ep[f]{\langle T^{(w)}_\alpha f, f \rangle_{L_2(w)}}$,
   $\Pr_{x,y,f}[f(x)=1,f(y)=1] = \ep[f]{\langle T^{(w)}_\beta f, f \rangle_{L_2(w)}}$, and
   $\Pr_{x,f}[f(x)=1] = \ep[f]{\norm{f}{L_2(w)}^2}$.
   Then using \Cref{lem:compare-correlation} we get that
   \begin{align}
      \left(\frac{\Pr_{x,y',f}[f(x)=1,f(y')=1]}{\Pr_{x,f}[f(x)=1] }\right)^{1/\log 1/\alpha}
         &= \left( \frac{\ep[f]{\langle T^{(w)}_\alpha f, f \rangle_{L_2(w)}}}{\ep[f]{\norm{f}{L_2(w)}^2 }} \right)^{1 / \log 1/\alpha}
         \\&= \left( \frac{\langle T^{(w)}_\alpha \overline{f}, \overline{f} \rangle_{L_2(w)}}{\norm{\overline{f}}{L_2(w)}^2} \right)^{1 / \log 1/\alpha}
         \\&\le \left( \frac{\langle T^{(w)}_\beta \overline{f}, \overline{f} \rangle_{L_2(w)}}{\norm{\overline{f}}{L_2(w)}^2} \right)^{1 / \log 1/\beta}
         \\&= \left(\frac{\Pr_{x,y,f}[f(x)=1, f(y)=1]}{\Pr_{x,f}[f(x)=1] }\right)^{1/\log 1/\beta}
            \; .
   \end{align}
   By rearrangement this implies that
   \begin{align}
      \rho = \frac{\log \frac{\Pr_{x,y',f}[f(x)=1,f(y')=1]}{\Pr_{x,f}[f(x)=1] }}{\log \frac{\Pr_{x,y,f}[f(x)=1,f(y)=1]}{\Pr_{x,f}[f(x)=1] }}
         \ge \frac{\log\alpha}{\log\beta} \; .
   \end{align}
\end{proof}

As noted the bound is sharp against our upper bound when $w_u,w_q,w_1,w_2$ are all small.
Also notice that $\log\alpha/\log\beta\le\frac{1-\alpha}{1+\alpha}\frac{1-\beta}{1+\beta}$ is a rather good approximation for $\alpha$ and $\beta$ close to 1.
Here the right hand side is the $\rho$ value of Spherical LSH with the batch-normalization embedding discussed in \Cref{sec:embed}.

Note that the lower bound becomes 0 when we get close to the random instance, $w_2\to w_qw_u$.
In the next sections we will remedy this, by showing a lower bound tight exactly when $w_2=w_qw_u$.

\subsection{General Lower Bound}

Our second lower bound will be proven in the ``list-of-points'' model. We follow and expand upon the
approach by Andoni et al.~\cite{andoni2016optimal}. The main idea is to lower bound random instances
with planted points. If the random instances correspond to a Similarity Search problem with high
probability then we have a lower bound for the Similarity Search problem. We formalize the notion
of random instances in the following general definition.

\begin{definition}[Random instance]
   For spaces $Q$ and $U$ we describe a distribution 
   of dataset-query pairs $(P, q)$ where $P \subseteq U$ and $q \in Q$.
   Let $\mathcal{P}_{QU}$ be a probability distribution on $Q \times U$, a $\mathcal{P}_{QU}$-random instance is
   a dataset-query pair drawn from the following distribution.
   \begin{enumerate}
      \item A dataset $P \subseteq U$ is constructed by sampling $n$ points where $p \sim \mathcal{P}_U$ for all $p \in P$.
      \item A dataset point $p' \in P$ is fixed and a $q \in Q$ is sampled such that $(q, p') \sim \mathcal{P}_{QU}$.
      \item The goal of the data structure is to preprocess $P$ such that it recovers $p'$ when given the query point $q$.
   \end{enumerate}
\end{definition}

We can then generalize the result by Andoni et al.~\cite{andoni2016optimal}, who proved a result
specifically for random Hamming instances, to general random instances. We defer the proof to \Cref{app:proof-hypercontractive-to-lower-bound}.
\begin{lemma}\label{lem:hypercontractive-to-lower-bound}
   Let $Q$ and $U$ be some spaces and $\mathcal{P}_{QU}$ a probability distribution on $Q \times U$.
   Consider any list-of-points data structure for $\mathcal{P}_{QU}$-random instances of $n$ points,
   which uses expected space $n^{1 + \rho_u}$, has expected query time $n^{\rho_q - o_n(1)}$, 
   and succeeds with probability at least $0.99$. Let $r, s \in [1, \infty]$
   satisfy  
   \begin{align}
      \ep[(X, Y) \sim \mathcal{P}_{QU}]{f(X) g(Y)} \le \norm{f(X)}{L_r(\mathcal{P}_Q)} \norm{g(Y)}{L_s(\mathcal{P}_U)} \; ,
   \end{align}
   for all functions $f : Q \to \R$ and $g : U \to \R$. Then
   \begin{align}
      \frac{1}{r}\rho_q + \frac{1}{r'}\rho_u \ge \frac{1}{r} + \frac{1}{s} - 1 \; ,
   \end{align}
   where $r' = \frac{r}{r - 1}$ is the convex conjugate of $r$.
\end{lemma}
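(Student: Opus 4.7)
The plan is to follow the standard list-of-points lower bound technique of Andoni et al., but abstracted so that the only property of $\mathcal{P}_{QU}$ we use is the given $(r,s)$-hypercontractive inequality. For each filter $i \in [m]$ I will work with the two indicator functions $f_i(q) = \mathbf{1}[i \in I(q)]$ and $g_i(p) = \mathbf{1}[p \in A_i]$, together with the marginal probabilities $\alpha_i = \ep[q \sim \mathcal{P}_Q]{f_i(q)}$ and $\beta_i = \ep[p \sim \mathcal{P}_U]{g_i(p)}$, where we additionally average over the random choice of the $A_i$ and $I$ made by the data structure.

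First I will convert the expected space and query time budgets into three sum bounds on the $\alpha_i,\beta_i$. Since each of the $n-1$ random dataset points is i.i.d.\ $\mathcal{P}_U$ and is (up to planting) independent of $(A_i)_i$ and $I(q)$, the expected size of $L_i$ is essentially $\beta_i n$, so $m + n\sum_i \beta_i \le n^{1+\rho_u}$, giving $\sum_i \beta_i \lesssim n^{\rho_u}$. Similarly, the expected query time decomposes as $\sum_i \alpha_i + n\sum_i \alpha_i \beta_i \le n^{\rho_q - o(1)}$, yielding both $\sum_i \alpha_i \lesssim n^{\rho_q}$ and $\sum_i \alpha_i \beta_i \lesssim n^{\rho_q - 1}$.

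Second, I will express the correctness requirement as a lower bound on $\sum_i \ep[(q,p') \sim \mathcal{P}_{QU}]{f_i(q)\,g_i(p')}$: since the data structure must return $p'$ with probability at least $0.99$, a union bound forces some list $i$ to collide $(q,p')$ with probability at least $0.99 - o(1)$. Applying the hypercontractive inequality to $f_i$ and $g_i$ (which, being $\{0,1\}$-valued, have $L_r$ and $L_s$ norms $\alpha_i^{1/r}$ and $\beta_i^{1/s}$), this gives
\begin{align}
\sum_i \alpha_i^{1/r} \beta_i^{1/s} \;\ge\; 0.99 - o(1).
\end{align}

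Third, I will apply a three-term Hölder inequality with exponents chosen to absorb exactly the three sum bounds. Writing $\theta_1 = \tfrac{1}{r} + \tfrac{1}{s} - 1$, $\theta_2 = 1 - \tfrac{1}{s}$, $\theta_3 = 1 - \tfrac{1}{r}$ (which sum to $1$ and are all nonnegative when $\tfrac{1}{r} + \tfrac{1}{s} \ge 1$; otherwise the claim is trivial because the right-hand side is $\le 0$), we get
\begin{align}
\sum_i \alpha_i^{1/r} \beta_i^{1/s} \;\le\; \Bigl(\sum_i \alpha_i\beta_i\Bigr)^{\theta_1} \Bigl(\sum_i \alpha_i\Bigr)^{\theta_2} \Bigl(\sum_i \beta_i\Bigr)^{\theta_3}.
\end{align}
Substituting the three budget bounds, taking logarithms and simplifying with $r' = r/(r-1)$ yields precisely $\tfrac{1}{r}\rho_q + \tfrac{1}{r'}\rho_u \ge \tfrac{1}{r} + \tfrac{1}{s} - 1$.

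The main obstacle is the usual subtlety in justifying the budget bounds rigorously: the data structure is allowed to be randomized and adaptive, so I need to average carefully over its internal randomness, condition out the planted point from the $n-1$ i.i.d.\ dataset points, and absorb lower-order terms into the $n^{-o(1)}$ slack. Once the three linear-programming constraints on $(\alpha_i,\beta_i)$ are in place, the actual combinatorial step is just the Hölder interpolation, which is immediate; the hypercontractivity hypothesis enters only through the single inequality $\gamma_i \le \alpha_i^{1/r}\beta_i^{1/s}$.
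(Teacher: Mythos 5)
Your proof is correct and reaches exactly the stated bound, but after the shared first step it takes a genuinely different route from the paper. Like the paper, you invoke the hypercontractive hypothesis exactly once, on the indicators of ``query scans list $i$'' and ``planted point lies in list $i$'', giving $\Pr[q\in B_i,\,p'\in A_i]\le \alpha_i^{1/r}\beta_i^{1/s}$ (the paper's $\gamma_i\tau_i\le \tau_i^{1/s}\Pr[v\in B_i]^{1/r}$). The paper then treats the expected query time as an objective in the variables $(\gamma_i,\tau_i)$: it minimizes over the $\gamma_i$ with Lagrange multipliers and then maximizes $\sum_i \tau_i^{r'/s}/(1+(n-1)\tau_i)^{r'/r}$ under the space constraint via a case analysis ($r>s$, $r=s$, $r<s$), arriving at $\ep{T}\gtrsim S^{-r/r'}n^{r/s}$. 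You instead extract three budget constraints, $\sum_i\alpha_i\lesssim n^{\rho_q}$, $\sum_i\alpha_i\beta_i\lesssim n^{\rho_q-1}$, $\sum_i\beta_i\lesssim n^{\rho_u}$, together with the success constraint $\sum_i\alpha_i^{1/r}\beta_i^{1/s}\ge \Omega(1)$, and close with a single three-exponent H\"older interpolation; the exponents $\theta_1=\tfrac1r+\tfrac1s-1$, $\theta_2=1-\tfrac1s$, $\theta_3=1-\tfrac1r$ do produce $\alpha_i^{1/r}\beta_i^{1/s}$ and the log-level bookkeeping gives precisely $\tfrac1r\rho_q+\tfrac1{r'}\rho_u\ge\tfrac1r+\tfrac1s-1$. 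Your route buys a shorter argument with no Lagrangian computation and no case analysis, at the cost of only exponent-level (not constant-explicit) control, which is all the lemma needs; the paper's optimization gives somewhat more explicit constants. Two small points of care: the ``union bound'' sentence should say that the \emph{sum} of the per-list collision probabilities is at least $0.99$ (your displayed inequality is the correct statement, no single list need collide that often); and the data structure's internal randomness should be handled by applying hypercontractivity and the H\"older step conditioned on the coins, treating the coins as part of the index measure, rather than to pre-averaged $\alpha_i,\beta_i$, since $\ep{X^{1/r}Y^{1/s}}\le(\ep{X})^{1/r}(\ep{Y})^{1/s}$ can fail when $\tfrac1r+\tfrac1s>1$ --- an informality the paper's ``fix a data structure $D$'' shares.
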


This gives a good way to lower bound random instances when one has tight hypercontractive inequalities.
Unfortunately, for most probability distributions this is not the case but we can amplify the power of \Cref{lem:hypercontractive-to-lower-bound}
by combining it with \Cref{lem:hypercontractive-to-divergence} which we recall from the introduction.
\begingroup
\def\thelemma{\ref*{lem:correspondence-hypercontractive-divergence}}
\begin{lemma}\label{lem:hypercontractive-to-divergence}
   Let $\mathcal{P}_{XY}$ be a probability distribution on a space $\Omega_X \times \Omega_Y$ and let $\mathcal{P}_X$ and $\mathcal{P}_Y$ be the
   marginal distributions on the spaces $\Omega_X$ and $\Omega_Y$ respectively. Let $s, r \in [1, \infty)$, then the following is equivalent
   \begin{enumerate}
      \item For all functions $f : \Omega_X \to \R$ and $g : \Omega_Y \to \R$ we have
      \begin{align}\label{eq:general-hypercontractive}
         \ep[(X, Y) \sim \mathcal{P}_{XY}]{f(X) g(Y)} \le \norm{f(X)}{L_r(\mathcal{P}_X)} \norm{g(Y)}{L_s(\mathcal{P}_Y)} \; .
      \end{align}
      \item For all probability distributions $\mathcal{Q}_{XY}$ which are absolutely continuous with respect to $\mathcal{P}_{XY}$ we have
      \begin{align}\label{eq:general-divergence}
         \D{\mathcal{Q}_{XY}}{\mathcal{P}_{XY}} \ge \frac{\D{\mathcal{Q}_X}{\mathcal{P}_X}}{r} + \frac{\D{\mathcal{Q}_Y}{\mathcal{P}_Y}}{s} \; .
      \end{align}
   \end{enumerate}
\end{lemma}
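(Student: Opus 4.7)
\textbf{Proof proposal for \Cref{lem:correspondence-hypercontractive-divergence}.} The plan is to establish both implications by passing through the Donsker--Varadhan variational formula (equivalently, the Fenchel-type inequality stated in the introduction), which expresses the log moment generating function as a Legendre transform of the KL-divergence:
\begin{align}
    \log \ep[X \sim \mathcal{P}]{e^{\phi(X)}} \;=\; \sup_{\mathcal{Q} \ll \mathcal{P}} \left( \ep[X \sim \mathcal{Q}]{\phi(X)} - \D{\mathcal{Q}}{\mathcal{P}} \right) .
\end{align}
The ``$\le$'' direction of this is exactly the Fenchel inequality highlighted just after the lemma statement; I would first give a short self-contained proof by writing $\ep[\mathcal{Q}]{\phi} - \log\ep[\mathcal{P}]{e^\phi} = \ep[\mathcal{Q}]{\log(e^\phi/\ep[\mathcal{P}]{e^\phi})}$ and applying Jensen's inequality to $-\log$ against the density $\tfrac{d\mathcal{Q}}{d\mathcal{P}}$. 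Equality is attained (modulo constants) by the tilt $\phi = \log \tfrac{d\mathcal{Q}}{d\mathcal{P}}$.

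For the direction $(1) \Rightarrow (2)$, fix any $\mathcal{Q}_{XY} \ll \mathcal{P}_{XY}$ and plug into the hypercontractive inequality the test functions $f(x) = \left(\tfrac{d\mathcal{Q}_X}{d\mathcal{P}_X}(x)\right)^{1/r}$ and $g(y) = \left(\tfrac{d\mathcal{Q}_Y}{d\mathcal{P}_Y}(y)\right)^{1/s}$, so that $\norm{f}{L_r(\mathcal{P}_X)} = \norm{g}{L_s(\mathcal{P}_Y)} = 1$. Setting $\phi = \log \tfrac{d\mathcal{Q}_X}{d\mathcal{P}_X}$ and $\psi = \log \tfrac{d\mathcal{Q}_Y}{d\mathcal{P}_Y}$, the hypercontractive bound reads $\log \ep[\mathcal{P}_{XY}]{e^{\phi(X)/r + \psi(Y)/s}} \le 0$. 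Now apply the Fenchel inequality on $(\Omega_X \times \Omega_Y, \mathcal{P}_{XY})$ with test function $h(x,y) = \phi(x)/r + \psi(y)/s$ and measure $\mathcal{Q}_{XY}$; since $\ep[\mathcal{Q}_{XY}]{h} = \tfrac{1}{r}\D{\mathcal{Q}_X}{\mathcal{P}_X} + \tfrac{1}{s}\D{\mathcal{Q}_Y}{\mathcal{P}_Y}$, rearrangement yields \eqref{eq:general-divergence}.

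For the direction $(2) \Rightarrow (1)$, it suffices by homogeneity and the standard truncation argument to prove \eqref{eq:general-hypercontractive} for strictly positive, bounded $f, g$, so write $f = e^{\phi/r}$ and $g = e^{\psi/s}$ with bounded $\phi, \psi$. The claim is equivalent to
\begin{align}
    \log \ep[\mathcal{P}_{XY}]{e^{\phi(X)/r + \psi(Y)/s}} \;\le\; \frac{1}{r}\log \ep[\mathcal{P}_X]{e^{\phi(X)}} + \frac{1}{s}\log \ep[\mathcal{P}_Y]{e^{\psi(Y)}} .
\end{align}
I would apply the variational formula to the joint MGF on the left, obtaining a supremum over $\mathcal{Q}_{XY} \ll \mathcal{P}_{XY}$ of $\ep[\mathcal{Q}_{XY}]{\phi(X)/r + \psi(Y)/s} - \D{\mathcal{Q}_{XY}}{\mathcal{P}_{XY}}$. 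Hypothesis \eqref{eq:general-divergence} lets me upper bound the divergence term by $\tfrac{1}{r}\D{\mathcal{Q}_X}{\mathcal{P}_X} + \tfrac{1}{s}\D{\mathcal{Q}_Y}{\mathcal{P}_Y}$, so the supremum factorises into the sum of two independent Legendre transforms. Each is bounded by the corresponding log MGF via the Fenchel inequality in the marginal direction, yielding the claim. Non-negativity of $f, g$ follows because the hypercontractive bound needs only be checked for $|f|, |g|$, and the case of functions vanishing on positive measure sets follows by a standard approximation $\phi \to \phi \vee (-M)$ and $M \to \infty$.

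The routine half is $(1) \Rightarrow (2)$, where the choice of log-density test functions immediately engineers the Fenchel slot. The main subtlety is on the $(2) \Rightarrow (1)$ side, where one must be careful that the supremum in Donsker--Varadhan is indeed attained by some $\mathcal{Q}_{XY}$ (or approached in a limit) for which the marginal Legendre bound is tight simultaneously on both sides; the cleanest way to avoid existence/attainment issues is to keep the argument inside the inequality chain above, so that tightness of the marginal bound is never needed---only the one-sided inequality from Fenchel is used. The only other care point is verifying absolute continuity ($\mathcal{Q}_X \ll \mathcal{P}_X$ and $\mathcal{Q}_Y \ll \mathcal{P}_Y$ follow from $\mathcal{Q}_{XY} \ll \mathcal{P}_{XY}$), and restricting to the class of $f, g$ for which all MGFs are finite before extending by monotone convergence to the full $L_r, L_s$ statement.
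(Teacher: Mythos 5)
Your proof is correct and follows essentially the same route as the paper's: both directions go through the Fenchel/Donsker--Varadhan inequality, using exponentially tilted test functions built from the marginal densities for $(1)\Rightarrow(2)$ and the explicitly tilted joint distribution (the maximizer in the variational formula) for $(2)\Rightarrow(1)$. The only cosmetic difference is that you normalize the test functions as $(d\mathcal{Q}_X/d\mathcal{P}_X)^{1/r}$ and $(d\mathcal{Q}_Y/d\mathcal{P}_Y)^{1/s}$ so that both norms equal one, which makes the bookkeeping of the $1/r$ and $1/s$ factors slightly cleaner than in the paper's write-up.
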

\addtocounter{lemma}{-1}
\endgroup

We defer the proof to the end of the \hyperlink{prf:hypercontractive-to-divergence}{section}
and instead start by focusing on the effects of combining \Cref{lem:hypercontractive-to-lower-bound}
and \Cref{lem:hypercontractive-to-divergence}. First of all we can prove the following general lower bound
for random instances.
\begin{theorem}\label{thm:lower-bound-random-instance}
   Let $Q$ and $U$ be some spaces and $\mathcal{P}_{QU}$ a probability distribution on $Q \times U$.
   Consider any list-of-points data structure for $\mathcal{P}_{QU}$-random instances of $n$ points,
   which uses expected space $n^{1 + \rho_u}$, has expected query time $n^{\rho_q - o_n(1)}$, 
   and succeeds with probability at least $0.99$. Then for every
   $r \in [1, \infty]$ we have that
   \begin{align}
      \frac{1}{r}\rho_q + \frac{1}{r'}\rho_u \ge \inf_{\mathcal{Q}_{QU}}\left(
         \frac{1}{r}\frac{\D{\mathcal{Q}_{QU}}{\mathcal{P}_{QU}} - \D{\mathcal{Q}_Q}{\mathcal{P}_Q}}{\D{\mathcal{Q}_U}{\mathcal{P}_U}}
         + \frac{1}{r'}\frac{\D{\mathcal{Q}_{QU}}{\mathcal{P}_{QU}} - \D{\mathcal{Q}_U}{\mathcal{P}_U}}{\D{\mathcal{Q}_U}{\mathcal{P}_U}}
      \right) \; ,
   \end{align}
   where $r' = \frac{r}{r - 1}$ is the convex conjugate of $r$ and the infimum is over every probability distribution $\mathcal{Q}_{QU}$
   with $\mathcal{Q}_U \neq \mathcal{P}_U$ and which is absolutely continuous with respect to $\mathcal{P}_{QU}$.
\end{theorem}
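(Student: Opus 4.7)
The plan is to combine the two lemmas already at our disposal: \Cref{lem:hypercontractive-to-lower-bound}, which converts any valid $(r,s)$-hypercontractivity for $\mathcal{P}_{QU}$ into the query/space tradeoff $\tfrac{1}{r}\rho_q + \tfrac{1}{r'}\rho_u \ge \tfrac{1}{r}+\tfrac{1}{s}-1$, and \Cref{lem:hypercontractive-to-divergence}, which characterizes $(r,s)$-hypercontractivity by an information-theoretic inequality over all $\mathcal{Q}_{QU}\ll \mathcal{P}_{QU}$. The idea is simply to find, for each fixed $r$, the \emph{sharpest} $s$ that \Cref{lem:hypercontractive-to-divergence} allows, and feed that $s$ into \Cref{lem:hypercontractive-to-lower-bound}.

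More concretely, fix $r\in[1,\infty]$ and let $r'=r/(r-1)$. I would first observe that the divergence inequality \eqref{eq:general-divergence} with data $(r,s)$ can be rearranged, for any fixed $\mathcal{Q}_{QU}$ with $\mathcal{Q}_U\ne \mathcal{P}_U$, into a lower bound on $s$ of the form $s\ge \D{\mathcal{Q}_U}{\mathcal{P}_U}/\bigl(\D{\mathcal{Q}_{QU}}{\mathcal{P}_{QU}}-\D{\mathcal{Q}_Q}{\mathcal{P}_Q}/r\bigr)$ (whenever the denominator is positive; the other cases impose no constraint). Taking the tightest choice uniformly in $\mathcal{Q}_{QU}$, define
\begin{align}
   \frac{1}{s^*} \;:=\; \inf_{\mathcal{Q}_{QU}}\frac{\D{\mathcal{Q}_{QU}}{\mathcal{P}_{QU}} - \D{\mathcal{Q}_Q}{\mathcal{P}_Q}/r}{\D{\mathcal{Q}_U}{\mathcal{P}_U}} \; .
\end{align}
By construction, $\mathcal{P}_{QU}$ satisfies the divergence inequality with exponents $(r,s^*)$, hence by \Cref{lem:hypercontractive-to-divergence} it is $(r,s^*)$-hypercontractive, and \Cref{lem:hypercontractive-to-lower-bound} yields
\begin{align}
   \tfrac{1}{r}\rho_q + \tfrac{1}{r'}\rho_u \;\ge\; \tfrac{1}{r} + \tfrac{1}{s^*} - 1 \;=\; -\tfrac{1}{r'} + \tfrac{1}{s^*} \; .
\end{align}

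To match the form stated in the theorem, I would do a one-line algebraic rearrangement inside the infimum. Writing $A=\D{\mathcal{Q}_{QU}}{\mathcal{P}_{QU}}$, $B=\D{\mathcal{Q}_Q}{\mathcal{P}_Q}$, $C=\D{\mathcal{Q}_U}{\mathcal{P}_U}$, one checks $\tfrac{1}{r}\cdot\tfrac{A-B}{C}+\tfrac{1}{r'}\cdot\tfrac{A-C}{C} = \tfrac{A - B/r}{C} - \tfrac{1}{r'}$, so the $-\tfrac{1}{r'}$ on the right of the displayed inequality combines with $1/s^*$ to give exactly the target expression. This is purely bookkeeping and should be done in a single line at the end.

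The main obstacle I foresee is not the core argument, which is short, but the boundary conditions. First, one must check that $s^*\in[1,\infty]$ so that \Cref{lem:hypercontractive-to-divergence,lem:hypercontractive-to-lower-bound} genuinely apply; this needs the convexity of $\mathcal{Q}_{QU}\mapsto \D{\mathcal{Q}_{QU}}{\mathcal{P}_{QU}}$ and the data-processing inequality $\D{\mathcal{Q}_{QU}}{\mathcal{P}_{QU}}\ge \max\{\D{\mathcal{Q}_Q}{\mathcal{P}_Q},\D{\mathcal{Q}_U}{\mathcal{P}_U}\}$, from which one can show the ratio in the definition of $1/s^*$ lies in $[0,1]$. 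Second, one must handle the endpoint cases $r=1$ (so $r'=\infty$) and $r=\infty$ (so $r'=1$), where the hypercontractive formulation degenerates; here one can either appeal to \Cref{lem:hypercontractive-to-divergence} directly with the convention $\D{\cdot}{\cdot}/\infty=0$, or take limits $r\to 1^+$ and $r\to\infty$ using continuity of the quantities involved. Finally, the restriction $\mathcal{Q}_U\ne \mathcal{P}_U$ in the infimum of the theorem statement is there precisely to keep the denominator $\D{\mathcal{Q}_U}{\mathcal{P}_U}$ strictly positive; distributions with $\mathcal{Q}_U=\mathcal{P}_U$ impose no constraint and can be safely excluded.
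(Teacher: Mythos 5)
Your proposal is correct and follows essentially the same route as the paper: for fixed $r$ you pick the best admissible $s$, feed it into \Cref{lem:hypercontractive-to-lower-bound}, and use \Cref{lem:hypercontractive-to-divergence} to identify $\tfrac1r+\tfrac1s-1$ with the infimum of the divergence ratios. The only difference is presentational — you define $1/s^*$ directly as the infimum of the divergence expression and then verify $(r,s^*)$-hypercontractivity, whereas the paper takes $s$ as the infimal hypercontractive exponent and then invokes its minimality — and your algebraic rearrangement and boundary remarks (including $\mathcal{Q}_U\neq\mathcal{P}_U$) match the paper's argument.
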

\begin{proof}
   Let $r \in [1, \infty]$ and choose $s = \arginf\setbuilder{s \in [1, \infty]}{\text{$\mathcal{P}_{QU}$ is $(r, s)$-hypercontractive}}$.
   \Cref{lem:hypercontractive-to-lower-bound} give us that
   \begin{align}\label{eq:simple-lower-bound}
      \frac{1}{r}\rho_q + \frac{1}{r'}\rho_u \ge \frac{1}{r} + \frac{1}{s} - 1 \; .
   \end{align}
   \Cref{lem:hypercontractive-to-divergence} give us that
   \begin{align}
      \D{\mathcal{Q}_{XY}}{\mathcal{P}_{XY}} 
         &\ge \frac{\D{\mathcal{Q}_X}{\mathcal{P}_X}}{r} + \frac{\D{\mathcal{Q}_Y}{\mathcal{P}_Y}}{s}
   \end{align}
   for every $\mathcal{Q}_{QU}$  with $\mathcal{Q}_U \neq \mathcal{P}_U$ and which is absolutely continuous with respect to $\mathcal{P}_{QU}$.
   We can rewrite this as
   \begin{align}
      \frac{1}{r}\D{\mathcal{Q}_{QU}}{\mathcal{P}_{QU}} - \D{\mathcal{Q}_Q}{\mathcal{P}_Q}
         + \frac{1}{r'}\D{\mathcal{Q}_{QU}}{\mathcal{P}_{QU}} - \D{\mathcal{Q}_U}{\mathcal{P}_U}
         &\ge \left(\frac{1}{s} - \frac{1}{r'}\right)\D{\mathcal{Q}_U}{\mathcal{P}_U}
            &&\Leftrightarrow\\
      \frac{1}{r}\frac{\D{\mathcal{Q}_{QU}}{\mathcal{P}_{QU}} - \D{\mathcal{Q}_Q}{\mathcal{P}_Q}}{\D{\mathcal{Q}_U}{\mathcal{P}_U}}
         + \frac{1}{r'}\frac{\D{\mathcal{Q}_{QU}}{\mathcal{P}_{QU}} - \D{\mathcal{Q}_U}{\mathcal{P}_U}}{\D{\mathcal{Q}_U}{\mathcal{P}_U}}
         &\ge \frac{1}{s} - \frac{1}{r'} = \frac{1}{s} + \frac{1}{r} - 1
   \end{align}
   Now the minimality of $s$ give us that
   \begin{align}\label{eq:divergence-infimum}
      \inf_{\mathcal{Q}_{QU}}\left( 
         \frac{1}{r}\frac{\D{\mathcal{Q}_{QU}}{\mathcal{P}_{QU}} - \D{\mathcal{Q}_Q}{\mathcal{P}_Q}}{\D{\mathcal{Q}_U}{\mathcal{P}_U}}
         + \frac{1}{r'}\frac{\D{\mathcal{Q}_{QU}}{\mathcal{P}_{QU}} - \D{\mathcal{Q}_U}{\mathcal{P}_U}}{\D{\mathcal{Q}_U}{\mathcal{P}_U}}   
      \right) = \frac{1}{s} + \frac{1}{r} - 1 \; ,
   \end{align}
   where the infimum is over every probability distribution $\mathcal{Q}_{QU}$
   with $\mathcal{Q}_U \neq \mathcal{P}_U$ and which is absolutely continuous with respect to $\mathcal{P}_{QU}$.
   Now combining \eqref{eq:simple-lower-bound} and \eqref{eq:divergence-infimum} give us the result.
\end{proof}

Combining the lemma with the ``Hypercontractive Induction Theorem''~\cite{o2014analysis} we can prove \Cref{thm:lower_main}.
\begin{lemma}\label{lem:hypercontractive-induction}
   Let $\mathcal{P}_{XY}$ be a probability distribution on a space $\Omega_X \times \Omega_Y$
   and $\mathcal{P}_{XY}^{\otimes n}$ be a probability distribution consisting $n$ independent copies 
   of $\mathcal{P}_{XY}$. Then $\mathcal{P}_{XY}$ is $(r, s)$-hypercontractive if and only if
   $\mathcal{P}_{XY}^{\otimes n}$ is $(r, s)$-hypercontractive.
\end{lemma}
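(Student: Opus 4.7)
The plan is to prove both directions separately, with the ``only if'' direction handled by induction on $n$.

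\textbf{Easy direction ($\Leftarrow$).} If $\mathcal{P}_{XY}^{\otimes n}$ is $(r,s)$-hypercontractive for some $n \ge 1$, then in particular taking $n = 1$ shows that $\mathcal{P}_{XY}$ is $(r,s)$-hypercontractive. Alternatively, given functions $f,g$ on $\Omega_X, \Omega_Y$, lift them to $\Omega_X^n, \Omega_Y^n$ by $F(x_1,\dots,x_n) = f(x_1)$ and $G(y_1,\dots,y_n) = g(y_1)$; both sides of the hypercontractive inequality for $\mathcal{P}_{XY}^{\otimes n}$ reduce to the corresponding quantity for $\mathcal{P}_{XY}$.

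\textbf{Hard direction ($\Rightarrow$).} I would proceed by induction on $n$, with $n=1$ being the hypothesis. For the inductive step, given $f : \Omega_X^n \to \R$ and $g : \Omega_Y^n \to \R$, split off the first coordinate and write $F(x_1)(\cdot) = f(x_1, \cdot)$ and $G(y_1)(\cdot) = g(y_1, \cdot)$, so these are functions on $\Omega_X^{n-1}$ and $\Omega_Y^{n-1}$ respectively, parameterized by $x_1$ and $y_1$. Using Fubini (since we have $r,s \in [1,\infty)$) and the inductive hypothesis applied pointwise in $(x_1,y_1)$:
\begin{align*}
\ep[(X,Y) \sim \mathcal{P}_{XY}^{\otimes n}]{f(X)g(Y)}
&= \ep[(X_1,Y_1) \sim \mathcal{P}_{XY}]{\ep[(X',Y') \sim \mathcal{P}_{XY}^{\otimes(n-1)}]{F(X_1)(X')\,G(Y_1)(Y')}}\\
&\le \ep[(X_1,Y_1) \sim \mathcal{P}_{XY}]{\norm{F(X_1)}{L_r(\mathcal{P}_X^{\otimes(n-1)})}\,\norm{G(Y_1)}{L_s(\mathcal{P}_Y^{\otimes(n-1)})}}.
\end{align*}
Now define the ``outer'' functions $\tilde f(x_1) = \norm{F(x_1)}{L_r(\mathcal{P}_X^{\otimes(n-1)})}$ and $\tilde g(y_1) = \norm{G(y_1)}{L_s(\mathcal{P}_Y^{\otimes(n-1)})}$ on $\Omega_X$ and $\Omega_Y$, and apply the base case $(r,s)$-hypercontractivity of $\mathcal{P}_{XY}$ to them:
\begin{align*}
\ep[(X_1,Y_1)]{\tilde f(X_1)\tilde g(Y_1)} \le \norm{\tilde f}{L_r(\mathcal{P}_X)}\,\norm{\tilde g}{L_s(\mathcal{P}_Y)}.
\end{align*}
Finally I would unravel the outer norms using Fubini: $\norm{\tilde f}{L_r(\mathcal{P}_X)}^r = \ep[X_1]{\ep[X']{|f(X_1,X')|^r}} = \norm{f}{L_r(\mathcal{P}_X^{\otimes n})}^r$, and analogously $\norm{\tilde g}{L_s(\mathcal{P}_Y)} = \norm{g}{L_s(\mathcal{P}_Y^{\otimes n})}$. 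Combining gives the desired $(r,s)$-hypercontractivity for $\mathcal{P}_{XY}^{\otimes n}$.

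\textbf{Main obstacle.} The one subtle point is justifying that $\tilde f$ and $\tilde g$ are measurable and that the Fubini manipulations are valid. Since $r,s \in [1,\infty)$ both $L_r$- and $L_s$-norms are honest (sub)additive norms, so $\tilde f, \tilde g$ are well-defined nonnegative measurable functions, and as long as the relevant integrals are finite (which we may assume without loss, since otherwise the inequality is trivial) Fubini applies. A slicker packaging is that the two-function hypercontractive statement is equivalent, via duality, to the bound $\norm{T_{\rho}^{p_1\to p_2} h}{L_{s'}(p_1)} \le \norm{h}{L_r(p_2)}$ from \Cref{lem:correspondence-hypercontractive-divergence}, and these one-function bounds tensorize immediately by the standard ``apply the one-dimensional bound in each coordinate'' argument; this is the route the proof would ultimately follow if one wanted to avoid writing out the above induction explicitly.
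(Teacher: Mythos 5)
Your proof is correct. The paper does not actually prove this lemma---it is quoted as the ``Two-Function Hypercontractivity Induction Theorem'' and cited to O'Donnell's book---and your coordinate-splitting induction (apply the $(n-1)$-fold inequality pointwise in $(x_1,y_1)$, then apply the base case to $\tilde f(x_1)=\|f(x_1,\cdot)\|_{L_r}$ and $\tilde g(y_1)=\|g(y_1,\cdot)\|_{L_s}$, and collapse the iterated norms since the inner and outer exponents agree) is precisely the standard proof of that cited theorem, with the measurability/Fubini caveats handled adequately.
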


We restate \Cref{thm:lower_main} and prove it.
\begingroup
\def\thetheorem{\ref*{thm:lower_main}}
\begin{theorem}
   Consider any list-of-point data structure for the $(w_q, w_u, w_1, w_q w_u)$-GapSS problem over a universe
   of size $d$ of $n$ points with $w_q w_u d = \omega(\log n)$, which uses expected space $n^{1 + \rho_u}$,
   has expected query time $n^{\rho_q - o_n(1)}$, 
   and succeeds with probability at least $0.99$. Then for every
   $\alpha \in [0, 1]$ we have that
   \begin{align}
        \alpha \rho_q + (1 - \alpha)\rho_u 
        \ge \inf_{\substack{
           t_q, t_u \in [0,1]
           \\ t_u \neq w_u
        }}
        \left(
            \alpha \frac{\D{T}{P}-\dq}{\du}
            + (1 - \alpha) \frac{\D{T}{P}-\du}{\du}
        \right) \;,
   \end{align}
   where
   $P = \smat{w_1 && w_q-w_1 \\ w_u-w_1 && 1-w_q-w_u+w_1}$
   and
   $T = 
        \underset{\substack{
           T \ll P
        , \underset{X\sim T}{E}[X] = \svec{t_q\\t_u}
  }}{\arginf}\D{T}{P}$.
\end{theorem}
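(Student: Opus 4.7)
The plan is to combine three ingredients that have already been set up in the excerpt: (i) the reduction of the worst-case GapSS problem to a random instance drawn from a product distribution, (ii) the general lower bound for random instances in Theorem~\ref{thm:lower-bound-random-instance}, and (iii) the induction-to-a-single-copy provided by Lemma~\ref{lem:hypercontractive-induction} together with the hypercontractivity-divergence correspondence of Lemma~\ref{lem:correspondence-hypercontractive-divergence}. Let $\mathcal P_{QU}$ be the distribution on $\{0,1\}\times\{0,1\}$ whose joint probability matrix is $P=\smat{w_1 & w_q-w_1\\ w_u-w_1 & 1-w_q-w_u+w_1}$; its marginals are $\mathrm{Bernoulli}(w_q)$ and $\mathrm{Bernoulli}(w_u)$. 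Then a random GapSS instance, generated as in the recipe at the top of the lower bounds section, is exactly a $\mathcal P_{QU}^{\otimes d}$-random instance; I would start by verifying, via a standard Chernoff/union-bound argument using the hypothesis $w_qw_u d=\omega(\log n)$, that such a random draw is, with probability $1-o(1)$, a valid $(w_q,w_u,w_1(1-o(1)),w_qw_u(1+o(1)))$-GapSS instance, so any data structure that solves GapSS with probability $0.99$ solves the random instance with probability $0.99-o(1)$.

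Next I would instantiate Theorem~\ref{thm:lower-bound-random-instance} with the distribution $\mathcal P_{QU}^{\otimes d}$. For every $r\in[1,\infty]$ this yields
\begin{align}
\tfrac{1}{r}\rho_q+\tfrac{1}{r'}\rho_u \;\ge\; \inf_{\mathcal Q}\Bigl(
\tfrac{1}{r}\tfrac{\D{\mathcal Q}{\mathcal P_{QU}^{\otimes d}}-\D{\mathcal Q_Q}{\mathcal P_Q^{\otimes d}}}{\D{\mathcal Q_U}{\mathcal P_U^{\otimes d}}}
+\tfrac{1}{r'}\tfrac{\D{\mathcal Q}{\mathcal P_{QU}^{\otimes d}}-\D{\mathcal Q_U}{\mathcal P_U^{\otimes d}}}{\D{\mathcal Q_U}{\mathcal P_U^{\otimes d}}}\Bigr),
\end{align}
where the infimum ranges over all $\mathcal Q\ll \mathcal P_{QU}^{\otimes d}$ with $\mathcal Q_U\neq \mathcal P_U^{\otimes d}$. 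The crucial simplification is that, by the proof of Theorem~\ref{thm:lower-bound-random-instance}, the right-hand side equals $\tfrac{1}{r}+\tfrac{1}{s}-1$ where $s$ is the smallest hypercontractive exponent of $\mathcal P_{QU}^{\otimes d}$. By Lemma~\ref{lem:hypercontractive-induction}, $(r,s)$-hypercontractivity of $\mathcal P_{QU}^{\otimes d}$ is equivalent to $(r,s)$-hypercontractivity of $\mathcal P_{QU}$, so this $s$ is already determined at the single-coordinate level. Applying Lemma~\ref{lem:correspondence-hypercontractive-divergence} to the single-coordinate distribution $\mathcal P_{QU}$ lets me rewrite this value as an infimum of the same Lagrangian form but with $\mathcal Q$ now ranging over probability distributions on $\{0,1\}\times\{0,1\}$.

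Finally, I would parameterize that infimum by marginals. Any $\mathcal Q$ on $\{0,1\}\times\{0,1\}$ has Bernoulli marginals with some means $t_q,t_u\in[0,1]$, so $\D{\mathcal Q_Q}{\mathcal P_Q}=\dq$ and $\D{\mathcal Q_U}{\mathcal P_U}=\du$. For fixed $(t_q,t_u)$, minimizing the joint divergence $\D{\mathcal Q}{\mathcal P}$ over joints with those marginals is a strictly convex optimization whose minimizer is exactly the matrix $T$ described in the theorem statement (the $\arginf$ of $\D{T}{P}$ over $T\ll P$ with marginals $(t_q,t_u)$). Substituting these expressions and setting $\alpha=1/r$, $1-\alpha=1/r'$, and letting $(t_q,t_u)$ range over $[0,1]^2$ with $t_u\neq w_u$ (to enforce $\mathcal Q_U\neq \mathcal P_U$) recovers the stated inequality. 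The main obstacle I expect is the interchange between the infimum over $\mathcal Q$ on $\{0,1\}^d\times\{0,1\}^d$ and the infimum over $\mathcal Q$ on $\{0,1\}\times\{0,1\}$; this is handled cleanly by routing through the equivalence ``hypercontractivity of $\mathcal P^{\otimes d}$ $\Leftrightarrow$ hypercontractivity of $\mathcal P$'' rather than trying to compare the two infima directly, so essentially all of the difficulty is absorbed into Lemmas~\ref{lem:hypercontractive-induction} and~\ref{lem:correspondence-hypercontractive-divergence}.
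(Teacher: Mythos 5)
Your proposal is correct and follows essentially the same route as the paper: reduce to the $P^{\otimes d}$-random instance via the concentration argument, combine Lemma~\ref{lem:hypercontractive-induction} with Lemma~\ref{lem:correspondence-hypercontractive-divergence} to characterize the optimal hypercontractive exponent $s$ of the product distribution by single-coordinate divergences parameterized by the marginals $(t_q,t_u)$, and then invoke the argument of Theorem~\ref{thm:lower-bound-random-instance} (with $\alpha=1/r$) to convert $\tfrac1r+\tfrac1s-1$ into the stated infimum. Your handling of the product-space-to-single-coordinate interchange via the minimality of $s$ is exactly how the paper's proof works, so there is no substantive difference.
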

\addtocounter{theorem}{-1}
\endgroup

\begin{proof}
   From the discussion at the beginning of the section it is enough to lower bound the
   $P^{\otimes d}$-random instance where $P = \text{Bernoulli}([\begin{smallmatrix}
      w_1 & w_q - w_1 \\
      w_u & 1 - w_q - w_u + w_1
   \end{smallmatrix}])$, since this will imply a lower bound for the $(w_q, w_u, w_1, w_qw_u)$-GapSS problem.
   Combining \Cref{lem:hypercontractive-induction} and \Cref{lem:hypercontractive-to-divergence} we get
   that $P^{\otimes d}$ is $(r, s)$-hypercontractive if and only if
   $\D{T}{P} \ge \frac{\di{t_q}{w_q}}{r} + \frac{\di{t_u}{w_u}}{s}$
   where $T = \underset{\substack{
         T \ll P
      , \underset{X\sim T}{E}[X] = \svec{t_q\\t_u}
   }}{\arginf}\D{T}{P}$. Now repeating the proof of \Cref{thm:lower-bound-random-instance} give us the result.
\end{proof}

\hypertarget{prf:hypercontractive-to-divergence}{}
\paragraph{Proof of \Cref{lem:hypercontractive-to-divergence}} We now turn to the proof \Cref{lem:hypercontractive-to-divergence}.
The main argument needed in the proof of is contained in the following lemma, which can be seen as a variation of Fenchel's inequality.
\begin{lemma}\label{lem:divergence-legendre}
   Let $P$ be a probability distribution on a space $\Omega$, $Q$ a probability which is absolutely continuous with respect to $P$,
   and $\phi : \Omega \to \R$ a function such that $\ep[P]{\exp(\phi(X))} \le \infty$. Then
   \begin{align}
       \D{Q}{P} + \log \ep[X \sim P]{\exp(\phi(X))}
           \ge \ep[X \sim Q]{\phi(X)} \; .
   \end{align}
   and we have equality if and only if $\frac{d Q}{d P}(x) = \frac{\exp(\phi(x))}{\ep[X \sim P]{\exp(\phi(X))}}$.
\end{lemma}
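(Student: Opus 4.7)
The plan is to prove this by constructing the Gibbs tilt of $P$ by $\phi$ and reducing the inequality to the non-negativity of relative entropy. First I would define the probability measure $R$ on $\Omega$ by the Radon--Nikodym derivative
\begin{align}
   \frac{\dd R}{\dd P}(x) = \frac{\exp(\phi(x))}{\ep[X\sim P]{\exp(\phi(X))}},
\end{align}
which is well-defined because $\ep[X\sim P]{\exp(\phi(X))}<\infty$ and the normalization makes $R$ a probability measure absolutely continuous with respect to $P$. Since $Q\ll P$ and $R\ll P$ with a strictly positive density, we also have $Q\ll R$, so the chain rule for Radon--Nikodym derivatives applies and gives $\frac{\dd Q}{\dd P}=\frac{\dd Q}{\dd R}\cdot\frac{\dd R}{\dd P}$ $P$-a.s.

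Next I would take logarithms and integrate against $Q$ to decompose the divergence:
\begin{align}
   \D{Q}{P}
   = \ep[X\sim Q]{\log\tfrac{\dd Q}{\dd P}(X)}
   = \ep[X\sim Q]{\log\tfrac{\dd Q}{\dd R}(X)} + \ep[X\sim Q]{\phi(X)} - \log\ep[X\sim P]{\exp(\phi(X))}.
\end{align}
The first term on the right is exactly $\D{Q}{R}$, so rearranging yields
\begin{align}
   \D{Q}{P} + \log\ep[X\sim P]{\exp(\phi(X))} - \ep[X\sim Q]{\phi(X)} = \D{Q}{R}.
\end{align}

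The main (and only real) step left is to invoke Gibbs' inequality, $\D{Q}{R}\ge 0$, which follows from Jensen's inequality applied to the convex function $t\mapsto t\log t$ (or equivalently to $-\log$ on $\dd R/\dd Q$). This immediately gives the claimed bound. For the equality characterization, note that $\D{Q}{R}=0$ if and only if $Q=R$ as measures, which by definition of $R$ is precisely the condition $\frac{\dd Q}{\dd P}(x)=\exp(\phi(x))/\ep[X\sim P]{\exp(\phi(X))}$. I do not anticipate any obstacle here beyond a careful check that all expectations are well-defined (e.g.\ handling the case $\ep[X\sim Q]{\phi(X)}=-\infty$, which makes the inequality trivial, and verifying $Q\ll R$ so that $\D{Q}{R}$ makes sense).
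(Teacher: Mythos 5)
Your proof is correct and is essentially the same argument as the paper's: the paper applies Jensen's inequality directly to $\ep[X\sim Q]{\log\bigl(\exp(\phi(X))\tfrac{\dd P}{\dd Q}(X)\bigr)}$, which is exactly the non-negativity of $\D{Q}{R}$ for your tilted measure $R$, and the equality characterizations coincide. Explicitly naming the Gibbs tilt is just a cleaner packaging of the same decomposition, so there is nothing to add.
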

\begin{proof}
   To ease notation we write $p = \frac{d Q}{d P}$.
   We note that
   \begin{align}
       \D{Q}{P} 
           = \ep[X \sim Q]{\log p(X)}
           = \ep[X \sim Q]{\log \frac{p(X)}{\exp(\phi(X))}} + \ep[X \sim Q]{\phi(X)}
           = \ep[X \sim Q]{\phi(X)} - \ep[X \sim Q]{\log \frac{\exp(\phi(X))}{p(X)}} \; .
   \end{align}
   Using Jensen's inequality we get that
   \begin{align}
       \ep[X \sim Q]{\log \frac{\exp(\phi(x))}{p(X)}}
           \le \log \ep[X \sim Q]{\exp(\phi(x)) \frac{d P}{d Q}(x)}
           = \log \ep[X \sim P]{\exp(\phi(x))} \; .
   \end{align}
   Combining these two equations give us the inequality. Now we note that we have equality if and only
   if $e^{\phi(x)} \frac{d P}{d Q}(x)$ is constant, and since $Q$ is a probability distribution this
   is equivalent with $\frac{d Q}{d P}(x) = \frac{\exp(\phi(x))}{\ep[X \sim P]{\exp(\phi(X))}}$.
\end{proof}

We are now ready to prove \Cref{lem:hypercontractive-to-divergence}.
\begin{proof}[Proof of \Cref{lem:hypercontractive-to-divergence}]
   $\eqref{eq:general-hypercontractive} \Rightarrow \eqref{eq:general-divergence}$. Let $\mathcal{Q}_{XY}$ be a probability distribution which is
   absolutely continuous with respect to $\mathcal{P}_{XY}$. We set $\exp(\phi_X(x)) = \frac{d \mathcal{Q}_X}{d \mathcal{P}_X}(x)$ and $\exp(\phi_Y(y)) = \frac{d \mathcal{Q}_Y}{d \mathcal{P}_Y}(y)$.
   From this we see that $\ep[X \sim \mathcal{P}_X]{\exp(\phi_X(X))} = \ep[X \sim \mathcal{P}_x]{\frac{d \mathcal{Q}_X}{d \mathcal{P}_X}(X)} = \ep[X \sim \mathcal{Q}_X]{1} = 1$
   and similarly that $\ep[Y \sim \mathcal{P}_Y]{\exp(\phi_X(X))} = 1$, hence we have that
   $\frac{d \mathcal{Q}_X}{d \mathcal{P}_X}(x) = \frac{\exp(\phi_X(x))}{\ep[X \sim \mathcal{P}_X]{\exp(\phi_X(X))}}$ and
   $\frac{d \mathcal{Q}_Y}{d \mathcal{P}_Y}(y) = \frac{\exp(\phi_Y(y))}{\ep[Y \sim \mathcal{P}_Y]{\exp(\phi_X(Y))}}$.
   Using \eqref{eq:general-hypercontractive} we get that
   \begin{align}
       \ep[(X, Y) \sim \mathcal{P}_{XY}]{\exp(\phi_X(X) + \phi_Y(Y)} &\le \ep[X \sim \mathcal{P}_X]{\exp(r\phi_X(X))}^{1/r} \ep[Y \sim \mathcal{P}_Y]{\exp(s\phi_Y(Y)}^{1/s}
           &&\Leftrightarrow\\
       \log \ep[(X, Y) \sim \mathcal{P}_{XY}]{\exp(\phi_X(X) + \phi_Y(Y)} &\le \frac{\log \ep[X \sim \mathcal{P}_X]{\exp(r\phi_X(X))}}{r} + \frac{\log \ep[Y \sim \mathcal{P}_Y]{\exp(s\phi_Y(Y)}}{s} \; .
   \end{align}
   Using \Cref{lem:divergence-legendre} 3 times we have that 
   \begin{align}
       \log \ep[(X, Y) \sim \mathcal{P}_{XY}]{\exp(\phi_X(X) + \phi_Y(Y))} &\ge  \ep[(X, Y) \sim \mathcal{Q}_{XY}]{\phi_X(X) + \phi_Y(Y)} - \D{\mathcal{Q}_{XY}}{\mathcal{P}_{XY}} \\
       \log \ep[X \sim \mathcal{P}_{X}]{\exp(\phi_X(X))} &= \ep[X \sim \mathcal{Q}_{X}]{\phi_X(X)} - \D{\mathcal{Q}_{X}}{\mathcal{P}_{X}} \\
       \log \ep[X \sim \mathcal{P}_{Y}]{\exp(\phi_Y(Y))} &= \ep[Y \sim \mathcal{Q}_{Y}]{\phi_Y(Y)} - \D{\mathcal{Q}_{Y}}{\mathcal{P}_{Y}} \; ,
   \end{align}
   where the equalities hold since $\frac{d \mathcal{Q}_X}{d \mathcal{P}_X}(x) = \frac{\exp(\phi_X(x))}{\ep[X \sim \mathcal{P}_X]{\exp(\phi_X(X))}}$ and $\frac{d \mathcal{Q}_Y}{d \mathcal{P}_Y}(y) = \frac{\exp(\phi_Y(y))}{\ep[Y \sim \mathcal{P}_Y]{\exp(\phi_X(Y))}}$.
   We then get that
   \begin{align}
       &\log \ep[(X, Y) \sim \mathcal{P}_{XY}]{\exp(\phi_X(X) + \phi_Y(Y)} \le \frac{\log \ep[X \sim \mathcal{P}_X]{\exp(r\phi_X(X))}}{r} + \frac{\log \ep[Y \sim \mathcal{P}_Y]{\exp(s\phi_Y(Y)}}{s} &&\Rightarrow\\
       &\ep[(X, Y) \sim \mathcal{Q}_{XY}]{\phi_X(X) + \phi_Y(Y)} - \D{\mathcal{Q}_{XY}}{\mathcal{P}_{XY}}
           \\&\quad\le \ep[X \sim \mathcal{Q}_{X}]{\phi_X(X)} - \D{\mathcal{Q}_{X}}{\mathcal{P}_{X}} + \ep[Y \sim \mathcal{Q}_{Y}]{\phi_Y(Y)} - \D{\mathcal{Q}_{Y}}{\mathcal{P}_{Y}} &&\Leftrightarrow\\
       &\D{\mathcal{Q}_{XY}}{\mathcal{P}_{XY}} \ge \frac{\D{\mathcal{Q}_X}{\mathcal{P}_X}}{r} + \frac{\D{\mathcal{Q}_Y}{\mathcal{P}_Y}}{s} \; ,
   \end{align}
   which proves that $\eqref{eq:general-hypercontractive} \Rightarrow \eqref{eq:general-divergence}$.

   $\eqref{eq:general-divergence} \Rightarrow \eqref{eq:general-hypercontractive}$. Fix the functions $f : \Omega_X \to \R$
   and $g : \Omega_Y \to \R$. We note that $\ep[(X, Y) \sim \mathcal{P}_{XY}]{f(X) g(Y)} \le \ep[(X, Y) \sim \mathcal{P}_{XY}]{\abs{f}(X) \abs{g}(Y)}$
   hence we can assume that $f$ and $g$ are non-negative. We define $\phi_X(x) = \log(f(x))$ and $\phi_Y(x) = \log(g(x))$\footnote{We define $\log(0) = -\infty$ and $\exp(-\infty) = 0$.}.
   Then \eqref{eq:general-hypercontractive} is equivalent with
   \begin{align}
       \ep[(X, Y) \sim \mathcal{P}_{XY}]{\exp(\phi_X(X) + \phi_Y(Y)} &\le \ep[X \sim \mathcal{P}_X]{\exp(r\phi_X(X))}^{1/r} \ep[Y \sim \mathcal{P}_Y]{\exp(s\phi_Y(Y)}^{1/s}
           &&\Leftrightarrow\\
       \log \ep[(X, Y) \sim \mathcal{P}_{XY}]{\exp(\phi_X(X) + \phi_Y(Y)} &\le \frac{\log \ep[X \sim \mathcal{P}_X]{\exp(r\phi_X(X))}}{r} + \frac{\log \ep[Y \sim \mathcal{P}_Y]{\exp(s\phi_Y(Y)}}{s} \; .
   \end{align}
   We define the probability distribution $\mathcal{Q}_{XY}$ by $\frac{d \mathcal{Q}_{XY}}{d \mathcal{P}_{XY}}(x, y) = \frac{\exp(\phi_X(X) + \phi_Y(Y)}{\ep[(X, Y) \sim \mathcal{P}_{XY}]{\exp(\phi_X(X) + \phi_Y(Y)}}$.
   It is easy to see that $\mathcal{Q}_{XY}$ is indeed a probability distribution. Using \eqref{eq:general-divergence} we get that
   \begin{align}
       \D{\mathcal{Q}_{XY}}{\mathcal{P}_{XY}} \ge \frac{\D{\mathcal{Q}_X}{\mathcal{P}_X}}{r} + \frac{\D{\mathcal{Q}_Y}{\mathcal{P}_Y}}{s} \; .
   \end{align}
   Using \Cref{lem:divergence-legendre} 3 times we have that 
   \begin{align}
       \D{\mathcal{Q}_{XY}}{\mathcal{P}_{XY}} &= \ep[(X, Y) \sim \mathcal{Q}_{XY}]{\phi_X(X) + \phi_Y(Y)} - \log \ep[(X, Y) \sim \mathcal{P}_{XY}]{\exp(\phi_X(X) + \phi_Y(Y))} \\
       \D{\mathcal{Q}_{X}}{\mathcal{P}_{X}} &\ge \ep[X \sim \mathcal{Q}_{X}]{\phi_X(X)} - \log \ep[X \sim \mathcal{P}_{X}]{\exp(\phi_X(X))} \\
       \D{\mathcal{Q}_{Y}}{\mathcal{P}_{Y}} &\ge \ep[Y \sim \mathcal{Q}_{Y}]{\phi_Y(Y)} - \log \ep[X \sim \mathcal{P}_{Y}]{\exp(\phi_Y(Y))} \; ,
   \end{align}
   where the equality holds since $\frac{d \mathcal{Q}_{XY}}{d \mathcal{P}_{XY}}(x, y) = \frac{\exp(\phi_X(X) + \phi_Y(Y)}{\ep[(X, Y) \sim \mathcal{P}_{XY}]{\exp(\phi_X(X) + \phi_Y(Y)}}$.
   We then get that
   \begin{align}
       &\D{\mathcal{Q}_{XY}}{\mathcal{P}_{XY}} \ge \frac{\D{\mathcal{Q}_X}{\mathcal{P}_X}}{r} + \frac{\D{\mathcal{Q}_Y}{\mathcal{P}_Y}}{s} &&\Rightarrow\\
       &\ep[(X, Y) \sim \mathcal{Q}_{XY}]{\phi_X(X) + \phi_Y(Y)} - \log \ep[(X, Y) \sim \mathcal{P}_{XY}]{\exp(\phi_X(X) + \phi_Y(Y))}
           \\&\quad\ge \ep[X \sim \mathcal{Q}_{X}]{\phi_X(X)} - \log \ep[X \sim \mathcal{P}_{X}]{\exp(\phi_X(X))} + \ep[Y \sim \mathcal{Q}_{Y}]{\phi_Y(Y)} - \log \ep[X \sim \mathcal{P}_{Y}]{\exp(\phi_Y(Y))} &&\Leftrightarrow\\
       &\log \ep[(X, Y) \sim \mathcal{P}_{XY}]{\exp(\phi_X(X) + \phi_Y(Y)} \le \frac{\log \ep[X \sim \mathcal{P}_X]{\exp(r\phi_X(X))}}{r} + \frac{\log \ep[Y \sim \mathcal{P}_Y]{\exp(s\phi_Y(Y)}}{s} \; ,
   \end{align}
   which proves that $\eqref{eq:general-divergence} \Rightarrow \eqref{eq:general-hypercontractive}$.
\end{proof}

\subsection{Explicit Hypercontractive Bounds}\label{sec:explicit_hyp}
In this section we show how to relate the directed noise operator to the lower bounds of Oleszkiewicz~\cite{oleszkiewicz2003nonsymmetric}, thereby giving direct lower bounds for a number of cases for $s$ and $r$.
By \Cref{thm:lower_main} and \Cref{lem:hypercontractive-to-lower-bound} this is the dual to proving optimal values $(t_q,t_u)$ in our upper bound.

We start by with a standard lemma which shows that hypercontractivity of an operator
implies hypercontractivity of its adjoint.
\begin{lemma}\label{lem:dual-equivalence}
   Let $T : L_2(\Omega, \pi) \to L_2(\Omega, \pi')$ be an operator with $T^{*} : L_2(\Omega, \pi') \to L_2(\Omega, \pi)$ being its
   adjoint, and let $1 \le r, s < \infty$ with $r', s'$ being their convex conjugates.
   Then
   \begin{align}
      \norm{T f}{L_{s'}(\pi')} \le \norm{f}{L_r(\pi)}
   \end{align}
   holds for all $f \in L_2(\Omega, \pi)$, if and only if
   \begin{align}
      \langle T f, g \rangle_{L_2(\pi')} = \langle f, T^* g \rangle_{L_2(\pi)} \le \norm{f}{L_r(\pi)}\norm{g}{L_{s}(\pi')}
   \end{align}
   holds for all $f \in L_2(\Omega, \pi)$ and all $g \in L_2(\Omega, \pi')$, if and only if
   \begin{align}
      \norm{T^* g}{L_{r'}(\pi)} \le \norm{g}{L_{s}(\pi')}
   \end{align}
   holds for all $g \in L_2(\Omega, \pi')$.
\end{lemma}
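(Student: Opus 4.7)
The plan is to prove the three-way equivalence by a standard Hölder/duality argument, treating $T$ and $T^{*}$ symmetrically so that only one direction of each implication needs substantive work.

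First I would show $(1) \Rightarrow (2)$. Given the hypothesis $\|Tf\|_{L_{s'}(\pi')} \le \|f\|_{L_r(\pi)}$, I apply Hölder's inequality in $L_2(\Omega,\pi')$ with the conjugate pair $(s',s)$:
\begin{align}
\langle Tf, g\rangle_{L_2(\pi')} \le \|Tf\|_{L_{s'}(\pi')}\,\|g\|_{L_s(\pi')} \le \|f\|_{L_r(\pi)}\,\|g\|_{L_s(\pi')}.
\end{align}
The equality $\langle Tf,g\rangle_{L_2(\pi')}=\langle f,T^{*}g\rangle_{L_2(\pi)}$ is just the definition of the adjoint. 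By the symmetric role of $T$ and $T^{*}$, the implication $(3)\Rightarrow (2)$ follows in exactly the same manner, with the roles of $(f,g)$, $(r,s)$ and $(\pi,\pi')$ interchanged.

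Next I would establish the converse directions using $L_p$-duality. For $(2)\Rightarrow (1)$, I use the well-known fact that for any $h\in L_{s'}(\pi')$,
\begin{align}
\|h\|_{L_{s'}(\pi')} = \sup\{\langle h,g\rangle_{L_2(\pi')} : g\in L_s(\pi'),\ \|g\|_{L_s(\pi')}\le 1\}.
\end{align}
Applying this with $h=Tf$ and invoking (2) for every admissible $g$ yields $\|Tf\|_{L_{s'}(\pi')} \le \|f\|_{L_r(\pi)}$. The implication $(2)\Rightarrow (3)$ is entirely analogous, taking the supremum over $f$ with $\|f\|_{L_r(\pi)}\le 1$ and using the dual representation of $\|T^{*}g\|_{L_{r'}(\pi)}$.

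The only minor subtlety — and not really an obstacle — is making sure the $L_p$-duality formula applies in the required generality; for $1\le r,s<\infty$ the conjugates $r',s'\in(1,\infty]$ are well defined and the dual pairing identifies $L_{p'}$ with $(L_p)^{*}$ isometrically, so the supremum representation holds for any $\sigma$-finite $\pi,\pi'$ (which the underlying spaces here are). Nothing in the argument uses any property particular to the directed noise operator, so this is a clean functional-analytic lemma that can be invoked afterward for $T=T^{p_1\to p_2}_\rho$ together with \Cref{lem:directed-noise} to convert hypercontractive inequalities into their adjoint form.
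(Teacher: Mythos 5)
Your proposal is correct and follows essentially the same route as the paper: Hölder's inequality gives the forward implications $(1)\Rightarrow(2)$ and $(3)\Rightarrow(2)$, and the dual-norm representation $\norm{h}{L_{p'}} = \sup_{\norm{g}{L_p}\le 1}\langle h,g\rangle$ recovers $(1)$ and $(3)$ from $(2)$. Your remark on the validity of $L_p$-duality for $r,s\in[1,\infty)$ is a reasonable extra precaution but does not change the argument.
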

\begin{proof}
   We assume that $\norm{T f}{L_{s'}(\pi')} \le \norm{f}{L_r(\pi)}$ holds for all $f \in L_2(\Omega, \pi)$.
   Let $f \in L_2(\Omega, \pi)$ and $g \in L_2(\Omega, \pi')$ then by Hölder's inequality we have that
   \begin{align}
      \langle T f, g \rangle_{L_2(\pi')} \le \norm{T f}{L_{s'}(\pi')}\norm{g}{L_{s}(\pi')} \le \norm{f}{L_r(\pi)}\norm{g}{L_s(\pi')} \; .
   \end{align}

   Similarly, we assume that $\norm{T^* g}{L_{r'}(\pi)} \le \norm{g}{L_{s}(\pi')}$ holds for all $g \in L_2(\Omega, \pi')$.
   Let $f \in L_2(\Omega, \pi)$ and $g \in L_2(\Omega, \pi')$ then by Hölder's inequality we have that
   \begin{align}
      \langle f, T^* g \rangle_{L_2(\pi)} \le \norm{f}{L_{r}(\pi')}\norm{T^* g}{L_{r'}(\pi')} \le \norm{f}{L_r(\pi)}\norm{g}{L_{s'}(\pi')} \; .
   \end{align}
 
   Finally, we assume that $\langle T f, g \rangle_{L_2(\pi')} \le \norm{f}{L_r(\pi)}\norm{g}{L_{s}(\pi')}$ holds for
   all $f \in L_2(\Omega, \pi)$ and all $g \in L_2(\Omega, \pi')$. Let $f \in L_2(\Omega, \pi)$ then using that
   $L_{s}(\pi')$ is the dual norm of $L_{s'}(\pi')$ we get that
   \begin{align}
      \norm{T f}{L_{s'}(\pi')}
         = \sup_{\norm{g}{L_{s}(\pi')} = 1} \langle T f, g \rangle_{L_2(\pi')}
         \le \sup_{\norm{g}{L_{s}(\pi')} = 1} \norm{f}{L_s(\pi)}\norm{g}{L_{r}(\pi')}
         = \norm{f}{L_r(\pi)} \; .
   \end{align}
   Similarly, let $g \in L_2(\Omega, \pi')$ then using that $L_{r}(\pi)$ is the dual norm of $L_{r'}(\pi)$ we get that
   \begin{align}
      \norm{T^* g}{L_{r'}(\pi)}
         = \sup_{\norm{f}{L_{r}(\pi)} = 1} \langle f, T^* g \rangle_{L_2(\pi)}
         \le \sup_{\norm{f}{L_{r}(\pi')} = 1} \norm{f}{L_r(\pi)}\norm{g}{L_{s}(\pi')}
         = \norm{g}{L_{s}(\pi')} \; ,
   \end{align}
   which finishes the proof.
\end{proof}
 
Our hypercontractive results will be based on the tight hypercontractive inequality by
Oleszkiewicz~\cite{oleszkiewicz2003nonsymmetric}.
\begin{theorem}[\cite{oleszkiewicz2003nonsymmetric}]\label{thm:Oleszkiewicz}
   Let $p \in (0, \tfrac{1}{2}) \cup (\tfrac{1}{2}, 1)$ and $1 \le r \le 2$ then for any
   function $f \in L_2(\set{0, 1}^d, \pi_p^{\otimes d})$ we have that
   \begin{align}
      \norm{T^{(p)}_\rho f}{L_2(p)} \le \norm{f}{L_r(p)} \; ,F
   \end{align}
   where $\rho = p^{-1/2}(1 - p)^{-1/2}\sqrt{\frac{(1 - p)^{2 - 2/r} - p^{2 - 2/r}}{p^{-2/r} - (1 - p)^{-2/r}}}$
   which is best possible.
\end{theorem}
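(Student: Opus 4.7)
The plan is to reduce to the single-variable case by tensorization and then solve the resulting two-variable optimization.

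\textbf{Reduction to $d=1$.} Since $T^{(p)}_\rho$ is self-adjoint on $L_2(\{0,1\}^d, \pi_p^{\otimes d})$, applying \Cref{lem:dual-equivalence} with $s' = 2$ rewrites the claimed bound $\|T^{(p)}_\rho f\|_{L_2(p)} \le \|f\|_{L_r(p)}$ as the two-function inequality $\langle T^{(p)}_\rho f, g\rangle_{L_2(p)} \le \|f\|_{L_r(p)}\|g\|_{L_2(p)}$. The Hypercontractive Induction Theorem (\Cref{lem:hypercontractive-induction}) tensorizes exactly this form across product distributions, so it suffices to prove the case $d=1$, where $\mathcal{P}_{XY}$ is the $\rho$-correlated coupling of two copies of $\pi_p$.

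\textbf{The one-dimensional problem.} For $f:\{0,1\}\to\R$ with $a = f(0)$ and $b = f(1)$, the fact that $T^{(p)}_\rho$ fixes the mean and scales the orthogonal Fourier component by $\rho$ gives
\begin{align}
\|T^{(p)}_\rho f\|_{L_2(p)}^2 = \big((1-p)a + pb\big)^2 + \rho^2 p(1-p)(b-a)^2,
\end{align}
so the inequality to prove reads
\begin{align}
\big((1-p)a + pb\big)^2 + \rho^2 p(1-p)(b-a)^2 \le \big((1-p)|a|^r + p|b|^r\big)^{2/r}.
\end{align}
A short sign-flip argument (fixing $|a|,|b|$ and checking that matching signs maximize the left-hand side) reduces to $a, b \ge 0$, and by positive homogeneity of degree $2$ one may normalize $(1-p)a^r + pb^r = 1$. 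The sharp $\rho$ is then determined by the Lagrange optimality conditions: at the extremal pair $(a^*, b^*)$ on this constraint curve, setting the gradient of the left-hand side proportional to $\nabla\big((1-p)a^r + pb^r\big)$ yields a two-equation system whose elimination, after elementary algebra, reproduces the closed-form expression for $\rho$ in the statement.

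\textbf{The main obstacle.} The delicate step is showing that this Lagrange value actually works globally on the constraint curve, not merely at the identified extremal. For $p = 1/2$ the inequality can be obtained from an infinitesimal expansion at constant functions, yielding the classical $\rho^2 = r-1$; but for $p \ne 1/2$ the infinitesimal test still gives $\rho^2 \le r-1$, which is strictly weaker than the value claimed, so a genuinely global argument is needed. Following Oleszkiewicz, I would parametrize the level set $\{(a,b) : (1-p)a^r + p b^r = 1\}\cap[0,\infty)^2$ by a single variable, define the deficit $\Psi(a,b) := \|f\|_{L_r(p)}^2 - \|T^{(p)}_\rho f\|_{L_2(p)}^2$ along this curve, and show $\Psi \ge 0$ by checking that $\Psi$ has exactly one interior critical point (the Lagrange extremal, where $\Psi = 0$) and is nonnegative at the two endpoints $a = 0$ and $b = 0$. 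The endpoint checks reduce to elementary one-variable inequalities in $p$ and $r$, while the interior sign analysis is where the exact formula for $\rho$ enters essentially; this is the only step of the argument that is more than routine.
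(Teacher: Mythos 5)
The paper offers no proof of this statement to compare against: \Cref{thm:Oleszkiewicz} is imported as a black box from Oleszkiewicz's paper, so your proposal can only be judged on its own terms.

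On those terms there is a genuine gap. Your reduction is fine: passing to the two-function form via \Cref{lem:dual-equivalence} with $s'=2$, tensorizing with \Cref{lem:hypercontractive-induction}, the Fourier computation $\|T^{(p)}_\rho f\|_{L_2(p)}^2=((1-p)a+pb)^2+\rho^2p(1-p)(b-a)^2$, and the sign-flip/homogeneity normalizations (valid since $\rho\le 1$) are all correct and standard. But after that reduction, the entire content of the theorem is the global two-point inequality
\begin{align}
   ((1-p)a+pb)^2+\rho^2\,p(1-p)(b-a)^2 \;\le\; \bigl((1-p)a^r+pb^r\bigr)^{2/r}
   \qquad\text{for all } a,b\ge 0,
\end{align}
with the specific $\rho$ of the statement, together with the assertion that this $\rho$ is best possible. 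Your outline replaces this with a plan: that the deficit $\Psi$ along the constraint curve ``has exactly one interior critical point (the Lagrange extremal, where $\Psi=0$) and is nonnegative at the two endpoints.'' None of this is established. You correctly note that the second-order test at constants only gives $\rho^2\le r-1$, which is strictly weaker than the claimed value for $p\ne 1/2$, so the binding configuration is non-infinitesimal; but that is exactly why the uniqueness of the interior critical point, the identification of the Lagrange value with the claimed closed form, and the sign of $\Psi$ globally are the delicate heart of Oleszkiewicz's argument rather than ``routine'' or ``elementary'' checks. It is also not verified that equality is in fact attained at an interior critical point of your parametrization (the endpoint functions $a=0$ or $b=0$ are easily checked not to be extremal, but where equality actually occurs is part of what must be proved), so even the qualitative picture underlying your plan is unsubstantiated. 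Finally, the ``best possible'' half of the theorem is untouched: writing down Lagrange conditions does not by itself produce the family of functions witnessing that any larger $\rho$ fails. As it stands, the proposal is a correct setup plus a sketch that defers the theorem's core inequality and its sharpness to the very reference being proved.
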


From this we get following tight hypercontractive inequalities for $T^{p_1 \to p_2}_\rho$.
\begin{corollary}\label{cor:directed-2-norm}
   Let $p_1, p_2 \in (0, \tfrac{1}{2}) \cup (\tfrac{1}{2}, 1)$ and $1 \le r \le 2$ then for any
   function $f \in L_2(\set{0, 1}^d, \pi_{p_1}^{\otimes d})$ we have that
   \begin{align}
      \norm{T^{p_1 \to p_2}_{\rho} f}{L_2(p_2)} \le \norm{f}{L_{r}(p_1)} \; ,
   \end{align}
   where $\rho = p_1^{-1/2}(1 - p_1)^{-1/2}\sqrt{\frac{(1 - p_1)^{2 - 2/r} - p_1^{2 - 2/r}}{p_1^{-2/r} - (1 - p_1)^{-2/r}}}$
   which is best possible.
\end{corollary}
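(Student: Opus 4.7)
The plan is to reduce the directed-noise hypercontractivity to the classical (undirected) $p_1$-biased hypercontractivity of Oleszkiewicz, exploiting that $T^{p_1\to p_2}_\rho$ and $T^{(p_1)}_\rho$ act identically on Fourier coefficients — only the orthonormal basis used to \emph{read off} the output changes. Since we are measuring the output in the $L_2$ norm, which is basis-independent via Parseval, this change of basis is invisible.

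Concretely, the first step is to expand the left-hand side via Parseval's identity in the $p_2$-biased space:
\begin{align}
   \|T_\rho^{p_1\to p_2} f\|_{L_2(p_2)}^2 = \sum_{S\subseteq [d]} \widehat{T_\rho^{p_1\to p_2}f}^{(p_2)}(S)^2.
\end{align}
By \Cref{lem:directed-noise}, each coefficient equals $\rho^{|S|}\hat f^{(p_1)}(S)$, and this formula does not depend on $p_2$ at all. Hence the sum equals
\begin{align}
   \sum_{S\subseteq [d]} \rho^{2|S|} \hat f^{(p_1)}(S)^2 = \sum_{S\subseteq [d]} \widehat{T^{(p_1)}_\rho f}^{(p_1)}(S)^2 = \|T^{(p_1)}_\rho f\|_{L_2(p_1)}^2,
\end{align}
applying Parseval in the $p_1$-biased space in the last equality. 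The second step is then to apply \Cref{thm:Oleszkiewicz} with the stated $\rho$ and with $p=p_1$, giving $\|T_\rho^{(p_1)}f\|_{L_2(p_1)} \le \|f\|_{L_r(p_1)}$, which combined with the identity above yields the claimed inequality.

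For sharpness, I would argue that any smaller $\rho$ would contradict the optimality in \Cref{thm:Oleszkiewicz}: given an extremal function $f^\star$ for the undirected inequality at the threshold $\rho$, the same $f^\star$ (viewed as an element of $L_2(\{0,1\}^d,\pi_{p_1}^{\otimes d})$, independent of $p_2$) saturates our directed inequality, since both sides of our inequality reduce to the two sides of Oleszkiewicz's inequality at $p=p_1$. I do not expect any real obstacle here; the only thing to be careful about is that the Fourier formula in \Cref{lem:directed-noise} uses \emph{different} orthonormal bases on the input and output sides, so the identification $\|T^{p_1\to p_2}_\rho f\|_{L_2(p_2)} = \|T^{(p_1)}_\rho f\|_{L_2(p_1)}$ is genuinely using Parseval twice (once in each biased space) rather than any pointwise equality of functions.
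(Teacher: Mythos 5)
Your proposal is correct and follows essentially the same route as the paper: Parseval in the $p_2$-biased space, the Fourier identity $\widehat{T^{p_1\to p_2}_\rho f}^{(p_2)}(S)=\rho^{|S|}\hat f^{(p_1)}(S)$ from \Cref{lem:directed-noise}, Parseval again in the $p_1$-biased space to get $\|T^{p_1\to p_2}_\rho f\|_{L_2(p_2)}=\|T^{(p_1)}_\rho f\|_{L_2(p_1)}$, and then \Cref{thm:Oleszkiewicz}. Your explicit remark on sharpness (the same extremal function works since both sides coincide with the undirected case) is a welcome clarification of what the paper leaves implicit.
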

\begin{proof}
   Using the Parseval-Plancherel identity and \Cref{lem:directed-noise} we get that
   \begin{align}
      \norm{T^{p_1 \to p_2}_{\rho} f}{L_2(p_2)}^2
         = \sum_{S \subseteq [d]}\widehat{T^{p_1 \to p_2}_{\rho} f}^{(p_2)}(S)^2
         = \sum_{S \subseteq [d]} \rho^{2\abs{S}} \widehat{T^{p_1 \to p_2}_{\rho} f}^{(p_2)}(S)^2
         = \norm{T^{(p_1)} f}{L_2(p_1)}^2 \; ,
   \end{align}
   hence the result follows from \Cref{thm:Oleszkiewicz}.
\end{proof}
\begin{corollary}\label{cor:directed-p-to-2}
   Let $p_1, p_2 \in (0, \tfrac{1}{2}) \cup (\tfrac{1}{2}, 1)$ and $1 \le s \le 2 $ with $s'$ the convex conjugate
   of $s$, then for any function $f \in L_2(\set{0, 1}^d, \pi_{p_1}^{\otimes d})$ we have that
   \begin{align}
      \norm{T^{p_1 \to p_2}_{\rho} f}{L_{s'}(p_2)} \le \norm{f}{L_2(p_1)} \; ,
   \end{align}
   where $\rho = p_2^{-1/2}(1 - p_2)^{-1/2}\sqrt{\frac{(1 - p_2)^{2 - 2/s} - p_2^{2 - 2/s}}{p_2^{-2/s} - (1 - p_2)^{-2/s}}}$
   which is best possible.
\end{corollary}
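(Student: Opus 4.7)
The plan is to deduce this corollary from \Cref{cor:directed-2-norm} by a straightforward duality argument using \Cref{lem:dual-equivalence} together with the fact (\Cref{lem:directed-noise}) that $T^{p_2 \to p_1}_\rho$ is the adjoint of $T^{p_1 \to p_2}_\rho$.

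First, I would apply \Cref{cor:directed-2-norm} with the roles of $p_1$ and $p_2$ swapped. That corollary, when stated for the operator $T^{p_2 \to p_1}_\rho \colon L_2(\{0,1\}^d,\pi_{p_2}^{\otimes d}) \to L_2(\{0,1\}^d,\pi_{p_1}^{\otimes d})$, reads: for every $1 \le s \le 2$ and every $g \in L_2(\{0,1\}^d,\pi_{p_2}^{\otimes d})$,
\begin{equation*}
   \|T^{p_2 \to p_1}_\rho g\|_{L_2(p_1)} \le \|g\|_{L_s(p_2)},
\end{equation*}
where $\rho = p_2^{-1/2}(1-p_2)^{-1/2}\sqrt{\tfrac{(1-p_2)^{2-2/s}-p_2^{2-2/s}}{p_2^{-2/s}-(1-p_2)^{-2/s}}}$; note that this is precisely the value of $\rho$ appearing in the statement being proved.

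Second, I would invoke \Cref{lem:dual-equivalence} with $T = T^{p_2 \to p_1}_\rho$, $\pi = \pi_{p_2}^{\otimes d}$, $\pi' = \pi_{p_1}^{\otimes d}$, and the parameter ``$s'$'' in that lemma taken to be $2$ (so that ``$s$'' in the lemma equals $2$) and the parameter ``$r$'' taken to be the present $s$. The lemma tells us that the inequality $\|Tg\|_{L_2(\pi')} \le \|g\|_{L_s(\pi)}$ for all $g$ is equivalent to $\|T^* f\|_{L_{s'}(\pi)} \le \|f\|_{L_2(\pi')}$ for all $f$, where $s'$ is the conjugate of $s$. Since \Cref{lem:directed-noise} identifies $T^* = T^{p_1 \to p_2}_\rho$, substituting yields exactly
\begin{equation*}
   \|T^{p_1 \to p_2}_\rho f\|_{L_{s'}(p_2)} \le \|f\|_{L_2(p_1)},
\end{equation*}
as desired.

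Finally, for the tightness claim: any improvement on the constant $\rho$ in this corollary would, via the same duality argument run in reverse, yield an improvement for \Cref{cor:directed-2-norm} with $p_1,p_2$ swapped, contradicting the best-possible nature of Oleszkiewicz's bound (\Cref{thm:Oleszkiewicz}). There is no real obstacle here, since all the heavy lifting already sits in Oleszkiewicz's theorem and in the observation (used in the proof of \Cref{cor:directed-2-norm}) that $\|T^{p_1 \to p_2}_\rho f\|_{L_2(p_2)} = \|T^{(p_1)}_\rho f\|_{L_2(p_1)}$ via Parseval. The only thing to verify carefully is that the $\rho$ obtained from swapping $p_1 \leftrightarrow p_2$ in \Cref{cor:directed-2-norm} matches the $\rho$ stated here; this is immediate from the formula.
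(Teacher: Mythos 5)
Your proposal is correct and follows essentially the same route as the paper: reduce the claim via \Cref{lem:dual-equivalence} (using that $T^{p_1\to p_2}_\rho$ is the adjoint of $T^{p_2\to p_1}_\rho$) to the inequality $\norm{T^{p_2\to p_1}_\rho g}{L_2(p_1)} \le \norm{g}{L_s(p_2)}$, which is \Cref{cor:directed-2-norm} with $p_1$ and $p_2$ swapped, and the optimality of $\rho$ transfers back through the same equivalence. Your parameter bookkeeping in the duality lemma and the matching of the stated $\rho$ are both accurate, so there is nothing to fix.
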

\begin{proof}
   By \Cref{lem:dual-equivalence} we get that the result is true if and only if
   \begin{align}
      \norm{T^{p_2 \to p_1}_{\rho} g}{L_2(p_1)} \le \norm{g}{L_s(p_2)} \; ,
   \end{align}
   for all functions $g \in L_2(\set{0, 1}, \pi_{p_2})$. Now the result follows by
   using \Cref{cor:directed-2-norm}.
\end{proof}

We also get a hypercontractive inequality for the standard noise operator $T^{(p)}_\rho$.
\begin{corollary}\label{cor:hypercontractive-balanced}
   Let $p \in (0, \tfrac{1}{2}) \cup (\tfrac{1}{2}, 1)$ and $1 \le r \le 2$ with convex conjugate $r'$,
   then for any
   function $f \in L_2(\set{0, 1}^d, \pi_p^{\otimes d})$ we have that
   \begin{align}
      \norm{T^{(p)}_\rho f}{L_{r'}(p)} \le \norm{f}{L_r(p)} \; ,
   \end{align}
   where $\rho = p (1 - p) \frac{(1 - p)^{2 - 2/r} - p^{2 - 2/r}}{p^{-2/r} - (1 - p)^{-2/r}}$
   which is best possible.
\end{corollary}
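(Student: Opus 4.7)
The plan is to factor the standard noise operator $T^{(p)}_\rho$ as a composition of two directed noise operators passing through the intermediate exponent $2$, using the semigroup property from \Cref{lem:directed-noise}, and then apply the two hypercontractivity corollaries (\Cref{cor:directed-2-norm} and \Cref{cor:directed-p-to-2}) already established for the directed operator. Since $T^{(p)}_\rho = T^{p \to p}_\rho$ and $T^{p \to p}_{\rho_2} \circ T^{p \to p}_{\rho_1} = T^{p \to p}_{\rho_1\rho_2}$, any factorization $\rho = \rho_1\rho_2$ lets us chain bounds of the form $L_r(p) \to L_2(p) \to L_{r'}(p)$.

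Concretely, first I would apply \Cref{cor:directed-2-norm} with $p_1 = p_2 = p$ and the given $r$ to obtain $\|T^{(p)}_{\rho_1} f\|_{L_2(p)} \le \|f\|_{L_r(p)}$ for
\[
   \rho_1 = p^{-1/2}(1-p)^{-1/2}\sqrt{\tfrac{(1-p)^{2-2/r} - p^{2-2/r}}{p^{-2/r} - (1-p)^{-2/r}}}.
\]
Next, I would apply \Cref{cor:directed-p-to-2} with $p_1 = p_2 = p$ and $s = r$ (so $s' = r'$) to obtain $\|T^{(p)}_{\rho_2} g\|_{L_{r'}(p)} \le \|g\|_{L_2(p)}$ for an identical expression $\rho_2 = \rho_1$. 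Setting $g = T^{(p)}_{\rho_1} f$ and invoking the semigroup identity of \Cref{lem:directed-noise} gives
\[
   \|T^{(p)}_{\rho_1\rho_2} f\|_{L_{r'}(p)} = \|T^{(p)}_{\rho_2}\,T^{(p)}_{\rho_1} f\|_{L_{r'}(p)} \le \|T^{(p)}_{\rho_1} f\|_{L_2(p)} \le \|f\|_{L_r(p)},
\]
so that the chained bound holds at noise level $\rho = \rho_1\rho_2 = \rho_1^2$.

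The only remaining task is the algebraic simplification: squaring the common value $\rho_1$ collapses the square root and multiplies the $(p(1-p))^{-1/2}$ prefactors, yielding exactly the closed-form expression for $\rho$ stated in the corollary. I would also briefly justify that the bound is best possible: if a strictly larger $\rho$ were admissible for the symmetric $L_r \to L_{r'}$ map, then either \Cref{cor:directed-2-norm} or \Cref{cor:directed-p-to-2} would be improvable in the symmetric case, contradicting the sharpness inherited from \Cref{thm:Oleszkiewicz}.

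The main obstacle is not conceptual but bookkeeping: I must verify that the two halves of the factorization produce exactly the same constant $\rho_1 = \rho_2$ when $p_1 = p_2 = p$ and $s = r$ (which in turn relies on the fact that the Oleszkiewicz constant for the map into $L_2$ and its adjoint into $L_{r'}$ coincide at equal bias). Once the two corollaries are matched at this symmetric point, the remainder of the proof is a one-line composition. No new tools beyond \Cref{lem:directed-noise}, \Cref{cor:directed-2-norm}, and \Cref{cor:directed-p-to-2} should be required.
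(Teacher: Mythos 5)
Your proof of the inequality itself is correct and follows essentially the same route as the paper's: both arguments factor $T^{(p)}_\rho = T^{(p)}_{\sqrt{\rho}} T^{(p)}_{\sqrt{\rho}}$ and pass through $L_2$, invoking Oleszkiewicz's sharp $L_r \to L_2$ bound twice. The paper phrases this via \Cref{lem:dual-equivalence}, reducing to the bilinear estimate $\langle T^{(p)}_\rho f, g \rangle_{L_2(p)} \le \|f\|_{L_r(p)} \|g\|_{L_r(p)}$ and applying Cauchy--Schwarz to $\langle T^{(p)}_{\sqrt{\rho}} f, T^{(p)}_{\sqrt{\rho}} g \rangle_{L_2(p)}$, whereas you chain \Cref{cor:directed-2-norm} with \Cref{cor:directed-p-to-2} through the semigroup law of \Cref{lem:directed-noise}; since \Cref{cor:directed-p-to-2} is itself the dualization of \Cref{cor:directed-2-norm}, the two routes are the same decomposition in different clothing. (One bookkeeping remark: squaring the Oleszkiewicz constant gives the prefactor $(p(1-p))^{-1}$ rather than the $p(1-p)$ printed in the corollary --- e.g.\ at $r=2$ the squared constant is $1$, as it must be --- and the paper's own derivation yields the same value as yours, so your algebra agrees with the paper's proof even though it does not literally reproduce the printed formula.)

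The one genuine gap is in the ``best possible'' claim. Sharpness of each factor in a composition does not by itself imply sharpness of the composite: a priori the symmetric $L_r \to L_{r'}$ bound could hold at a noise level strictly larger than $\rho_1\rho_2$ even though neither \Cref{cor:directed-2-norm} nor \Cref{cor:directed-p-to-2} can be improved, so the step ``otherwise one of the two corollaries would be improvable'' is a non sequitur as stated. What closes it --- and is exactly the paper's argument --- is the self-adjoint square structure: if $\|T^{(p)}_\rho f\|_{L_{r'}(p)} \le \|f\|_{L_r(p)}$ for all $f$, then H\"older gives $\langle T^{(p)}_\rho f, f \rangle_{L_2(p)} \le \|T^{(p)}_\rho f\|_{L_{r'}(p)} \|f\|_{L_r(p)} \le \|f\|_{L_r(p)}^2$, while Parseval and the semigroup law give $\langle T^{(p)}_\rho f, f \rangle_{L_2(p)} = \|T^{(p)}_{\sqrt{\rho}} f\|_{L_2(p)}^2$; hence the symmetric bound at level $\rho$ forces the $L_r \to L_2$ bound at level $\sqrt{\rho}$, and the sharpness in \Cref{thm:Oleszkiewicz} then caps $\sqrt{\rho}$ by the Oleszkiewicz constant. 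Add this reduction and your optimality argument is complete.
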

\begin{proof}
   Using \Cref{lem:dual-equivalence} we get the result holds if and only if
   \begin{align}
      \langle T^{(p)}_\rho f, g \rangle_{L_2(p)} \le \norm{f}{L_r(p)}\norm{g}{L_r(p)} \; ,
   \end{align}
   holds for all $f, g \in L_2(\set{0, 1}^d, \pi_p^{\otimes d})$. First we note that the result
   is true by using Cauchy-Schwartz and \Cref{thm:Oleszkiewicz}
   \begin{align}
      \langle T^{(p)}_\rho f, g \rangle_{L_2(p)}
         = \langle T^{(p)}_{\sqrt{\rho}} f, T^{(p)}_{\sqrt{\rho}} g \rangle_{L_2(p)}
         \le \norm{T^{(p)}_{\sqrt{\rho}} f}{L_2{p}} \norm{T^{(p)}_{\sqrt{\rho}} g}{L_2{p}}
         \le \norm{f}{L_r{p}} \norm{f}{L_r{p}}
   \end{align}
   Now to see that $\rho$ is best possible we set $g = f$ which give us that
   \begin{align}
      \norm{T^{(p)}_{\sqrt{\rho}} f}{L_2{p}}^2
         = \langle T^{(p)}_\rho f, f \rangle_{L_2(p)}
         \le \norm{f}{L_r{p}}^2 \; ,
   \end{align}
   so \Cref{thm:Oleszkiewicz} gives that $\rho$ is best possible.
\end{proof}

We will use \Cref{cor:hypercontractive-balanced} to show that setting $(t_q, t_u) = (1 - w, 1 - w)$ is an
optimal threshold for the $(w, w, w_1, w^2)$-GapSS problem. First of we note that
$\rho = p (1 - p) \frac{(1 - p)^{2 - 2/r} - p^{2 - 2/r}}{p^{-2/r} - (1 - p)^{-2/r}}$
can be rewritten as $r = \frac{2 \log\tau}{\log \frac{\rho + \tau}{\rho + \tau^{-1}}}$
where $\tau = \frac{1 - p}{p}$. Using \Cref{lem:hypercontractive-to-lower-bound} we get that
$\rho = \frac{w_1 - w^2}{w(1 - w)}$, $\tau = \frac{1 - w}{w}$, and that
\begin{align}
   \frac{1}{r}\rho_q + \frac{1}{r'}\rho_u \ge \frac{2}{r} - 1 \; .
\end{align}
Now we have that $\rho_q = \rho_u = \frac{\log \frac{w (w_1 - 2w + 1)}{w_1(1 - w)}}{\log \frac{w}{1 - w}}$,
and we find that
\begin{align}
   \frac{2}{r} - 1
      = \frac{\log \frac{\rho + \tau}{\rho + \tau^{-1}}}{\log \tau} - 1
      = \frac{\log  \tau^{-1}\frac{\rho + \tau}{\rho + \tau^{-1}}}{\log \tau} \; .
\end{align}
It is then easy to check that $\tau^{-1}\frac{\rho + \tau}{\rho + \tau^{-1}} = \frac{w_1(1 - w)}{w(w_1 - 2w + 1)}$,
which then shows that $(t_q, t_u) = (1 - w, 1 - w)$ is an

\section{Other Algorithms}

We show two results that, while orthogonal to Supermajorities, help us understand them and how they fit within the space of Similarity Search algorithms.

The first result is an optimal affine embedding of sets onto the sphere.
This result is interesting in its own right, as it results in an algorithm that is in many cases better than the state of the art, and which can be implemented very easily in systems that can already solve Euclidean or Spherical Nearest Neighbours.
The result gives a simple, general condition a Spherical LSH scheme must meet for the embedding to be optimal, and we show that both SimHash and Spherical LSH meets it.

The second result is also a new algorithm.
In particular, it is a mix between Chosen Path and MinHash, which always achieves $\rho$ values lower than both of them.
It is in a sense a simple answer to the open problem in~\cite{tobias2016} about how to beat MinHash consistently.
More interesting though, is that it sheds light on what makes Supermajorities work:
It balances the amount of information pulled from sets vs. their complements.
The proof is also conceptually interesting, since it proves that it is never advantageous to combine multiple Locality Sensitive Filter families.

\subsection{Embedding onto the Sphere}\label{sec:embed}
\begin{figure}
   \centering
   \begin{subfigure}[t]{.47\textwidth}
      \centering
      \includegraphics[scale=0.5]{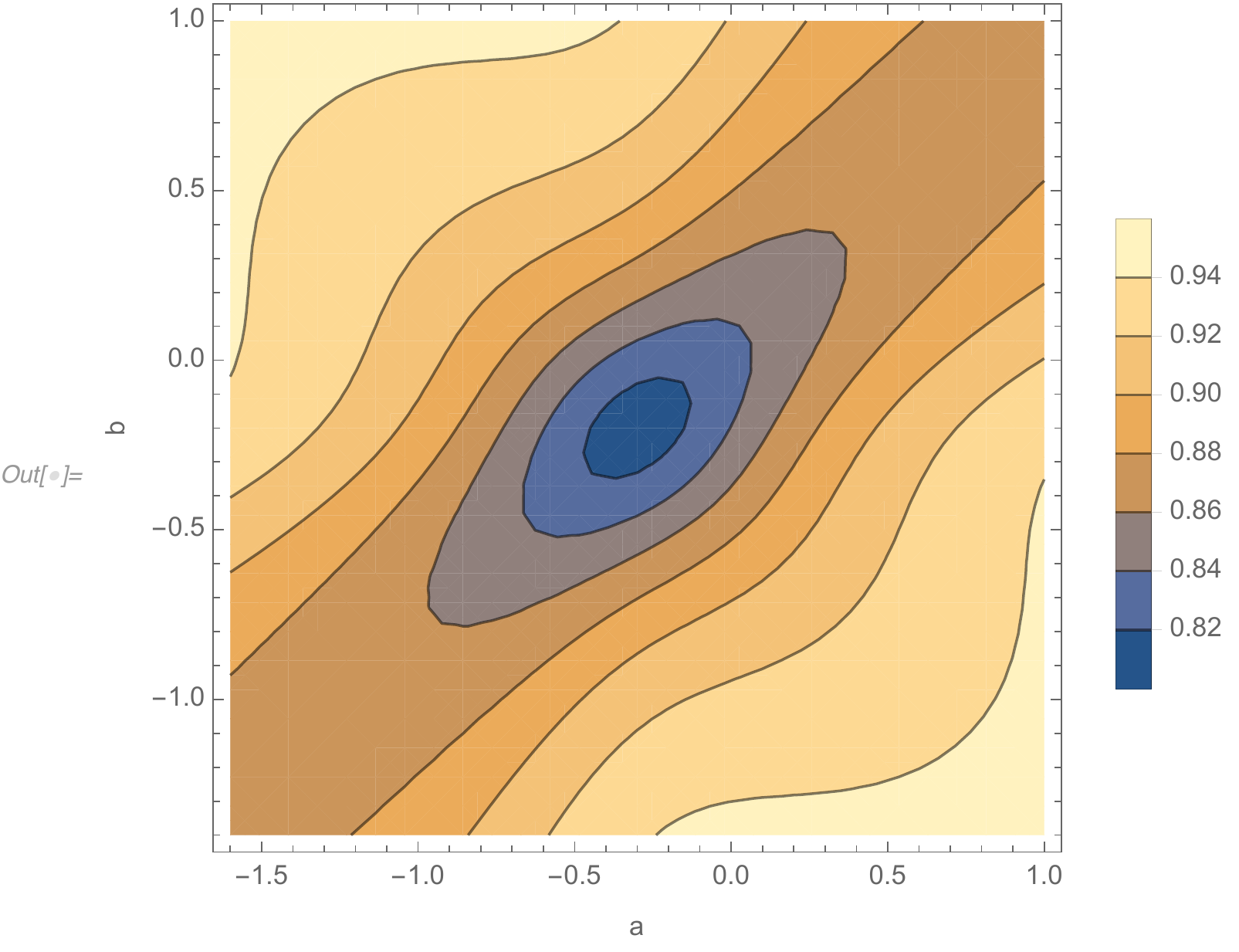}
      \caption{Query time/space exponent, $\rho$, for the SimHash algorithm~\cite{charikar2002new}.}
    \end{subfigure}
    \hspace{.5em}
   \begin{subfigure}[t]{.47\textwidth}
      \centering
      \includegraphics[scale=0.5]{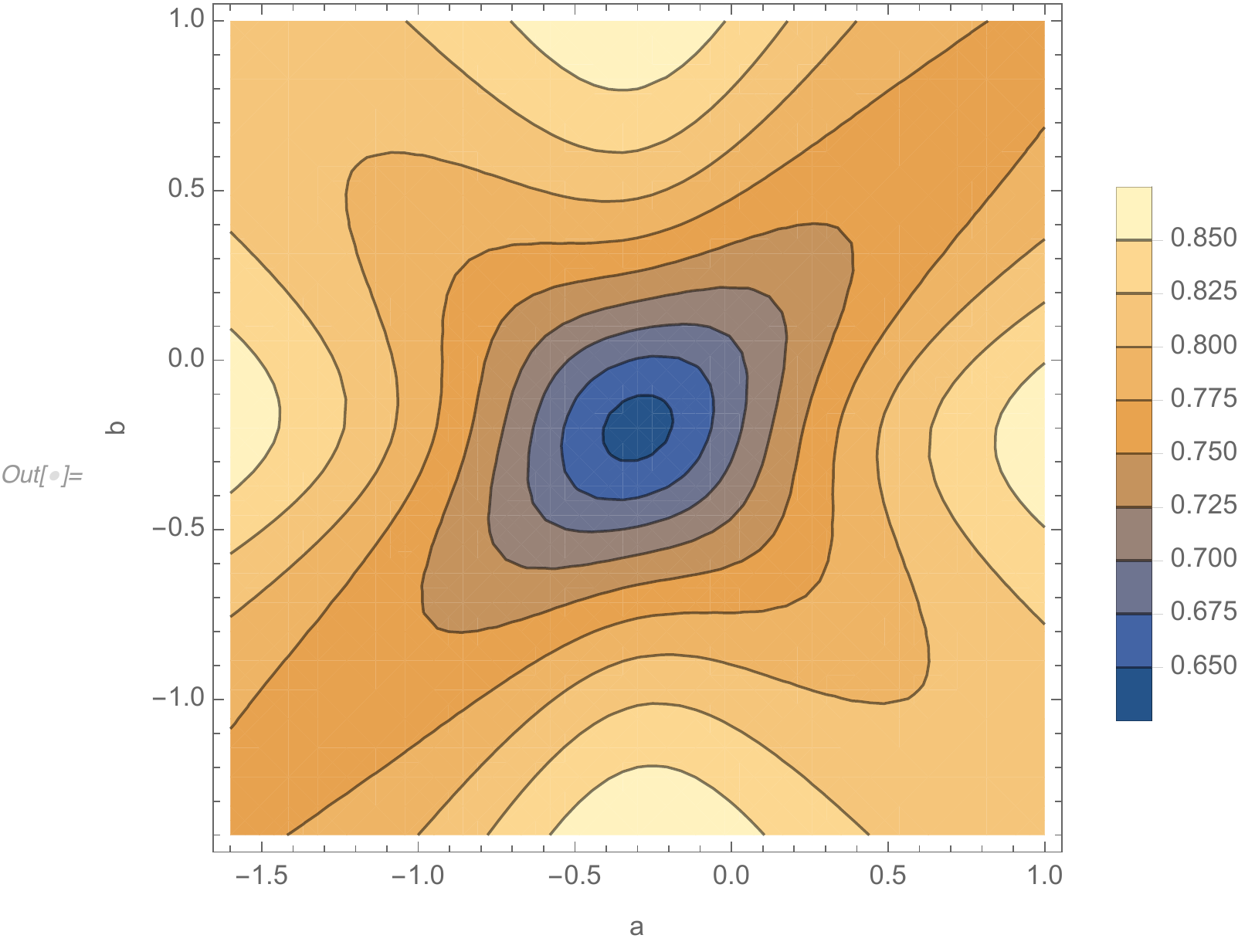}
      \caption{Query time/space exponent, $\rho$, for the Spherical LSH algorithm~\cite{terasawa2007spherical}.}
    \end{subfigure}
    \caption{Given a GapSS instance with $w_q=.3$ and $w_u=.2$, the optimal affine embedding of the data (represented as vectors $x\in\{0,1\}^{|U|}$) onto the sphere, turns out to be normalizing the ``mean'' and ``variance''.
    That is, before scaling down to $\|x\|_2=1$, we subtract respectively $w_q$ and $w_u$ from all coordinates.
    The plot shows the ``$\rho$-value'' achieved by different spherical algorithms as the among subtracted is varied: The x-axis, $a$, is the amount subtracted from queries and the y-axis, $b$, is the amount subtracted from datasets.
 }
   \label{fig:batchnorm}
\end{figure}
We show, that if an algorithm has exponent $\rho(\alpha, \beta)=f(\alpha)/f(\beta)$ where $\alpha$ is the cosine similarity between good points and $\beta$ is the similarity between bad points on the sphere;
then assuming some light properties on $f$, which contain both Spherical and Hyperplane LSH,
two affine embedding of sets $x\in\{0,1\}^d$ to $S^{d-1}$ that minimizes $\rho$ once the new cosine similarities are calculated, is $x\mapsto (x - w)/\sqrt{w(1-w)}$ where $w=|x|/d$.
While the mapping is allowed to depend on any of the GapSS parameters, it curiously only cares about the weight of the set itself.
For fairness, all our plots, such as \Cref{fig:sp_comparison}, uses this embedding when comparing Supermajorities to Spherical LSH.

\begin{lemma}[Embedding Lemma]\label{lem:embedding}
   Let $g,h: \{0,1\}^d\to\R^d$ be function on the form $g(x) = a_1 x + b_1$ and $h(y) = a_2 y + b_2$.
   Let $\rho(x,y,y')=f(\alpha(x,y))/f(\alpha(x,y'))$ where $\alpha(x,y) = \langle x,y\rangle/\|x\|\|y\|$ be such that
   \begin{align}
      f(z)\ge0
      ,\quad
      \tfrac{d}{dz}\left( (\pm1-z)\tfrac{d}{dz}\log f(z)\right)\ge0
      \quad\text{and}\quad
      \tfrac{d^3}{dz^3}\log f(z)\le0
   \end{align}
   for all $z\in[-1,1]$.
   Assume we know that $\|x\|^2_2=w_q d$, $\|y\|^2_2=w_u d$, $\langle x,y'\rangle=w_1 d$ and $\langle x,y\rangle=w_2 d$, then
   $ \argmin_{a_1,a_2,b_1,b_2}\rho(g(x),h(y),h(y')) = (1,1,-w_q,-w_u) $.
\end{lemma}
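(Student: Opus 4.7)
The plan is to reduce the four-parameter optimization to a single scalar, then establish the minimum via two monotonicity arguments. First I would observe that cosine similarity is invariant under positive scaling of either argument, so the multiplicative factors $a_1, a_2$ cannot affect $\rho$; take $a_1=a_2=1$ WLOG. This leaves only the offsets $b_1, b_2$. Substituting $s=b_1+w_q,\, t=b_2+w_u$, so that the claimed optimum is $(s,t)=(0,0)$, direct computation yields
\[
\alpha(g(x),h(y)) \;=\; \frac{(w-w_qw_u)+st}{\sqrt{(p+s^2)(q+t^2)}},
\]
with $p=w_q(1-w_q),\,q=w_u(1-w_u)$, and $w\in\{w_1,w_2\}$. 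At $(0,0)$ this is exactly the Pearson correlation $\alpha_i^*=(w_i-w_qw_u)/\sqrt{pq}$. Writing $F=\log f$, the goal becomes to show $(s,t)=(0,0)$ minimizes $G(s,t):=F(\alpha_1)-F(\alpha_2)$ over $\R^2$.

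The key move is the change of variables $(s,t)\mapsto(u,v)$ with $u=st$ and $v=\sqrt{(p+s^2)(q+t^2)}$, so $\alpha_i=(c_i+u)/v$ where $c_i=w_i-w_qw_u$. By AM-GM, $pt^2+qs^2\ge 2\sqrt{pq}\,|st|$, giving the sharp lower bound $v\ge\sqrt{pq}+|u|$, achieved precisely when $|s|\sqrt{q}=|t|\sqrt{p}$ with $\operatorname{sgn}(st)=\operatorname{sgn}(u)$. I would then run the argument in two steps. Step 1 (fix $u$, vary $v$): a short calculation gives $\partial_v G \;\propto\; \phi(\alpha_2)-\phi(\alpha_1)$ with $\phi(z)=zF'(z)$. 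Adding the two lemma hypotheses $\tfrac{d}{dz}[(\pm 1-z)F'(z)]\ge0$ yields $\phi'(z)=F'(z)+zF''(z)\le 0$, so $\phi$ is monotone in $z$; combined with the fixed sign of $\alpha_2-\alpha_1$, this forces $G$ to be monotone in $v$ in the direction that puts the minimum on the ``ridge'' $v=\sqrt{pq}+|u|$.

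Step 2 (on the ridge): for $u\ge 0$, set $\lambda=u/(\sqrt{pq}+u)\in[0,1)$, so $\alpha_i(\lambda)=(1-\lambda)\alpha_i^*+\lambda$ is a linear interpolation from $\alpha_i^*$ toward $+1$. Using $1-\alpha_i^*=(1-\alpha_i)/(1-\lambda)$, differentiation gives $dG/d\lambda\;\propto\;\psi(\alpha_2)-\psi(\alpha_1)$ with $\psi(z)=(1-z)F'(z)$; the hypothesis $\tfrac{d}{dz}[(1-z)F']\ge 0$ says $\psi$ is monotone in the sign that makes $dG/d\lambda\ge 0$, so $G$ is minimized at $\lambda=0$. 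Symmetrically, for $u\le 0$ the reparametrization $\mu=-u/(\sqrt{pq}-u)\in[0,1)$ gives $\alpha_i(\mu)=(1-\mu)\alpha_i^*-\mu$, an interpolation toward $-1$, and $dG/d\mu$ is controlled by $\chi(z)=(1+z)F'(z)$, which has the correct monotonicity by $\tfrac{d}{dz}[(-1-z)F']\ge 0$. Hence $G$ is non-decreasing in $|u|$, attaining its minimum at $u=0$, which together with Step 1 means the global minimum sits at $(s,t)=(0,0)$, i.e.\ $(a_1,a_2,b_1,b_2)=(1,1,-w_q,-w_u)$.

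The main obstacle, in my view, is guessing the right change of variables $(u,v)=(st,D)$: once chosen, AM-GM reveals the one-parameter ridge and the three conditions on $F$ are tailored exactly to make $G$ monotone in $v$ off the ridge and in $|u|$ on the ridge. I remark that the third stated hypothesis $F'''\le 0$ is not used in the minimization argument I have in mind; I would check whether it is invoked separately to verify that concrete schemes (SimHash, Spherical LSF) satisfy the $\psi$, $\chi$ monotonicity conditions after $\log f$ is formed, or whether a sharper version of the lemma (e.g.\ strict/unique minimum, or an asymmetric variant) requires it.
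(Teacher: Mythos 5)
Your proof is correct, and it takes a genuinely different route through the key two-variable minimization, even though the setup (dropping the scalings by cosine invariance, centering the shifts, and parametrizing the feasible region by the pair ``numerator offset $u=st$, norm product $v$'' with $v\ge\sqrt{pq}+|u|$) matches the paper's substitution $cd\mapsto rs$, $\sqrt{(1+c^2)(1+d^2)}\mapsto r+1$. The paper decomposes the other way: it fixes the denominator parameter $r$, shows the log-ratio is concave in $s$ --- and this is exactly where the hypothesis $\tfrac{d^3}{dz^3}\log f\le 0$ is used --- so the minimum over $s$ sits at $s=\pm1$, and only then applies the $\tfrac{d}{dz}\bigl((\pm1-z)\tfrac{d}{dz}\log f\bigr)\ge0$ conditions to get monotonicity in $r$. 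You instead fix $u$, observe that $\partial_v G\propto\phi(\alpha_2)-\phi(\alpha_1)$ with $\phi(z)=zF'(z)$ non-increasing because the \emph{sum} of the two $\pm$ conditions gives $F'+zF''\le0$, push to the ridge $v=\sqrt{pq}+|u|$, and then use each $\pm$ condition separately along the two halves of the ridge. The payoff is real: your argument never invokes $\tfrac{d^3}{dz^3}\log f\le 0$, whereas the paper needs it and, for Hyperplane LSH, only verifies it by a numerical optimization; so your version proves the lemma under strictly weaker hypotheses and removes that numerical step (the third condition is not needed elsewhere for the applications either --- the paper checks it only to invoke this lemma). Two small notes: in Step 2 the derivative is $\tfrac{dG}{d\lambda}=\tfrac{1}{1-\lambda}\bigl(\psi(\alpha_1)-\psi(\alpha_2)\bigr)$, i.e.\ the opposite order from what you wrote, but since $\psi$ is non-decreasing and $\alpha_1\ge\alpha_2$ the conclusion $dG/d\lambda\ge0$ you state is the right one; and one should remark (as the paper implicitly does) that every $(u,v)$ with $v\ge\sqrt{pq}+|u|$ is realized by actual shifts $(s,t)$, so $\alpha_1,\alpha_2$ remain genuine cosines in $[-1,1]$ throughout, keeping the hypotheses applicable.
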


In this section we will show that Hyperplane~\cite{charikar2002similarity} and Spherical~\cite{andoni2016optimal} LSH both satisfy the requirements of the lemma.
Hence we get two algorithms with $\rho$-values:
\begin{align}
   \rho_\text{hp} = \frac{\log(1-\arccos(\alpha)/\pi)}{\log(1-\arccos(\beta)/\pi)}
   ,
   \quad
   \rho_\text{sp} = \frac{1-\alpha}{1+\alpha}\frac{1+\beta}{1-\beta}
   .
\end{align}
where
$\alpha=\frac{w_1-w_qw_u}{\sqrt{w_q(1-w_q)w_u(1-w_u)}}$
and $\beta=\frac{w_2-w_qw_u}{\sqrt{w_q(1-w_q)w_u(1-w_u)}}$,
and space/time trade-offs using the $\rho_q, \rho_u$ values in~\cite{christiani2016framework}.
\footnote{Unfortunately the space/time aren't on a form applicable to \cref{lem:embedding}.
From numerical experiments we however still conjecture that the embedding is optimal for those as well.}
\Cref{fig:batchnorm} shows how $\rho$ varies with different translations $a,b$.

Taking $t_q = w_q(1+o(1))$ and $t_u = w_u(1+o(1))$ in \cref{thm:main} recovers $\rho_\text{sp}$ by standard arguments.
This implies that~\cref{thm:main} dominates Spherical LSH (for binary data).

\begin{lemma}
   The functions
   $f(z)=(1-z)/(1+z)$ for Spherical LSH
   and $f(z)=-\log(1-\arccos(z)/\pi)$ for Hyperplane LSH
   satisfy \cref{lem:embedding}.
\end{lemma}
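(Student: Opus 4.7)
The plan is to verify, for each of the two functions $f$, the three conditions $f(z)\ge 0$, $\frac{d}{dz}\big((\pm 1 - z)\frac{d}{dz}\log f(z)\big)\ge 0$, and $\frac{d^3}{dz^3}\log f(z)\le 0$ on $z\in[-1,1]$. Each function gets handled separately because $f$ takes a radically different analytic form in the two cases, but in both cases the workflow is: (i) write $\log f$ in a simple closed form, (ii) differentiate, (iii) massage the resulting expression into a sum of terms with a manifest sign.

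For Spherical LSH, $f(z)=(1-z)/(1+z)$. Non-negativity on $[-1,1]$ is immediate. Next I would write $\log f(z) = \log(1-z) - \log(1+z)$, which reduces every further computation to derivatives of $\log(1\pm z)$ only. Then $\tfrac{d}{dz}\log f(z) = -2/(1-z^2)$, and the quantity $(\pm 1 - z)\tfrac{d}{dz}\log f(z)$ simplifies respectively to $-2/(1+z)$ and $2/(1-z)$, each of which is manifestly non-decreasing in $z$ on $(-1,1)$. For the third condition, $\tfrac{d^3}{dz^3}\log f(z) = -2/(1-z)^3 - 2/(1+z)^3 \le 0$ by inspection.

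For Hyperplane LSH, write $\phi(z) = 1 - \arccos(z)/\pi$ so that $f(z) = -\log\phi(z)$. Non-negativity follows from $\phi(z)\in[0,1]$ on $[-1,1]$. I would then compute $\phi'(z) = 1/(\pi\sqrt{1-z^2})$, $(\log f)'(z) = \phi'(z)/(\phi(z)\log\phi(z))\cdot(-1)$, and push through the chain rule once more to get $(\log f)''$ and $(\log f)'''$ in closed form. The condition $\tfrac{d}{dz}\big((\pm 1 - z)(\log f)'\big)\ge 0$ I would rewrite as $-(\log f)' + (\pm 1 - z)(\log f)'' \ge 0$ and then substitute $u = \arccos(z)/\pi$ so $z = \cos(\pi u)$ and $\pm 1 - z = \pm 1 - \cos(\pi u)$; with the half-angle identities $1-\cos(\pi u) = 2\sin^2(\pi u/2)$ and $1+\cos(\pi u) = 2\cos^2(\pi u/2)$ the expression collapses into a ratio of elementary trigonometric quantities whose sign I can read off.

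The main obstacle will be the third-derivative bound for the hyperplane case. Unlike the spherical case, $\log f = \log(-\log\phi)$ is a double composition with $\arccos$ inside, so $(\log f)'''$ has several competing terms of mixed sign. My strategy there is to combine the substitution $u = \arccos(z)/\pi$ with the factorization $(\log f)''' = \tfrac{d}{dz}\big((\log f)''\big)$, write everything in terms of $\sin(\pi u)$, $\cos(\pi u)$, and $-\log(1-u)$, and then verify the inequality by showing each term is non-positive on $u\in(0,1)$. If a brute sign analysis becomes unwieldy, a fallback is to verify concavity of $\log f$ as a function of $u$ (which is a clean one-variable calculus exercise since $\log f(z(u)) = \log(-\log(1-u))$) and then use the chain rule $\tfrac{d^3}{dz^3}\log f = \tfrac{d^3}{du^3}(\log f)\cdot (u')^3 + \text{lower-order corrections}$, bounding the correction terms by the dominant concave one.
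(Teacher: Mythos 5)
Your spherical half is complete and is exactly the computation the paper does: $(\pm1-z)\tfrac{d}{dz}\log f(z)$ equals $-2/(1+z)$ resp.\ $2/(1-z)$, whose derivatives $2/(1\pm z)^2$ are nonnegative, and $\tfrac{d^3}{dz^3}\log f(z)=-2/(1-z)^3-2/(1+z)^3\le 0$. For the hyperplane case, your plan for the mixed condition (substituting $u=\arccos(z)/\pi$ and reading off signs after half-angle identities) is plausible but not carried out; the paper instead writes $\tfrac{d}{dz}\bigl((\pm1-z)\tfrac{d}{dz}\log f(z)\bigr)$ as an explicit fraction and checks the sign of its numerator using only $\sqrt{1-z^2}\le\arccos(z)$, $\sqrt{1-z^2}+\arccos(z)\le\pi$ and $x\le\log(1+x)$, which you could adopt verbatim. (Minor slip: $\tfrac{d}{dz}\log f=\phi'/(\phi\log\phi)$ with no extra factor $-1$; it is negative because $\log\phi<0$.)

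The genuine gap is the third-derivative condition for the hyperplane case. Your primary route, ``show each term is non-positive on $u\in(0,1)$'', cannot work as stated: writing $\log f(z)=g(u(z))$ with $g(u)=\log(-\log(1-u))$, the chain rule gives $\tfrac{d^3}{dz^3}\log f=g'''(u)(u')^3+3g''(u)u'u''+g'(u)u'''$, and the middle term changes sign on $(-1,1)$ (both $u''$, which has the sign of $-z$, and $g''$ change sign), so the terms are not individually signed. The fallback is based on a false premise: $g$ is \emph{not} concave in $u$ on $(0,1)$, since $g''(u)=\frac{-\log(1-u)-1}{(1-u)^2\log(1-u)^2}$ is positive for $u>1-1/e$; hence ``verify concavity of $\log f$ as a function of $u$'' fails, and with it the plan of absorbing the remaining chain-rule terms as corrections to a dominant concave term. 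Note that the paper does not prove this condition analytically either: it verifies numerically that $\tfrac{d^3}{dz^3}\log f(z)\le-1.53$ on $[-1,1]$. So to close your proof you must either supply a genuinely new analytic argument for this single inequality or fall back on the same numerical verification the paper uses.
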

\begin{proof}
   For Spherical LSH we have $f(z)=(1-z)/(1+z)$ and get
   \begin{align}
      \tfrac{d}{dz}\left( (\pm1-z)\tfrac{d}{dz}\log f(z)\right)
      &= 2/(1\pm z)^2 \ge0,
      \\
      \tfrac{d^3}{dz^3}\log f(z)
      &= -4(1+3z^2)/(1-z^2)^3\le 0.
   \end{align}

   For Hyperplane LSH we have $f(z)=-\log(1-\arccos(z)/\pi)$ and get
   \begin{align}
      \tfrac{d}{dz}\left( (\pm1-z)\tfrac{d}{dz}\log f(z)\right)&=
      \frac{
      (\arccos(z)\mp\sqrt{1-z^2}-\pi)\log(1-\arccos(z)/\pi)\mp\sqrt{1-z^2}}
      {(1\pm z)\sqrt{1-z^2}(\pi-\arccos(z))^2\log(1-\arccos(z)/\pi)^2}.
   \end{align}
   In both cases the denominator is positive, and the numerator
   can be shown to be likewise by applying the inequalities
   $\sqrt{1-z^2}\le\arccos(z)$, $\sqrt{1-z^2}+\arccos(z)\le\pi$ and $x\le\log(1+x)$.

   The $\tfrac{d^3}{dz^3}\log f(z)\le0$ requirement is a bit trickier, but a numerical optimization shows that it's in fact less than $-1.53$.
\end{proof}

Finally we prove the embedding lemma:

\begin{proof}[Proof of \cref{lem:embedding}]
   We have
   \begin{align}
      \alpha
      =\frac{\langle x+a, y+b\rangle}{\|x+a\|\|y+b\|}
      =\frac{w_1+w_q b+w_u a+ab}{
         \sqrt{(w_q(1+a)^2+(1-w_q)a^2)(w_u(1+b)^2+(1-w_u)b^2)}}
   \end{align}
   and equivalent with $w_2$ for $\beta$.
   We'd like to show that $a=-w_q$, $b=-w_u$ is a minimum for
   $\rho=f(\alpha)/f(\beta)$.

   Unfortunately the $f$'s we are interested in are usually not convex, so it is not even clear that there is just one minimum.
   To proceed, we make the following substitution
   $a\to (c+d)\sqrt{w_q(1-w_q)}-w_q$,
   $b\to (c-d)\sqrt{w_u(1-w_u)}-w_u$
   to get
   \begin{align}
      \alpha(c,d) = \frac{cd+\frac{w_1-w_qw_u}{\sqrt{w_q(1-w_q)w_u(1-w_u)}}}{\sqrt{(1+c^2)(1+d^2)}}.
   \end{align}
   We can further substitute
   $cd\mapsto rs$ and $\sqrt{(1+c^2)(1+d^2)}\mapsto r+1$
   or $r\ge0$, $-1\le s\le 1$,
   since $1+cd\le\sqrt{(1+c^2)(1+d^2)}$ by Cauchy Schwartz, and $(cd,\sqrt{(1+c^2)(1+d^2)})$ can take all values in this region.

   The goal is now to show that $h=f\left(\frac{rs+x}{r+1}\right)\big/f\left(\frac{rs+y}{r+1}\right)$,
   where $1\ge x\ge y\ge -1$,
   is increasing in $r$.
   This will imply that the optimal value for $c$ and $d$ is 0, which further implies that $a=-w_q$, $b=-w_u$ for the lemma.

   We first show that $h$ is quasi-concave in $s$, so we may limit ourselves to $s=\pm1$.
   Note that $\log h = \log f\left(\frac{rs+x}{r+1}\right)
   - \log f\left(\frac{rs+y}{r+1}\right)$, and that
   $
      \frac{d^2}{ds^2}\log f\left(\frac{rs+x}{r+1}\right)
      =
      \left(\frac{r}{1+r}\right)^2
         \frac{d^2}{dz^2}\log f(z)
   $ by the chain rule.
   Hence it follows from the assumptions that $h$ is log-concave, which implies quasi-concavity as needed.

   We now consider $s=\pm1$ to be a constant.
   We need to show that
   $\frac{d}{dr} h \ge 0$.
   Calculating,
   \begin{align}
   \frac{d}{dr} f\left(\frac{rs+x}{r+1}\right)\big/f\left(\frac{rs+y}{r+1}\right)
   =
   \frac{
         (s-x)f\left(\frac{rs+y}{r+1}\right)f'\left(\frac{rs+x}{r+1}\right)
         -(s-y)f\left(\frac{rs+x}{r+1}\right)f'\left(\frac{rs+y}{r+1}\right)
      }{(1+r)^2f\left(\frac{rs+y}{r+1}\right)^2}.
   \end{align}
   Since $f\ge 0$ it suffices to show
   $ \frac{d}{dx} (s-x)f'\left(\frac{rs+x}{r+1}\right)\big/f\left(\frac{rs+x}{r+1}\right)\ge 0 $.
   If we substitute $z=\frac{rs+x}{r+1}$
   , $z\in[-1,1]$,
   we can write the requirement as
   $\frac{d}{dz} (s-z) f'(z)/f(z)\ge 0$
   or
   $\frac{d}{dz}\left( (\pm1-z)\frac{d}{dz}\log f(z)\right)\ge0$.
\end{proof}

\subsection{A MinHash Dominating Family}\label{sec:minhashdom}

Consider the classical MinHash scheme:
A permutation $h : [d]\to[d]$ is sampled at random,
and $y\subseteq \{0,1\}^d$ is placed in bucket $i\in[m]$ if $h(i)\in y$ and $\forall_{j<i} h(j)\not\in y$.
The probability for a collision between two sets $q,y$ is then $|q\cap y|/(|q|+|y|-|q\cap y|)$ by a standard argument which implies an exponent of $\rho_\text{mh}=\log\frac{w_1}{w_q+w_u-w_1}\big/\log\frac{w_2}{w_q+w_u-w_2}$.

Now consider building multiple independent such MinHash tables, but \emph{keeping only the $k$th bucket in each one.}
That gives a Locality Sensitive Filter family, which we will analyse in this section.

The Locality Sensitive Filter approach to similarity search is an extension by Becker et al.~\cite{becker2016new} to the Locality Sensitive Hashing framework by Indyk and Motwani~\cite{indyk1998approximate}.
We will use the following definition by Christiani~\cite{christiani2016framework}, which we have slightly extended to support separate universes for query and data points:
\begin{definition}[LSF]\label{defn:lsf}
   Let $X$ and $Y$ be some universes, let $S : X\times Y\to \mathbb R$ be a similarity function, and let $\mathcal F$ be a probability distribution over $\{(Q, U) \mid Q \subseteq X, U \subseteq Y\}$.
   We say that F is $(s_1, s_2, p_1, p_2, p_q, p_u)$-sensitive if for all points $x\in X, y \in Y$ and $(Q, U)$ sampled randomly from $\mathcal F$ the following holds:
   \begin{enumerate}
      \item If $S(x,y) \ge s_1$ then $\Pr[x\in Q,y\in U]\ge p_1$.
      \item If $S(x,y) \le s_2$ then $\Pr[x\in Q,y\in U]\le p_2$.
      \item $\Pr[x\in Q]\le p_q$ and $\Pr[x\in U]\le p_u$.
   \end{enumerate}
   We refer to $(Q,U)$ as a filter and to $Q$ as the query filter and $U$ as the update filter.
\end{definition}

We first state the LSF-Symmetrization lemma implicit in~\cite{tobias2016}:
\begin{lemma}[LSF-Symmetrization]\label{lem:symmetrization}
   Given a $(p_1,p_2,p_q,p_u)$-sensitive LSF-family, we can create a new family that is
   $(p_1 \frac{q}{p}, p_2 \frac{q}{p}, q, q)$-sensitive,
   where $p=\max\{p_q,p_u\}$ and $q=\min\{p_q,p_u\}$.
\end{lemma}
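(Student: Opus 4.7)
The plan is to symmetrize by randomly subsampling whichever filter side has the larger acceptance probability. Without loss of generality assume $p_q \ge p_u$, so $p = p_q$ and $q = p_u$; the other case is handled identically with the roles of $Q$ and $U$ swapped.

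The construction is as follows. First I would augment the probability space so that, jointly with sampling $(Q,U) \sim \mathcal F$, we sample an independent uniform hash (or equivalently an independent Bernoulli coin for each $x \in X$) $h : X \to [0,1]$. Then I define the new filter family $\mathcal F'$ by
\begin{align}
   Q' = \{x \in Q \mid h(x) < q/p\}, \qquad U' = U.
\end{align}
The key observation is that whether $x \in Q'$ only depends on whether $x \in Q$ and on the independent coin $h(x)$, so the subsampling factor is independent of everything concerning $U$ and of $y$.

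From here the three sensitivity conditions are direct calculations: (i) $\Pr[x \in Q'] = \Pr[x \in Q] \cdot (q/p) \le p_q \cdot q/p = q$; (ii) $\Pr[y \in U'] = \Pr[y \in U] \le p_u = q$; and (iii) for the joint event, using independence of the subsampling from $U$,
\begin{align}
   \Pr[x \in Q', y \in U'] = \Pr[x \in Q, y \in U] \cdot (q/p).
\end{align}
Plugging in the two sensitivity bounds on $\Pr[x \in Q, y \in U]$ gives the desired $p_1 \cdot q/p$ lower bound when $S(x,y) \ge s_1$, and the $p_2 \cdot q/p$ upper bound when $S(x,y) \le s_2$.

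There is no real obstacle here; the only thing to be careful about is making sure the subsampling coins are independent of both the choice of filter $(Q,U)$ and of the points $x, y$, so that the subsampling factor $q/p$ factors cleanly out of the joint probability and preserves the $p_1/p_2$ ratio. This is standard, and the lemma follows.
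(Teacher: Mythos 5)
Your proposal is correct and follows essentially the same route as the paper: downsample the denser (query) side by a factor $q/p$, independently of $(Q,U)$ and of the points, so that the factor multiplies through both the joint and the marginal probabilities. The paper's proof uses a single filter-level coin (replacing $Q$ by $\emptyset$ with probability $1-p_u/p_q$) rather than your per-point coins $h(x)$, but this is only a cosmetic difference since both yield exactly the probabilities $p_1\frac{q}{p}$, $p_2\frac{q}{p}$, $q$, $q$.
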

For some values of $p_1,p_2,p_q,p_u$ this will be better than simply taking $\max(\rho_u, \rho_q)$.
In particular when symmetrization may reduce $\rho_u$ by a lot by reducing its denominator.
\begin{proof}
   W.l.o.g. assume $p_q\ge p_u$.
   When sampling a query filter, $Q\subseteq U$, pick a random number $\varrho\in[0,1]$.
   If $\varrho>p_u/p_q$ use $\emptyset$ instead of $Q$.
   The new family then has $p_q' = p_q \cdot p_u/p_q$ and so on giving the lemma.
\end{proof}

Getting back to MinHash, we note that the ``keeping only the $i$th bucket'' family discussed above, corresponds sampling a permutation $s$ of $Y$ and taking the filter
\begin{align}
U = \{x \mid s_i \in x \wedge s_0 \not\in x\wedge\dots\wedge s_{i-1}\not\in x\}
.
\end{align}
That is, the collection of $x$ such that the first $i-1$ values of $s$ are not in $x$ (since then $x$ would have been put in that earlier bucket), but the $i$th element of $s$ is in $x$ (since otherwise $x$ would have been put in a later bucket.)

Using just one of these families, combined with symmetrization, gives the $\rho$ value:\begin{align}
   \rho_i =
   \log\frac{(1-w_q-w_u+w_1)^i w_1}{\max\{(1-w_q)^i w_q,\, (1-w_u)^i w_u\}}\bigg/
   \log\frac{(1-w_q-w_u+w_2)^i w_2}{\max\{(1-w_q)^i w_q,\, (1-w_u)^i w_u\}}
   .
\end{align}

This scheme is a generalization of Chosen Path, since taking $i=0$ recovers exactly that algorithm.
However, as we increase $i$, we see that the weight gradually shifts from the \emph{present} elements (symbolized by $w_1$, $w_2$, $w_q$ and $w_u$) to the \emph{absent} elements (symbolized by $(1-w_q-w_u+w_q)$, etc.).

We will now show that for a given set of $(w_q, w_u, w_1, w_2)$ there is always an optimal $i$ which is better than using all of the $i$, which is what MinHash does.
The exact goal is to show
\begin{align}
   \rho_\text{mh} = \log\frac{w_1}{w_q+w_u-w_1}\big/\log\frac{w_2}{w_q+w_u-w_2} \ge \min_{i\ge0} \rho_i
   .
\end{align}

For this we show the following lemma, which intuitively says that it is never advantageous to combine multiple filter families:
\begin{lemma}\label{lem:quasi-concave}
   The function $f(x,y,z,t) = \log(\max\{x,y\}/z)/\log(\max\{x,y\}/t)$, defined for
   $\min\{x, y\} \ge z \ge t > 0$, is quasi-concave.
\end{lemma}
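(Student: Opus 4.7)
The plan is to verify that each superlevel set $L_c = \{p \in D : f(p) \ge c\}$ is convex, where $D = \{(x,y,z,t) : \min\{x,y\} \ge z \ge t > 0\}$ is the (convex) domain. Since $f$ takes values in $[0,1]$, the only nontrivial case is $c \in (0,1)$; for $c \le 0$ we have $L_c = D$, and for $c \ge 1$ the set $L_c$ is negligible.

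For $c \in (0,1)$, I would algebraically rearrange $f(x,y,z,t) \ge c$. Writing $M = \max\{x,y\}$ and using $(1-c)\log M \ge \log z - c \log t$, this becomes $M \ge h_c(z,t)$ with $h_c(z,t) = z^{\alpha} t^{-\beta}$, where $\alpha = 1/(1-c)$, $\beta = c/(1-c)$, so $\alpha - \beta = 1$, $\alpha \ge 1$ and $\beta \ge 0$. Thus
\begin{equation*}
L_c = \{(x,y,z,t)\in D : \max\{x,y\} \ge h_c(z,t)\}.
\end{equation*}
Two structural facts I would exploit are: (i) $f$ is $0$-homogeneous, so $L_c$ is a cone (hence it suffices to check midpoint convexity); and (ii) on $D$ we have $h_c(z,t) \ge z$, since $h_c(z,t)/z = (z/t)^{\beta} \ge 1$.

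To show $L_c$ is convex I would take $p_1, p_2 \in L_c$, write $p_\lambda = \lambda p_1 + (1-\lambda)p_2$, and split into cases based on which of $x_i, y_i$ is the maximum. The clean case is when $x_i \ge y_i$ for both $i$ (symmetrically for $y_i \ge x_i$): here $\max\{x_\lambda, y_\lambda\} = \lambda x_1 + (1-\lambda) x_2$, which is a genuine convex combination of $\max\{x_1,y_1\}$ and $\max\{x_2,y_2\}$. Combined with the constraint $\min\{x_i,y_i\} \ge z_i$, which forces both $x_i$ and $y_i$ to sit above $h_c(z_i, t_i)$ once the superlevel condition is active, the convex combination inherits the superlevel property via a single inequality involving $\lambda \mapsto h_c(\lambda z_1 + (1-\lambda) z_2, \lambda t_1 + (1-\lambda) t_2)$, which I would control using the $\alpha-\beta=1$ identity.

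The hard part will be the mixed case, in which $x_1 \ge y_1$ but $y_2 \ge x_2$; this is where the $\max$ obstructs direct convex-combination arguments (and what rules out a naive quasi-concavity proof of $f$ in general). My plan is to use the full force of the domain constraint $\min\{x_i,y_i\} \ge z_i$, which forces \emph{both} $x_i \ge z_i$ and $y_i \ge z_i$, and to feed this into the bound $\max\{x_\lambda, y_\lambda\} \ge \max\{\lambda x_1 + (1-\lambda) x_2,\ \lambda y_1 + (1-\lambda) y_2\}$. The aim is to show that at least one of these two linear interpolations exceeds $h_c(z_\lambda, t_\lambda)$, by leveraging that on the dominant side at each endpoint one has the strong inequality $M_i \ge h_c(z_i, t_i)$, while on the non-dominant side one still has the weaker bound $\ge z_i$. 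Translating the constraint $\alpha - \beta = 1$ into the statement $h_c(\lambda z_1 + (1-\lambda) z_2,\lambda t_1 + (1-\lambda) t_2) = (\lambda t_1 + (1-\lambda) t_2) \cdot\bigl(\tfrac{\lambda z_1 + (1-\lambda) z_2}{\lambda t_1 + (1-\lambda) t_2}\bigr)^{\alpha}$ and applying Jensen/Hölder to the ratio $z/t$ should reduce the task to a scalar inequality in $\lambda$ that is verifiable by differentiation, giving the desired convexity of $L_c$ and hence quasi-concavity of $f$.
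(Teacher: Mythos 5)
Your opening reduction is the same as the paper's: you rewrite the superlevel set $\{f\ge c\}$ as $\{\max\{x,y\}^{1-c}\,t^{c}\ge z\}$ (your $M\ge z^{1/(1-c)}t^{-c/(1-c)}$ is this inequality solved for $M$), and your ``same-branch'' case, where both endpoints have their maximum on the same coordinate, does go through: on either branch $\{x\ge y\}$ or $\{y\ge x\}$ the set is convex, which is essentially what the paper's Hessian computation of $g(x,y,t)=\max\{x,y\}^{1-\alpha}t^{\alpha}$ establishes (that computation differentiates $\max\{x,y\}$ as if it were smooth, so it is only valid piecewise, away from the kink $x=y$).

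The gap is your mixed case, and it cannot be closed, because the inequality you plan to verify there is false. Take $c=\tfrac12$ and the domain points $p_1=(1,\tfrac14,\tfrac14,\tfrac1{16})$, $p_2=(\tfrac14,1,\tfrac14,\tfrac1{16})$; both satisfy $\min\{x,y\}\ge z\ge t>0$, and $f(p_1)=f(p_2)=\log 4/\log 16=\tfrac12$, i.e.\ $\max\{x,y\}^{1/2}t^{1/2}=z$ with equality, with the non-dominant coordinate sitting exactly at $z$ as your plan anticipates. At the midpoint $(\tfrac58,\tfrac58,\tfrac14,\tfrac1{16})$ one gets $\max^{1/2}t^{1/2}=\sqrt{5/8\cdot 1/16}\approx 0.198<\tfrac14=z$, i.e.\ $f=\log(5/2)/\log 10\approx 0.40<\tfrac12$. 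So once the argmax switches between the two points the superlevel set (intersected with the stated domain) is simply not convex, and no Jensen/H\"older manipulation of $z/t$ or scalar differentiation in $\lambda$ will produce the inequality; your remark that the max ``obstructs direct convex-combination arguments'' is exactly where the lemma itself breaks. For comparison, the paper's proof does not confront this case at all: its claim that $g$ is concave fails across $x=y$ (there $g$ is an increasing power of the convex function $\max\{x,y\}$), so the published argument only yields quasi-concavity on each branch separately, and the same counterexample applies to it; any repair has to either restrict the lemma to a fixed ordering of $x$ and $y$ or use additional structure of the quantities it is applied to.
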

This means in particular that
\begin{align}
   \frac{\log(\max\{x+x',y+y'\}/(z+z'))}{\log(\max\{x+x',y+y'\}/(t+t'))}
   \ge \min\left\{
      \frac{\log(\max\{x,y\}/z)}{\log(\max\{x,y\}/t)},
      \frac{\log(\max\{x',y'\}/z')}{\log(\max\{x',y'\}/t')}
   \right\},
\end{align}
when the variables are in the range of the lemma.
\begin{proof}
   We need to show that the set
   \begin{align}
      \{(x,y,z,t) : \log(\max\{x,y\}/z)/\log(\max\{x,y\}/t) \ge \alpha\}
      =\{(x,y,z,t) : \max\{x,y\}^{1-\alpha}t^{\alpha} \ge z\}
   \end{align}
   is convex for all $\alpha\in[0,1]$ (since $z \ge t$ so $f(x,y,z,t)\in[0,1]$).
   This would follow if $g(x,y,t)=\max\{x,y\}^{1-\alpha}t^{\alpha}$ would be quasi-concave itself,
   and the eigenvalues of the Hessian of $g$ are exactly 0, 0 and
   $-(1-\alpha) \alpha t^{\alpha-2} \max \{x,y\}^{-\alpha-1} \left(\max \{x,y\}^2+t^2\right)$
   so $g$ is even concave!
\end{proof}
We can then show that MinHash is always dominated by one of the filters described, as
\begin{align}
\rho_\text{mh}
= \frac{\log\frac{w_1}{w_q+w_u-w_1}}{\log\frac{w_2}{w_q+w_u-w_2}}
&=
\frac{\log\frac
   {\sum_{i\ge 0} (1-w_q-w_u+w_1)^i w_1}
      {\max\{\sum_{i\ge 0} (1-w_q)^i w_q, \,\sum_{i\ge 0} (1-w_u)^i w_u\}}}
     {\log\frac
        {\sum_{i\ge 0} (1-w_q-w_u+w_2)^i w_2}
         {\max\{\sum_{i\ge 0} (1-w_q)^i w_q, \,\sum_{i\ge 0} (1-w_u)^i w_u\}}}
\ge \min_{i\ge 0}\frac
   {\log\frac{(1-w_q-w_u+w_1)^i w_1}{\max\{(1-w_q)^i w_q, (1-w_u)^i w_u\}}}
   {\log\frac{(1-w_q-w_u+w_2)^i w_2}{\max\{(1-w_q)^i w_q, (1-w_u)^i w_u\}}}
   ,
\end{align}
where the right hand side is exactly the symmetrization of the ``only bucket $i$'' filters.
By monotonicity of $(1-w_q)^iw_q$ and $(1-w_u)^iw_u$ we can further argue that it is even possible to limit ourselves to one of $i\in\{0,\infty,\log(w_q/w_u)/\log((1-w_q)/(1-w_u))\}$, where the first gives Chosen Path, the second gives Chosen Path on the complemented sets, and the last gives a balanced trade-off where $(1-w_q)^iw_q=(1-w_u)^iw_u$.

\section{Conclusion and Open Problems}

For a long time there was a debate~\cite{shrivastava2014defense} about why MinHash worked so well for sets, compared to other more general methods, like SimHash.
It was a mystery why this method, so foreign to the frameworks of Spherical LSF and Chosen Path could still do so much better.
For asymmetric problems like Subset Search, it was entirely open how far $\rho$ could be reduced.
\begin{center}
   \emph{This paper finally solves the mystery of MinHash and unifies the ideas and frameworks of Euclidean and Set Similarity Search.}
\end{center}

By showing that supermajorities indeed solve the general problem optimally, we not only unify and explain the performance of the previous literature, but also recover major performance improvements, space/time trade-offs, and the ability to solve Set Similarity Search for any similarity measure.

We propose the following open problems for future research:
\begin{description}
   \item[LSH with polylog time]
      When parametrized accordingly, we get a data structure with
      $e^{\tilde O(\sqrt{\log n})}$ query time and $n^{O(1)}$ space.
      Using Spherical LSH one can get similar runtime, though with a higher polynomial space usage.
      Employing a tighter analysis of our algorithm, the query time can be reduced to $e^{\tilde O((\log n)^{1/3})}$, which by comparison with 
      we conjecture is tight for the approach.
      A major open question is whether one can get $\tilde O(1)$?

   \item[Data-dependent]
      Data-dependent LSH is able to reduce approximate similarity search problems to the case where far points are as far away as had they been random.
      For $(w_q,w_u,w_1,w_2)$-GapSS this corresponds to the case $w_2=w_qw_u$.
      This would finally give the ``optimal'' algorithm for GapSS without any ``non-data-dependent'' disclaimers.
   \item[Sparse, non-binary data]
      Our lower bounds really hold for a much larger class of problems, including cosine similarity search on sparse data in $\R^d$.
      However, our upper bounds currently focus on binary data only.
      It would be interesting to generalize our algorithm to this and other types of data for which Supermajorities are also optimal.
   \item[Sketching]
      We have shown that Supermajorities can shave large polynomial factors of space and query time in LSH.
      Can they be used to give similar gains in the field of sketching sets under various similarity measures?
      Can one expand the work of~\cite{pagh2014min} and show optimality of some intersection sketching scheme?
\end{description}

\subsection{Acknowledgements}
We would like to thank Rasmus Pagh and Tobias Christiani on suggesting the problem discussed in this paper, as well as for reviews of previous manuscripts.
Much of the initial work in the paper was done while the first author was visiting Eric Price at the University of Texas, and we would like to thank people there for encouragement and discussions on Boolean functions.
Many people have helped reading versions of the manuscript, and we would like to thank Morgan Mingle, Morten Stöckel, John Kallaugher, Ninh Pham, Evangelos Kipouridis, Peter Rasmussen, Anders Aamand, Mikkel Thorup and everyone else who has given feedback.
Finally, we really appreciate the comprehensive comments from the anonymous reviewers!

Thomas D.\ Ahle and Jakob B.\ T.\ Knudsen are partly supported by Thorup's Investigator Grant 16582, Basic Algorithms Research Copenhagen (BARC), from the VILLUM Foundation.

\bibliographystyle{plainurl}
\bibliography{supermajority}

\begin{thebibliography}{10}

\bibitem{abboud2017distributed}
Amir Abboud, Aviad Rubinstein, and Ryan Williams.
\newblock Distributed pcp theorems for hardness of approximation in p.
\newblock In {\em 2017 IEEE 58th Annual Symposium on Foundations of Computer
  Science (FOCS)}, pages 25--36. IEEE, 2017.

\bibitem{abdullah2015directed}
Amirali Abdullah and Suresh Venkatasubramanian.
\newblock A directed isoperimetric inequality with application to bregman near
  neighbor lower bounds.
\newblock In {\em Proceedings of the forty-seventh annual ACM symposium on
  Theory of computing}, pages 509--518, 2015.

\bibitem{agrawal2010indexing}
Parag Agrawal, Arvind Arasu, and Raghav Kaushik.
\newblock On indexing error-tolerant set containment.
\newblock In {\em Proceedings of the 2010 ACM SIGMOD International Conference
  on Management of data}, pages 927--938. ACM, 2010.

\bibitem{ahlberg2014noise}
Daniel Ahlberg, Erik Broman, Simon Griffiths, and Robert Morris.
\newblock Noise sensitivity in continuum percolation.
\newblock {\em Israel Journal of Mathematics}, 201(2):847--899, 2014.

\bibitem{ahle2015complexity}
Thomas~Dybdahl Ahle, Rasmus Pagh, Ilya Razenshteyn, and Francesco Silvestri.
\newblock On the complexity of inner product similarity join.
\newblock In {\em Proceedings of the 35th ACM SIGMOD-SIGACT-SIGAI Symposium on
  Principles of Database Systems}, pages 151--164. ACM, 2016.

\bibitem{alman2016polynomial}
Josh Alman, Timothy~M Chan, and Ryan Williams.
\newblock Polynomial representations of threshold functions and algorithmic
  applications.
\newblock In {\em 2016 IEEE 57th Annual Symposium on Foundations of Computer
  Science (FOCS)}, pages 467--476. IEEE, 2016.

\bibitem{andoni2006near}
Alexandr Andoni and Piotr Indyk.
\newblock Near-optimal hashing algorithms for approximate nearest neighbor in
  high dimensions.
\newblock In {\em Foundations of Computer Science, 2006. FOCS'06. 47th Annual
  IEEE Symposium on}, pages 459--468. IEEE, 2006.

\bibitem{andoni2014beyond}
Alexandr Andoni, Piotr Indyk, Huy~L Nguyen, and Ilya Razenshteyn.
\newblock Beyond locality-sensitive hashing.
\newblock In {\em Proceedings of the Twenty-Fifth Annual ACM-SIAM Symposium on
  Discrete Algorithms}, pages 1018--1028. Society for Industrial and Applied
  Mathematics, 2014.

\bibitem{andoni2017optimal}
Alexandr Andoni, Thijs Laarhoven, Ilya Razenshteyn, and Erik Waingarten.
\newblock Optimal hashing-based time-space trade-offs for approximate near
  neighbors.
\newblock In {\em Proceedings of the Twenty-Eighth Annual ACM-SIAM Symposium on
  Discrete Algorithms}, pages 47--66. Society for Industrial and Applied
  Mathematics, 2017.

\bibitem{andoni2016optimal}
Alexandr Andoni, Thijs Laarhoven, Ilya Razenshteyn, and Erik Waingarten.
\newblock Optimal hashing-based time-space trade-offs for approximate near
  neighbors.
\newblock In {\em Proceedings of the Twenty-Eighth Annual ACM-SIAM Symposium on
  Discrete Algorithms}, pages 47--66. SIAM, 2017.

\bibitem{andoni2015optimal}
Alexandr Andoni and Ilya Razenshteyn.
\newblock Optimal data-dependent hashing for approximate near neighbors.
\newblock In {\em Proceedings of the Forty-Seventh Annual ACM Symposium on
  Theory of Computing}, pages 793--801. ACM, 2015.

\bibitem{andoni2017lsh}
Alexandr Andoni, Ilya Razenshteyn, and Negev~Shekel Nosatzki.
\newblock Lsh forest: Practical algorithms made theoretical.
\newblock In {\em Proceedings of the Twenty-Eighth Annual ACM-SIAM Symposium on
  Discrete Algorithms}, pages 67--78. SIAM, 2017.

\bibitem{andoni2015tight}
Alexandr Andoni and Ilya Razensteyn.
\newblock Tight lower bounds for data-dependent locality-sensitive hashing.
\newblock In {\em 32nd International Symposium on Computational Geometry (SoCG
  2016)}. Schloss Dagstuhl-Leibniz-Zentrum fuer Informatik, 2016.

\bibitem{baker2001difference}
Roger~C Baker, Glyn Harman, and J{\'a}nos Pintz.
\newblock The difference between consecutive primes, ii.
\newblock {\em Proceedings of the London Mathematical Society}, 83(3):532--562,
  2001.

\bibitem{bawa2005lsh}
Mayank Bawa, Tyson Condie, and Prasanna Ganesan.
\newblock Lsh forest: self-tuning indexes for similarity search.
\newblock In {\em Proceedings of the 14th international conference on World
  Wide Web}, pages 651--660, 2005.

\bibitem{becker2016new}
Anja Becker, L{\'e}o Ducas, Nicolas Gama, and Thijs Laarhoven.
\newblock New directions in nearest neighbor searching with applications to
  lattice sieving.
\newblock In {\em Proceedings of the Twenty-Seventh Annual ACM-SIAM Symposium
  on Discrete Algorithms}, pages 10--24. SIAM, 2016.

\bibitem{beckner1975inequalities}
William Beckner.
\newblock Inequalities in fourier analysis.
\newblock {\em Annals of Mathematics}, pages 159--182, 1975.

\bibitem{biggins1977martingale}
John~D Biggins.
\newblock Martingale convergence in the branching random walk.
\newblock {\em Journal of Applied Probability}, 14(1):25--37, 1977.

\bibitem{bonami1970etude}
Aline Bonami.
\newblock {\ 'E} study of the fourier coefficients of the functions of $ l \ ^
  {} p (g) $.
\newblock In {\em Annals of the Fourier Institute}, volume~20, pages 335--402,
  1970.

\bibitem{broder1997resemblance}
Andrei~Z Broder.
\newblock On the resemblance and containment of documents.
\newblock In {\em Compression and Complexity of Sequences 1997. Proceedings},
  pages 21--29. IEEE, 1997.

\bibitem{broder1997syntactic}
Andrei~Z Broder, Steven~C Glassman, Mark~S Manasse, and Geoffrey Zweig.
\newblock Syntactic clustering of the web.
\newblock {\em Computer Networks and ISDN Systems}, 29(8-13):1157--1166, 1997.

\bibitem{chan2017orthogonal}
Timothy~M Chan.
\newblock Orthogonal range searching in moderate dimensions: kd trees and range
  trees strike back.
\newblock In {\em 33rd International Symposium on Computational Geometry (SoCG
  2017)}. Schloss Dagstuhl-Leibniz-Zentrum fuer Informatik, 2017.

\bibitem{charikar2002new}
Moses Charikar, Piotr Indyk, and Rina Panigrahy.
\newblock New algorithms for subset query, partial match, orthogonal range
  searching, and related problems.
\newblock In {\em International Colloquium on Automata, Languages, and
  Programming}, pages 451--462. Springer, 2002.

\bibitem{charikar2002similarity}
Moses~S Charikar.
\newblock Similarity estimation techniques from rounding algorithms.
\newblock In {\em Proceedings of the thiry-fourth annual ACM Symposium on
  Theory of Computing}, pages 380--388. ACM, 2002.

\bibitem{chen2019equivalence}
Lijie Chen and Ryan Williams.
\newblock An equivalence class for orthogonal vectors.
\newblock In {\em Proceedings of the Thirtieth Annual ACM-SIAM Symposium on
  Discrete Algorithms}, pages 21--40. SIAM, 2019.

\bibitem{choi2010survey}
Seung-Seok Choi, Sung-Hyuk Cha, and Charles~C Tappert.
\newblock A survey of binary similarity and distance measures.
\newblock {\em Journal of Systemics, Cybernetics and Informatics}, 8(1):43--48,
  2010.

\bibitem{christiani2016framework}
Tobias Christiani.
\newblock A framework for similarity search with space-time tradeoffs using
  locality-sensitive filtering.
\newblock In {\em Proceedings of the Twenty-Eighth Annual ACM-SIAM Symposium on
  Discrete Algorithms}, pages 31--46. SIAM, 2017.

\bibitem{tobias2016}
Tobias Christiani and Rasmus Pagh.
\newblock Set similarity search beyond minhash.
\newblock In {\em Proceedings of the 49th Annual {ACM} {SIGACT} Symposium on
  Theory of Computing, {STOC} 2017, Montreal, QC, Canada, June 19-23, 2017},
  pages 1094--1107, 2017.
\newblock URL: \url{https://doi.org/10.1145/3055399.3055443}, \href
  {http://dx.doi.org/10.1145/3055399.3055443}
  {\path{doi:10.1145/3055399.3055443}}.

\bibitem{christiani2018confirmation}
Tobias Christiani, Rasmus Pagh, and Mikkel Thorup.
\newblock Confirmation sampling for exact nearest neighbor search.
\newblock {\em arXiv preprint arXiv:1812.02603}, 2018.

\bibitem{cohen2009leveraging}
Edith Cohen and Haim Kaplan.
\newblock Leveraging discarded samples for tighter estimation of multiple-set
  aggregates.
\newblock {\em ACM SIGMETRICS Performance Evaluation Review}, 37(1):251--262,
  2009.

\bibitem{cramer1936order}
Harald Cram{\'e}r.
\newblock On the order of magnitude of the difference between consecutive prime
  numbers.
\newblock {\em Acta Arithmetica}, 2:23--46, 1936.

\bibitem{DBLP:journals/siamrev/Dinwoodie94}
Ian~H. Dinwoodie.
\newblock Large deviations techniques and applications (amir dembo and ofer
  zeitouni).
\newblock {\em {SIAM} Review}, 36(2):303--304, 1994.
\newblock URL: \url{https://doi.org/10.1137/1036078}, \href
  {http://dx.doi.org/10.1137/1036078} {\path{doi:10.1137/1036078}}.

\bibitem{dubiner2010bucketing}
Moshe Dubiner.
\newblock Bucketing coding and information theory for the statistical
  high-dimensional nearest-neighbor problem.
\newblock {\em IEEE Transactions on Information Theory}, 56(8):4166--4179,
  2010.

\bibitem{lazo2019}
Raul~Castro Fernandez, Jisoo Min, Demitri Nava, and Samuel Madden.
\newblock Lazo: A cardinality-based method for coupled estimation of jaccard
  similarity and containment.
\newblock In {\em ICDE}, 2019.

\bibitem{flajolet2007hyperloglog}
Philippe Flajolet, {\'E}ric Fusy, Olivier Gandouet, and Fr{\'e}d{\'e}ric
  Meunier.
\newblock Hyperloglog: the analysis of a near-optimal cardinality estimation
  algorithm.
\newblock In {\em Discrete Mathematics and Theoretical Computer Science}, pages
  137--156. Discrete Mathematics and Theoretical Computer Science, 2007.

\bibitem{friedgut2015information}
Ehud Friedgut.
\newblock An information theoretic proof of a hypercontractive inequality.
\newblock {\em Entropy}, (1/29), 2015.

\bibitem{goel2010small}
Ashish Goel and Pankaj Gupta.
\newblock Small subset queries and bloom filters using ternary associative
  memories, with applications.
\newblock {\em ACM SIGMETRICS Performance Evaluation Review}, 38(1):143--154,
  2010.

\bibitem{har2012approximate}
Sariel Har-Peled, Piotr Indyk, and Rajeev Motwani.
\newblock Approximate nearest neighbor: Towards removing the curse of
  dimensionality.
\newblock {\em Theory of computing}, 8(1):321--350, 2012.

\bibitem{indyk1998approximate}
Piotr Indyk and Rajeev Motwani.
\newblock Approximate nearest neighbors: towards removing the curse of
  dimensionality.
\newblock In {\em Proceedings of the thirtieth annual ACM symposium on Theory
  of computing}, pages 604--613. ACM, 1998.

\bibitem{jia2018survey}
Lianyin Jia, Lulu Zhang, Guoxian Yu, Jinguo You, Jiaman Ding, and Mengjuan Li.
\newblock A survey on set similarity search and join.
\newblock {\em International Journal of Performability Engineering}, 14(2),
  2018.

\bibitem{kapralov2015smooth}
Michael Kapralov.
\newblock Smooth tradeoffs between insert and query complexity in nearest
  neighbor search.
\newblock In {\em Proceedings of the 34th ACM SIGMOD-SIGACT-SIGAI Symposium on
  Principles of Database Systems}, pages 329--342. ACM, 2015.

\bibitem{kapralov2012nns}
Michael Kapralov and Rina Panigrahy.
\newblock Nns lower bounds via metric expansion for $l_\infty$ and emd.
\newblock In {\em International Colloquium on Automata, Languages, and
  Programming}, pages 545--556. Springer, 2012.

\bibitem{keevash2019hypercontractivity}
Peter Keevash, Noam Lifshitz, Eoin Long, and Dor Minzer.
\newblock Hypercontractivity for global functions and sharp thresholds.
\newblock {\em arXiv preprint arXiv:1906.05568}, 2019.

\bibitem{laarhoven2015tradeoffs}
Thijs Laarhoven.
\newblock Tradeoffs for nearest neighbors on the sphere.
\newblock {\em arXiv preprint arXiv:1511.07527}, 2015.

\bibitem{lifshitz2018hypergraph}
Noam Lifshitz.
\newblock Hypergraph removal lemmas via robust sharp threshold theorems.
\newblock {\em arXiv preprint arXiv:1804.00328}, 2018.

\bibitem{lv2007multi}
Qin Lv, William Josephson, Zhe Wang, Moses Charikar, and Kai Li.
\newblock Multi-probe lsh: efficient indexing for high-dimensional similarity
  search.
\newblock In {\em Proceedings of the 33rd International Conference on Very
  Large Data Bases}, pages 950--961. VLDB Endowment, 2007.

\bibitem{mccauley2018set}
Samuel McCauley, Jesper~W Mikkelsen, and Rasmus Pagh.
\newblock Set similarity search for skewed data.
\newblock In {\em Proceedings of the 37th ACM SIGMOD-SIGACT-SIGAI Symposium on
  Principles of Database Systems}, pages 63--74, 2018.

\bibitem{melnik2003adaptive}
Sergey Melnik and Hector Garcia-Molina.
\newblock Adaptive algorithms for set containment joins.
\newblock {\em ACM Transactions on Database Systems (TODS)}, 28(1):56--99,
  2003.

\bibitem{motwani2006lower}
Rajeev Motwani, Assaf Naor, and Rina Panigrahi.
\newblock Lower bounds on locality sensitive hashing.
\newblock In {\em Proceedings of the twenty-second annual symposium on
  Computational geometry}, pages 154--157. ACM, 2006.

\bibitem{nair2014equivalent}
Chandra Nair.
\newblock Equivalent formulations of hypercontractivity using information
  measures.
\newblock In {\em International Zurich Seminar on Communications}, page~42,
  2014.

\bibitem{neyshabur2015symmetric}
Behnam Neyshabur and Nathan Srebro.
\newblock On symmetric and asymmetric lshs for inner product search.
\newblock In {\em International Conference on Machine Learning}, pages
  1926--1934, 2015.

\bibitem{o2014analysis}
Ryan O'Donnell.
\newblock {\em Analysis of boolean functions}.
\newblock Cambridge University Press, 2014.

\bibitem{oleszkiewicz2003nonsymmetric}
Krzysztof Oleszkiewicz.
\newblock On a nonsymmetric version of the khinchine-kahane inequality.
\newblock In {\em Stochastic inequalities and applications}, pages 157--168.
  Springer, 2003.

\bibitem{o2014optimal}
Ryan O’Donnell, Yi~Wu, and Yuan Zhou.
\newblock Optimal lower bounds for locality-sensitive hashing (except when q is
  tiny).
\newblock {\em ACM Transactions on Computation Theory (TOCT)}, 6(1):5, 2014.

\bibitem{pagh2014min}
Rasmus Pagh, Morten St{\"o}ckel, and David~P Woodruff.
\newblock Is min-wise hashing optimal for summarizing set intersection?
\newblock In {\em Proceedings of the 33rd ACM SIGMOD-SIGACT-SIGART symposium on
  Principles of database systems}, pages 109--120. ACM, 2014.

\bibitem{panigrahy2006entropy}
Rina Panigrahy.
\newblock Entropy based nearest neighbor search in high dimensions.
\newblock In {\em Proceedings of the seventeenth annual ACM-SIAM symposium on
  Discrete algorithm}, pages 1186--1195. Society for Industrial and Applied
  Mathematics, 2006.

\bibitem{panigrahy2008geometric}
Rina Panigrahy, Kunal Talwar, and Udi Wieder.
\newblock A geometric approach to lower bounds for approximate near-neighbor
  search and partial match.
\newblock In {\em 2008 49th Annual IEEE Symposium on Foundations of Computer
  Science}, pages 414--423. IEEE, 2008.

\bibitem{panigrahy2010lower}
Rina Panigrahy, Kunal Talwar, and Udi Wieder.
\newblock Lower bounds on near neighbor search via metric expansion.
\newblock In {\em 2010 IEEE 51st Annual Symposium on Foundations of Computer
  Science}, pages 805--814. IEEE, 2010.

\bibitem{polyanskiy2014lecture}
Yury Polyanskiy and Yihong Wu.
\newblock Lecture notes on information theory.
\newblock {\em Lecture Notes for ECE563 (UIUC) and}, 6(2012-2016):7, 2014.

\bibitem{Ramasamy2000SetCJ}
Karthikeyan Ramasamy, Jignesh~M. Patel, Jeffrey~F. Naughton, and Raghav
  Kaushik.
\newblock Set containment joins: The good, the bad and the ugly.
\newblock In {\em VLDB}, 2000.

\bibitem{rashtchian2020lsf}
Cyrus Rashtchian, Aneesh Sharma, and David~P Woodruff.
\newblock Lsf-join: Locality sensitive filtering for distributed all-pairs set
  similarity under skew.
\newblock {\em arXiv preprint arXiv:2003.02972}, 2020.

\bibitem{rivest1976partial}
Ronald~L Rivest.
\newblock Partial-match retrieval algorithms.
\newblock {\em SIAM Journal on Computing}, 5(1):19--50, 1976.

\bibitem{shi2015branching}
Zhan Shi.
\newblock {\em Branching random walks}.
\newblock Springer, 2015.

\bibitem{shrivastava2014asymmetric}
Anshumali Shrivastava and Ping Li.
\newblock Asymmetric lsh (alsh) for sublinear time maximum inner product search
  (mips).
\newblock In {\em Advances in Neural Information Processing Systems}, pages
  2321--2329, 2014.

\bibitem{shrivastava2014defense}
Anshumali Shrivastava and Ping Li.
\newblock In defense of minhash over simhash.
\newblock In {\em Artificial Intelligence and Statistics}, pages 886--894,
  2014.

\bibitem{terasawa2007spherical}
Kengo Terasawa and Yuzuru Tanaka.
\newblock Spherical lsh for approximate nearest neighbor search on unit
  hypersphere.
\newblock In {\em Workshop on Algorithms and Data Structures}, pages 27--38.
  Springer, 2007.

\bibitem{williams2005new}
Ryan Williams.
\newblock A new algorithm for optimal 2-constraint satisfaction and its
  implications.
\newblock {\em Theoretical Computer Science}, 348(2):357--365, 2005.

\bibitem{wolff2007hypercontractivity}
Pawe{\l} Wolff.
\newblock Hypercontractivity of simple random variables.
\newblock {\em Studia Mathematica}, 3(180):219--236, 2007.

\bibitem{yan2018norm}
Xiao Yan, Jinfeng Li, Xinyan Dai, Hongzhi Chen, and James Cheng.
\newblock Norm-ranging lsh for maximum inner product search.
\newblock In {\em Advances in Neural Information Processing Systems}, pages
  2956--2965, 2018.

\bibitem{zhang2017efficient}
Yong Zhang, Xiuxing Li, Jin Wang, Ying Zhang, Chunxiao Xing, and Xiaojie Yuan.
\newblock An efficient framework for exact set similarity search using tree
  structure indexes.
\newblock In {\em 2017 IEEE 33rd International Conference on Data Engineering
  (ICDE)}, pages 759--770. IEEE, 2017.

\end{thebibliography}

\appendix
\section{Proof of \Cref{lem:hypercontractive-to-lower-bound}}\label{app:proof-hypercontractive-to-lower-bound}
This proof in this section mostly follows~\cite{andoni2017optimal}, with a few changes to work with separate spaces $Q$ and $U$.

\begingroup
\def\thelemma{\ref{lem:hypercontractive-to-lower-bound}}
\begin{lemma}
   Let $Q$ and $U$ be some spaces and $\mathcal{P}_{QU}$ a probability distribution on $Q \times U$.
   Consider any list-of-points data structure for $\mathcal{P}_{QU}$-random instances of $n$ points,
   which uses expected space $n^{1 + \rho_u}$, has expected query time $n^{\rho_q - o_n(1)}$, 
   and succeeds with probability at least $0.99$. Let $r, s \in [1, \infty]$
   satisfy  
   \begin{align}
      \ep[(X, Y) \sim \mathcal{P}_{QU}]{f(X) g(Y)} \le \norm{f(X)}{L_r(\mathcal{P}_Q)} \norm{f(Y)}{L_s(\mathcal{P}_U)} \; ,
   \end{align}
   for all functions $f : Q \to \R$ and $g : U \to \R$. Then
   \begin{align}
      \frac{1}{r}\rho_q + \frac{1}{r'}\rho_u \ge \frac{1}{r} + \frac{1}{s} - 1 \; ,
   \end{align}
   where $r' = \frac{r}{r - 1}$ is the convex conjugate of $r$.
\end{lemma}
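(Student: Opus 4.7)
The plan is to follow the list-of-points framework of Andoni et al., generalized from their Hamming instance to the abstract probability space $(Q \times U, \mathcal{P}_{QU})$. For each filter $A_i$ in the data structure we introduce the three probabilities
\[
   \alpha_i = \Pr_{q \sim \mathcal{P}_Q}[i \in I(q)], \qquad
   \beta_i  = \Pr_{p \sim \mathcal{P}_U}[p \in A_i], \qquad
   \gamma_i = \Pr_{(q,p) \sim \mathcal{P}_{QU}}[i \in I(q) \wedge p \in A_i],
\]
and apply the hypothesized hypercontractive inequality to the pair of indicator functions $f_i(q) = [i \in I(q)]$ and $g_i(p) = [p \in A_i]$ to obtain the key per-filter estimate $\gamma_i \le \alpha_i^{1/r} \beta_i^{1/s}$.

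Next I would translate the data-structure guarantees into constraints on these sums. The expected space $m + \sum_i \ep{|L_i|} \le n^{1+\rho_u}$ yields $m \le n^{1+\rho_u}$ and $\sum_i \beta_i \le n^{\rho_u}$ (since $\ep{|L_i|} \ge (n{-}1)\beta_i$). The expected query time $\ep{|I(q)|} + \ep{\sum_{i \in I(q)} |L_i|} \le n^{\rho_q - o(1)}$ yields $\sum_i \alpha_i \beta_i \le n^{\rho_q - 1 + o(1)}$ using independence of the dataset from $q$. Finally the success guarantee plus a union bound gives $\sum_i \gamma_i \ge \Omega(1)$; the slight perturbation caused by the planted point $p'$ affects only lower-order terms and can be absorbed into the $n^{o(1)}$ slack.

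The decisive step is a three-term H\"older estimate on $\sum_i \gamma_i$. Writing (in the case $r \ge s$)
\[
   \alpha_i^{1/r}\beta_i^{1/s} = (\alpha_i\beta_i)^{1/r}\cdot \beta_i^{1/s - 1/r}\cdot 1
\]
and applying H\"older with exponents $r$, $rs/(r-s)$, $s'$ (chosen so the middle power on $\beta_i$ equals $1$ and the remaining exponent is $1/s' = 1 - 1/s$) gives
\[
   \sum_i \alpha_i^{1/r}\beta_i^{1/s}
      \le \Big(\sum_i \alpha_i\beta_i\Big)^{1/r} \Big(\sum_i \beta_i\Big)^{1/s - 1/r} m^{1/s'}.
\]
Plugging in the three bounds above yields
\[
   \Omega(1) \le n^{\rho_q/r + \rho_u/r' + 1 - 1/r - 1/s + o(1)},
\]
so taking $n \to \infty$ produces exactly $\rho_q/r + \rho_u/r' \ge 1/r + 1/s - 1$. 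The symmetric case $r < s$ is handled by the dual decomposition $(\alpha_i\beta_i)^{1/s}\alpha_i^{1/r-1/s}$ combined with the bound $\sum_i \alpha_i \le n^{\rho_q}$, leading to the same final exponent.

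The main obstacle I anticipate is step three: the H\"older decomposition does not close cleanly for arbitrary $(r,s)$ unless we use the filter count $m$ as a third factor, and this requires carefully deducing $m \le n^{1+\rho_u}$ from the space constraint. A secondary technical nuisance is that the planted point slightly correlates the dataset with the query; this needs a standard argument showing that the planted point's contribution to $\sum_i \alpha_i \beta_i$, $\sum_i \beta_i$, and the success probability is negligible (using that the marginals of $\mathcal{P}_{QU}$ agree with $\mathcal{P}_Q, \mathcal{P}_U$ and concentration of the dataset statistics around their means).
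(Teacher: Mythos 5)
Your proposal is correct, and the front half coincides with the paper's proof: the same planted hard distribution, the same per-filter quantities, and the same application of the $(r,s)$-hypercontractive inequality to the indicator pair $[i \in I(q)]$, $[p \in A_i]$, giving $\gamma_i \le \alpha_i^{1/r}\beta_i^{1/s}$, together with the same three constraints (success probability $\sum_i \gamma_i \ge \Omega(1)$, space controlling $m$ and $\sum_i \beta_i$, query time controlling $\sum_i \alpha_i\beta_i$ and $\sum_i\alpha_i$). Where you genuinely diverge is in how the bound is closed. The paper treats it as a constrained optimization: it lower-bounds the expected query time by $\sum_i \gamma_i^r\tau_i^r(\tau_i^{-r/s}+(n-1)\tau_i^{1-r/s})$, minimizes over the $\gamma_i$ via Lagrange multipliers, and then maximizes the resulting expression $\sum_i \tau_i^{r'/s}/(1+(n-1)\tau_i)^{r'/r}$ under the space constraint by a case analysis on $r>s$, $r<s$, $r=s$ (using $m\le S$ in each case). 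You instead hit $\sum_i\gamma_i$ directly with a three-exponent H\"older inequality, $\sum_i (\alpha_i\beta_i)^{1/r}\beta_i^{1/s-1/r}\cdot 1 \le (\sum_i\alpha_i\beta_i)^{1/r}(\sum_i\beta_i)^{1/s-1/r}m^{1/s'}$ when $r\ge s$, and the dual split using $\sum_i\alpha_i$ and $m^{1/r'}$ when $r<s$; plugging in the constraints gives the exponent $\rho_q/r+\rho_u/r'-(1/r+1/s-1)\ge -o(1)$ in both regimes, and the edge cases ($r=s$, $s=1$) degenerate gracefully. Your route is more elementary (no calculus of variations, no critical-point computation), at the mild cost of needing $\sum_i\alpha_i\le n^{\rho_q}$ explicitly in the $r<s$ case and of carrying $m\le n^{1+\rho_u}$ as a third H\"older factor — but the paper's case analysis uses $m\le S$ as well, so nothing extra is really being assumed. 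One small correction to your own worry list: the planted point needs no separate perturbation argument, since its marginals are exactly $\mathcal{P}_Q$ and $\mathcal{P}_U$ and its contributions to the query-time and space sums are nonnegative, so dropping them only weakens upper bounds; the only place it matters is the success constraint, where it is exactly $\mathcal{P}_{QU}$-distributed. The remaining informality (passing between expected space/time over the data structure's internal randomness and the fixed sums used in H\"older) is present at the same level in the paper's own proof and can be handled either by a Markov/union-bound fixing of the randomness or by one more application of the same H\"older exponents over that randomness.
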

\addtocounter{lemma}{-1}
\endgroup
\begin{proof}
    Fix a data structure $D$, where $A_i$ specifies which dataset points are placed in $L_i$.
    Additionally, we define $B_i = \setbuilder{v}{i \in I(v)}$ to the set of query points
    which scan $L_i$. We sample a random dataset point $u$ and then a random query point $v$
    from the neighborhood of $u$. Let
    \begin{align}
       \gamma_i = \prpcond{v \in B_i}{u \in A_i}
    \end{align}
    represent the probability that query $v$ scans the list $L_i$ conditioned on $u$ being
    in $L_i$. The query time for $D$ is given by the following expression
    \begin{align}
       T &= \sum_{i \in [m]}[v \in B_i]\left(1 + \sum_{j \in [n]}[u_j \in A_i] \right) \\
       \ep{T} &= \sum_{i \in [m]} \prp{v \in B_i}
          + \sum_{i \in [m]} \gamma_i \prp{u \in A_i}
          + (n - 1)\sum_{i \in [m]} \prp{u \in A_i} \prp{v \in B_i} \; .
    \end{align}
    We want to lower bound $\prp{v \in B_i}$, so let $1 \le r, s$ be any values such that
    $\mathcal{P}_{QU}$ is $(r, s)$-hypercontractive. We then get that
    \begin{align}
       \gamma_i \prp{u \in A_i}
          &= \prp{u \in A_i \wedge v \in B_i}
          \\&= \ep{[u \in A_i][v \in B_i]}
          \\&\le \norm{[u \in A_i]}{L_s(p_1)} \norm{[v \in B_i]}{L_r(p_2)}
          \\&= \prp{u \in A_i}^{1/s} \prp{v \in B_i}^{1/r}
    \end{align}
    Hence we get that $\prp{v \in B_i} \ge \gamma_i^r \prp{u \in A_i}^{r/s'}$. We define
    $\tau_i = \prp{u \in A_i}$ and get that
    \begin{align}
       \ep{T}
          \ge \sum_{i \in [m]} \gamma_i^r \tau_i^{r/s'}
          + \sum_{i \in [m]} \gamma_i \tau_i
          + (n - 1)\sum_{i \in [m]} \gamma_i^r \tau_i^{1 + r/ s'} \; .
    \end{align}
    Since the data structure succeeds with probability $\gamma$ we have that
    \begin{align}
       \sum_{i \in [m]} \tau_i \gamma_i
          \ge \prp{\exists i \in [m] : v \in B_i, u \in A_i}
          = \gamma \; .
    \end{align}
    Since $D$ uses at most $S$ space we have that
    \begin{align}
       m + \sum_{i \in [m]} \abs{A_i} \le S \;\Rightarrow\;
       \sum_{i \in [m]} \tau_i \le \frac{S}{n} \; .
    \end{align}
    We then get that we want to minimize
    \begin{align}
       \ep{T}
          \ge \sum_{i \in [m]} \gamma_i^r \tau_i^{r/s'}
          + \sum_{i \in [m]} \gamma_i \tau_i
          + (n - 1)\sum_{i \in [m]} \gamma_i^r \tau_i^{1 + r/s'}
          \ge \sum_{i \in [m]} \gamma_i^{r}\tau_i^r (\tau_i^{-r/s} + (n - 1)\tau_i^{1 - r/s}) \; ,
    \end{align}
    given the constraints
    \begin{align}
       \sum_{i \in [m]} \tau_i \gamma_i &\ge \gamma \\
       \sum_{i \in [m]} \tau_i &\le \frac{S}{n} \; .
    \end{align}
    First we fix $(\tau_i)_{i \in [m]}$ and minimize the function with
    respect to $(\gamma_i)_{i \in [m]}$.
    Using Lagrange multipliers this is equivalent to minimizing the function
    \begin{align}
       f((\gamma_i)_{i \in [m]}, \lambda, \nu)
          = \sum_{i \in [m]} \gamma_i^{r}\tau_i^r (\tau_i^{-r/s} + (n - 1)\tau_i^{1 - r/s})
          - \lambda (\sum_{i \in [m]} \tau_i \gamma_i - \gamma - \nu^2)
    \end{align}
    We find the critical points $\nabla f = 0$:
    \begin{align}
       r \gamma_i^{r - 1} \tau_i^{r} (\tau_i^{-r/s} + (n - 1)\tau_i^{1 - r/s})
          &= \lambda \tau_i \\
       \sum_{i \in [m]} \tau_i \gamma_i &= \gamma + \nu^2 \\ 
       2\lambda \nu &= 0
    \end{align}
    for all $i \in [m]$. We note that since $\gamma > 0$ then $\lambda > 0$
    and hence $\nu = 0$.
    The first inequality can be rewritten as
    \begin{align}
       \gamma_i^{r - 1} \tau_i^{r - 1} &= \frac{\lambda}{r (\tau_i^{-r/s} + (n - 1)\tau_i^{1 - r/s})}
          &&\Leftrightarrow\\
       \gamma_i \tau_i &= \left(\frac{\lambda}{r (\tau_i^{-r/s} + (n - 1)\tau_i^{1 - r/s})}\right)^{r'/r}
    \end{align}
    Combining this with $\sum_{i \in [m]} \tau_i \gamma_i = \gamma$ give us that
    \begin{align}
       \sum_{i \in [m]} \left(\frac{\lambda}{r (\tau_i^{-r/s} + (n - 1)\tau_i^{1 - r/s})}\right)^{r'/r}
          &= \gamma &&\Leftrightarrow\\
       \lambda^{r'/r} &= \frac{\gamma}{\sum_{i \in [m]}\left( \frac{1}{r (\tau_i^{-r/s} + (n - 1)\tau_i^{1 - r/s})} \right)^{r'/r} }
    \end{align}
    We define $t_i = \left( \frac{1}{\tau_i^{-r/s} + (n - 1)\tau_i^{1 - r/s}} \right)^{r'/r}$
    and get that
    \begin{align}
       \gamma_i \tau_i &= \gamma \frac{t_i}{\sum_{i \in [m]} t_i} &&\Leftrightarrow\\
       \gamma_i^r \tau_i^r &= \gamma^r \frac{t_i^r}{(\sum_{i \in [m]} t_i)^r}
    \end{align}
    We then get that our original function becomes
    \begin{align}
       \gamma^r \sum_{i \in [m]} \frac{t_i^r}{(\sum_{i \in [m]} t_i)^r} t_i^{-r/r'}
          = \gamma^r \sum_{i \in [m]} \frac{t_i}{(\sum_{i \in [m]} t_i)^r}
          = \gamma^r (\sum_{i \in [m]} t_i)^{-(r - 1)}
          = \gamma^r (\sum_{i \in [m]} t_i)^{-r/r'}
    \end{align}
    So we want to maximize
    \begin{align}
       \sum_{i \in [m]} t_i
          = \sum_{i \in [m]} \left( \frac{1}{\tau_i^{-r/s} + (n - 1)\tau_i^{1 - r/s}} \right)^{r'/r}
          = \sum_{i \in [m]} \frac{\tau_i^{r'/s}}{(1 + (n - 1)\tau_i)^{r'/r}}
    \end{align}
    We now consider two different cases.
 
    \paragraph{Case 1. $r > s$.}
    We know that $\tau_i \le 1$ so we get that
    \begin{align}
       \sum_{i \in [m]} \frac{\tau_i^{r'/s}}{(1 + (n - 1)\tau_i)^{r'/r}}
          \le \sum_{i \in [m]} \frac{\tau_i^{r'(1/s - 1/r)}}{n^{r'/r}}
    \end{align}
    Since $r'(1/s - 1/r) > 0$ then we can use the power-mean inequality to get
    \begin{align}
       \sum_{i \in [m]} \frac{\tau_i^{r'(1/s - 1/r)}}{n^{r'/r}}
          &\le \frac{m}{n^{r'/r}} \left(\frac{\sum_{i \in [m]} \tau_i}{m} \right)^{r'(1/s - 1/r)}
          \\&\le \frac{m^{r' - r'/s}}{n^{r'/r}} \left(\frac{S}{n}\right)^{r'(1/s - 1/r)}
          \\&= \frac{m^{r'/s'}}{n^{r'/s}} S^{r'(1/s - 1/r)}
          \\&\le \frac{S^{r'/s' + r'(1/s - 1/r)}}{n^{r'/s}}
          \\&= \frac{S}{n^{r'/s}}
    \end{align}
    where we have used that $\max\set{m, n\sum_{i \in [m]} \tau_i} \le S$.
 
    \paragraph{Case 2. $r \le s$}
    We find the derivatives
    \begin{align}
       &\frac{\frac{r'}{s}t_i^{r'/s - 1}(1 + (n - 1)\tau_i)^{r'/r} - (n - 1)\frac{r'}{r}(1 + (n - 1)\tau_i)^{r'/r - 1}t_i^{r'/s}}{(1 + (n - 1)\tau_i)^{2r'/r}}
          \\&= \frac{r' t_i^{r'/s - 1}}{(1 + (n - 1)\tau_i)^{r'/r + 1}}\left(\frac{1}{s}(1 + (n - 1)\tau_i) - (n - 1)\frac{1}{r}\tau_i \right)
    \end{align}
    \paragraph{Case 2.1. $r < s$}
    We note that the function is maximized when we 
    set $\tau_i = \frac{\frac{1}{s}}{(n - 1)(\frac{1}{r} - \frac{1}{s})} = \frac{r}{(n - 1)(s - r)}$.
    This give us that
    \begin{align}
       \sum_{i \in [m]} \frac{\tau_i^{r'/s}}{(1 + (n - 1)\tau_i)^{r'/r}}
          \le m \frac{\left(\frac{r}{(n - 1)(s - r)}\right)^{r'/s}}{\left(1 + \frac{r}{s - r}\right)^{r'/r}}
          \le \frac{S}{n^{r'/s}} \frac{\left(\frac{2r}{s - r} \right)^{r'/s}}{\left(\frac{s}{s - r}\right)^{r'/r}}
    \end{align}
    where we have used that $m \le S$ and $n \ge 2$.
 
    \begin{align}
       m\frac{r}{(n - 1)(s - r)} \le \frac{S}{n} \Rightarrow
       m \le S \frac{(n - 1)(s - r)}{n r}
    \end{align}
 
    \paragraph{Case 2.2. $r = s$}
    We note that the function is increasing in $\tau_i$ so it is maximized when $\tau_i = 1$.
    Then we get that
    \begin{align}
       \sum_{i \in [m]} \frac{\tau_i^{r'/s}}{(1 + (n - 1)\tau_i)^{r'/r}}
          \le \frac{m}{n^{r'/r}}
          = \frac{S}{n^{r'/r}}
          = \frac{S}{n^{r'/s}}
    \end{align}
    where we have used that $m \le S$ and $r = s$.
 
    From this we note that if we set $K = \max\set{1, \frac{\left(\frac{2r}{s - r} \right)^{r'/s}}{\left(\frac{s}{s - r}\right)^{r'/r}}}$
    then $\sum_{i \in [m]} \frac{\tau_i^{r'/s}}{(1 + (n - 1)\tau_i)^{r'/r}} \le \frac{S}{n^{r'/s}} K$.
    Now we can give the final lower bound on $\ep{T}$:
    \begin{align}
       \ep{T}
          \ge \gamma^r (\sum_{i \in [m]} t_i)^{-r/r'}
          \ge \gamma^r \left( \frac{S}{n^{r'/s}} K \right)^{-r/r'}
          = \gamma^r K^{-r/r'} S^{-r/r'} n^{r/s}
    \end{align}
    From this we get the result we want
    \begin{align}
       \rho_q &\ge -\frac{r}{r'}(1 + \rho_u) + \frac{r}{s} - o_n(1) \Leftrightarrow\\
       \frac{1}{r}\rho_q + \frac{1}{r'}\rho_u &\ge \frac{1}{s} - \frac{1}{r'} - o_n(1)
    \end{align}
 \end{proof}

\end{document}